\newcommand{\newbrace}[1][]{
	\begin{tikzpicture}[baseline=-0.5ex]
		\draw[#1] (0,0) -- (0.3,0.3);
		\draw[#1] (0,0) -- (0.3,-0.3);
	\end{tikzpicture}
}
\newenvironment{casesnew}[1][->]%
{\;\newbrace[#1]\;\begin{array}{@{}l@{}}}%
	{\end{array}}
\renewcommand\theequation{\oldstylenums{\thesection}%
	.\oldstylenums{\arabic{equation}}}
\definecolor{Code}{rgb}{0,0,0}
\definecolor{Decorators}{rgb}{0.5,0.5,0.5}
\definecolor{Numbers}{rgb}{0.5,0,0}
\definecolor{MatchingBrackets}{rgb}{0.25,0.5,0.5}
\definecolor{Keywords}{rgb}{0,0,1}
\definecolor{self}{rgb}{0,0,0}
\definecolor{Strings}{rgb}{0,0.63,0}
\definecolor{Comments}{rgb}{0,0.63,1}
\definecolor{Backquotes}{rgb}{0,0,0}
\definecolor{Classname}{rgb}{0,0,0}
\definecolor{FunctionName}{rgb}{0,0,0}
\definecolor{Operators}{rgb}{0,0,0}
\definecolor{Background}{rgb}{0.98,0.98,0.98}
\lstdefinelanguage{Python}{
	numbers=left,
	numberstyle=\footnotesize,
	numbersep=1em,
	xleftmargin=1em,
	framextopmargin=2em,
	framexbottommargin=2em,
	showspaces=false,
	showtabs=false,
	showstringspaces=false,
	frame=l,
	tabsize=4,
	% Basic
	basicstyle=\ttfamily\small\setstretch{1},
	backgroundcolor=\color{Background},
	% Comments
	commentstyle=\color{Comments}\slshape,
	% Strings
	stringstyle=\color{Strings},
	morecomment=[s][\color{Strings}]{"""}{"""},
	morecomment=[s][\color{Strings}]{'''}{'''},
	% keywords
	morekeywords={import,from,class,def,for,while,if,is,in,elif,else,not,and,or,print,break,continue,return,True,False,None,access,as,,del,except,exec,finally,global,import,lambda,pass,print,raise,try,assert},
	keywordstyle={\color{Keywords}\bfseries},
	% additional keywords
	morekeywords={[2]@invariant,pylab,numpy,np,scipy},
	keywordstyle={[2]\color{Decorators}\slshape},
	emph={self},
	emphstyle={\color{self}\slshape},
}
\DeclareRobustCommand{\shortto}{%
	\mathrel{\mathpalette\short@to\relax}%
}
\newcommand{\short@to}[2]{%
	\mkern2mu
	\clipbox{{.4\width} 0 0 0}{$\m@th#1\vphantom{+}{\shortrightarrow}$}%
}
\newtheorem{theorem}{Theorem}[section]
\newtheorem{lemma}[theorem]{Lemma}
\newtheorem{proposition}[theorem]{Proposition}
\newtheorem{remark}{Remark}[section]
\theoremstyle{definition}
\DeclareMathOperator*{\argmin}{arg\,min}
\DeclareMathOperator{\Ima}{Im}
\begin{document}
	
	\title{Some results on the Guest-Hutchinson modes and periodic mechanisms of the Kagome lattice metamaterial }
	\author{Xuenan Li\thanks{New York University, xl2643@nyu.edu}\qquad Robert V. Kohn\thanks{New York University, kohn@cims.nyu.edu}}
	
	\maketitle
	
	\begin{abstract}
		Lattice materials are interesting mechanical metamaterials, and their mechanical properties are often related to the presence of mechanisms. The existence of periodic mechanisms can be indicated by the presence of Guest-Hutchinson (GH) modes, since GH modes are sometimes infinitesimal versions of periodic mechanisms. However, not every GH mode comes from a periodic mechanism. This paper focuses on: (1) clarifying the relationship between GH modes and periodic mechanisms; and (2) answering the question: which GH modes come from periodic mechanisms? We focus primarily on a special lattice system, the Kagome lattice. Our results include explicit formulas for all two-periodic mechanisms of the Kagome lattice, and a necessary condition for a GH mode to come from a periodic mechanism in general. We apply our necessary condition to the two-periodic GH modes, and also to some special GH modes found by Fleck and Hutchinson using Bloch-type boundary conditions on the unit cell of the Kagome lattice.
	\end{abstract}
	
	\section{Introduction}
	Mechanical metamaterials are artificial materials. These carefully designed materials exhibit properties that cannot be realized by conventional materials \cite{bertoldi2017flexible}. Among the family of mechanical metamaterials, 2-dimensional lattice materials are already interesting. For any given 2D lattice, i.e. a spatially periodic structure in the plane, we can consider it as an elastic material by viewing the edges as Hookean springs and the nodes as perfect hinges. We emphasize that we are studying a nonlinear problem in considering the mechanics of these lattice materials, i.e. we are interested in large local and global deformations. In particular, for a deformation $u(x)$, the elastic energy of the spring connecting $x_i$ and $x_j$ is the elastic constant times $\big(\frac{|u(x_i) - u(x_j)|}{|x_i - x_j|} - 1\big)^2$.
	
	In many cases, the elastic behavior of such lattice materials is related to the presence of \textit{mechanisms}, i.e. deformations other than rigid motions that have zero elastic energy. Lattice materials with mechanisms are degenerate elastic materials, i.e. they cannot sustain every boundary load, due to the existence of deformations with zero elastic energy. Among mechanisms, there is a special type of mechanism that deforms the reference lattice to a different periodic structure, i.e. a new lattice. We call such mechanisms \textit{periodic mechanisms}. Some periodic mechanisms have the property that their deformed states have the same periodicity as that of the reference lattice. We call these \textit{one-periodic mechanisms}. In other cases, the periodicity of the deformed state is $N^2$ times larger than the reference lattice ($N$ times larger in each lattice direction). We call these \textit{$N$-periodic mechanisms}. Many lattice materials have periodic mechanisms: examples include the family of Kagome lattices (see e.g. \cite{kapko2009collapse, sun2012surface}), the rotating square metamaterial\footnote{To view the rotating square metamaterial as a lattice material, we can start with the square lattice and add extra diagonal springs to make some squares rigid. For an illustrative example, see Figure 6 in the supplementary information of \cite{czajkowski2022conformal}.} (see e.g. \cite{czajkowski2022conformal, deng2020characterization}), and a variant of the rotating square metamaterial (see e.g. \cite{zheng2022continuum, zheng2022modeling}).
	
	The physics literature has already developed some understanding of periodic mechanisms via matrix methods (see e.g. \cite{hutchinson2006structural, mao2018maxwell}). If we view mechanisms as buckling patterns, then it is natural to look for a linear elastic way to predict the existence of a periodic mechanism, by analogy to the use of a linear elastic calculation to predict the buckling of a beam under compressive loading (see e.g. \cite{stakgold1971branching} around Equation 1.6). The onset of a periodic mechanism is sometimes indicated by the presence of what are now called Guest-Hutchinson modes (henceforth: GH modes). Guest and Hutchinson \cite{guest2003determinacy} studied the linear mechanics of lattice materials and concluded that Maxwell lattices {\color{black}(see section \ref{sec:preliminary-GH} for a discussion of Maxwell lattices)} cannot be simultaneously statically and kinematically stable\footnote{\color{black}The definition of kinematically stable is on page 7 after we introduce the GH mode; the definition of statically stable is on page 11 before we prove Theorem \ref{thm:GH}.}. As a consequence of their result, a statically stable Maxwell lattice must have at least one GH mode. This is known as the Guest-Hutchinson theorem, and we review it from a homogenization perspective in section \ref{sec:preliminary}.  
	
	However, the relationship between periodic mechanisms and GH modes is not simple. One mystery is that some lattice materials are known to have periodic mechanisms but do not have GH modes, for example, {\color{black}{the twisted Kagome lattice (T-T) lattice}} \cite{hutchinson2006structural}.{\color{black}{ Another mystery is that some GH modes do not come from periodic mechanisms \cite{borcea2010periodic}. In fact, the GH modes and periodic mechanisms of a lattice system are like the infinitesimal flexes and nonlinear flexes of a finite bar framework. From the rigidity theory of Connelly and Whiteley \cite{connelly1996second}, we know that for a finite bar framework, not every infinitesimal flex comes from a fully nonlinear flex; there is a necessary condition for an infinitesimal flex to come from a nonlinear flex. The necessary condition is known as the second-order stress test. Similarly, for our lattice systems, we shall derive a necessary condition for a GH mode to come from a periodic mechanism.}}
	
	{\color{black}{These mysteries motivate the key questions investigated in this paper: (1) what is the relationship between the GH modes and the periodic mechanisms of a lattice material? (2) which GH modes come from periodic mechanisms?}} We study these questions in detail for a special and very interesting family of lattice materials: the Kagome lattice (see Figure \ref{fig:standardkagomespring}) and its images under periodic mechanisms. Each lattice within the family  is made up of equilateral triangles and hexagons. If the hexagons are regular hexagons, then we call it the standard Kagome lattice. We call the images of the standard Kagome lattice under periodic mechanisms the deformed Kagome lattices. The standard and deformed Kagome lattice are examples of Maxwell lattices \cite{mao2018maxwell}. Such lattices are great places to find mechanisms. There is indeed a well-known one-periodic mechanism of the standard Kagome lattice (see section \ref{one-periodic-mechanism} for a discussion of this one-periodic mechanism). The deformed state under this one-periodic mechanism has a special name, {\color{black}{the twisted Kagome lattice (T-T) lattice}}. The standard Kagome lattice also has periodic mechanisms with other periodicities, as was discussed at some length in \cite{kapko2009collapse}. Our results include the following:
	\begin{enumerate}[(a)]
		\item We find explicit formulas for all two-periodic mechanisms of the standard Kagome lattice (see section \ref{sec:two-periodic-mechanism}).
		\item We identify a necessary condition for a GH mode to come from a mechanism (see section \ref{sec:necessary-condition}).
		\item We study the GH modes found by Fleck and Hutchinson using Bloch-type boundary conditions in \cite{hutchinson2006structural}, showing that in most cases these special GH modes do not correspond to periodic mechanisms (see section \ref{subsection:fleck-hutchinson}).
		\item We find all $N$-by-one periodic mechanisms for any $N \geq 2$, and also examples of some non-periodic mechanisms (see section \ref{subsec:non-periodic}).
		\item {\color{black} We find a special case where every GH mode must come from a mechanism} (see section \ref{sec:open-questions}).
	\end{enumerate}
	
	To make the paper self-contained, we start in sections 2 and 3 with a systematic review about the mechanics of periodic lattices, including the definition of GH modes, etc. In particular, we show in section \ref{subsec:GH-mechanism} that the infinitesimal version of a periodic mechanism is a GH mode only when the infinitesimal macroscopic deformation vanishes. This explains why the standard Kagome lattice has a GH mode, while the twisted Kagome lattice does not. 
	
	This paper focuses primarily on mechanisms of lattice materials. However, a different question about these lattice materials is whether we can write a meaningful macroscopic energy for them. Take the Kagome lattice as an example. If we fill in a region with a scaled version of the Kagome lattice, i.e. setting the side length of each triangle to be $\epsilon$, is there a sense that we can view the region as a nonlinear elastic material as $\epsilon$ approaches zero? For a periodic mixture of nonlinearly elastic materials that are non-degenerate, the answer is yes using homogenization theory (see e.g. \cite{braides1985homogenization, muller1987homogenization}). But for a lattice material, the answer is not obvious, especially when it has mechanisms. Actually, there is a meaningful macroscopic energy for the Kagome lattice, and it only vanishes at compressive conformal maps. This is the focus of our forthcoming paper \cite{liforthcoming}, where we provide a rigorous framework for the discussion about the macroscopic behavior of the Kagome metamaterial in the physics literature \cite{czajkowski2022conformal}.
	
	We close this introduction with a brief discussion of some related work on lattice materials.
	\begin{itemize}
		\item The paper \cite{fruchart2020dualities} by M. Fruchart et.al. discussed a duality they found in the band structure of the elastic waves in the twisted Kagome lattice. They observed that two twisted Kagome lattices with different twisting angles $\theta$ and $\theta^*$ have the same vibrational spectrum and band structure if the two angles are related by $\theta + \theta^* = 2\theta_c$. At the critical point $\theta_c$, there is a two-fold degenerate Dirac point in the Brillouin zone. Our focus is different from that of \cite{fruchart2020dualities}, since we focus on finding mechanisms of the Kagome lattice (and the relationship between mechanisms and GH modes), not the vibrational properties of the deformed lattices under these mechanisms.
		
		\item A different focus of work on the Kagome lattice family and its variant, the \textit{deformed Kagome lattice}\footnote{The term "deformed Kagome lattice" has sometimes been used for a lattice whose unit cell consists of triangles and hexagons but in which not all edge lengths are the same, see e.g. \cite{rocklin2017transformable}. Our usage will be different: in our paper, a "deformed Kagome lattice" is always an image of the standard Kagome lattice under a periodic mechanism.}, involves their topological properties.  For a finite piece of these lattices, the first-order flexible modes can reside on one side of the lattice; this phenomenon is called topological polarization (see e.g. \cite{kane2014topological, mao2018maxwell}). The paper \cite{sun2012surface} studied the concentration of first-order flexible modes of the twisted Kagome lattice with different boundary conditions; the paper \cite{rocklin2017transformable} harnessed mechanisms of the deformed Kagome lattice and discovered novel domain structures that control the stiffness along edges and domain walls.
		
		%        \item The paper \cite{lakes2022extremal} by R.S. Lakes studied hinged lattices with {\color{blue}"nonlinear Poisson's ratio -1", i.e. lattices in which isotropic dilations or compressions are the only macroscopic deformations}, and found that these materials do not in general obey the theory of linear elasticity. The Kagome lattice supports Lake's conclusion: it is known to have Poisson's ratio -1, since compressing it in one lattice direction causes the same amount of compression in the other lattice direction. However, linear elasticity has problems in this setting; in fact, we show in section \ref{subsec:discontinuity} that the linear effective tensor $A_{\text{eff}}$ is a discontinuous function of the compression ratio. {\color{blue}There are other hinged lattices that do not obey the theory of  linear elasticity (see e.g. \cite{seppecher2011linear}).} But our main focus is different from Lakes'. We focus primarily on the various mechanisms of the Kagome lattice that achieve the negative Poisson's ratio -1.
		\item {\color{black}The setting of linear elasticity has its own problems when studying the effective behavior of hinged lattice systems. The paper \cite{lakes2022extremal} by R.S. Lakes found that hinged lattices with "nonlinear Poisson's ratio -1", i.e. lattices in which isotropic dilations or compressions are the only macroscopic deformations, do not in general obey the theory of linear elasticity. The family of Kagome-type lattices is also rich in examples whose effective behavior cannot be reproduced by theory of linear elasticity. The paper by Nassar et.al. \cite{nassar2020microtwist} proposed a new effective theory that is capable of rendering polarization effects on a macroscopic scale in Kagome-type lattices. There are other hinged lattices that do not obey the theory of  linear elasticity (see e.g. \cite{seppecher2011linear}). However, our main focus is not on problems caused by the setting of linear elasticity. Instead, we focus primarily on studying the relationship between GH modes and periodic mechanisms.}
		
		\item {\color{black}There are studies about mechanical metamaterials that achieve different types of macroscopic deformations other than isotropic dilations or compressions. The papers \cite{milton2013adaptable,milton2013complete} by Milton discussed the possible macroscopic deformations of periodic nonlinear affine unimode (bimode) metamaterials constructed by rigid bars and rotation-free pivots, i.e. materials for which macroscopic deformations can only follow along a one-dimensional (two-dimensional) trajectory in the space of deformations. The paper \cite{zheng2022continuum} studied a family of planar kirigami with unit cells of four convex quadrilateral panels and four parallelogram slits and discovered that this family of planar kirigami can have elliptic or hyperbolic types of responses due to loads. But our focus is different from these papers. We focus on the various mechanisms of the Kagome lattice.}
	\end{itemize}
	
	The structure of the paper is as follows. Section \ref{sec:preliminary} reviews the linear elastic mechanics of periodic lattices; it includes a self-contained definition and discussion of GH modes. In section \ref{one-periodic-mechanism}, we clarify the relationship between infinitesimal versions of periodic mechanisms and GH modes. We also use the one-periodic mechanism of the Kagome lattice as a vivid example. We present our construction of a three-parameter two-periodic mechanism on the standard Kagome lattice in section \ref{sec:two-periodic-mechanism}. The existence of such two-periodic mechanisms indicate that there are many ways to buckle (compress) the Kagome lattice. Some "buckling patterns" even preserve the macroscopic deformation. In section \ref{sec:necessary-condition}, we answer the question: which GH modes come from periodic mechanisms? We identify a necessary condition and apply it to the two-periodic GH modes and the Fleck-Hutchinson modes of the standard Kagome lattice. In section \ref{subsec:non-periodic}, we present our constructions of $N$-by-one periodic mechanisms and some non-periodic mechanisms. We close this paper by presenting a special case where every GH mode comes from a periodic mechanism in section \ref{sec:open-questions}.
	
	\section{Preliminaries and notation}\label{sec:preliminary}
	In this section, we review the existing matrix methods for lattice structures and we discuss the linear elastic mechanics of lattice systems from a homogenization perspective. The section includes a self-contained discussion about the existence of GH modes in statically stable Maxwell lattices (we call this the \textit{Guest-Hutchinson Theorem}). The static stability of a lattice is non-degeneracy in our language, so this theorem specialized to 2D lattices says
	\begin{theorem}[Guest-Hutchinson Theorem]\label{thm:GH}
		If  a 2-dimensional Maxwell lattice has a non-degenerate effective tensor $A_{\text{eff}}$, then it must have a GH mode.
	\end{theorem}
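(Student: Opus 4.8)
The plan is to prove the contrapositive, combining the homogenization formula for $A_{\text{eff}}$ reviewed in Section \ref{sec:preliminary} with the Maxwell/Calladine duality between periodic floppy modes and states of self-stress. Recall that in our language a GH mode is a nontrivial periodic infinitesimal mechanism carrying vanishing macroscopic strain; equivalently, it is a nonzero element (modulo the two rigid translations) of the kernel of the periodic compatibility map $C$ that sends a periodic nodal displacement $v$ to the list of linearized bond elongations $\hat n_b\cdot\Delta v_b$ over one unit cell. So it suffices to show that if the lattice has only the trivial periodic floppy modes, then $A_{\text{eff}}$ must be degenerate.

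First I would record the two linear-algebra facts I need. Because the lattice is Maxwell, the number of bonds per cell equals twice the number of nodes, so $C$ is a square matrix and $\dim\ker C=\dim\operatorname{coker}C$; the cokernel is exactly the space $S$ of self-stresses, i.e. bond-tension fields $t=(t_b)$ whose net force at every node vanishes. The two rigid translations always lie in $\ker C$ (infinitesimal rotations are not periodic, so they do not contribute), hence $\dim\ker C\ge 2$, and the assumption of no GH mode is precisely $\dim\ker C=2$, whence $\dim S=2$ as well.

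Next I would connect $S$ to the effective tensor. For a symmetric macroscopic strain $\epsilon$, the energy $\tfrac12\,\epsilon:A_{\text{eff}}:\epsilon$ is obtained by minimizing the total bond energy over periodic correctors $v$; the Euler--Lagrange condition for the optimal $v$ is exactly that the induced tensions $t_b=k_b\,\hat n_b\cdot(\epsilon\,d_b+\Delta v_b)$ form a self-stress, and the resulting effective stress is the Cauchy average $A_{\text{eff}}\epsilon=\pi(t):=\tfrac1{|Y|}\sum_b t_b\,\hat n_b\otimes d_b$. Hence $\Ima(A_{\text{eff}})\subseteq \pi(S)$, where $\pi$ maps $S$ into the three-dimensional space of symmetric $2\times2$ stresses. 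If $A_{\text{eff}}$ were non-degenerate it would be injective, hence surjective, onto that three-dimensional space, forcing $\dim S\ge 3$; but then $\dim\ker C=\dim S\ge 3>2$, producing a nontrivial periodic floppy mode, i.e. a GH mode. This contradiction shows that a statically stable (non-degenerate) Maxwell lattice must have a GH mode.

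The heart of the argument, and the step I expect to require the most care, is the third one: justifying that the effective stress produced by homogenization is always realized by a genuine self-stress, and tracking dimensions correctly. Concretely I must (i) verify the Euler--Lagrange/equilibrium identification of the optimal tension field as an element of $\operatorname{coker}C$; (ii) confirm that $\pi$ lands in the three-dimensional space of symmetric stresses — each term $t_b\,\hat n_b\otimes d_b$ is automatically symmetric since $d_b$ is parallel to $\hat n_b$ — so the relevant target dimension is $3$ after quotienting by infinitesimal rotations; and (iii) make sure the bookkeeping distinguishes the two \emph{trivial} floppy modes (the translations), which have no self-stress counterpart, from the self-stresses, since it is precisely this asymmetry that turns the numerical coincidence $\dim\ker C=\dim S$ into the existence of a nontrivial mode. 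Once these points are pinned down, the counting is immediate.
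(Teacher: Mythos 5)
Your proposal is correct and takes essentially the same route as the paper's proof: both hinge on the fact that the optimal corrector's tensions form a self-stress whose cell average realizes $A_{\text{eff}}\xi$, so non-degeneracy forces at least three linearly independent self-stresses, and the Maxwell squareness $\dim\ker(C)=\dim\ker(C^T)$ together with removing the two translations yields a GH mode. Your contrapositive/contradiction phrasing and the explicit map $\pi$ are only cosmetic differences from the paper's direct counting argument.
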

	This theorem is proved in section \ref{subsec:effective}.
	
	\subsection{GH modes and some facts about the standard Kagome lattice}\label{sec:preliminary-GH}
	First we review the definition of GH modes and the related matrix methods \footnote{{\color{black}The literature on this topic is vast, and we do not attempt a historical review. Our treatment is consistent, for example, with \cite{guest2003determinacy, hutchinson2006structural, mao2018maxwell}.}}. A 2-dimensional lattice system consists of vertices and edges that form a periodic tiling. The unit cell of each lattice includes vertices and edges that cover the whole plane when translated by primitive vectors. For example, the standard Kagome lattice in Figure \ref{fig:standardkagomespring} has three vertices (A,B,C) and six edges in the unit cell and they cover the whole plane when translated by the two primitive vectors $\bm{v}_1, \bm{v}_2$. Such a lattice system can be considered as an elastic material if we view each edge as a Hookean spring. The nonlinear elastic energy of a deformation $u(x) \in \mathbb{R}^2$ on the spring connecting vertices $x_i, x_j$ is 
	\begin{align}\label{nonlinear_elastic_energy}
		{\color{black} \frac{1}{2}} k_{ij} \bigg(|u(x_i)-u(x_j)| - |x_i - x_j|\bigg)^2 = {\color{black} \frac{1}{2}} k_{ij} l_{ij}^2 \left(\frac{|u(x_i)-u(x_j)|}{|x_i - x_j|} - 1\right)^2,
	\end{align}
	where $k_{ij}$ is the spring constant for the spring between $x_i$ and $x_j$ and $l_{ij} = |x_i - x_j|$ is the original spring length. If the deformation is very close to the identity, i.e. $u(x) = x + v(x)$ and the displacement $v(x)$ is small, then the nonlinear elastic energy in \eqref{nonlinear_elastic_energy} can be approximated by
	\begin{align*}
		{\color{black} \frac{1}{2}} k_{ij} l_{ij}^2 \left(\frac{|x_i - x_j + v(x_i) - v(x_j)|}{|x_i - x_j|} - 1\right)^2 &= {\color{black} \frac{1}{2}}  k_{ij} l_{ij}^2 \bigg( \Big(1 + 2\left \langle \frac{x_i - x_j}{|x_i - x_j|}, \frac{v(x_i) - v(x_j)}{|x_i-x_j|}\right \rangle + \frac{|v(x_i) - v(x_j)|^2}{|x_i-x_j|^2}\Big)^{1/2} - 1\bigg)^2\\
		&= {\color{black} \frac{1}{2}}  k_{ij} l_{ij}^2 \bigg(\left \langle \frac{x_i - x_j}{|x_i - x_j|}, \frac{v(x_i) - v(x_j)}{|x_i-x_j|}\right \rangle + h.o.t\bigg)^2\\
		&\approx {\color{black} \frac{1}{2}} k_{ij}\left \langle \frac{x_i - x_j}{|x_i -x_j|}, v(x_i) - v(x_j)\right \rangle^2,
	\end{align*}
	where $h.o.t$ means higher order terms w.r.t. $v(x)$. This leading order term is the linear elastic energy of the small displacement $v(x)$. The squared term is the change of length in the spring direction caused by the displacement $v(x)$. We call it the \textit{first-order spring extension} $e_{ij}$
	\begin{align}\label{first_order_bond_extension}
		\qquad e_{ij} = \Big\langle \hat{b}_{ij}, v(x_i) - v(x_j)\Big\rangle, \qquad \hat{b}_{ij} = \frac{x_i - x_j}{|x_i - x_j|}.
	\end{align}
	Notice that the first-order spring extension $e_{ij}$ is linear in the displacement $v(x)$.
	
	For a spring system with finite size, i.e. one where the number of vertices and edges is finite, a small displacement $v(x)$ that makes all the first-order spring extensions \eqref{first_order_bond_extension} vanish can be achieved by solving a linear system (see e.g. \cite{pellegrino1986matrix}). For a lattice, we can get a similar linear system by assuming the small displacement $v(x)$ is periodic. The periodicity of $v(x)$ might not be the same as the reference lattice. If $v(x)$ shares the same periodicity as that of the reference lattice, we call $v(x)$ one-periodic; if the periodicity of $v(x)$ is $N$ times larger in each lattice direction, we call $v(x)$ $N$-periodic. With periodicity, the displacement $v(x)$ can be reduced to a vector $\bm{v} \in \mathbb{R}^{2n}$ consisting of deformations $v(x_i) \in \mathbb{R}^2$ for every vertex $x_i$ in the unit cell ($n$ is the number of vertices in the unit cell). The first-order spring extension $e_{ij}$ defined in \eqref{first_order_bond_extension} is also periodic because the displacement $v(x)$ and $\hat{b}_{ij}$ are periodic. We can assemble all the first-order spring extensions $e_{ij}$ for each spring in the unit cell as a vector $\bm{e} \in \mathbb{R}^d$, where $d$ is the number of edges in the unit cell. The linear relationship between $v(x)$ and $e_{ij}$ in \eqref{first_order_bond_extension} can be written in matrix-vector form as
	\begin{align}\label{compatibility_matrix}
		\bm{e} = C\bm{v},
	\end{align}
	where $C \in \mathbb{R}^{(2n) \times d}$ is the so-called \textit{compatibility matrix}; it is determined by the geometry of the lattice system and the periodicity of $v(x)$.  For example, the standard Kagome lattice in Figure \ref{fig:standardkagomespring}(b) has three vertices $A,B,C$ and six edges in the unit cell. For any one-periodic displacement $v(x)$, its vector form $\bm{v} = \left(v(A)^T \: v(B)^T \: v(C)^T\right)^T$ is in $\mathbb{R}^6$ and the vector form $\bm{e}$ of $e_{ij}$ is also in $\mathbb{R}^6$. Thus, the compatibility matrix $C \in \mathbb{R}^{6 \times 6}$ is a square matrix.
	\begin{figure}[!htb]
		\begin{minipage}[b]{.48\linewidth}
			\centering
			\subfloat[]{\label{}\includegraphics[width=0.8\linewidth]{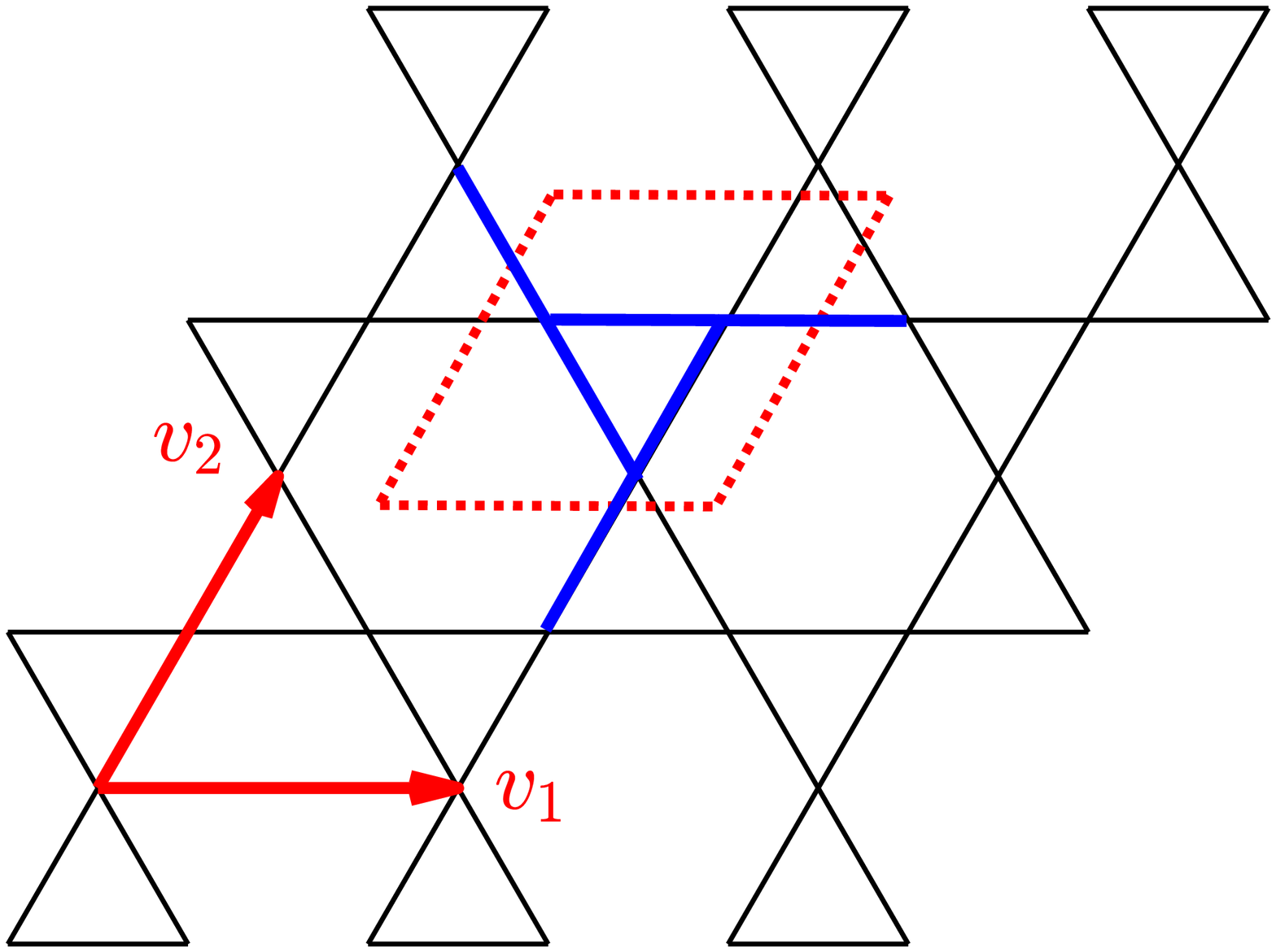}}
		\end{minipage}
		\hfill
		\begin{minipage}[b]{.48\linewidth}
			\centering
			\subfloat[]{\label{}\includegraphics[width=0.6\linewidth]{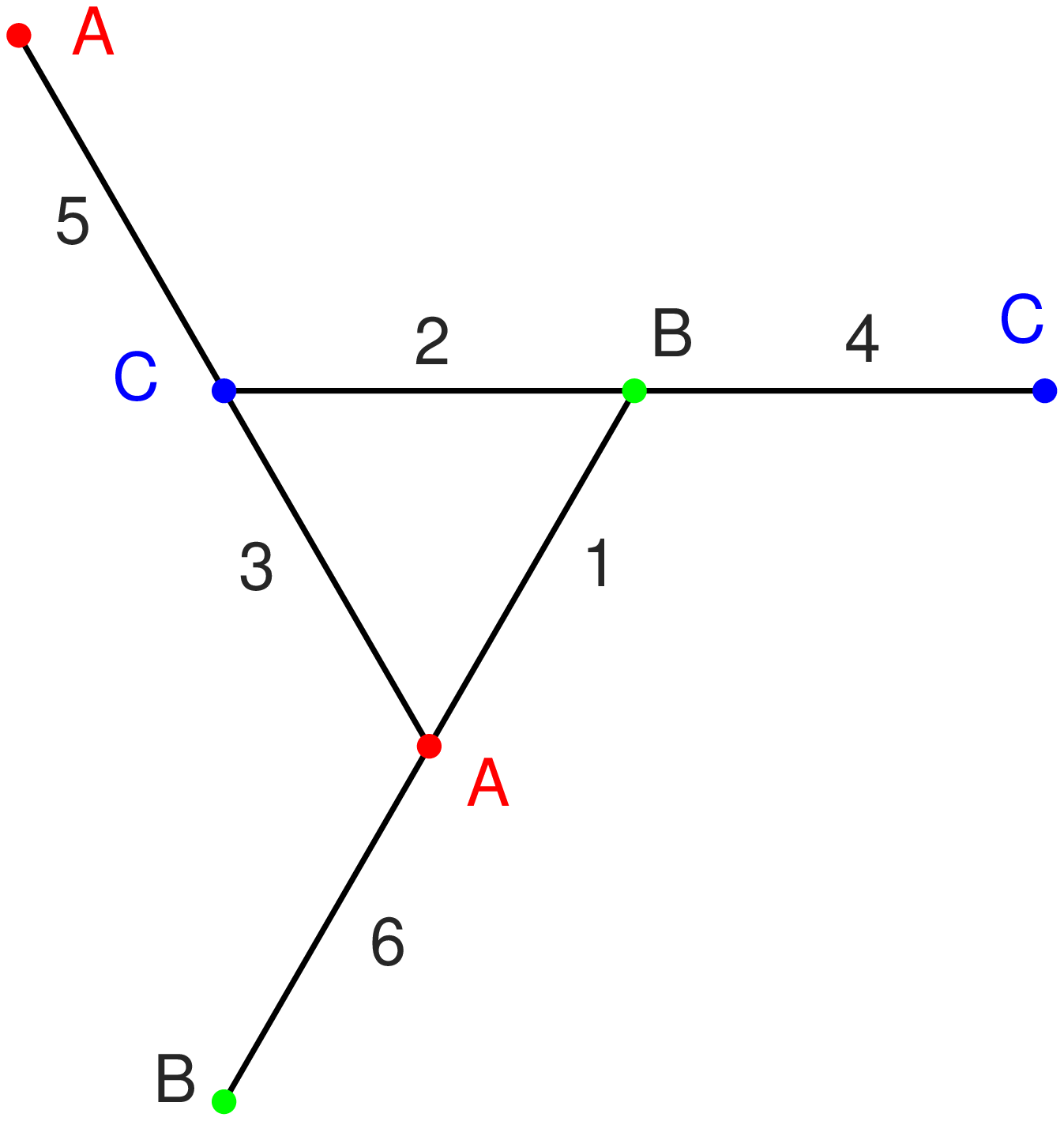}}
		\end{minipage}
		\caption{(a) The standard Kagome lattice: it has two primitive vectors $\bm{v}_1, \bm{v}_2$ and the dotted rhombus is the smallest unit cell of the standard Kagome lattice; (b) a zoomed-in version of the unit cell: the unit cell of the standard Kagome lattice contains six edges and three vertices, marked as $A,B,C$.}
		\label{fig:standardkagomespring}
	\end{figure}
	
	The transpose of the compatibility matrix is used to transform tensions in the springs to the net forces on vertices (the word "tension" does not restrict springs to be stretched only; tensions can also be negative to depict the contractions of springs). To see the relationship between tensions and net forces, we observe that if the tension in the spring between $x_i$ and $x_j$ is $t_{ij}$, then the force given by this spring at the end node $x_i$ is $t_{ij} \hat{b}_{ij}$, where $\hat{b}_{ij}$ indicates the spring direction from $x_j$ to $x_i$. Then the net force $f_i \in \mathbb{R}^2$ on vertex $x_i$ is the sum of forces over all springs connected to vertex $x_i$
	\begin{align}\label{def:netforce}
		f_i =\sum_{ j\sim i} t_{ij} \hat{b}_{ij},
	\end{align}
	where $j \sim i$ means there is a spring between $x_i, x_j$. In the periodic setting, tensions and net forces are periodic; we can assemble tensions in the unit cell as a vector $\bm{t} \in \mathbb{R}^{d}$, and net forces as a vector $\bm{f} \in \mathbb{R}^{2n}$. The linear relationship in \eqref{def:netforce} in the vector form is in fact
	\begin{align}\label{compatibility_transpose}
		\bm{f} = C^T \bm{t},
	\end{align}
	where $C$ is the compatibility matrix. We call the null vectors of $C^T$ \textit{self-stresses}. Each self-stress represents a way to have tensions in springs such that all vertices have zero net forces; hence the lattice material remains in equilibrium. 
	
	We are interested in the null space of the compatibility matrix $C$, since the null vectors of $C$ correspond to periodic displacements that preserve the lengths of the springs to first order. The compatibility matrix has two trivial null vectors, namely the 2-dimensional translations (rotations are ruled out by the periodicity of $v(x)$). A null vector which is not a translation vector is known as a \textit{Guest-Hutchinson (GH) mode} \cite{guest2003determinacy}. We define the space of GH modes as the null space of $C$ modulo the translations, i.e. two null vectors whose difference is a 2-dimensional translation are the same GH mode. If a lattice material does not have GH modes, i.e. the compatibility matrix $C$ has a 2-dimensional null space, then we call it \textit{kinematically stable}.
	
	To look for examples of lattices with GH modes, it is useful to consider the class of Maxwell lattices \cite{guest2003determinacy, hutchinson2006structural, mao2018maxwell}. Maxwell lattices sit on the boundary between flexible and rigid lattice systems. The definition of a Maxwell lattice is a lattice system whose compatibility matrix $C$ is a \textit{square} matrix. An equivalent definition of a 2-dimensional Maxwell lattice (see e.g. \cite{mao2018maxwell}) is that the average number of edges connecting each vertex is 4. It is easy to see using this definition that the standard Kagome lattice in Figure \ref{fig:standardkagomespring} is a Maxwell lattice.
	
	The standard Kagome lattice has GH modes for any periodicity. In fact, we shall show that the compatibility matrix in the $N$-periodic case $C_N$ has a $3N$-dimensional null space. By ignoring translations, we achieve that the space of $N$-periodic GH  modes is $(3N-2)$-dimensional. To explain why $C_N$ has a $3N$-dimensional null space, we start by analyzing the special structure of the compatibility matrix $C_N$ for any periodicity $N$. Let us first take a look at the one-periodic GH mode $\varphi_1(x)$. There are 6 springs in the one-periodic unit cell, and they give six linear constraints on $\varphi_1(x)$ (vertices used here are marked in Figure \ref{fig:self-stresses}(a))
	\begin{align*}
		\big[\varphi_1(B_1) - \varphi_1(C_1)\big] \cdot \hat{b}_1 = 0, \qquad \big[\varphi_1(C_1) - \varphi_1(B_1)\big] \cdot \hat{b}_1 = 0,\\
		\big[\varphi_1(B_1) - \varphi_1(A_1)\big] \cdot \hat{b}_2 = 0, \qquad \big[\varphi_1(A_1) - \varphi_1(B_1)\big] \cdot \hat{b}_2 = 0,\\
		\big[\varphi_1(A_1) - \varphi_1(C_1)\big] \cdot \hat{b}_3 = 0, \qquad \big[\varphi_1(C_1) - \varphi_1(A_1)\big] \cdot \hat{b}_3 = 0,
	\end{align*}
	where $\hat{b}_1, \hat{b}_2, \hat{b}_3$ are unit vectors in the horizontal, 60 degree and 120 degree direction. It can be observed that the six constraints reduce to three linearly independent constraints (the three constraints on the left). Therefore, the one-periodic compatibility matrix $C_1 \in \mathbb{R}^{6\times 6}$ has a 3-dimensional null space; and the space of one-periodic GH modes is 1-dimensional.
	
	This calculation, in fact, gives two important geometric observations that can be generalized to higher periodicity: (a) the periodicity condition on $\varphi_1(x)$ kills one condition on each line in the lattice direction; (b) the linear conditions on lines with different lattice directions are linearly independent. To see them explicitly in the case with higher periodicity, let us consider the two-periodic GH mode $\varphi_2(x)$. There are in total $6*2^2 = 24$ springs in the two-periodic unit cell. To see (a), we first focus on the horizontal solid line across the four vertices $C_1, B_1, C_2, B_2$ in Figure \ref{fig:self-stresses}(d): it gives four linear constraints on $\varphi_2(x)$
	\begin{align*}
		& \big[\varphi_2(B_1) - \varphi_2(C_1)\big] \cdot \hat{b}_1 = 0,\\
		& \big[\varphi_2(C_2) - \varphi_2(B_1)\big] \cdot \hat{b}_1 = 0,\\
		& \big[\varphi_2(B_2) - \varphi_2(C_2)\big] \cdot \hat{b}_1 = 0,\\
		& \big[\varphi_2(C_1) - \varphi_2(B_2)\big] \cdot \hat{b}_1 = 0,
	\end{align*}
	where the last condition is on $\varphi_2(C_1)$ due to periodicity. It can be observed that the last condition is the sum of the first three conditions, and these three conditions are linearly independent. Therefore, the four conditions on the horizontal solid line reduce to three conditions. Similarly, the four conditions on the horizontal dotted line in Figure \ref{fig:self-stresses}(d) reduce to three conditions. The three conditions on the solid line and the three conditions on the dotted line are independent because the vertices are different on the solid and dotted lines. Thus, we obtain (a). We also get six linearly independent conditions in the 60 degree direction and another six linearly independent conditions in the 120 degree direction. It can be checked that the conditions on different lattice directions are linearly independent. This gives (b). Therefore, there are in total $(4-1)*2*3=18$ linearly independent conditions in the two-periodic case (2 means two lines in each lattice direction; and 3 means three lattice directions). Consequently, the compatibility matrix $C_2\in \mathbb{R}^{24\times 24}$ has a 6-dimensional null space; and the space of two-periodic GH modes is 4-dimensional.
	
	The two geometric observations (a)-(b) also work in the $N$-periodic case. Let us first count the reductions of the conditions. In the horizontal direction, there are $N$ lines in the $N$-periodic unit cell, and each line contains $2N$ springs. Using (a), we reduce the $2N^2$ conditions in the horizontal direction by amount $N$. We can reduce the same amount in the 60 and 120 degree directions. Therefore, we reduce the conditions by a total amount $3N$. Using (b), we know the remaining $6N^2-3N$ conditions  are linearly independent. This indicates that the compatibility matrix in the $N$-periodic case $C_N \in \mathbb{R}^{(6N^2) \times (6N^2)}$ has a $3N$-dimensional null space; and the space of $N$-periodic GH modes is $(3N-2)$-dimensional.
	
	We can easily give a basis for the space of $N$-periodic self-stresses ($\ker(C_N^T)$). The standard Kagome lattice has straight lines in the three lattice directions, i.e. horizontal, 60 and 120 degree directions. Let us consider self-stresses that are constant on a line in one of the lattice directions, and zero elsewhere. The one-periodic standard Kagome lattice has three linearly independent self-stresses, shown in Figure \ref{fig:self-stresses}(a)-(c). The two-periodic standard Kagome lattice has six self-stresses, shown in Figure \ref{fig:self-stresses}(d)-(f). We have two linearly independent self-stresses in each lattice direction, since each lattice direction has two straight lines (see the solid and dotted lines in Figure \ref{fig:self-stresses}(d)-(f)). For the $N$-periodic standard Kagome lattice, there are $N$ straight lines in each lattice direction. Thus, we get $3N$ linearly independent $N$-periodic self-stresses, and they form a basis of the space of $N$-periodic self-stresses.
	\begin{figure}[!htb]
		\begin{minipage}[b]{.32\linewidth}
			\centering
			\subfloat[]{\label{}\includegraphics[width=0.8\linewidth]{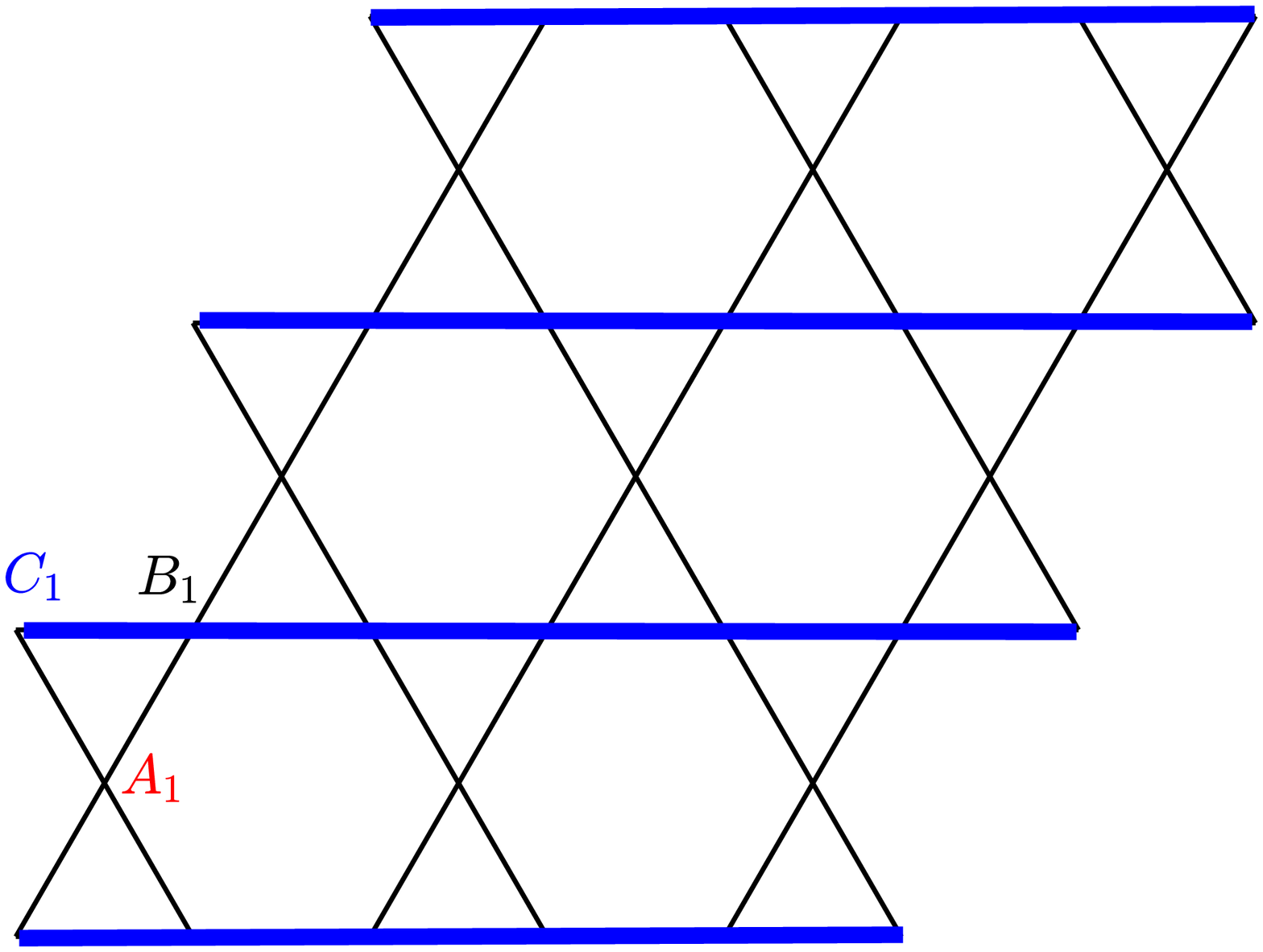}}
		\end{minipage}
		\begin{minipage}[b]{.32\linewidth}
			\centering
			\subfloat[]{\label{}\includegraphics[width=0.8\linewidth]{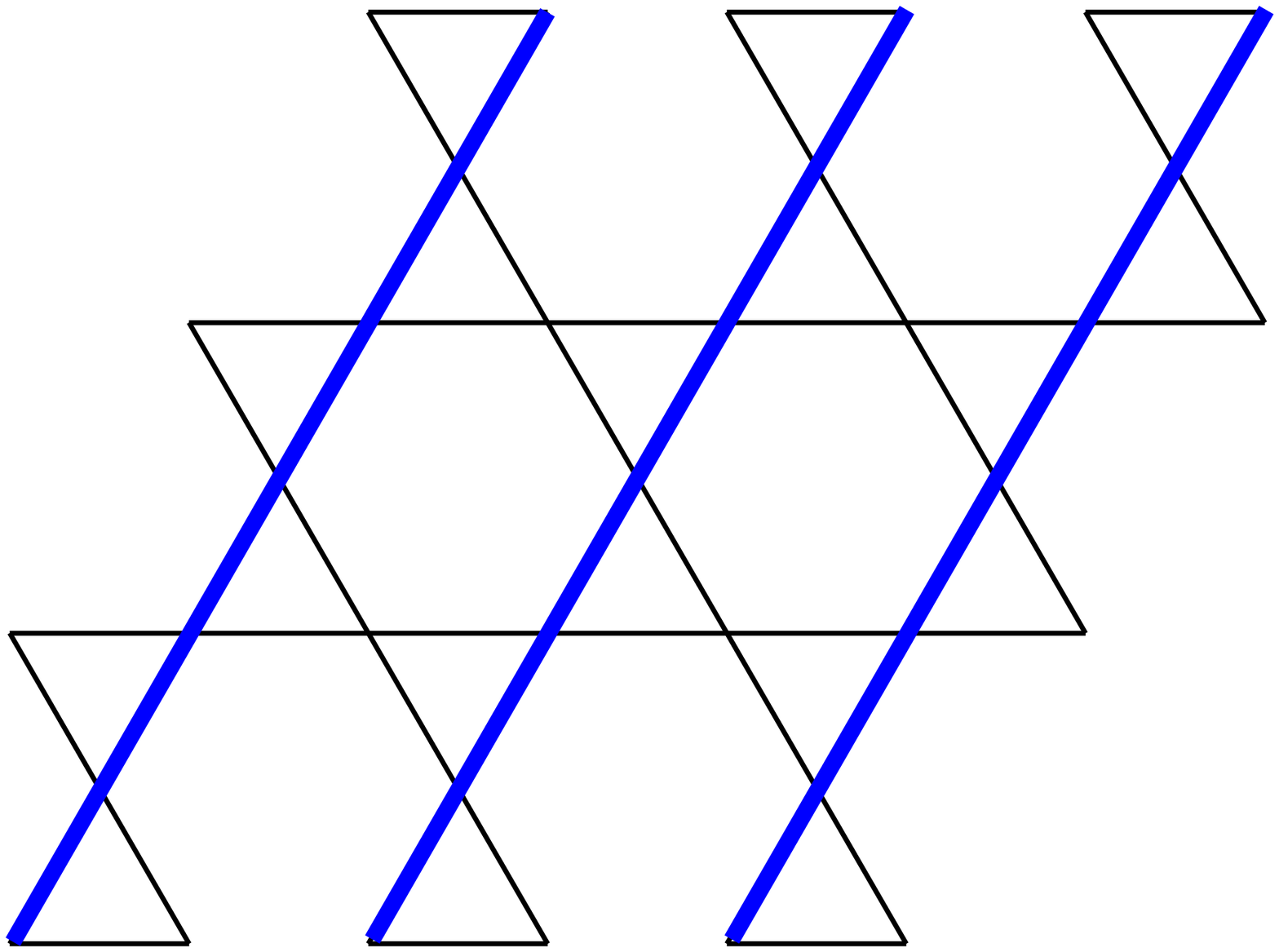}}
		\end{minipage}
		\begin{minipage}[b]{.32\linewidth}
			\centering
			\subfloat[]{\label{}\includegraphics[width=0.8\linewidth]{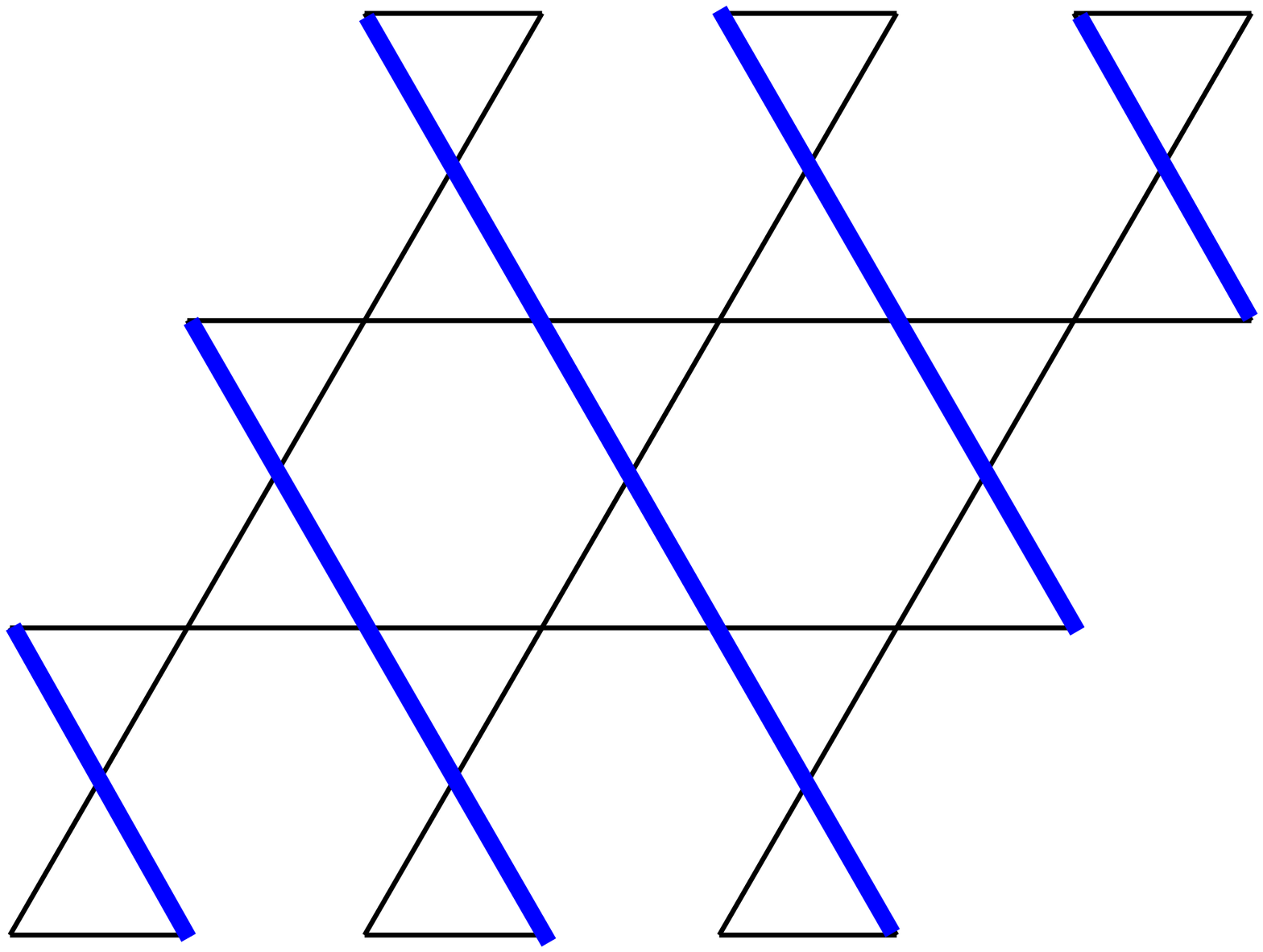}}
		\end{minipage}\par \medskip
		\begin{minipage}[b]{.32\linewidth}
			\centering
			\subfloat[]{\label{}\includegraphics[width=0.8\linewidth]{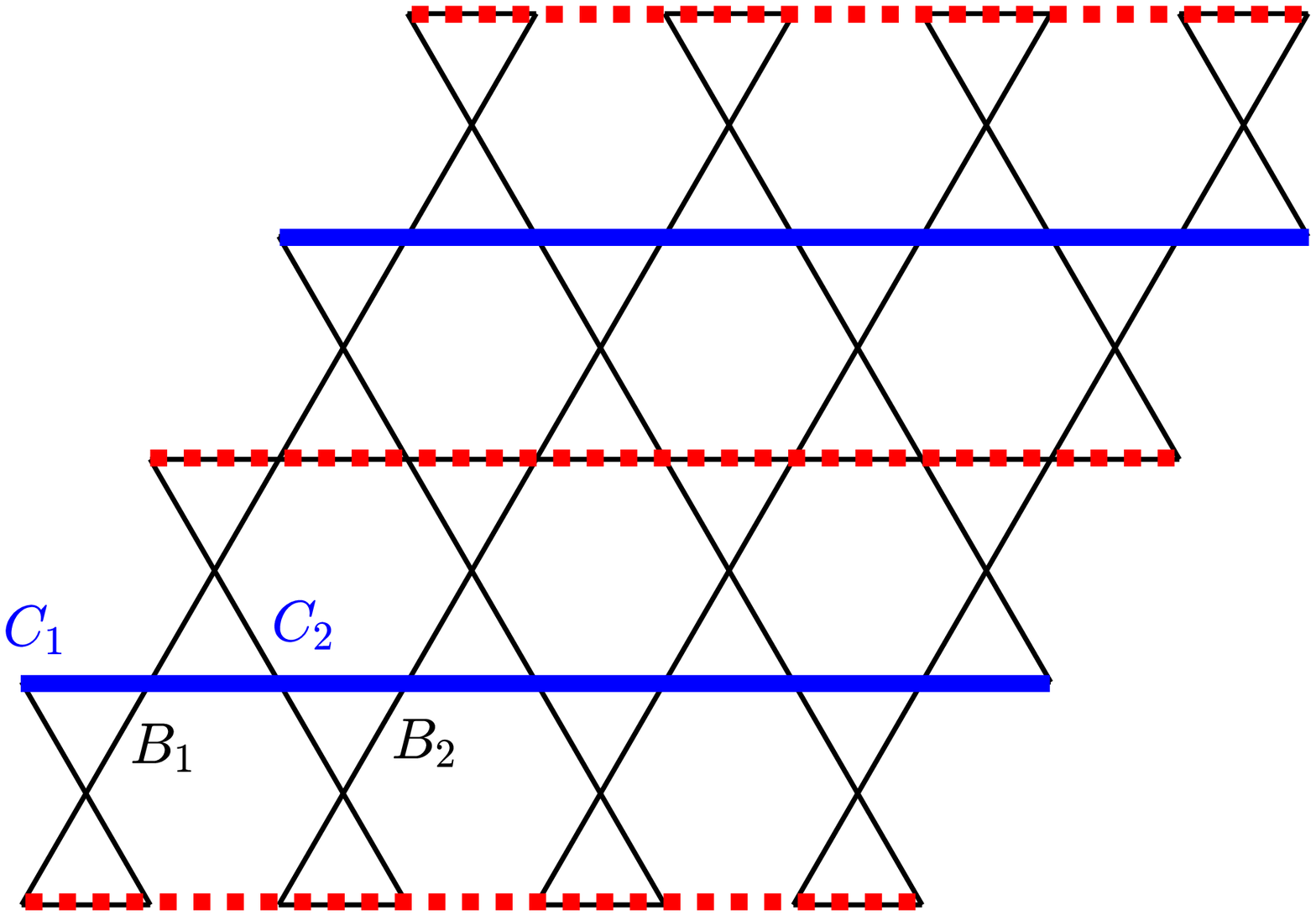}}
		\end{minipage}
		\begin{minipage}[b]{.32\linewidth}
			\centering
			\subfloat[]{\label{}\includegraphics[width=0.8\linewidth]{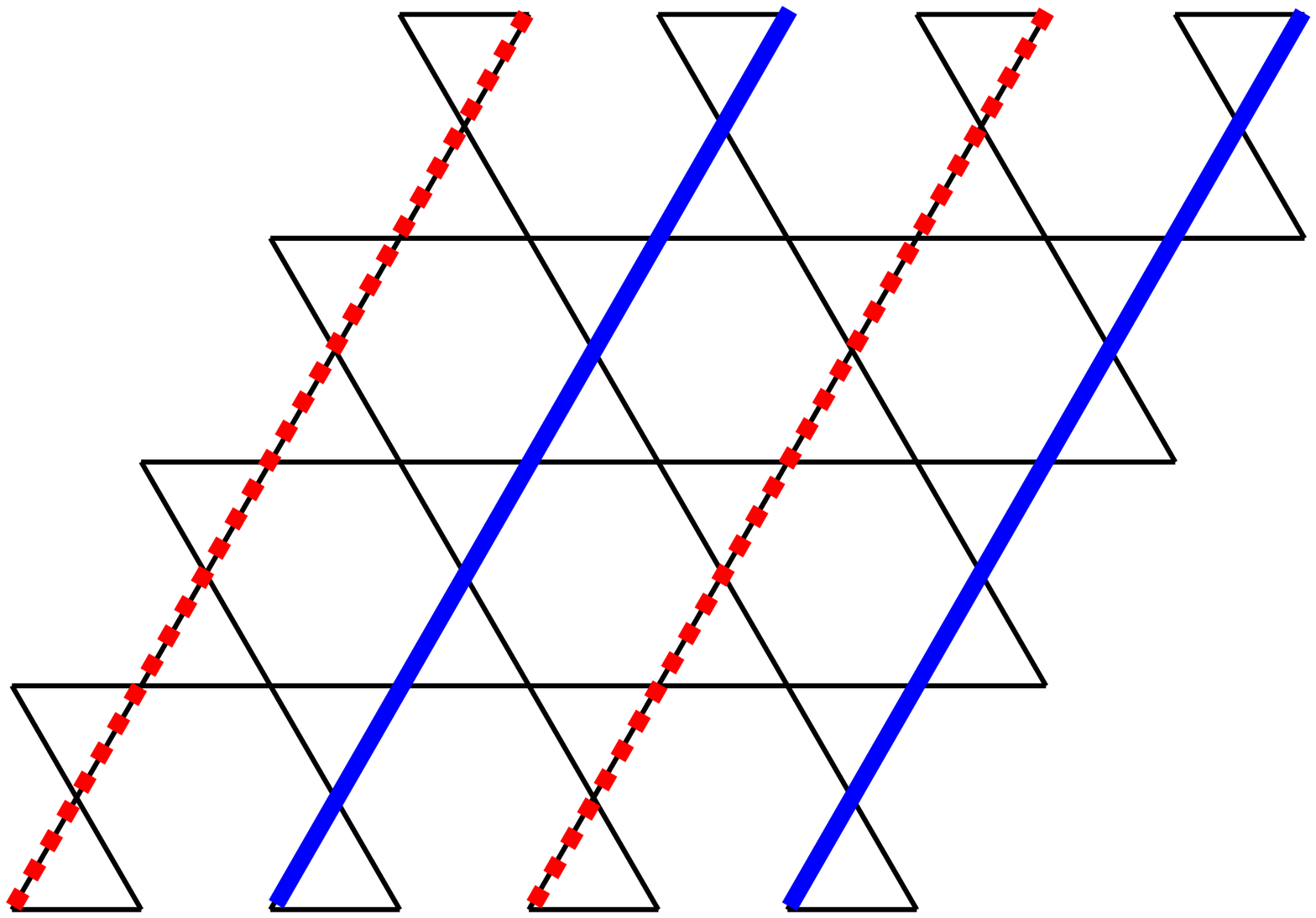}}
		\end{minipage}
		\begin{minipage}[b]{.32\linewidth}
			\centering
			\subfloat[]{\label{}\includegraphics[width=0.8\linewidth]{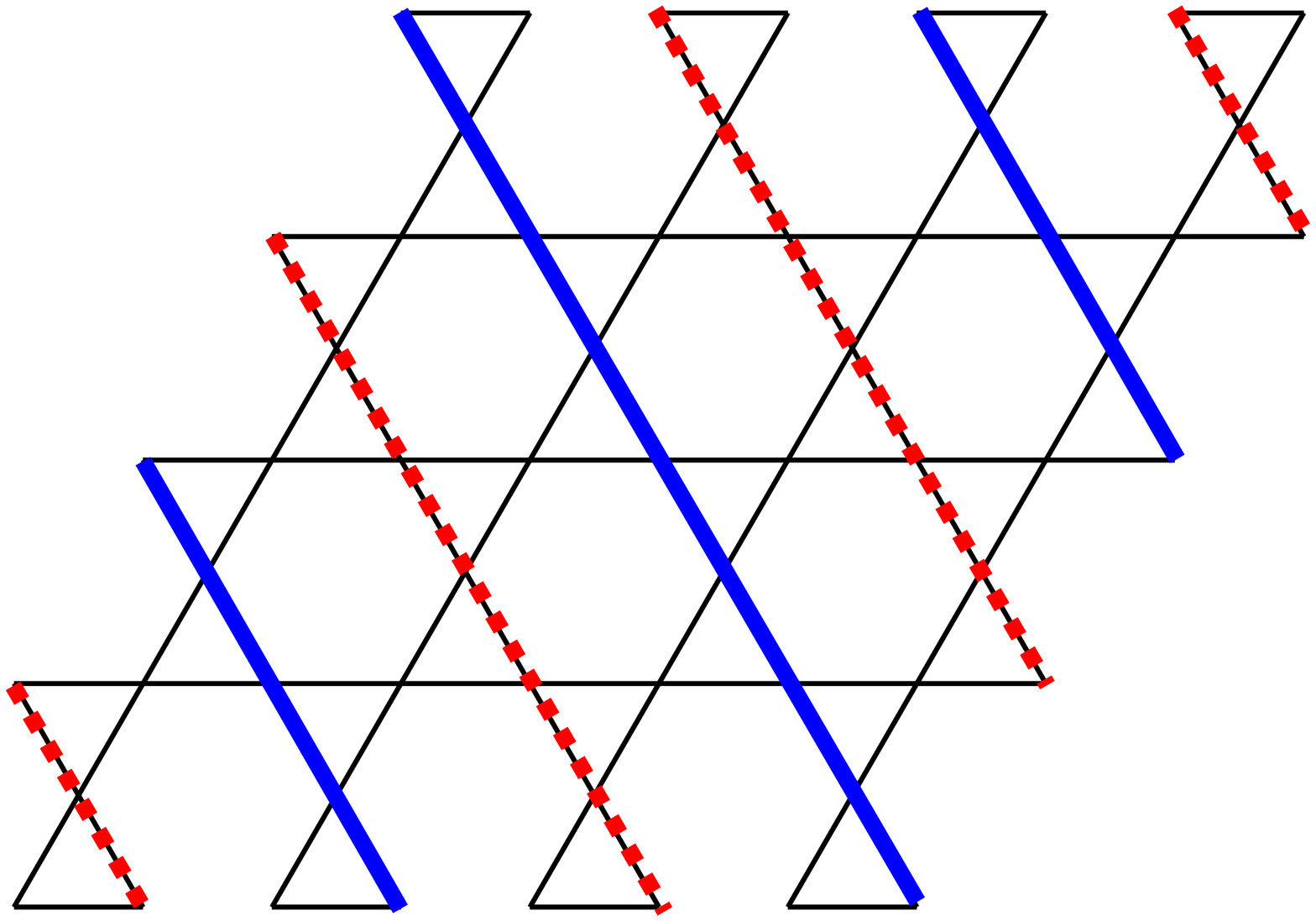}}
		\end{minipage}
		\caption{Self-stresses on the standard Kagome lattice: (a)-(c) show the three one-periodic self-stresses: each self-stress is constant on the solid line and zero elsewhere; (d)-(f) show the six two-periodic self-stresses. Each plot contains two self-stresses: one is constant on the solid line and zero elsewhere; another is constant on the dotted line and zero elsewhere.}
		\label{fig:self-stresses}
	\end{figure}
	
	\subsection{The effective Hooke's law}\label{subsec:effective}
	{\color{black} There is a huge literature on the effective behavior of spatially periodic mechanical systems. Some of it studies "cellular materials" (see e.g. \cite{cioranescu2012homogenization, gibson_ashby_1997}), and some of it studies lattices of springs connected at nodes where rotations are free (see e.g. \cite{guest2003determinacy, hutchinson2006structural, mao2018maxwell}). Our focus here is on the latter class of structures. The effective Hooke's law of a periodic linearly elastic structure can be understood in various ways, including asymptotic expansions, energy minimizations and the relationship between macroscopic stresses and strains (see e.g. \cite{cioranescu1999introduction, hill1963elastic}). For our limited purposes -- an understanding of GH modes -- we need to discuss the linearly elastic effective Hooke's law of a lattice system of springs. To make our paper self-contained, and since we are not aware of a convenient treatment elsewhere, we review this topic here\footnote{\color{black}We do \textit{not} claim that the linear effective behavior gives a good description of the mechanical response of a system like the Kagome lattice; but it is nevertheless needed for the study of GH modes}.}
	
	{\color{black}Making a choice, we shall emphasize the role of elastic energy minimization. By analogy with what is done for periodic elastic composites, for any symmetric strain $\xi \in \mathbb{R}^{2\times 2}_{\text{sym}}$, the effective linear elastic energy of a 2D lattice of springs at strain $\xi$ is the minimum average energy obtained by a displacement with average strain $\xi$. Since every such displacement can be expressed as $v(x) = \xi x + \varphi(x)$ with $\varphi(x)$ periodic, our starting point is the definition}
	\begin{align}\label{effective_energy}
		E_{\text{eff}}(\xi) &= \frac{1}{S}\min_{\substack{\varphi(x) \text{ is }\\ Q\text{-periodic}}} F(\xi, \varphi), &F(\xi, \varphi) &= \sum_{i \sim j} {\color{black} \frac{1}{2}} k_{ij} \bigg(l_{ij} \hat{b}_{ij}^T \xi \hat{b}_{ij} + \big\langle\varphi(x_i) - \varphi(x_j), \hat{b}_{ij} \big \rangle \bigg)^2,
	\end{align}
	where $S$ is the area of the unit cell $Q$ and $k_{ij}, l_{ij},\hat{b}_{ij}$ are defined near \eqref{nonlinear_elastic_energy} and \eqref{first_order_bond_extension}. Here the $Q$-periodic function $\varphi(x)$ shares the same period $Q$ as the lattice. The squared term in $F(\xi,  \varphi)$ is the first-order spring extension $e_{ij}$ for the displacement $v(x) = \xi x + \varphi(x)$ using \eqref{first_order_bond_extension}, i.e.
	\begin{align}\label{eqn:first_order_bond_effective}
		e_{ij} = \big\langle \xi(x_i - x_j) + \varphi(x_i) - \varphi(x_j), \hat{b}_{ij}\big\rangle = l_{ij} \hat{b}_{ij}^T \xi \hat{b}_{ij} + \big\langle \varphi(x_i) - \varphi(x_j), \hat{b}_{ij}\big\rangle.
	\end{align}
	Therefore, the objective function $F(\xi, \varphi) = \sum_{i \sim j} {\color{black}\frac{1}{2}} k_{ij} e_{ij}^2$ is the linear elastic energy for the displacement $v(x) = \xi x + \varphi(x)$, summed over all springs associated with the unit cell $Q$. We observe that the objective function is quadratic and convex for a given $\xi \in \mathbb{R}^{2\times 2}_{\text{sym}}$; hence, an optimal solution $\varphi_\xi(x)$ must exist. However, the optimal solution is not unique because the objective function is not strictly convex. We can, however, choose a special $\varphi^*_\xi(x)$ such that it is an optimal solution, uniquely determined by $\xi$, and linear in $\xi$ (the $\varphi^*_\xi(x)$ is, in fact, the minimum norm solution; see Appendix \eqref{eqn:unique-varphi} for the exact formula for $\varphi^*_\xi(x)$). We stick to the notation that $\varphi_\xi^*(x)$ is the optimal solution with minimum norm and $\varphi_\xi(x)$ is any optimal solution.
	
	The effective linear elastic energy $E_{\text{eff}}(\xi)$ written as a minimization problem {\color{black}is essentially the same as the definition in \cite{hutchinson2006structural}}, where $E_{\text{eff}}(\xi)$ is the average linear elastic energy for a displacement $\widetilde{v}_{\xi}(x)=\xi x + \widetilde{\varphi}_\xi(x)$ such that the tension caused by $\widetilde{v}_{\xi}(x)$ is a self-stress. To see why the two definitions match, we observe that our optimal $\varphi^*_\xi(x)$ in \eqref{effective_energy} yields a self-stress
	\begin{align}\label{special_tension_main}
		t^*_{\xi, ij}&= k_{ij} \bigg(l_{ij} \hat{b}_{ij}^T \xi \hat{b}_{ij} + \big\langle \varphi^*_\xi(x_i) - \varphi^*_\xi(x_j), \hat{b}_{ij}\big\rangle\bigg).
	\end{align}
	This comes from the optimality of $\varphi^*_\xi(x)$ (in fact, any optimal $\varphi_\xi(x)$ also yields a self-stress), and is conveniently shown using linear algebra as we do in Appendix \ref{appendix-a}. Therefore, by taking $\widetilde{\varphi}_\xi(x) = \varphi^*_\xi(x)$, we achieve that our definition of $E_{\text{eff}}(\xi)$ as a minimization problem is the same as the traditional definition. We also show in Appendix \ref{appendix-a} that the effective linear elastic energy $E_{\text{eff}}(\xi)$ is quadratic in $\xi$ and independent of the size of the periodic unit cell (see Lemma \ref{quadratic_law} and Proposition \ref{prop:independence-of-size}). Hence, the effective linear elastic energy $E_{\text{eff}}(\xi)$ has the form
	\begin{align}\label{eqn:effective_energy_quadratic_main}
		E_{\text{eff}}(\xi) &= {\color{black} \frac{1}{2}} \langle A_{\text{eff}}\xi, \xi \rangle,
	\end{align}
	where $A_{\text{eff}}$ is the effective tensor. This 4th order tensor $A_{\text{eff}}$ is called the effective Hooke's law. 
	
	The physical meaning of the effective Hooke's law is that when a lattice material achieves a strain $\xi$ on the macroscopic scale, it generates a macroscopic stress $\bar{\sigma} = A_{\text{eff}}\xi$ and the overall elastic energy is $E_{\text{eff}}(\xi) = \langle \bar{\sigma}, \xi\rangle$. In fact, on the microscopic scale, the macroscopic strain $\xi$ is achieved by the special displacement $v(x) = \xi x + \varphi^*_{\xi}(x)$, where $\varphi^*_\xi(x)$ is the optimal solution in \eqref{effective_energy}; and the macroscopic stress $\bar{\sigma}$ is locally achieved by the self-stress $t^*_{\xi, ij}$ in \eqref{special_tension_main} (see Lemma \ref{quadratic_law} in Appendix \ref{appendix-a}), since
	\begin{align}\label{eqn:macroscopic_stress_main}
		\bar{\sigma} = A_\text{eff}\xi = \frac{1}{S} \sum_{i \sim j} t^*_{\xi, ij} l_{ij} \hat{b}_{ij} \otimes \hat{b}_{ij}.
	\end{align}
	It is easy to observe that $\bar{\sigma} = A_\text{eff}\xi$ is symmetric and depends linearly in $\xi$, since $t_{\xi, ij}$ in \eqref{special_tension_main} depends linearly in $\xi$. The image space $\Ima(A_{\text{eff}}) = \{A_{\text{eff}}\xi| \xi \in \mathbb{R}_{\text{sym}}^{2\times 2}\}$ consists of the macroscopic stresses that can be achieved by the lattice material. If a lattice material sustains all macroscopic stresses, i.e. $\dim \Ima(A_{\text{eff}}) = 3$, then we call it \textit{non-degenerate}. Non-degeneracy is also known as the \textit{static stability}. With the effective energy in the form of \eqref{eqn:effective_energy_quadratic_main} and  \eqref{eqn:macroscopic_stress_main}, we can prove the Guest-Hutchinson Theorem by a simple counting argument.
	
	\begin{proof}[Proof of Theorem \ref{thm:GH}]
		If $A_{\text{eff}}$ is non-degenerate, then $\Ima(A_{\text{eff}})$ is three-dimensional. The linear relationship between $A_\text{eff}\xi$ and $t^*_{\xi, ij}$ in \eqref{eqn:macroscopic_stress_main} indicates that there must exist three linearly independent self-stresses associated to a basis of $\Ima (A_{\text{eff})}$. Therefore, the space of self-stresses is at least three-dimensional, i.e. $\dim(\ker(C^T)) \geq 3$. Since a Maxwell lattice has a square compatibility matrix $C$, the null space of $C$ is at least three-dimensional because $\dim(\ker(C))  = \dim(\ker(C^T)) \geq 3$. So besides the two translations, there must exist a GH mode.
	\end{proof}
	
	\section{Periodic mechanisms and GH modes}\label{sec:periodic-mechanism-GH-modes}
	We explore the idea that a periodic mechanism reveals at least one of the following degeneracies: (1) a macroscopic degeneracy, in the form of a non-trivial null vector of $\xi \in \ker(A_{\text{eff}})$; (2) a microscopic degeneracy, in the form of a GH mode. We also review the one-periodic mechanism of the Kagome lattice and use it as a transparent example to illustrate how it reveals a null vector of $A_{\text{eff}}$ for the twisted Kagome lattice and a GH mode for the standard Kagome lattice. The relationship between periodic mechanisms and GH modes also raises an interesting question: are GH modes always linearizations of some periodic mechanisms? For the one-periodic standard Kagome lattice, the answer is yes since the only GH mode is the linearization of the one-periodic mechanism (see section \ref{one-periodic-mechanism}). However, for the two-periodic standard Kagome lattice, the answer is no, as we will discuss in section \ref{two-periodic-GH-modes}.
	\subsection{GH modes and infinitesimal versions of periodic mechanisms}\label{subsec:GH-mechanism}
	In section \ref{sec:preliminary}, we have studied the linear elastic mechanics of a lattice, involving small displacements $v(x)$. From now on, we switch gears to consider mechanisms, i.e. large deformations that have zero nonlinear elastic energy. Our notation reflects this distinction by using $v(x)$ for linear displacements and $u(x)$ for nonlinear deformations. We focus primarily on periodic mechanisms. A \textit{periodic mechanism} of a lattice material $u(x,t) \in \mathbb{R}^2$ is a smooth deformation parameterized by $t$ that preserves the lengths of all the springs and transforms the reference lattice into a different periodic structure (a new lattice) for all $t \in [-t_0, t_0]$. We emphasize that a periodic mechanism $u(x,t)$ is typically not a periodic function of $x$. It deforms the reference lattice to a different lattice that might have a different unit cell. Therefore, a periodic mechanism $u(x,t)$ has a macroscopic deformation gradient $F(t) \in \mathbb{R}^{2\times 2}$ that deforms the unit cell of the reference lattice to that of the deformed lattice at time $t$. In other words, if $\bm{v}_1, \bm{v}_2 \in \mathbb{R}^2$ are primitive vectors of the reference lattice, then $F(t) \bm{v}_1$ and $F(t) \bm{v}_2$ are primitive vectors of the deformed lattice at time $t$. We can write the periodic mechanism in the form $u(x,t) = F(t) \cdot x + \varphi(x,t)$, where $\varphi(x,t) \in \mathbb{R}^2$ is periodic in $x$ for all $t$. The periodicity of $\varphi(x,t)$ depends on the periodic mechanism $u(x,t)$. If $u(x,t)$ is $N$-periodic, then $\varphi(x,t)$ is $N$-periodic in $x$. An example of a one-periodic mechanism is shown in Figure \ref{fig:one-periodic-mechanism}(a). We observe that the periodic structure on the left is deformed into a different periodic structure on the right; there is a macroscopic deformation gradient that transforms the two primitive vectors $\bm{v}_1, \bm{v}_2$ to $\bm{v}_1^{\text{def}}, \bm{v}_2^{\text{def}}$. Translations and rotations are trivial periodic mechanisms, so we consider two periodic mechanisms the same if they differ only by translation and rotation, i.e. $u_1(x,t) = F(t) x + \varphi(x,t)$ and $u_2(x,t) = R(t){\color{black}[F(t)x + \varphi(x,t)] }+ d(t)$ are equivalent in our notation, where $R(t) \in SO(2)$ and $d(t)\in \mathbb{R}^2$ is a translation. By polar decomposition, we can always take $F(t) = R(t) S(t)$, where $R(t) \in SO(2)$ and $S(t)$ is symmetric. Replacing $u(x,t)$ by $R^{-1}(t)u(x,t)$, we can assume that the macroscopic deformation gradient $F(t)$ is always symmetric.
	
	The infinitesimal version of a periodic mechanism around the reference lattice at $t=0$ contains two parts: the infinitesimal macroscopic deformation $\color{black}\frac{d F}{d t} \Big|_{t=0} = \dot{F}(0)$ and the infinitesimal periodic oscillation $\color{black}\frac{\partial \varphi(\cdot,t)}{\partial t}\Big|_{t = 0} = \dot{\varphi}(\cdot, 0)$. The following proposition explains when $\dot{F}(0)$ is a non-trivial null vector of $A_{\text{eff}}$ and when $\dot{\varphi}(x,0)$ is a GH mode.
	\begin{proposition}\label{degeneracy}
		Consider a periodic mechanism $u(x,t) = F(t) \cdot x + \varphi(x,t)$ with $F(t)$ symmetric. If $\dot{F}(0) = 0$ and $\dot{\varphi}(\cdot, 0)$ is not a translation, then $\dot{\varphi}(\cdot, 0)$ is a GH mode for the reference lattice; if $\dot{F}(0) \neq 0$, then $\dot{F}(0)$ is a non-trivial null vector for the effective tensor $A_{\text{eff}}$, i.e. $A_{\text{eff}} \dot{F}(0) = 0$.
	\end{proposition}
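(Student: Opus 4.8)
The plan is to differentiate the length-preservation constraint of the mechanism at $t=0$ and to recognize the result as the vanishing of the first-order spring extensions of the velocity field. First I would normalize the mechanism so that at $t=0$ it sits at the reference lattice, i.e. $u(x,0)=x$; since $F(t)$ is symmetric this forces $F(0)=I$ and $\varphi(x,0)=0$. The relevant velocity field is then $v(x):=\dot u(x,0)=\dot F(0)\,x+\dot\varphi(x,0)$. Because $u(x,t)$ is a mechanism, every spring keeps its length, so $|u(x_i,t)-u(x_j,t)|^2\equiv l_{ij}^2$ for each edge $(i,j)$ and all $t$. Differentiating this identity at $t=0$ and using $u(x_i,0)-u(x_j,0)=x_i-x_j$ yields $\langle \dot u(x_i,0)-\dot u(x_j,0),\,x_i-x_j\rangle=0$. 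Dividing by $l_{ij}$ and comparing with the first-order extension formula \eqref{eqn:first_order_bond_effective} (with $\xi=\dot F(0)$ and periodic part $\dot\varphi(\cdot,0)$), this says precisely that the first-order spring extension $e_{ij}$ of $v$ vanishes for every spring. This single observation drives both halves of the proposition.

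For the first assertion, suppose $\dot F(0)=0$. Then $v=\dot\varphi(\cdot,0)$ is a periodic displacement whose first-order extensions all vanish, i.e. $C\bm v=0$ in the notation of \eqref{compatibility_matrix}, so $\bm v\in\ker C$. By the definition of GH modes as null vectors of $C$ modulo translations, the hypothesis that $\dot\varphi(\cdot,0)$ is not a translation makes it a GH mode for the reference lattice.

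For the second assertion, suppose $\dot F(0)\neq 0$; note it is symmetric since $F(t)$ is. Evaluating the objective in \eqref{effective_energy} at $\xi=\dot F(0)$ and $\varphi=\dot\varphi(\cdot,0)$ gives $F(\dot F(0),\dot\varphi(\cdot,0))=\sum_{i\sim j}\tfrac12 k_{ij}e_{ij}^2=0$; since this functional is nonnegative, the value $0$ is its minimum, whence $E_{\text{eff}}(\dot F(0))=0$. Because $E_{\text{eff}}(\xi)=\tfrac12\langle A_{\text{eff}}\xi,\xi\rangle$ is a nonnegative quadratic form, $A_{\text{eff}}$ is positive semidefinite, and $\langle A_{\text{eff}}\dot F(0),\dot F(0)\rangle=0$ forces $A_{\text{eff}}\dot F(0)=0$, a non-trivial null vector. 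The main subtlety I expect here is a periodicity mismatch: $E_{\text{eff}}$ is defined by minimizing over $Q$-periodic $\varphi$, whereas an $N$-periodic mechanism only supplies an $N$-periodic competitor $\dot\varphi(\cdot,0)$. To use this competitor legitimately I would invoke the size-independence of the effective energy (Proposition \ref{prop:independence-of-size}), which lets me compute $E_{\text{eff}}(\dot F(0))$ using the enlarged $N$-periodic cell and hence admit $\dot\varphi(\cdot,0)$; with that in hand the energy still vanishes and the conclusion follows as above.
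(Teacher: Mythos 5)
Your proposal is correct and follows essentially the same route as the paper's proof: differentiate the length-preservation identity at $t=0$, recognize the result as the vanishing of the first-order spring extensions of $\dot F(0)\,x+\dot\varphi(\cdot,0)$, and then split into the two cases (null vector of $C$ modulo translations, respectively zero effective energy at $\xi=\dot F(0)$). Your two added refinements --- using positive semidefiniteness of $A_{\text{eff}}$ to pass from $\langle A_{\text{eff}}\dot F(0),\dot F(0)\rangle=0$ to $A_{\text{eff}}\dot F(0)=0$, and invoking Proposition \ref{prop:independence-of-size} to handle the $N$-periodic competitor --- are details the paper leaves implicit, and they strengthen rather than change the argument.
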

	\begin{proof}
		A periodic mechanism preserves the lengths of all springs for any $t \in [-t_0, t_0]$, i.e. $\big|u(x_i, t) - u(x_j,t)\big|^2 = \big|u(x_i,0) - u(x_j,0)\big|^2$ for all connected $x_i, x_j$ and $t \in [-t_0, t_0]$ . {\color{black}{Taking the time derivative}} and evaluating at $t=0$ gives
		\begin{align}\label{infinitesimal_mechanism}
			\big \langle \dot{F}(0)\cdot (x_i - x_j) + \dot{\varphi}(x_i,0) - \dot{\varphi}(x_j,0), u(x_i, 0) - u(x_j, 0) \big \rangle = 0,
		\end{align}
		where $u(x_i, 0) - u(x_j, 0) = x_i - x_j = l_{ij} \hat{b}_{ij}$ is parallel to the spring direction $\hat{b}_{ij}$. This indicates that when $\dot{F}(0)$ vanishes, the periodic $\dot{\varphi}(\cdot, 0)$ corresponds to a null vector of the compatibility matrix of the reference lattice at $t=0$, i.e. $\big \langle \dot{\varphi}(x_i,0) - \dot{\varphi}(x_j,0), \hat{b}_{ij} \big \rangle = 0$. By the assumption that $\dot{\varphi}(x,0)$ is not a translation, it must be a GH mode.
		
		When $\dot{F}(0) \neq 0$, it is a non-trivial null vector of $A_{\text{eff}}$. To see why, we observe that $\big \langle \dot{F}(0) \cdot (x_i - x_j), \hat{b}_{ij} \big \rangle = l_{ij} \hat{b}_{ij}^T \dot{F}(0)\hat{b}_{ij}$. Hence from \eqref{infinitesimal_mechanism}, we get
		\begin{align*}
			l_{ij} \hat{b}_{ij}^T \dot{F}(0) \hat{b}_{ij} + \big \langle \dot{\varphi}(x_i,0) - \dot{\varphi}(x_j,0), \hat{b}_{ij} \big \rangle = 0.
		\end{align*}
		This indicates that the first-order spring extension $e_{ij}$ in \eqref{eqn:first_order_bond_effective} for the infinitesimal deformation $u(x) = \dot{F}(0) x + \dot{\varphi}(x, 0)$ vanishes for all springs. Moreover, the effective linear elastic energy $E_{\text{eff}}(\xi)$ vanishes at $\xi = \dot{F}(0)$ because we can choose $\dot{\varphi}(x, 0)$ as the displacement $\varphi_\xi(x)$ in \eqref{effective_energy}. Thus, the macroscopic strain $\dot{F}(0)$ is a non-trivial null vector for the effective tensor $A_{\text{eff}}$.
	\end{proof}
	{\color{black}Proposition \ref{degeneracy} justifies our statement at the beginning of section \ref{sec:periodic-mechanism-GH-modes} that when a lattice has a periodic mechanism, either its linear elastic behavior is macroscopically degenerate (this occurs when $\dot{F}(0) \neq 0$) or else its linear elastic behavior is microscopically degenerate (in the sense that there is a GH mode $\dot{\varphi}(\cdot,0) \neq 0$)\footnote{\color{black}We do not exclude the case where a mechanism induces both macroscopic and microscopic degeneracy. This can happen, for example, in a $2\times 2$ periodic mechanism of a 2D square lattice.}.}
	
	However, Proposition \ref{degeneracy} does not tell us whether a GH mode comes from a periodic mechanism. The answer to this question is not trivial. In general, we shall show in section \ref{two-periodic-GH-modes} that for the two-periodic standard Kagome lattice, there are plenty of GH modes that do not come from mechanisms. But as we review in the following subsection, the space of GH modes for the one-periodic standard Kagome lattice is one-dimensional and its basis vector comes from a one-periodic mechanism. 
	
	\subsection{The one-periodic mechanism and some consequences}\label{one-periodic-mechanism}
	We revisit the well-known one-periodic mechanism of the Kagome lattice, which is a transparent example of Proposition \ref{degeneracy}. As we shall explain, the one-periodic mechanism reveals that (1) the standard Kagome lattice has a GH mode $\dot{\varphi}(x,0)$ because $\dot{F}(0)$ vanishes; (2) every GH mode of the one-periodic standard Kagome lattice is a multiple of $\dot{\varphi}(x,0)$, and therefore every GH mode comes from a scaled version of the one-periodic mechanism; (3) the twisted Kagome lattices are macroscopically degenerate w.r.t isotropic compressions and expansions, i.e. $A_{\text{eff}}I = 0$.
	
	Let us first review the one-periodic mechanism which deforms the standard Kagome lattice to a twisted Kagome lattice. For simplicity, we refer to the twisted Kagome lattice in Figure \ref{fig:one-periodic-mechanism}(a) by $L_{\theta}$, where $2\theta$ is the angle between the two triangles in its unit cell. The standard Kagome lattice corresponds to $\theta = \frac{\pi}{3}$. We get a \textit{one-parameter} one-periodic mechanism from the standard Kagome lattice to a twisted Kagome lattice $L_{\theta}$ by smoothly varying the angle between the two triangles in the unit cell. Geometrically, this one-periodic mechanism rotates the two triangles in the unit cell, which are shaded in Figure \ref{fig:one-periodic-mechanism}(a), by the same amount but in opposite directions. We denote this one-periodic mechanism as $u_{\frac{\pi}{3} \shortto \theta}(x)$, where $x$ are vertices of the standard Kagome lattice. This mechanism can be written as $u_{\frac{\pi}{3}\shortto \theta}(x) = F_\theta \cdot x + \varphi_\theta(x)$, where $F_\theta$ is the macroscopic deformation gradient and $\varphi_\theta(x)$ is the one-periodic oscillation (see Appendix \ref{appendix-b} for the explicit formulas for this one-periodic mechanism). Using the explicit representation of this one-periodic mechanism and the fact that the macroscopic deformation $F_\theta$ maps the primitive vectors of the standard Kagome lattice $\bm{v}_1, \bm{v}_2$  to the primitive vectors $\bm{v}_1^{\text{def}}, \bm{v}_2^{\text{def}}$ of the deformed lattice $L_{\theta}$ shown in Figure \ref{fig:one-periodic-mechanism}(a), we get the formulas of $\bm{v}_1^\text{def}$ and $\bm{v}_2^\text{def}$
	\begin{align*}
		&\bm{v}_1 = (2,0)^T  &\rightarrow&  &\bm{v}_1^{\text{def}} &= F_\theta \bm{v}_1 = \cos(\frac{\pi}{3} - \theta) (2,0)^T,\\
		&\bm{v}_2 = (1,\sqrt{3})^T  &\rightarrow&  &\bm{v}_2^{\text{def}} &= F_\theta \bm{v}_2 =\cos(\frac{\pi}{3} - \theta) (1, \sqrt{3})^T,
	\end{align*}
	Evidently, the macroscopic deformation $F_\theta$ is an isotropic compression
	\begin{align}\label{macroscopic-one-periodic}
		F_\theta = \cos(\frac{\pi}{3} - \theta) I.
	\end{align}
	
	To get the infinitesimal version of this mechanism around the standard Kagome lattice, we change $\theta = \frac{\pi}{3}+t$ and write the one-periodic mechanism as
	\begin{align}\label{local-one-periodic-mechanism}
		u_{\frac{\pi}{3} \shortto \frac{\pi}{3}+t}(x) = F(t)\cdot x + \varphi(x,t),
	\end{align}
	where $x$ are vertices in the standard Kagome lattice. Using \eqref{macroscopic-one-periodic}, the macroscopic deformation is $F(t) = \cos(t) I$. The infinitesimal macroscopic deformation vanishes at the standard Kagome lattice when $t=0$, i.e. $\dot{F}(0) = 0$. Therefore, Proposition \ref{degeneracy} tells us that $\dot{\varphi}(x,0)$ is a GH mode, and its character is shown in Figure \ref{fig:Kagome_GH}(a) ($\dot{\varphi}(x,0)$ is not a translation; see Appendix \ref{appendix-b} for its explicit formula).
	
	For the one-periodic standard Kagome lattice, every GH mode is a linearization of a one-periodic mechanism. In fact, we know that the space of one-periodic GH modes is one-dimensional from section \ref{sec:preliminary-GH}. So, the infinitesimal $\dot{\varphi}(x,0)$ spans the one-dimensional GH mode space. Moreover, every GH mode has the form  $k\dot{\varphi}(x,0)$ for some $k \in \mathbb{R}$, and comes from the scaled one-periodic mechanism $u_{\frac{\pi}{3} \shortto \frac{\pi}{3}-kt}(x)$. 
	
	\begin{remark}
		There is \textit{only a single} one-periodic mechanism for the Kagome lattice. The unit cell of the one-periodic standard Kagome lattice has only two triangles. If these triangles rotate with angles $\alpha,\beta$ as shown Figure \ref{fig:one-periodic-mechanism}(b), then there is a macroscopic rotation. To eliminate this rotation, we can choose $2\theta = \alpha + \beta$ and $\alpha = \beta = \theta$ so that the bisector is in the horizontal direction.
	\end{remark}
	
	\begin{figure}[!htb]
		\begin{minipage}[b]{.65\linewidth}
			\centering
			\subfloat[]{\label{}\includegraphics[width=\linewidth]{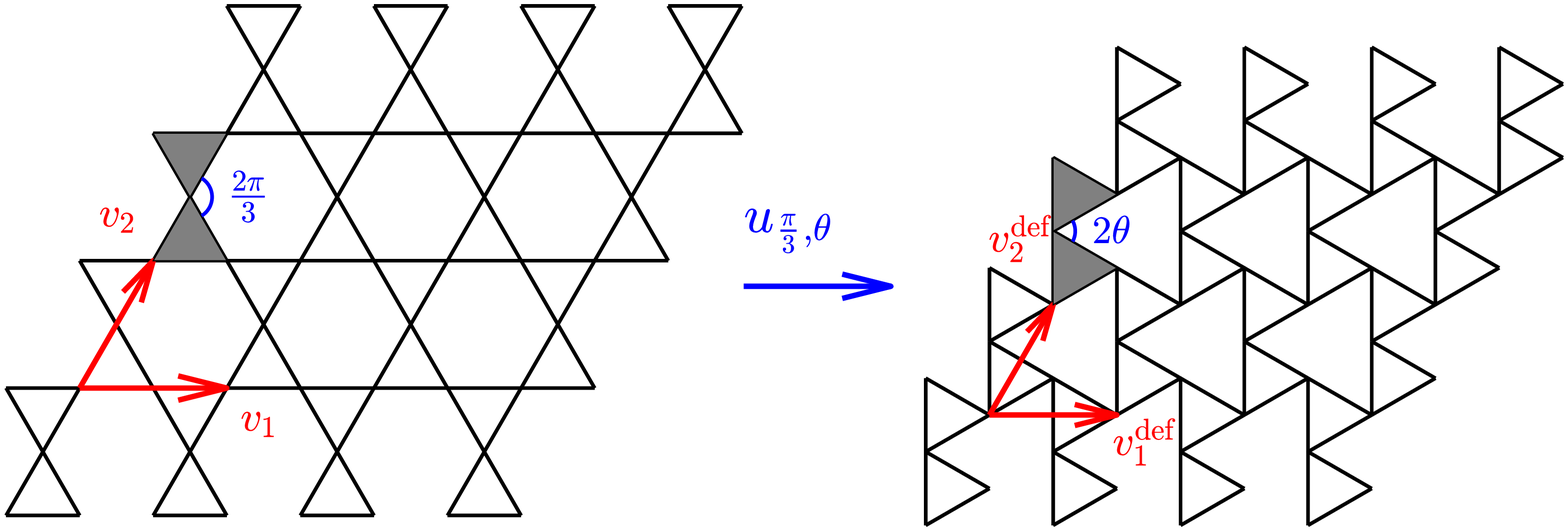}}
		\end{minipage}
		\hfill
		\begin{minipage}[b]{.28\linewidth}
			\centering
			\subfloat[]{\label{}\includegraphics[width=0.8\linewidth]{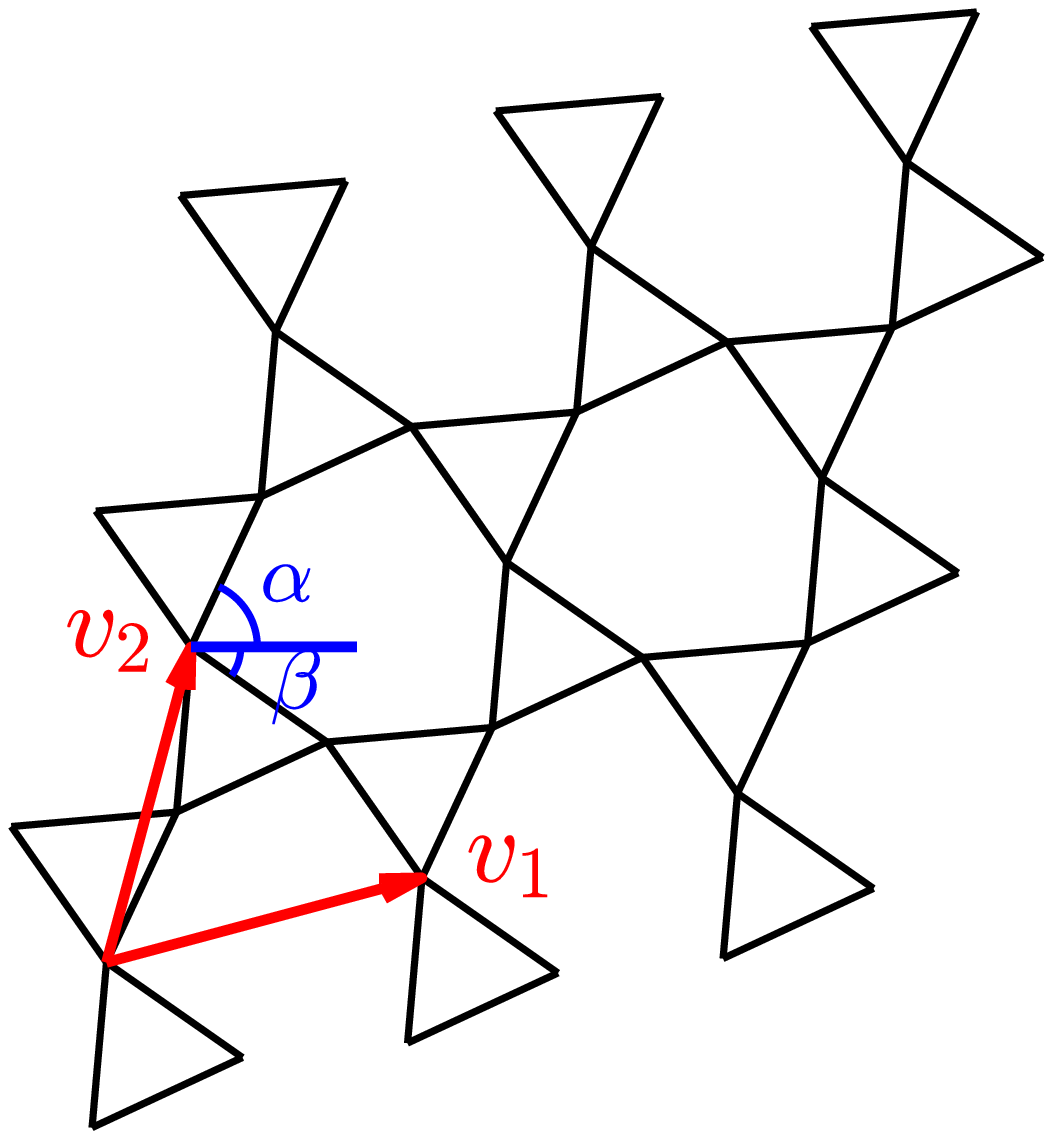}}
		\end{minipage}\par\medskip
		\centering
		\begin{minipage}[b]{.65\linewidth}
			\subfloat[]{\label{}\includegraphics[width=\linewidth]{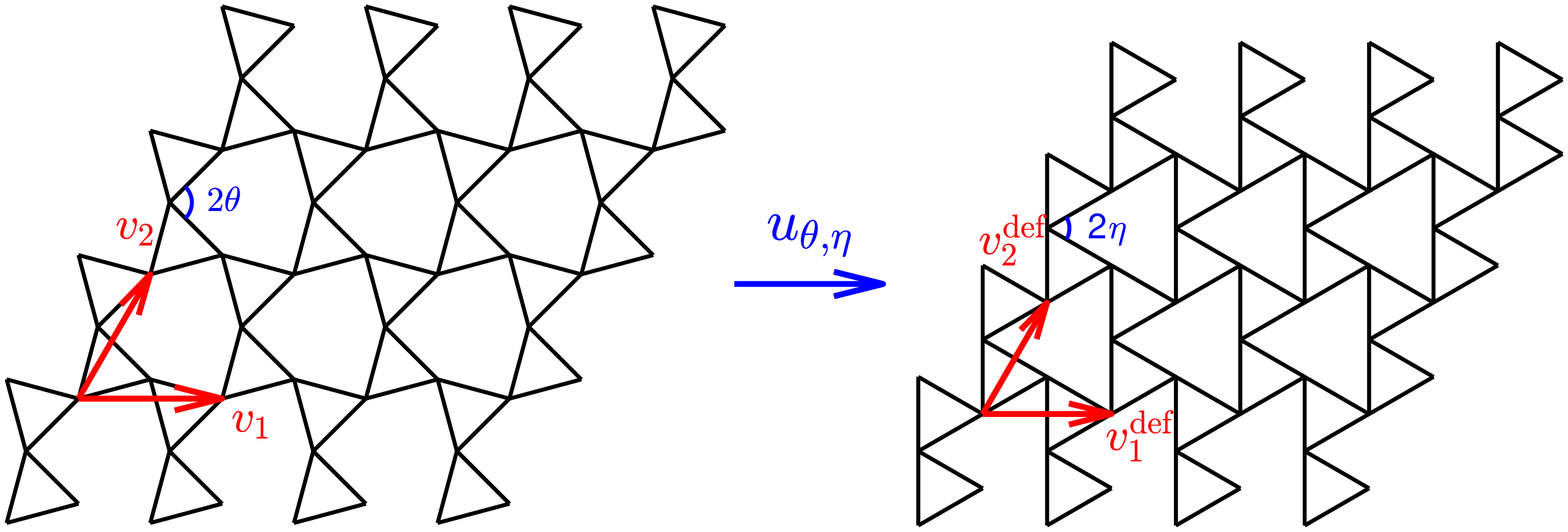}}
		\end{minipage}
		\caption{One-periodic mechanisms of the Kagome lattice: (a) the one-parameter mechanism $u_{\frac{\pi}{3}, \theta}(x)$ from the standard Kagome lattice to a twisted Kagome lattice $L_{\theta}$; (b) a rotated version of the twisted Kagome lattice: if two triangles in the unit cell rotate with different angles $\alpha \neq \beta$, then there is a macroscopic rotation; (c) the one-parameter mechanism $u_{\theta, \eta}(x)$ from the twisted Kagome lattice $L_{\theta}$ to a different twisted Kagome lattice$L_{\eta}$ with $\theta \neq \eta$.}
		\label{fig:one-periodic-mechanism}
	\end{figure}
	Thus far, we have been discussing mechanisms of the standard Kagome lattice. But since our mechanism takes the standard Kagome lattice to a twisted Kagome lattice, it also provides a mechanism for the twisted Kagome lattice by considering $u_{\theta\shortto \eta}(x) = F_{\theta \shortto \eta}\cdot x + \varphi_{\theta \shortto \eta}(x)$ taking a twisted Kagome lattice $L_{\theta}$ to a different twisted Kagome lattice $L_{\eta}$ in Figure \ref{fig:one-periodic-mechanism}(c), where $F_{\theta \shortto \eta}$ is the macroscopic deformation and $\varphi_{\theta \shortto \eta}(x)$ is the periodic oscillation. Notice that $x$ now ranges over the vertices of the twisted Kagome lattice $L_{\theta}$. The macroscopic deformation gradient $F_{\theta \shortto \eta}$  maps the primitive vectors $\bm{v}_1, \bm{v}_2$ of the twisted Kagome lattice $L_{\theta}$ to the primitive vectors $\bm{v}_1^{\text{def}}, \bm{v}_2^{\text{def}}$ of a different twisted Kagome lattice $L_{\eta}$
	\begin{align*}
		& \bm{v}_1 = \cos(\frac{\pi}{3} - \theta) (2,0)^T &\rightarrow&  &\bm{v}_1^{\text{def}} &= F_{\theta \shortto \eta} \bm{v}_1 = \cos(\frac{\pi}{3} - \eta) (2,0)^T,\\
		&\bm{v}_2 = \cos(\frac{\pi}{3} - \theta) (1, \sqrt{3})^T &\rightarrow&  &\bm{v}_2^{\text{def}} &= F_{\theta \shortto \eta} \bm{v}_2 =\cos(\frac{\pi}{3} - \eta) (1, \sqrt{3})^T,
	\end{align*}
	Thus, the macroscopic deformation gradient $F_{\theta \shortto \eta}$ is
	\begin{align}\label{macroscopic_twisted_mechanism}
		F_{\theta \shortto \eta} = \frac{\cos(\frac{\pi}{3} - \eta)}{\cos(\frac{\pi}{3} - \theta) } I.
	\end{align}
	
	For any twisted Kagome lattice $L_{\theta}$ with $\theta \neq \frac{\pi}{3}$, its effective tensor $A^{\theta}_{\text{eff}}$ vanishes at the identity matrix, i.e. $A_\text{eff}^\theta I = 0$ for any $\theta \neq \frac{\pi}{3}$. To see why, we observe that similarly to the standard Kagome lattice in \eqref{local-one-periodic-mechanism}, the one-periodic mechanism around a twisted Kagome lattice $L_{\theta}$ is 
	\begin{align}\label{mechanism_around_twisted_kagome}
		u_{\theta \shortto \theta + t}(x) = F(t) \cdot x + \varphi(x,t),
	\end{align}
	by choosing the deformed state $L_\eta$ as $\eta = \theta + t$. The macroscopic deformation in \eqref{mechanism_around_twisted_kagome} and its infinitesimal version become
	\begin{align*}
		F(t) &= \frac{\cos(\frac{\pi}{3} - \theta - t)}{\cos(\frac{\pi}{3} - \theta)} I, & \dot{F}(0) &= \frac{\sin(\frac{\pi}{3}-\theta)}{\cos(\frac{\pi}{3} - \theta)} I \neq 0.
	\end{align*}
	Using Proposition \ref{degeneracy}, we obtain that $\dot{F}(0) = c_\theta I$ is a multiple of the identity matrix and a null vector for the effective tensor $A_{\text{eff}}^{\theta}$ with
	\begin{align}\label{isotropic_degeneracy}
		A_{\text{eff}}^{\theta} \: c_\theta I = 0 \qquad \Leftrightarrow \qquad A_{\text{eff}}^{\theta} I = 0,
	\end{align}
	where $c_\theta = \frac{\sin(\theta-\frac{\pi}{3})}{\cos(\frac{\pi}{3} - \theta)}$. Thus, for any twisted Kagome lattice $L_{\theta}$, its effective tensor $A^\theta_{\text{eff}}$ vanishes at isotropic compression and expansion.
	
	\begin{figure}[!htb]
		\begin{minipage}[b]{.48\linewidth}
			\centering
			\subfloat[]{\label{}\includegraphics[width=.75\linewidth]{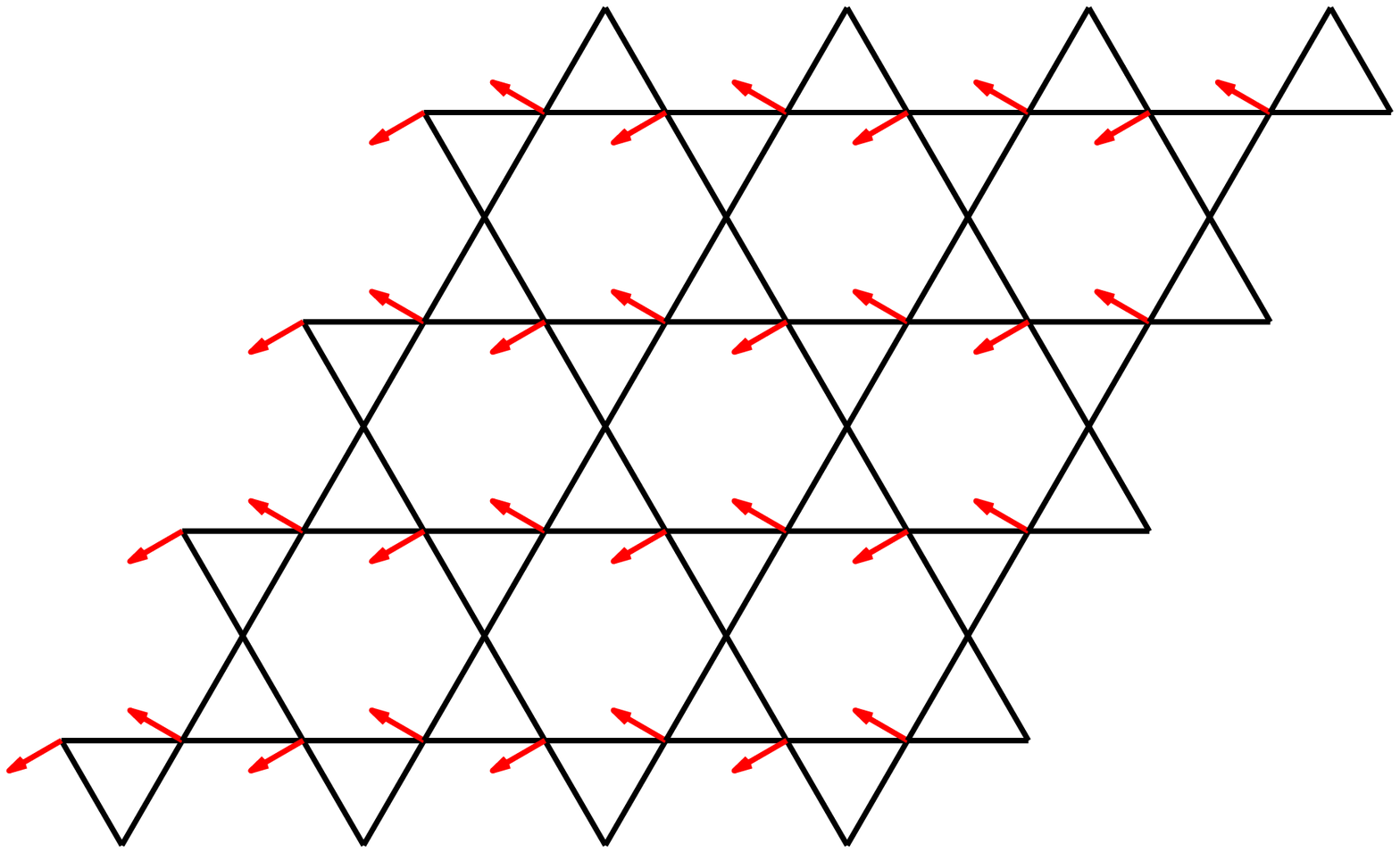}}
		\end{minipage}
		\begin{minipage}[b]{.48\linewidth}
			\centering
			\subfloat[]{\label{}\includegraphics[width=.6\linewidth]{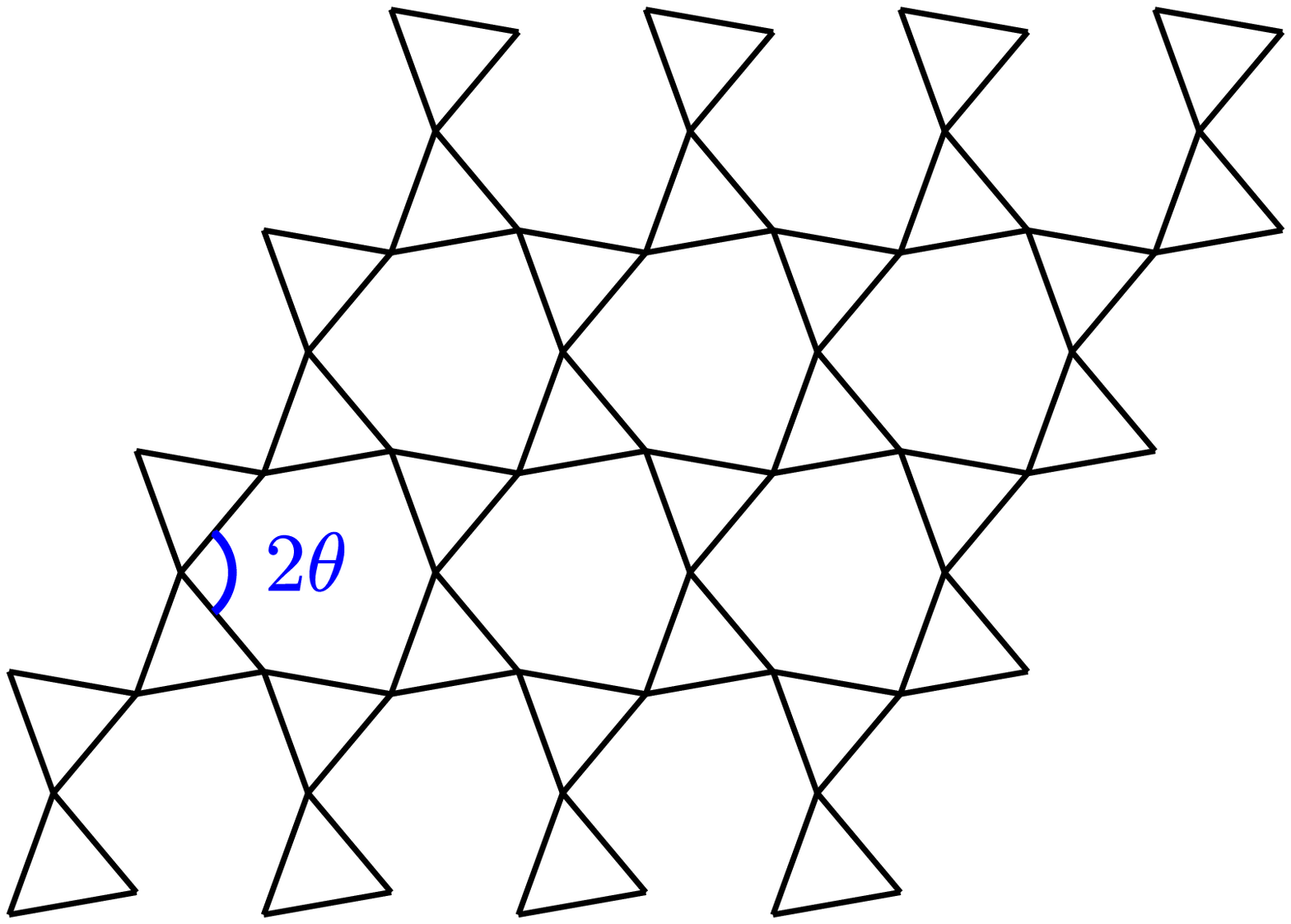}}
		\end{minipage}
		\caption{(a) The one-periodic GH mode on the standard Kagome lattice $\dot{\varphi}(x,0)$ as the infinitesimal version of the one-periodic mechanism; (b) the twisted Kagome lattice does not have any one-periodic GH modes. Instead, it has a macroscopic degeneracy for isotropic compression and expansion, i.e. $A_{\text{eff}}I = 0$.}
		\label{fig:Kagome_GH}
	\end{figure}
	
	\begin{remark}
		A geometric way to see that $\dot{F}(0)=0$ for the standard Kagome lattice but not for any twisted Kagome lattice is to note that $F(t)$ is always a multiple of identity, so $\dot{F}(0)$ controls how the size of the unit cell changes. The standard Kagome lattice has the largest unit cell since the mechanism can only shrink the area of each hexagon. Thus $\dot{F}(0) = 0$ for the standard Kagome lattice. For any twisted Kagome lattice $L_{\theta}$ with $\theta < \frac{\pi}{3}$, for example in Figure \ref{fig:Kagome_GH}(b), the area of each hexagon increases when we increase $\theta$ and decreases when we decrease $\theta$. The case where $\theta > \frac{\pi}{3}$ is similar, and the area of each hexagon increases and decreases by changing $\theta$ oppositely.
	\end{remark}
	
	\subsection{Discontinuity of the effective tensor $A_{\text{eff}}$ under the one-periodic mechanism}\label{subsec:discontinuity}
	So far, we have explained that the effective tensor $A_{\text{eff}}^{\theta}$ is non-degenerate for the standard Kagome lattice ($\theta = \frac{\pi}{3}$) and degenerate in the isotropic direction for all twisted Kagome lattices ($\theta \neq \frac{\pi}{3}$). If we view the effective tensor $A_{\text{eff}}^{\theta}$ as a tensor-valued function of $\theta$, then $A_{\text{eff}}^{\theta}$ is discontinuous at the standard Kagome lattice since $A_{\text{eff}}^{\frac{\pi}{3}} I \neq 0$ for the standard Kagome lattice but $A_{\text{eff}}^{\theta} I = 0$ in Equation \eqref{isotropic_degeneracy} for all the twisted Kagome lattices when $\theta \neq \frac{\pi}{3}$. If we believe that the displacement $\varphi^\theta_I(x)$ associated with the effective linear elastic energy evaluated at the identity matrix $E_{\text{eff}}^\theta(I)$ is continuous in $\theta$, then we cannot have a discontinuity $A_{\text{eff}}^{\theta}I$ at $\theta = \frac{\pi}{3}$. In fact, it is wrong to believe that the optimal solution $\varphi^\theta_I(x)$ is continuous in $\theta$. To explain this discontinuity, we shall calculate the optimal solution $\varphi^\theta_I(x)$ in \eqref{effective_energy} at the identity matrix $I$. The result shows that the optimal $\varphi^\theta_I(x)$ grows unbounded as $\theta$ approaches $\frac{\pi}{3}$. 
	
	To compute the optimal $\varphi^\theta_I(x)$, we need to use the one-periodic mechanism around the twisted Kagome lattice $L_{\theta}$ in Figure \ref{fig:one-periodic-mechanism}(c). By Proposition \ref{degeneracy}, the optimal $\varphi^\theta_I(x)$ for the minimization problem can be chosen as $\dot{\varphi}(x)$ as the infinitesimal version of the one-periodic mechanism in \eqref{mechanism_around_twisted_kagome}. Notice that we are finding the optimal solution $\varphi_I^\theta(x)$ for the identity matrix $I$, and the one-periodic mechanism $u_{\theta \shortto \theta-t}(x)$ in \eqref{mechanism_around_twisted_kagome} gives an optimal $\dot{\varphi}(x)$ associated to $\xi = c_\theta I$ in \eqref{isotropic_degeneracy} instead of $I$. To get the optimal $\varphi_I^\theta(x)$ associated to the identity matrix $I$, we need to scale the one-periodic mechanism $u_{\theta \shortto \theta-t}(x)$ to $u_{\theta \shortto \theta-\frac{1}{c_{\theta}}t}(x)$. We know this scaled one-periodic mechanism yields the optimal $\varphi_I^\theta(x)$ for the identity matrix. If we denote $\dot{\varphi}(x)$ as the associated optimal solution for $c_{\theta} I$ in Equation \eqref{isotropic_degeneracy}, then $\varphi_I^\theta(x) = \frac{1}{c_{\theta}} \dot{\varphi}(x)$. The scalar $c_{\theta} = \frac{\sin(\theta - \frac{\pi}{3})}{\cos(\frac{\pi}{3} - \theta)} \rightarrow 0$ as $\theta \rightarrow \frac{\pi}{3}$, so $\varphi_I^\theta(x)$ grows unbounded as $\theta \rightarrow \frac{\pi}{3}$ since $\frac{1}{c_\theta} \rightarrow \infty$.
	
	\section{Two-periodic mechanisms and GH modes of the Kagome lattices}\label{sec:two-periodic-mechanism}
	In this section, we present the analytic form of a \textit{three-parameter} two-periodic mechanism of the standard Kagome lattice, shown in Figure \ref{fig:two-periodic-mechanism}. We refer to this three-parameter mechanism from the standard Kagome lattice to a deformed two-periodic Kagome lattice by $u_{\theta_1, \theta_2, \theta_3}(x)$, where $x$ are vertices of the standard Kagome lattice. The three-parameter two-periodic mechanism provides a three-dimensional space of two-periodic GH modes. We will discuss the relation between two-periodic mechanisms and GH modes in subsection \ref{subsection:two-periodic-GH}.
	
	\begin{figure}[!htb]
		\centering
		\includegraphics[width=0.8\linewidth]{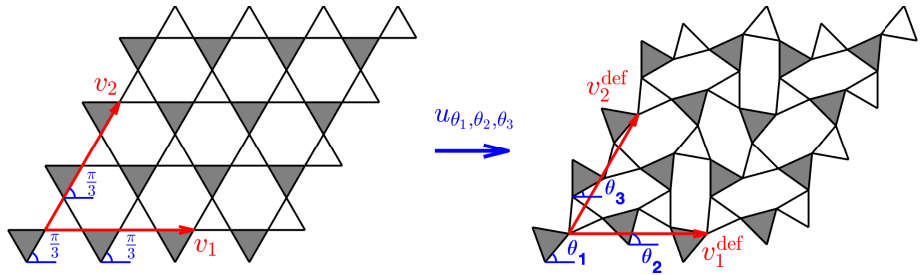}
		\caption{The two-periodic mechanism $u_{\theta_1, \theta_2, \theta_3}(x)$ around the standard Kagome lattice with $\theta_1 = \frac{5\pi}{18}, \theta_2 = \frac{5\pi}{12}$ and $\theta_3 = \frac{\pi}{6}$. The standard Kagome lattice on the left is the reference lattice and the two arrows $\bm{v}_1, \bm{v}_2$ are its primitive vectors. The two-periodic Kagome lattice on the right is the deformed lattice and the two red arrows $\bm{v}_1^{\text{def}}, \bm{v}_2^{\text{def}}$ are its primitive vectors. The macroscopic deformation gradient of $u_{\theta_1, \theta_2, \theta_3}(x)$ maps $\bm{v}_1, \bm{v}_2$ to $\bm{v}_1^{\text{def}}, \bm{v}_2^{\text{def}}$.}
		\label{fig:two-periodic-mechanism}
	\end{figure}
	
	Before we present the detailed construction of the two-periodic mechanism $u_{\theta_1, \theta_2, \theta_3}(x)$, let us first take a look at some geometric properties of $u_{\theta_1, \theta_2, \theta_3}(x)$: (1) this two-periodic mechanism also achieves an isotropic compression; and (2) all regular hexagons in the reference lattice are deformed to a special type of hexagon with three pairs of parallel edges. To explain the origin of these properties, we need to introduce some details about the two-periodic Kagome lattice deformed by the two-periodic mechanism. For simplicity, we call the deformed states of this two-periodic mechanism $L_{\theta_1, \theta_2, \theta_3}$ and fix the length of each equilateral triangle as 1. The unit cell of $L_{\theta_1, \theta_2, \theta_3}$ has 8 triangles, classified into 4 shaded triangles and 4 unshaded triangles in Figure \ref{fig:two-periodic-unit-cell}(a). The three degrees of freedom $\theta_1, \theta_2, \theta_3$ are the rotation angles for the three shaded triangles in Figure \ref{fig:two-periodic-unit-cell}(c). To achieve a two-periodic mechanism, the other five triangles in the unit cell have to rotate correspondingly as shown in Figure \ref{fig:two-periodic-unit-cell}(c), where $\theta_4$ is a function of $\theta_1, \theta_2, \theta_3$
	\begin{align}\label{eqn:theta_4}
		\theta_4 = \frac{\pi}{3} - \arcsin\left(\sin(\theta_1 - \frac{\pi}{3}) + \sin(\theta_2 - \frac{\pi}{3}) + \sin(\theta_3 - \frac{\pi}{3})\right). 
	\end{align}
	We will discuss the angle relations (shown in Figure \ref{fig:two-periodic-unit-cell}c) in section \ref{subsec:4-1}, and explain the origin of \eqref{eqn:theta_4} shortly.
	\begin{figure}[!htb]
		\begin{minipage}[b]{.33\linewidth}
			\centering
			\subfloat[]{\includegraphics[width=1.02\linewidth]{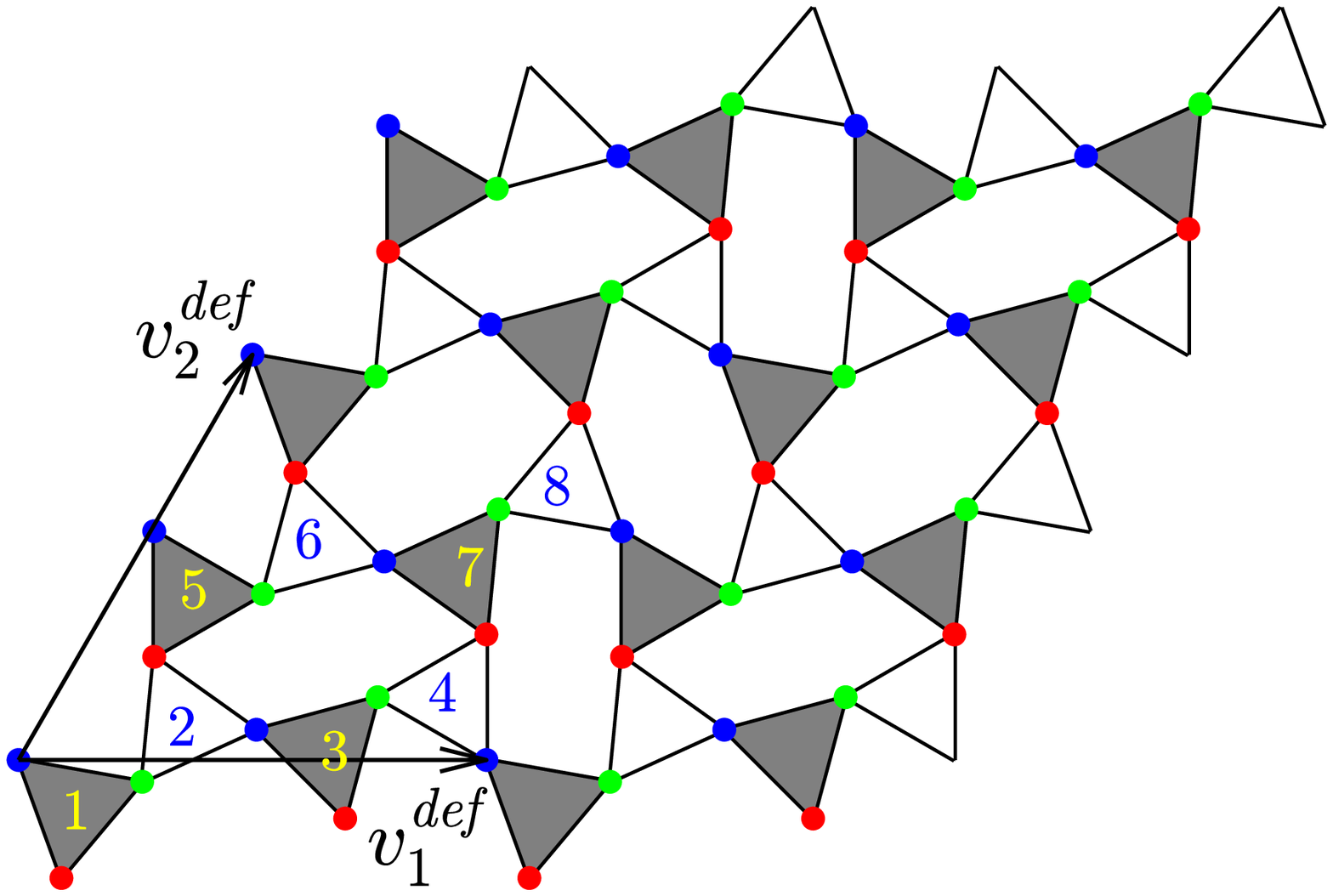}}
		\end{minipage}
		\begin{minipage}[b]{.33\linewidth}
			\subfloat[]{\includegraphics[width=\linewidth]{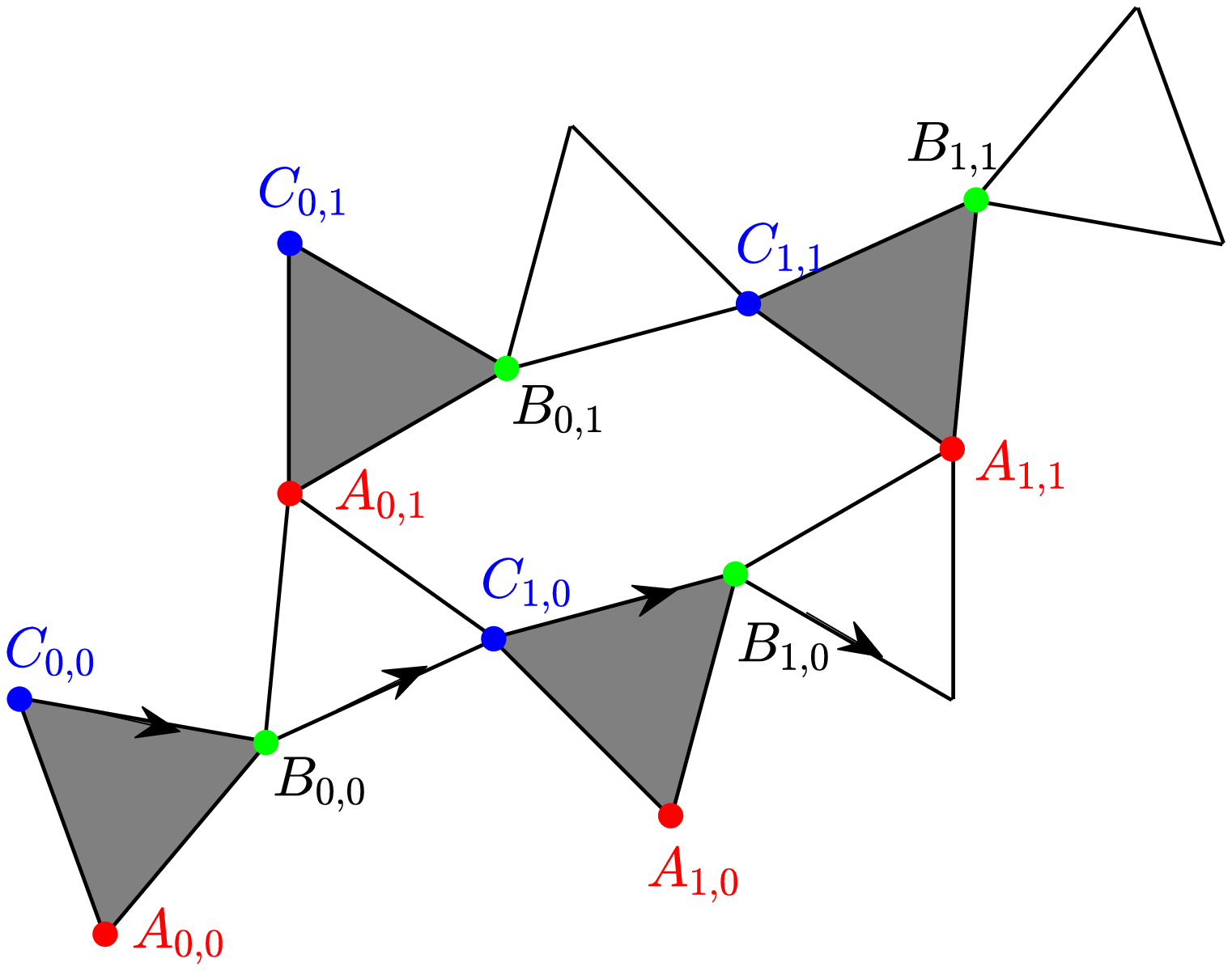}}
		\end{minipage}
		\begin{minipage}[b]{.33\linewidth}
			\centering
			\subfloat[]{\includegraphics[width=1.03\linewidth]{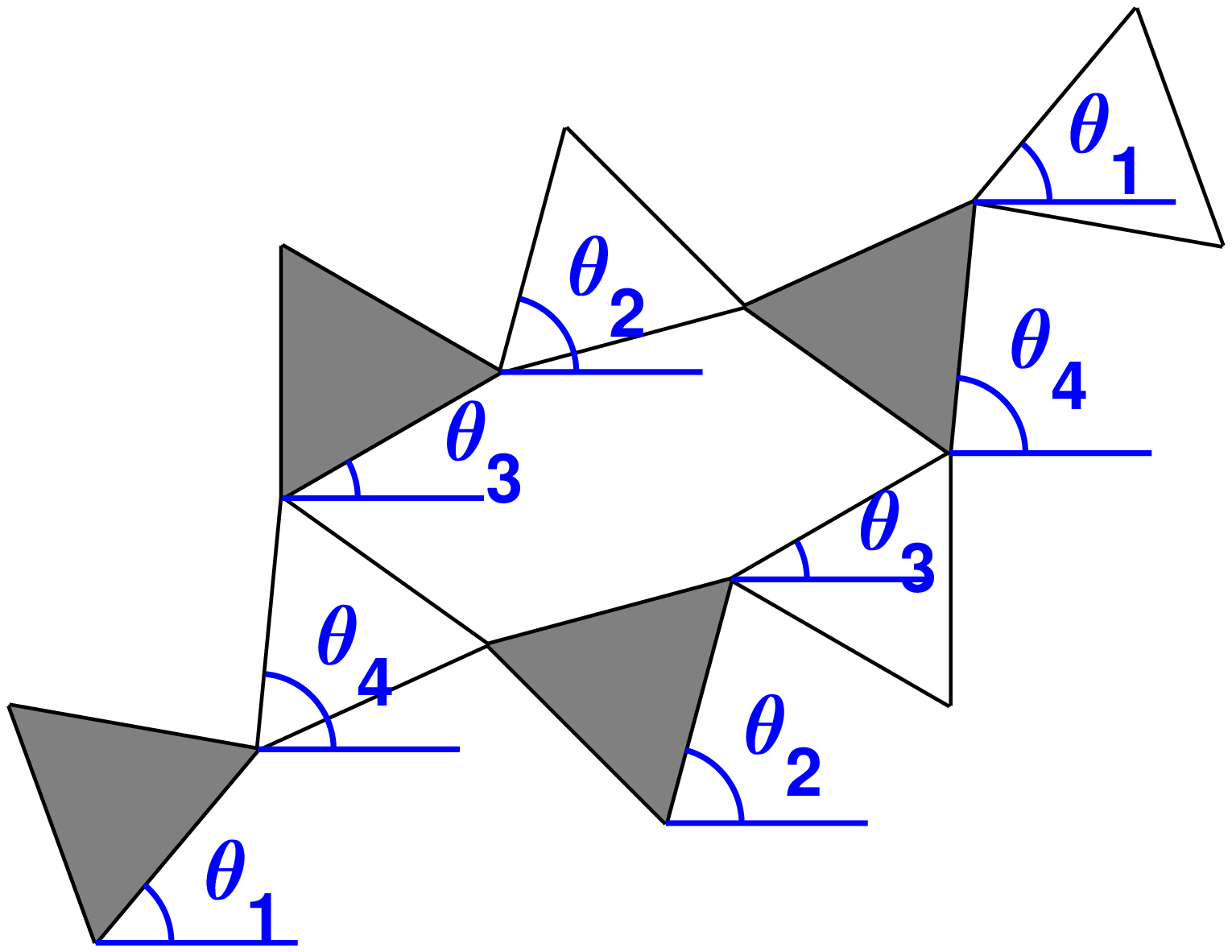}}
		\end{minipage}
		\caption{(a) The deformed two-periodic Kagome lattice $L_{\theta_1, \theta_2, \theta_3}$: its unit cell has 8 triangles, 4 shaded and 4 unshaded; (b) 12 vertices in the unit cell: each shaded triangle has one set of vertex $A,B,C$; (c) the corresponding rotation angles for the 8 triangles in the unit cell to achieve the three-parameter mechanism $u_{\theta_1, \theta_2, \theta_3}(x)$;.}
		\label{fig:two-periodic-unit-cell}
	\end{figure}
	
	In fact, once we know the rotation angles for all triangles in the unit cell, the structure of $L_{\theta_1, \theta_2, \theta_3}$ is fixed, i.e. the locations of every vertex in the unit cell and the two primitive vectors $\bm{v}_1^{\text{def}}, \bm{v}_2^\text{def}$ can be computed. For example, if we mark the vertices in the unit cell as shown in Figure \ref{fig:two-periodic-unit-cell}(b) and fix $A_{0,0}$ as the origin, then the locations of all vertices in the unit cell can be represented by rotation angles, e.g. $B_{0,0} = (\cos \theta_1, \sin \theta_1)$ and $C_{1,0} = B_{0,0} + \left(\cos(\theta_4-\frac{\pi}{3}), \sin(\theta_4-\frac{\pi}{3})\right)$ (see Appendix \ref{appendix-c} for a detailed expression of each vertex's location). We can also add up the four vectors connecting $B, C$ vertices in Figure \ref{fig:two-periodic-unit-cell}(c) to get $\bm{v}_1^\text{def}$
	\begin{align}
		\bm{v}_1^{\text{def}} &=\Big(\cos(\theta_1 - \frac{\pi}{3}) + \cos(\theta_2 - \frac{\pi}{3}) + \cos(\theta_3 - \frac{\pi}{3}) + \cos(\theta_4 - \frac{\pi}{3}), \nonumber \\ 
		& \qquad \sin(\theta_1 - \frac{\pi}{3}) + \sin(\theta_2 - \frac{\pi}{3}) + \sin(\theta_3 - \frac{\pi}{3}) + \sin(\theta_4 - \frac{\pi}{3}) \Big) \label{eqn:v_1} \\ 
		&= \Big(\cos(\theta_1 - \frac{\pi}{3}) + \cos(\theta_2 - \frac{\pi}{3}) + \cos(\theta_3 - \frac{\pi}{3}) + \cos(\theta_4 - \frac{\pi}{3}), 0\Big) \nonumber.
	\end{align}
	The vertical part of $\bm{v}_1^{\text{def}}$ vanishes because we deliberately choose $\theta_4$ in \eqref{eqn:theta_4} to fix $\bm{v}_1^{\text{def}}$ in the horizontal direction. Similarly, we can add up the four vectors connecting $A, B$ vertices to get $\bm{v}_2^\text{def}$
	\begin{align}
		\bm{v}_2^{\text{def}} &= \left(\cos\theta_1 + \cos\theta_2 + \cos\theta_3  + \cos\theta_4, \sin\theta_1 + \sin\theta_2  + \sin\theta_3  + \sin\theta_4\right). \label{eqn:v_2}
	\end{align}
	Evidently, from \eqref{eqn:v_1} and \eqref{eqn:v_2}, we observe that $\bm{v}_2^{\text{def}} = R_{\frac{\pi}{3}} \bm{v}_1^{\text{def}}$, where $R_{\frac{\pi}{3}}$ is the rotation matrix that rotates counterclockwise with angle $\frac{\pi}{3}$. The macroscopic deformation gradient $F_{\theta_1, \theta_2, \theta_3}$ for the two-periodic mechanism maps the  primitive vectors $\bm{v}_1, \bm{v}_2$ of the standard Kagome lattice in Figure \ref{fig:two-periodic-mechanism} to the two primitive vectors $\bm{v}_2^{\text{def}}, \bm{v}_2^{\text{def}}$ of the two-periodic Kagome lattice $L_{\theta_1, \theta_2,  \theta_3}$
	\begin{align*}
		&\bm{v}_1 = (4,0) &\rightarrow & &\bm{v}_1^{\text{def}} = F_{\theta_1, \theta_2, \theta_3} \bm{v}_1\\
		&\bm{v}_2 = (2,2\sqrt{3}) &\rightarrow & &\bm{v}_2^{\text{def}} = F_{\theta_1, \theta_2, \theta_3} \bm{v}_2.
	\end{align*}
	This yields that 
	\begin{align}
		F_{\theta_1, \theta_2, \theta_3} &=  c_{\theta_1, \theta_2, \theta_3} \: I ,\qquad c_{\theta_1, \theta_2, \theta_3} =  \frac{1}{4}\left(\cos(\theta_1-\frac{\pi}{3}) + \cos(\theta_2-\frac{\pi}{3}) + \cos(\theta_3-\frac{\pi}{3})  + \cos(\theta_4-\frac{\pi}{3})\right), \label{eqn:two_periodic_macroscopic}
	\end{align}
	where $\theta_4$ is a function of $\theta_1, \theta_2, \theta_3$ by \eqref{eqn:theta_4}. As expected, the two-periodic mechanism $u_{\theta_1, \theta_2, \theta_3}(x)$ achieves a macroscopic isotropic compression for every $\theta_1, \theta_2, \theta_3$.
	
	We mentioned earlier the geometric property of the two-periodic mechanism $u_{\theta_1, \theta_2, \theta_3}(x)$ that the deformed hexagon has its three pairs of opposite edges parallel. It can be seen from Figure \ref{fig:hexagons}(a) that the two solid edges are parallel to each other since the angles between the two edges and the horizontal direction are both $\theta_3$. Similarly, the other two pairs of edges are parallel to each other. We note that the deformed hexagon of the one-periodic mechanism $u_{\frac{\pi}{3} \shortto \theta}(x)$ does not have this property, e.g. see Figure \ref{fig:hexagons}(b). We thus see that the one-periodic mechanism $u_{\frac{\pi}{3} \shortto \theta}(x)$ cannot be derived from this two-periodic mechanism $u_{\theta_1, \theta_2, \theta_3}(x)$ by writing $\theta_1, \theta_2, \theta_3$ as functions of a single parameter $\theta$.
	\begin{figure}[!htb]
		\begin{minipage}{0.48\textwidth}
			\centering
			\subfloat[]{\includegraphics[width=0.75\linewidth]{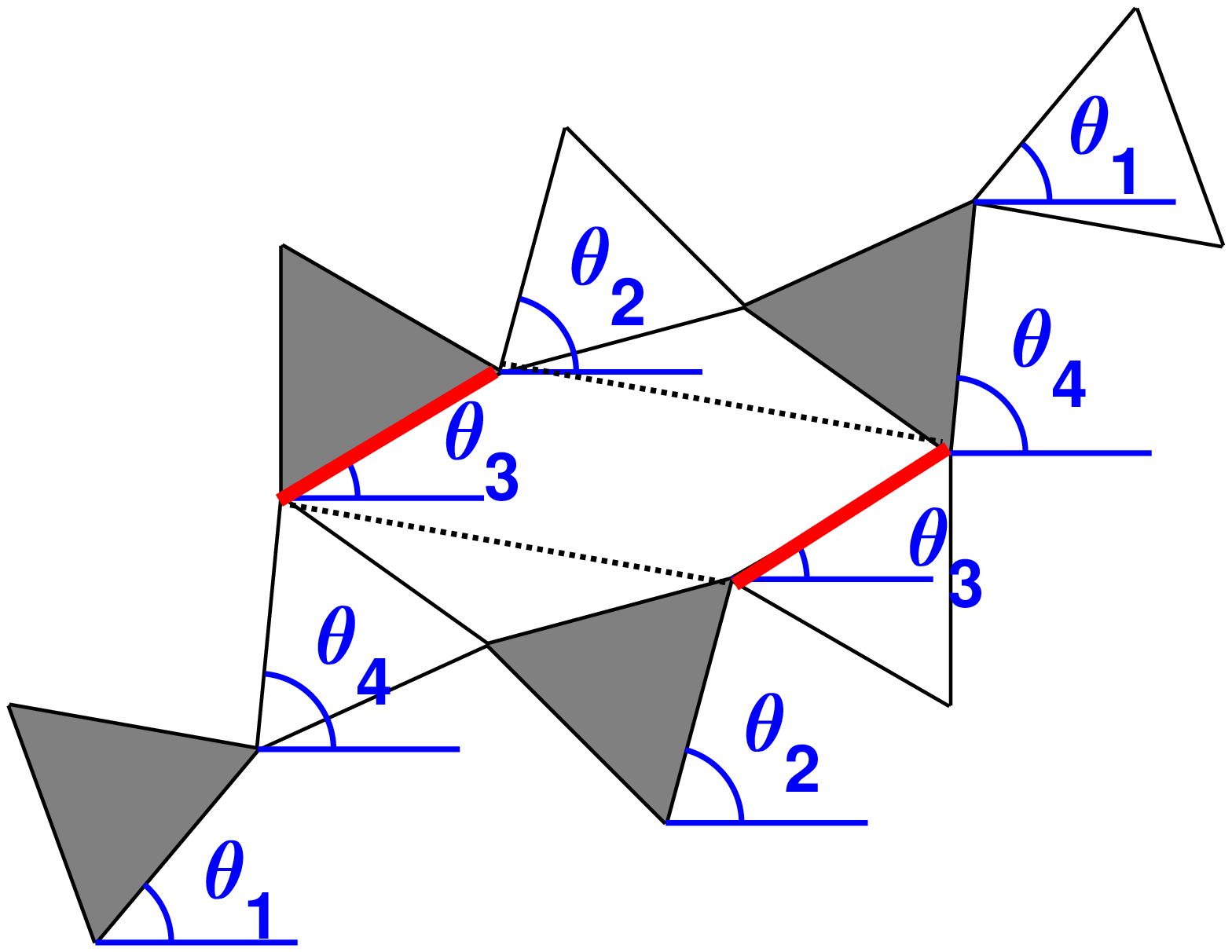}}
		\end{minipage}
		\begin{minipage}{0.48\textwidth}
			\centering
			\subfloat[]{\includegraphics[width=0.75\linewidth]{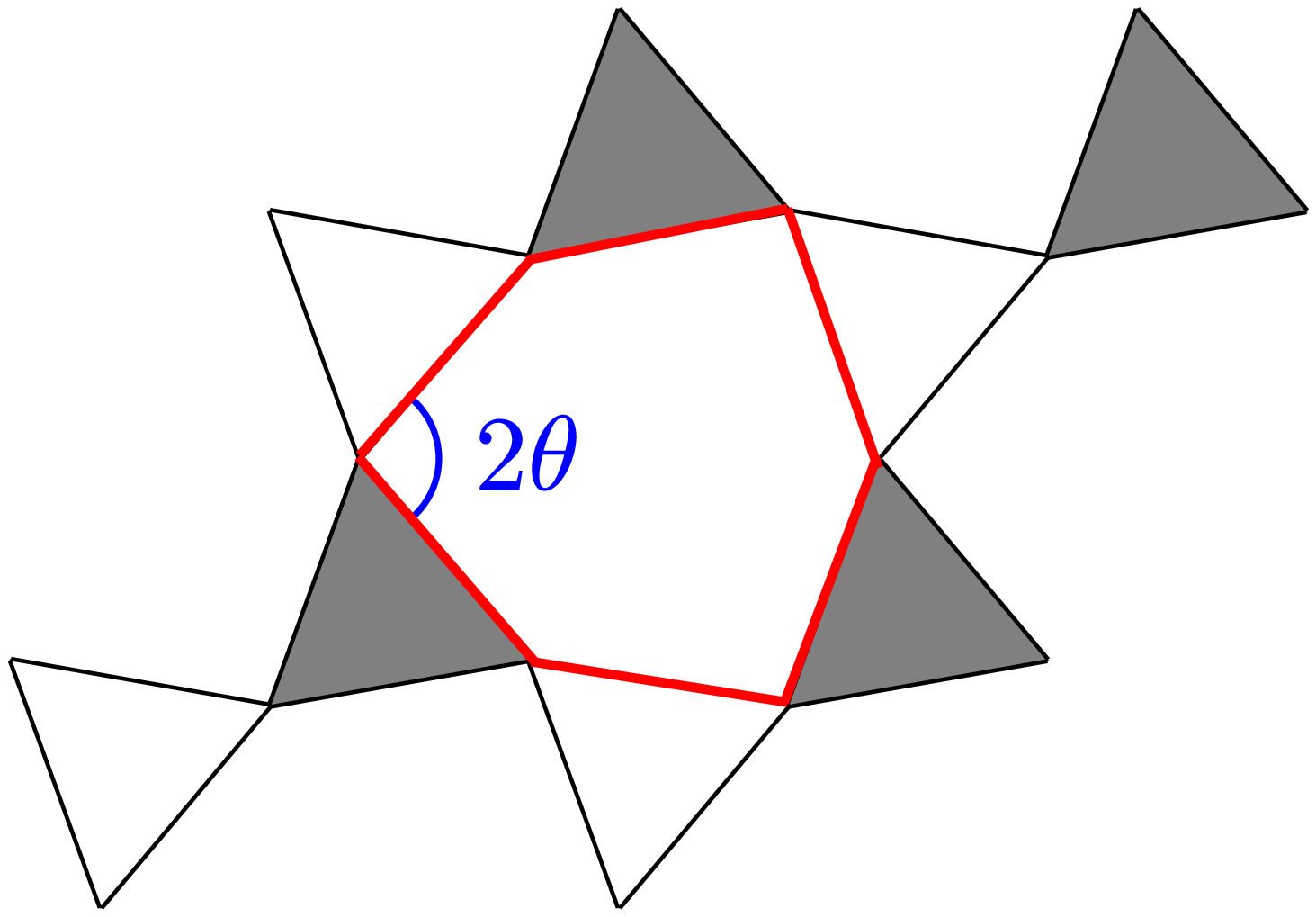}}
		\end{minipage}
		\caption{(a) One of the deformed hexagon under the  two-periodic mechanism $u_{\theta_1,\theta_2,\theta_3}(x)$: we can see clearly that three edges are parallel to each other; (b) the deformed hexagon under  the one-periodic mechanism $u_{\frac{\pi}{3}  \shortto \theta}(x)$.}
		\label{fig:hexagons}
	\end{figure}
	
	\subsection{Construction of the three-parameter two-periodic mechanism}\label{subsec:4-1}
	Let us discuss the details of the two-periodic mechanism $u_{\theta_1, \theta_2, \theta_3}(x)$ and why the rotation angles are related in the way shown in Figure \ref{fig:two-periodic-unit-cell}(c). A natural way to look for a two-periodic mechanism is to assign the 8 triangles in the unit cell with 8 different angles. As shown in \ref{fig:two-periodic-construction}(a), we name the rotation angles for the four shaded triangles as $\theta_1, \theta_2, \theta_3, \theta_4$ and the rotation angles for the four unshaded triangles as $\eta_1, \eta_2, \eta_3, \eta_4$. The direction of every edge is determined since we know how each triangle rotates. Take the pair of triangles in Figure \ref{fig:two-periodic-construction}(b) as an example. All six vectors are determined by the two angles $\theta_1, \eta_2$
	\begin{align}
		\bm{t}_1 &= \left(\cos \theta_1, \sin \theta_1\right)& \bm{t}_2 &= \left(\cos(\theta_1 + \frac{2\pi}{3}), \sin(\theta_1 + \frac{2\pi}{3})\right)& \bm{t}_3 &= \left(\cos(\theta_1 + \frac{4\pi}{3}), \sin(\theta_1 + \frac{4\pi}{3})\right) \nonumber\\
		\bm{t}_4 &= \left(\cos(\eta_1 - \frac{\pi}{3}), \sin(\eta_1 - \frac{\pi}{3})\right)& \bm{t}_5 &= \left(\cos(\eta_1 + \frac{\pi}{3}), \sin(\eta_1 + \frac{\pi}{3})\right)&  \bm{t}_6 &= \left(-\cos\eta_1, -\sin \eta_1\right).\label{eqn:vector-explicit-form}
	\end{align}
	
	\begin{figure}[!htb]
		\begin{minipage}[b]{.48\linewidth}
			\centering
			\subfloat[]{\includegraphics[width=0.7\linewidth]{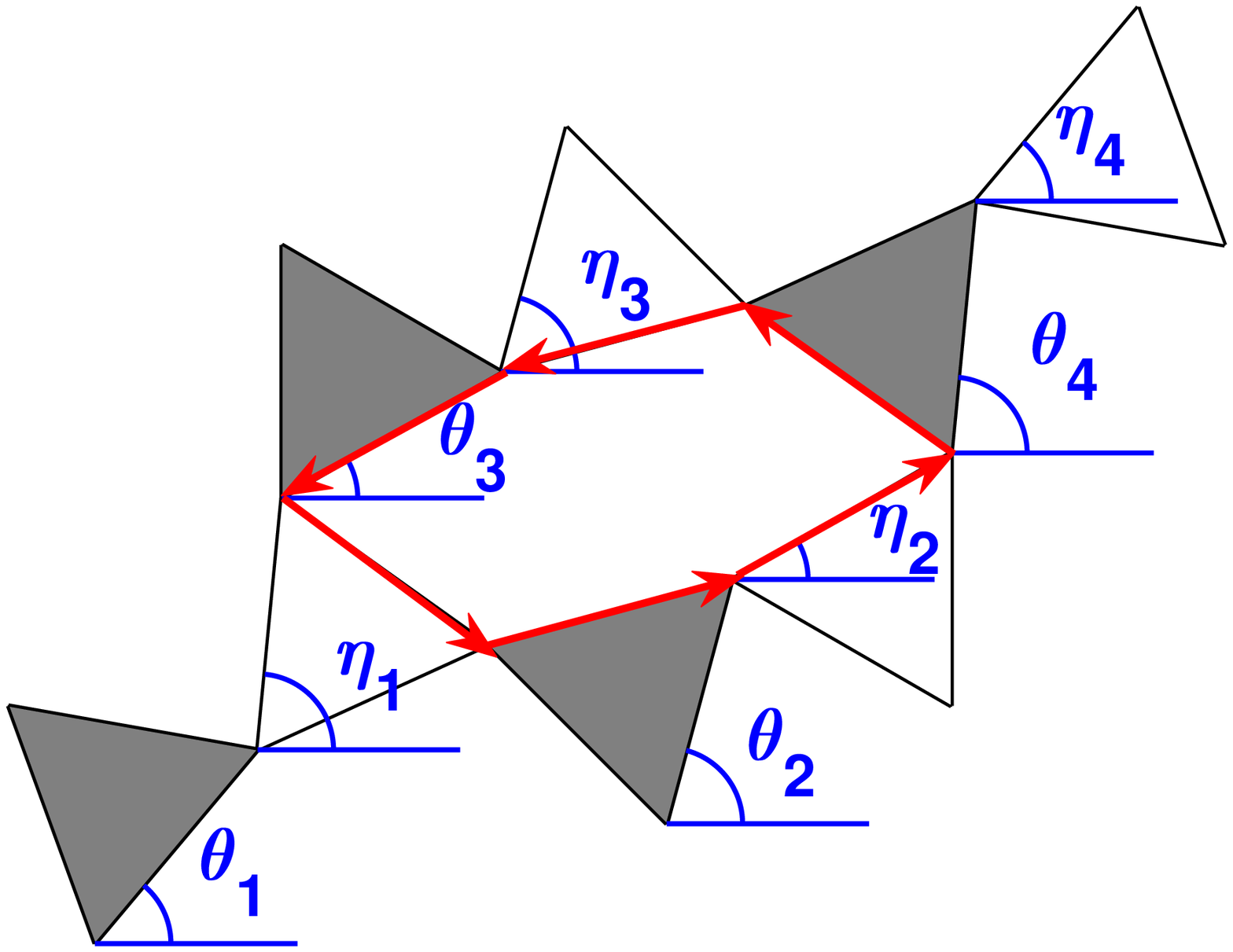}}
		\end{minipage}
		\begin{minipage}[b]{.48\linewidth}
			\centering
			\subfloat[]{\includegraphics[width=0.5\linewidth]{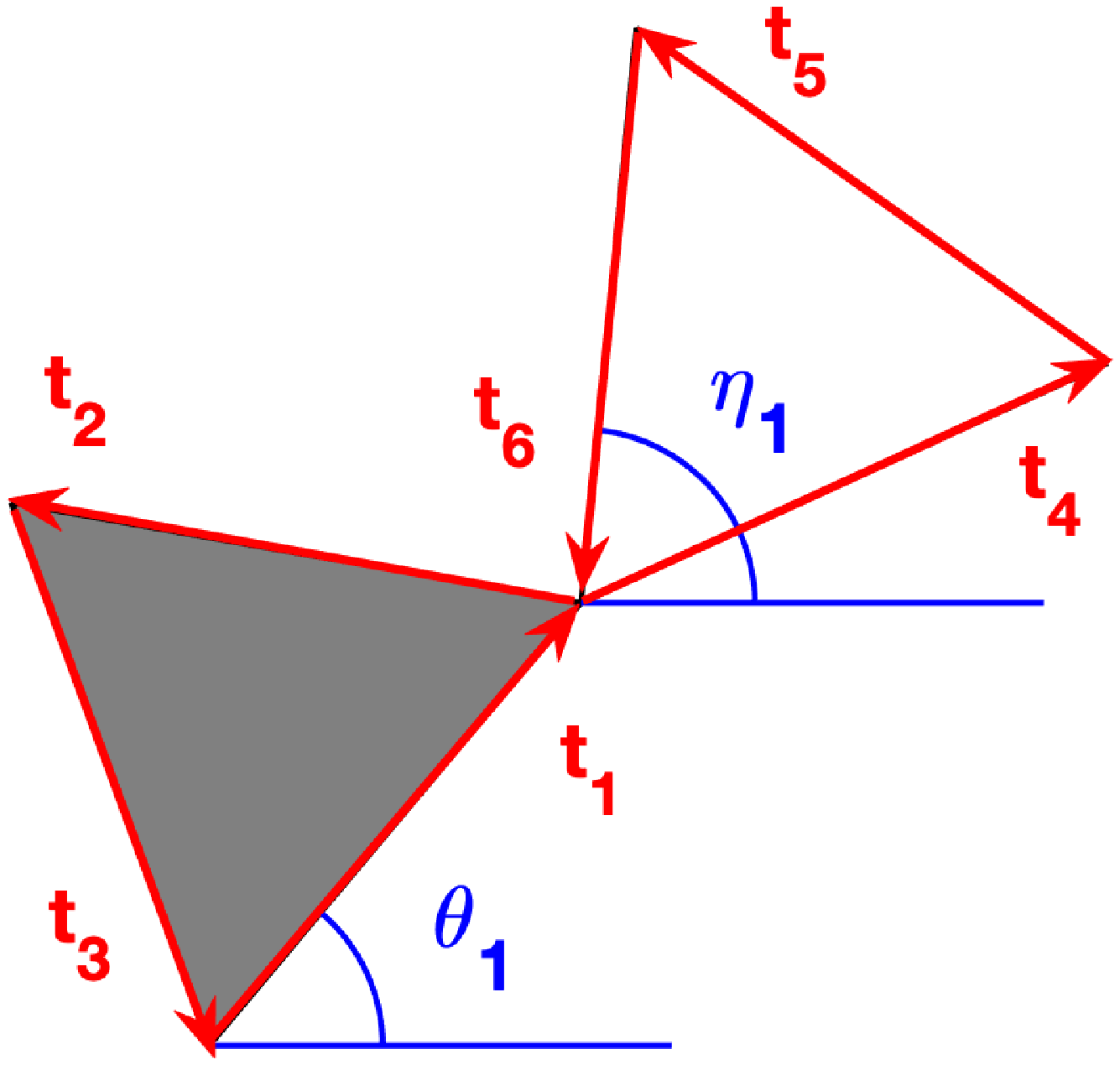}}
		\end{minipage}\par\medskip
		\begin{minipage}[b]{.48\linewidth}
			\subfloat[]{\includegraphics[width=0.7\linewidth]{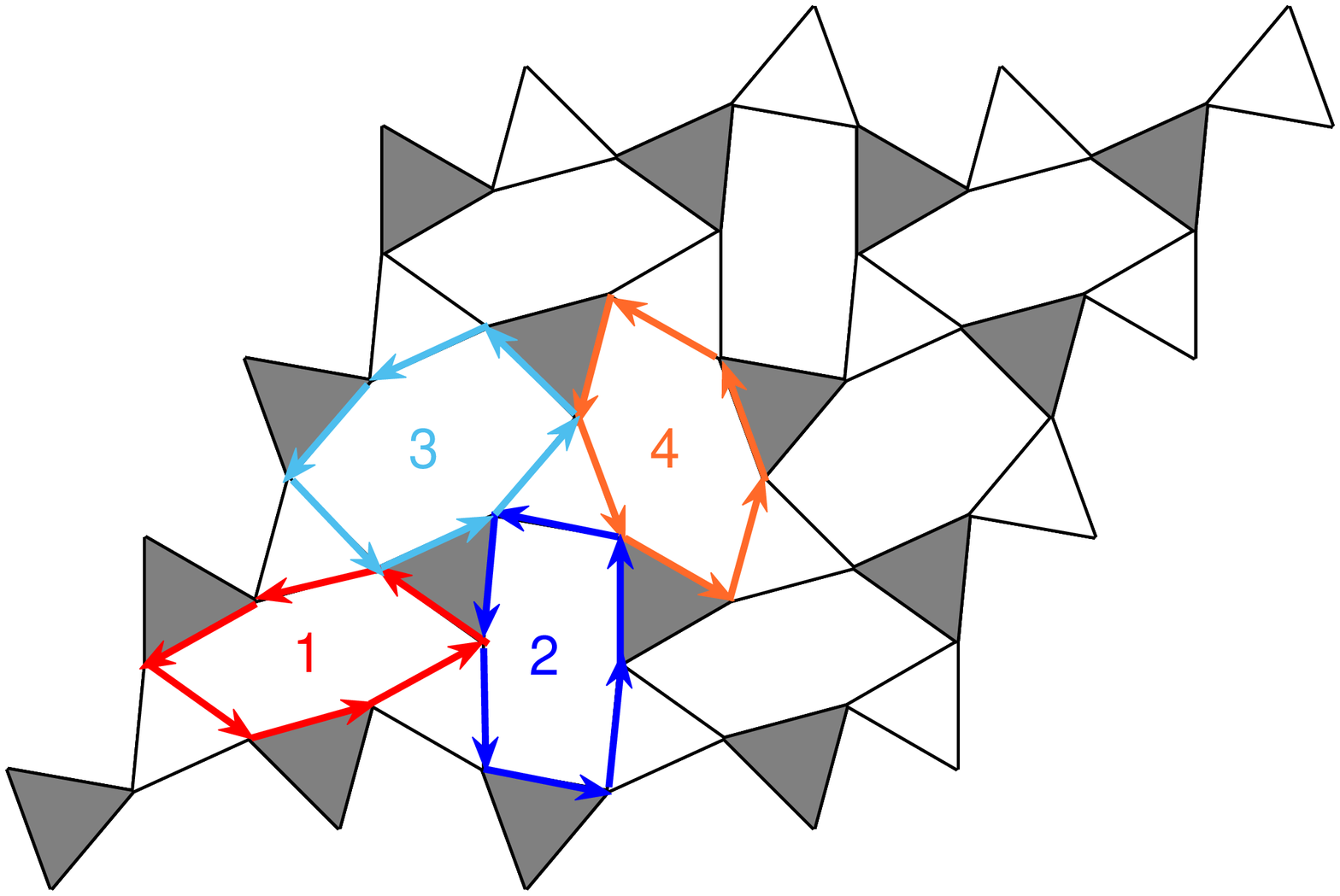}}
		\end{minipage}
		\begin{minipage}[b]{.48\linewidth}
			\subfloat[]{\includegraphics[width=0.7\linewidth]{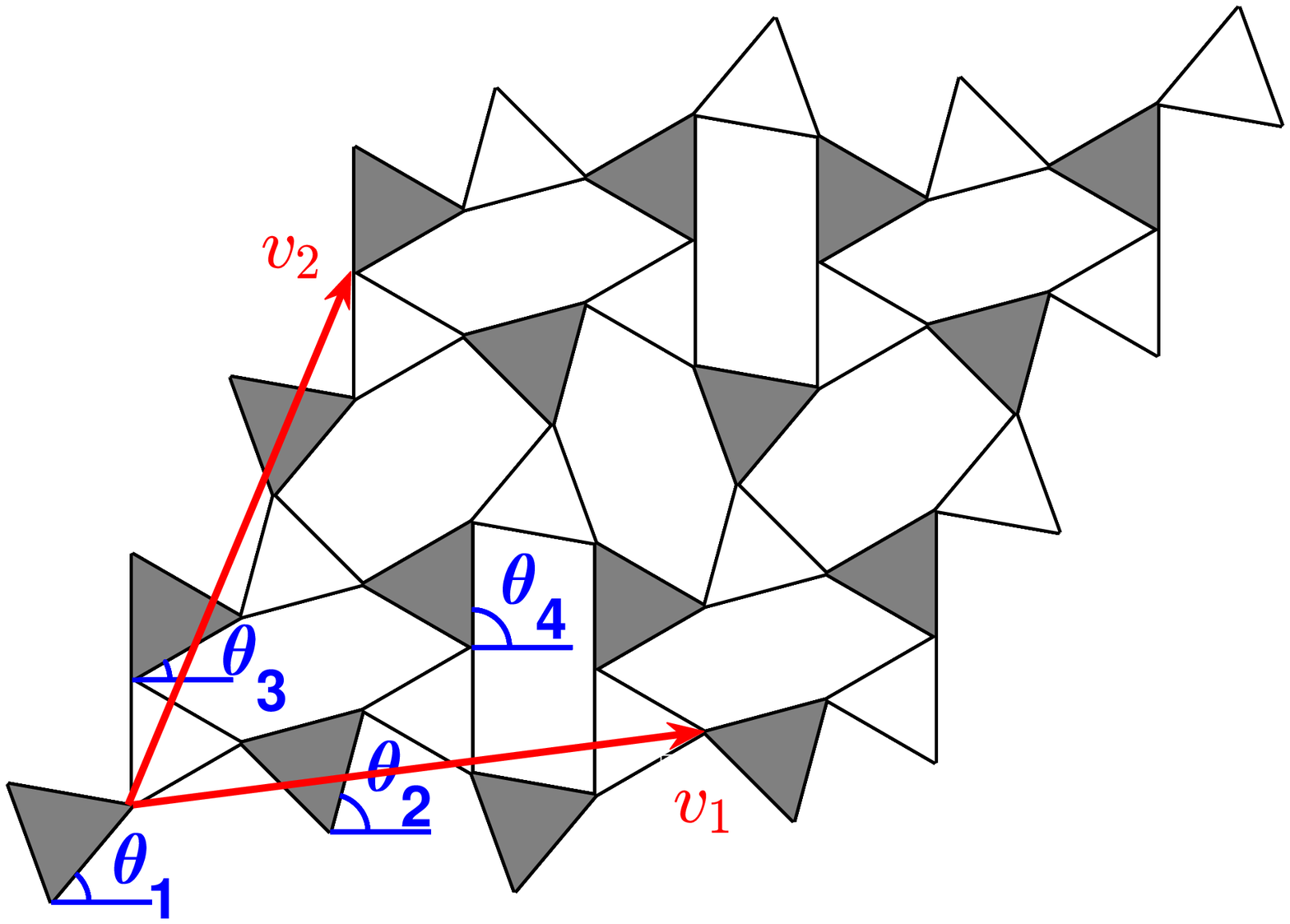}}
		\end{minipage}
		\caption{(a) The unit cell of a two-periodic Kagome lattice after the deformation $u_{\theta_1, \theta_2, \theta_3}(x)$; (b) a pair of triangles whose edge directions are determined by their rotation angles; (c) four deformed hexagons must close themselves to create a two-periodic unit cell; (d) a rotated version of the two-periodic Kagome lattice when $\theta_4$ does not follow \eqref{eqn:theta_4}.}
		\label{fig:two-periodic-construction}
	\end{figure}
	The 8 angles $\theta_1, \theta_2, \theta_3, \theta_4$ and $\eta_1, \eta_2, \eta_3, \eta_4$ must satisfy some constraints. In fact, the images of the 8 triangles in the unit cell under the two-periodic mechanism must form a lattice, i.e. there can be no gaps between their images. This requires that the sum of the six vectors in Figure \ref{fig:two-periodic-construction}(a) must vanish. Using the vector form of every edge in Equation \eqref{eqn:vector-explicit-form}, we get two constraints
	\begin{align}
		0 &= -\cos(\eta_1+\frac{\pi}{3}) - \cos(\theta_2+\frac{2\pi}{3}) + \cos \eta_2 + \cos(\theta_4 + \frac{\pi}{3}) - \cos(\eta_3 - \frac{\pi}{3}) - \cos \theta_3, \label{eqn:hexagon1_cos}\\
		0 &= -\sin(\eta_1+\frac{\pi}{3}) - \sin(\theta_2+\frac{2\pi}{3}) + \sin \eta_2 + \sin(\theta_4 + \frac{\pi}{3}) - \sin(\eta_3 - \frac{\pi}{3}) - \sin \theta_3.\label{eqn:hexagon1_sin}
	\end{align}
	These constraints assure that the image of a particular hexagon in  the reference lattice  is again a (deformed) hexagon. In other words, there is no gap when we connect the neighboring triangles around hexagon 1 in Figure \ref{fig:two-periodic-construction}(c). Similarly, the other three hexagons in Figure \ref{fig:two-periodic-construction}(c) give another six constraints
	\begin{align}
		0 &= -\cos(\eta_2+\frac{\pi}{3}) - \cos(\theta_1+\frac{2\pi}{3}) + \cos \eta_1 + \cos(\theta_3 + \frac{\pi}{3}) - \cos(\eta_4 - \frac{\pi}{3}) - \cos \theta_4, \label{eqn:hexagon2_cos}\\
		0 &= -\sin(\eta_2+\frac{\pi}{3}) - \sin(\theta_1+\frac{2\pi}{3}) + \sin \eta_1 + \sin(\theta_3 + \frac{\pi}{3}) - \sin(\eta_4 - \frac{\pi}{3}) - \sin \theta_4, \label{eqn:hexagon2_sin}\\
		0 &= -\cos(\eta_3+\frac{\pi}{3}) - \cos(\theta_4+\frac{2\pi}{3}) + \cos \eta_4 + \cos(\theta_2 + \frac{\pi}{3}) - \cos(\eta_1 - \frac{\pi}{3}) - \cos \theta_1, \label{eqn:hexagon3_cos}\\
		0 &= -\sin(\eta_3+\frac{\pi}{3}) - \sin(\theta_4+\frac{2\pi}{3}) + \sin \eta_4 + \sin(\theta_2 + \frac{\pi}{3}) - \sin(\eta_1 - \frac{\pi}{3}) - \sin \theta_1, \label{eqn:hexagon3_sin}\\
		0 &= -\cos(\eta_4+\frac{\pi}{3}) - \cos(\theta_3+\frac{2\pi}{3}) + \cos \eta_3 + \cos(\theta_1 + \frac{\pi}{3}) - \cos(\eta_2 - \frac{\pi}{3}) - \cos \theta_2,  \label{eqn:hexagon4_cos}\\
		0 &= -\sin(\eta_4+\frac{\pi}{3}) - \sin(\theta_3+\frac{2\pi}{3}) + \sin \eta_3 + \sin(\theta_1 + \frac{\pi}{3}) - \sin(\eta_2 - \frac{\pi}{3}) - \sin \theta_2 \label{eqn:hexagon4_sin}.
	\end{align}
	Once the 8 angles meet the above 8 constraints, all deformed hexagons are closed and we have no problem connecting all neighboring triangles. Thus,  \eqref{eqn:hexagon1_cos}-\eqref{eqn:hexagon4_sin} are the only constraints on $\theta_1, \theta_2, \theta_3,\theta_4$ and $\eta_1, \eta_2, \eta_3,  \eta_4$.
	
	Finding all solutions to the 8 nonlinear constraints is hard. However, they are obviously satisfied if we take
	\begin{align}
		\theta_1 = \eta_4 \quad \theta_2 = \eta_3 \quad \theta_3 = \eta_2 \quad \theta_4 = \eta_1. \label{eqn:angle-relation}
	\end{align}
	This special solution gives a two-periodic mechanism with four free angles $\theta_1, \theta_2, \theta_3, \theta_4$. When we freely rotate these four angles, there is in fact an overall rotation, e.g. see Figure \ref{fig:two-periodic-construction}(d), where the primitive vector $\bm{v}_1$ is not in the horizontal direction. To get rid of the macroscopic rotation, we can choose $\theta_4$ to keep the primitive vector $\bm{v}_1$ in the horizontal direction. This is indeed the constraint given by \ref{eqn:theta_4}. We have now fully explained the  three-parameter mechanism $u_{\theta_1, \theta_2, \theta_3}(x)$.
	
	\subsection{Remarks on the preceding calculation}
	It might seem surprisingly that the 8 angles $\theta_1, \dots, \theta_4$ and $\eta_1, \dots, \eta_4$ are subject to 8 constraints \eqref{eqn:hexagon1_cos}-\eqref{eqn:hexagon4_sin}, and yet we found a four-parameter family of solutions \eqref{eqn:angle-relation}. Actually, this is not the only surprise:
	\begin{itemize}
		\item The two-periodic extension of the one-periodic mechanism is another solution to \eqref{eqn:hexagon1_cos}-\eqref{eqn:hexagon4_sin}. It is not included in the four-parameter family of solutions \eqref{eqn:angle-relation}; rather, it corresponds to 
		\begin{align*}
			\theta_1 = \theta_2 = \theta_3 = \theta_4 = \theta, \qquad \eta_1 = \eta_2 = \eta_3 = \eta_4 = \eta.
		\end{align*}
		When $\theta \neq \frac{\pi}{3} - \eta$, there is a macroscopic rotation. In fact, if we choose $\theta, \eta$ freely, then this mechanism has a macroscopic rotation and associates to Figure \ref{fig:one-periodic-mechanism}(b) with $\theta = \alpha$ and $\eta = \frac{\pi}{3} - \beta$.
		
		\item Counting equations to estimate the number of free parameters is unreliable in this setting because the family of energy-free configurations is not a smooth manifold, as we will discuss later (see Remark \ref{rmk:singularity}).
		
		\item The 8 equations \eqref{eqn:hexagon1_cos}-\eqref{eqn:hexagon4_sin} are redundant -- they can be easily reduced to 6 nonlinear constraints on the 8 angles $\theta_1, \dots, \theta_4, \eta_1, \dots, \eta_4$.
	\end{itemize}
	This section dwells on the last bullet, offering an algebraic explanation first, then a geometric one. This material is not needed in the rest of the paper, so an impatient reader can safely skip to section \ref{subsection:two-periodic-GH}.
	
	Let us start with the short algebraic explanation first. We observe that for any choice of the 8 angles $\theta_1, \dots, \theta_4, \eta_1, \dots, \eta_4$, the sum of all $\cos$ parts on the right hand side of \eqref{eqn:hexagon1_cos}, \eqref{eqn:hexagon2_cos}, \eqref{eqn:hexagon3_cos} and \eqref{eqn:hexagon4_cos} always vanishes, as well as the the sum of all $\sin$ parts on the right hand side of \eqref{eqn:hexagon1_sin}, \eqref{eqn:hexagon2_sin}, \eqref{eqn:hexagon3_sin} and \eqref{eqn:hexagon4_sin}. This indicates that when hexagon 1,2 and 3 in Figure \ref{fig:two-periodic-construction}(d) are closed, hexagon 4 is automatically closed. Therefore, we only need six constraints \eqref{eqn:hexagon1_cos}-\eqref{eqn:hexagon3_cos} and \eqref{eqn:hexagon1_sin}-\eqref{eqn:hexagon3_sin} to achieve a compatible two-periodic unit cell.
	
	A geometric explanation of this reduction of constraints comes from an observation: closing a pair of hexagons is equivalent to make the average of every zigzag line in one direction the same. Let us use the horizontal direction as an illustrative example. If we close hexagon 1 and 2 in Figure \ref{fig:two-periodic-geometric-reason}(a), then the average of the two zigzag lines in the horizontal direction (marked as dashed lines in Figure \ref{fig:two-periodic-geometric-reason}(b)) must be the same. To see why, we translate the left two vectors in hexagon 1 to the right side of hexagon 2 (see Figure \ref{fig:two-periodic-geometric-reason}(a)). The sum of the 12 vectors in hexagon 1 and 2 becomes the sum of 8 vectors in Figure \ref{fig:two-periodic-geometric-reason}(b), since the related triangles are closed. When hexagon 1 and 2 are both closed, the sum of the 12 vectors in Figure \ref{fig:two-periodic-geometric-reason}(a) and the 8 vectors in Figure \ref{fig:two-periodic-geometric-reason}(b) are both zero. Therefore, the two dashed vectors in Figure \ref{fig:two-periodic-geometric-reason}(b) are of the same magnitude but in opposite direction. This indicates that the average of the two zigzag lines in the horizontal direction are the same.
	
	The preceding argument is also reversible: when we make the average of the zigzag lines in the horizontal direction the same, the sum of all vectors in hexagon 1 and 2 is zero.
	\begin{figure}[!htb]
		\begin{minipage}[b]{0.48\linewidth}
			\centering
			\subfloat[]{\includegraphics[width=0.7\linewidth]{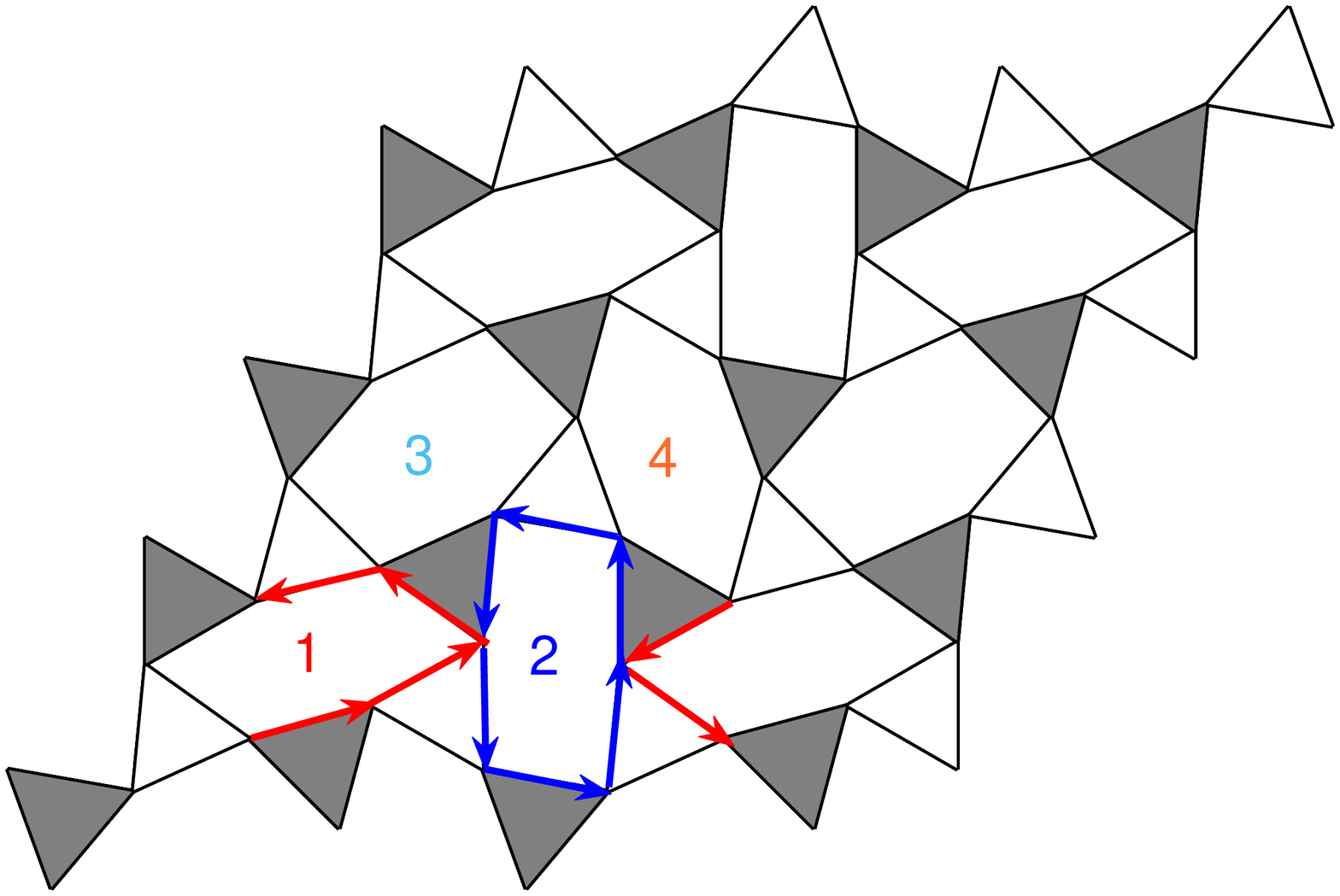}}
		\end{minipage}
		\begin{minipage}[b]{.48\linewidth}
			\centering
			\subfloat[]{\includegraphics[width=0.7\linewidth]{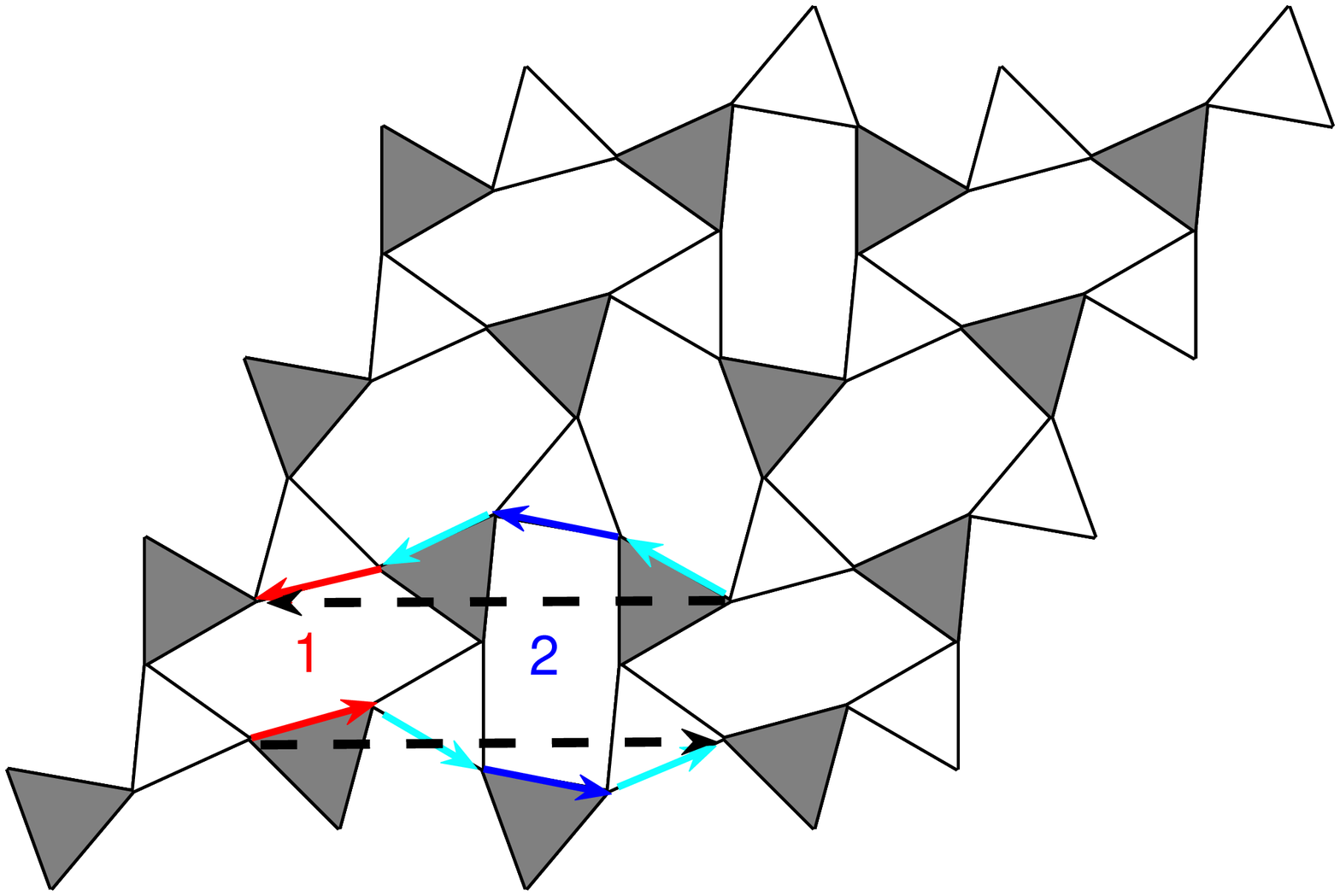}}
		\end{minipage}
		\caption{(a) Vectors in hexagon 1 and 2: we move the left two vectors in hexagon 1 to the right of hexagon 2; (b) the sum of the 12 vectors in (a) becomes the sum of the 8 vectors here: it indicates that the average of the two zigzag lines in the horizontal direction must be the same.}
		\label{fig:two-periodic-geometric-reason}
	\end{figure}
	Similarly, making the average of the zigzag lines in the 60 (120) degree direction is equivalent to making the sum of vectors in hexagon 1 and 3 (2 and 3) zero. The average constraints on every pair of parallel lines in the three lattice directions give 6 constraints. To see why they are equivalent to closing the 4 hexagons, we mark the sum of the 6 vectors in hexagon $i$ as $\vec{s}_i, i \in \{1,2,3,4\}$. The average constraints now become
	\begin{align*}
		\vec{s}_1 + \vec{s}_2 &= \vec{0} & \vec{s}_1 + \vec{s}_3 &= \vec{0} & \vec{s}_2 + \vec{s}_3 &= \vec{0}.
	\end{align*}
	This yields that $\vec{s}_1 = \vec{s}_2 = \vec{s}_3 = \vec{0}$ ($\vec{s}_4$ must also vanish since the average of the horizontal zigzag lines can be represented by both $\vec{s}_1+\vec{s}_2$ and $\vec{s}_3+\vec{s}_4$). Thus, closing the four hexagons is equivalent to the six constraints by taking average in the three lattice directions.
	
	\subsection{Relation between the two-periodic mechanisms and GH modes}\label{subsection:two-periodic-GH}
	We know from section \ref{sec:preliminary} that for the standard Kagome lattice, the space of two-periodic GH modes is four-dimensional. In fact, we can get four explicit GH modes as a basis of the space of two-periodic GH modes from linearizing two-periodic mechanisms. We will show that the three degrees of freedom in the two-periodic mechanism $u_{\theta_1,\theta_2,\theta_3}(x)$ in Figure \ref{fig:two-periodic-mechanism} yield three linearly independent two-periodic GH modes. Moreover, any linear combination of these three GH modes comes from a scaled version of this two-periodic mechanism $u_{\theta_1,\theta_2,\theta_3}(x)$. 
	
	Arguing as we did in section \ref{one-periodic-mechanism} for the one-periodic mechanism, we can get three two-periodic GH modes from the two-periodic mechanism $u_{\theta_1, \theta_2, \theta_3}(x)$, namely the three GH modes $\varphi_1^2(x), \varphi_2^2(x), \varphi_3^2(x)$ defined by
	\begin{align}
		\varphi_1^2(x) &= \frac{d}{d t}\Big|_{t=0} u_{\frac{\pi}{3}-t, \frac{\pi}{3}, \frac{\pi}{3}}, \qquad \varphi_2^2(x) = \frac{d}{d t}\Big|_{t=0} u_{\frac{\pi}{3}, \frac{\pi}{3}-t, \frac{\pi}{3}},\qquad \varphi_3^2(x) = \frac{d}{d t}\Big|_{t=0} u_{\frac{\pi}{3}, \frac{\pi}{3}, \frac{\pi}{3}-t}. \label{eqn:two-periodic-GH-def}
	\end{align}
	These three two-periodic GH modes are shown in Figure \ref{fig:two-periodic-standard-GH-modes} (see Appendix \ref{appendix-c} for their explicit expressions). Moreover, these three GH modes are linearly independent.
	\begin{figure}[!htb]
		\begin{minipage}[b]{.33\linewidth}
			\centering
			\subfloat[]{\includegraphics[width=\linewidth]{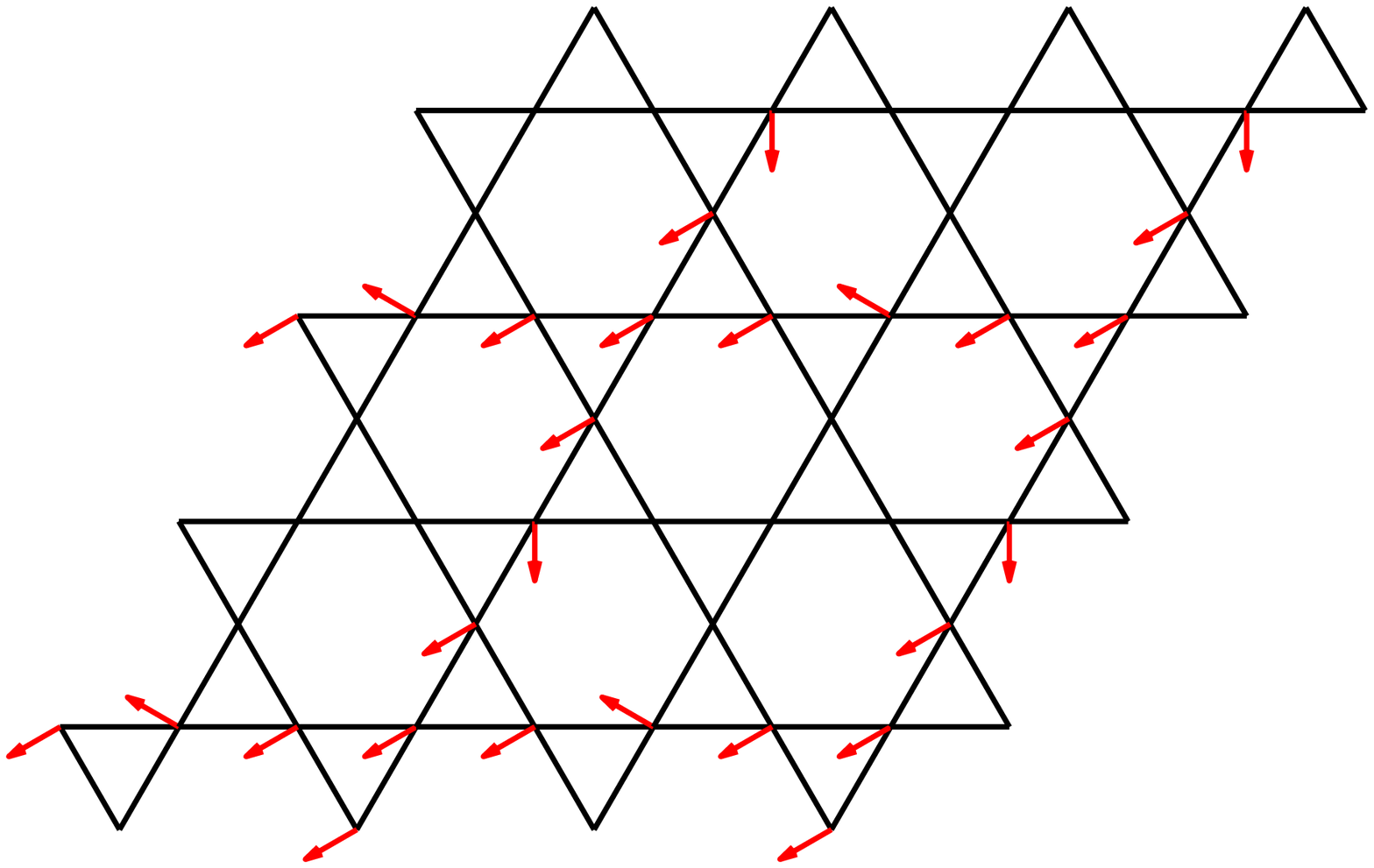}}
		\end{minipage}
		\begin{minipage}[b]{.33\linewidth}
			\subfloat[]{\includegraphics[width=\linewidth]{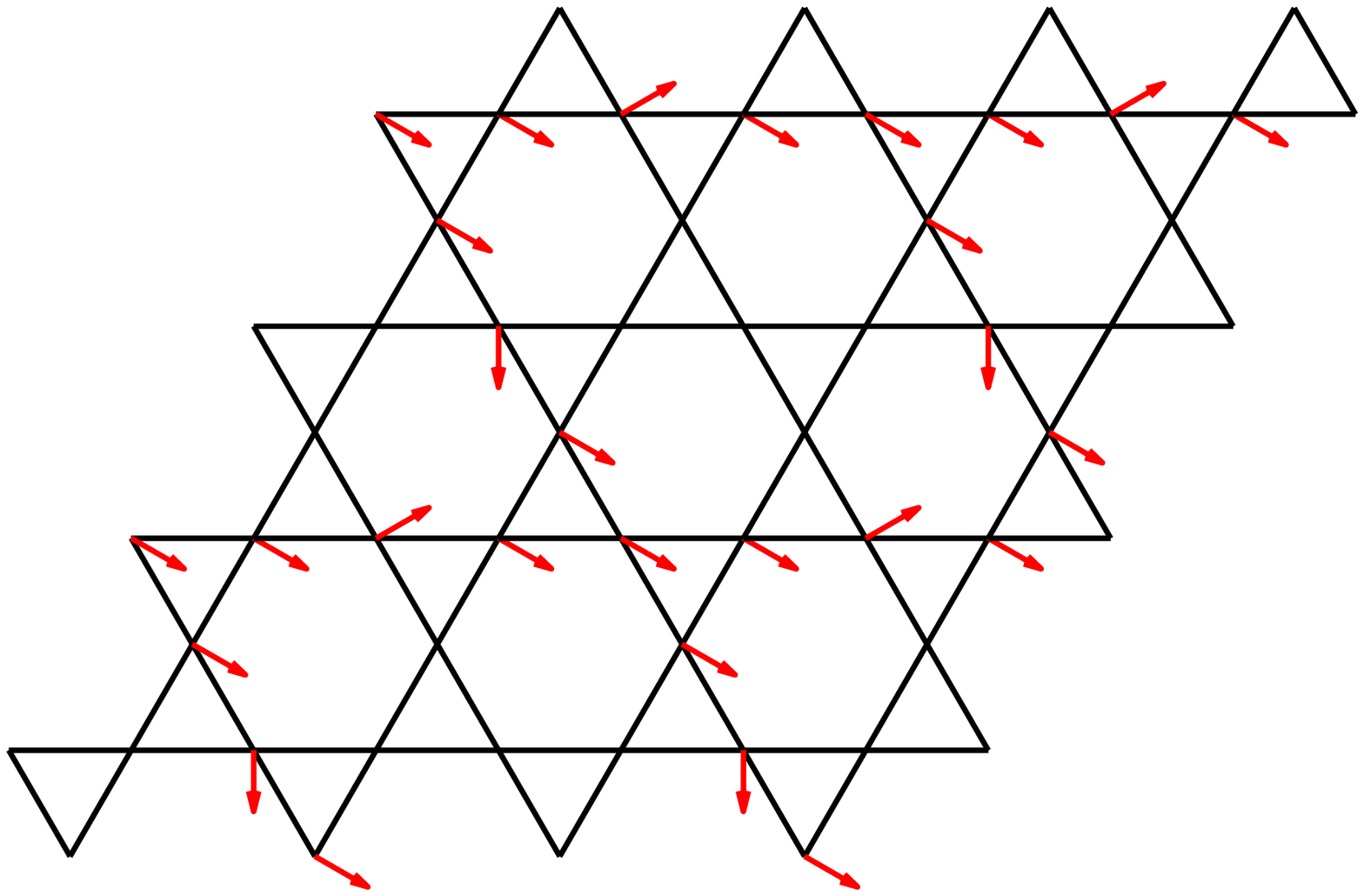}}
		\end{minipage}
		\begin{minipage}[b]{.33\linewidth}
			\centering
			\subfloat[]{\includegraphics[width=\linewidth]{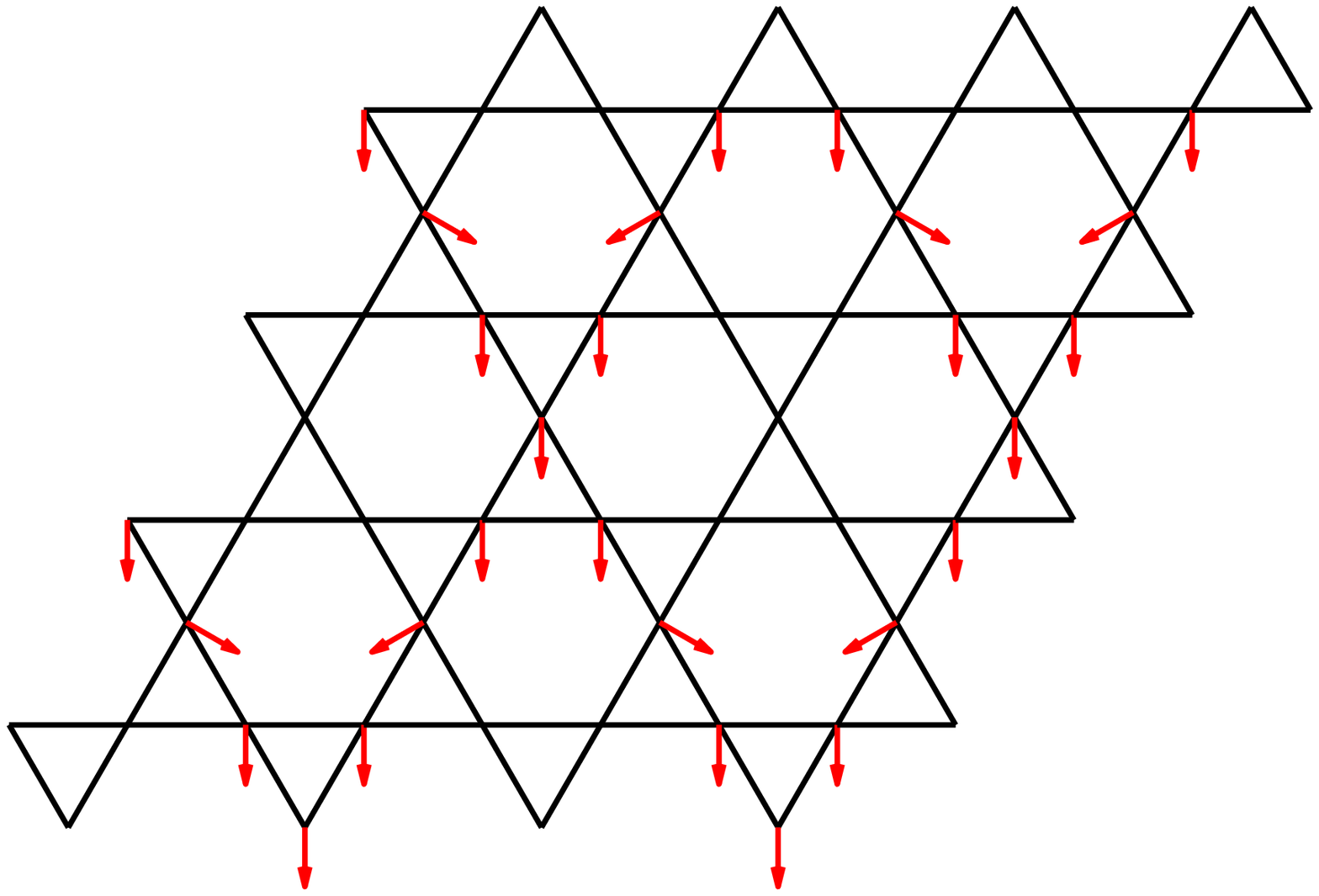}}
		\end{minipage}
		\caption{The three linearly independent two-periodic GH modes of the standard Kagome lattice: (a) the first GH mode $\varphi_1^2(x)$; (b) the second GH mode $\varphi_2^2(x)$; (c) the third GH mode $\varphi_3^2(x)$.}
		\label{fig:two-periodic-standard-GH-modes}
	\end{figure}
	
	We note that the one-periodic mechanism $u_{\frac{\pi}{3} \shortto \theta}(x)$ can also be viewed as a two-periodic mechanism on the standard Kagome lattice. So its infinitesimal version on the two-periodic unit cell is also a two-periodic GH mode. Let us name this particular two-periodic GH mode $\varphi_1^1(x)$. We expect (and it is true) that the one-periodic mechanism is not in the family of the two-periodic mechanism $u_{\theta_1, \theta_2, \theta_3}(x)$; the GH mode $\varphi_1^1(x)$ is not a linear combination of $\varphi_1^2(x), \varphi_2^2(x), \varphi_3^2(x)$. Thus, we have found a basis for the four-dimensional space of two-periodic GH modes.
	
	So far, we have shown that the four GH modes $\varphi_1^1(x), \varphi_1^2(x), \varphi_2^2(x), \varphi_3^2(x)$ form a basis for the space of two-periodic GH modes, and each comes from some mechanism. We also observe that any linear combination of $\varphi_1^2(x), \varphi_2^2(x), \varphi_3^2(x)$ comes from a two-periodic mechanism, i.e. for any $s_1, s_2, s_3 \in \mathbb{R}$,
	\begin{align*}
		s_1 \varphi_1^2(x)  + s_2 \varphi_2^2(x) + s_3 \varphi_3^2(x) &= \frac{d}{d t} \Big |_{t=0} u_{\frac{\pi}{3}-s_1t, \frac{\pi}{3}-s_2t, \frac{\pi}{3}-s_3t}.
	\end{align*}
	A remaining question is whether any other linear combination of these four GH modes comes from a two-periodic mechanism. The answer is no, as we will explain in section \ref{two-periodic-GH-modes}.
	
	\subsection{GH modes of the deformed two-periodic Kagome lattice $L_{\theta_1, \theta_2, \theta_3}$}
	In section \ref{sec:preliminary}, we mentioned the  Guest-Hutchinson theorem that all non-degenerate Maxwell lattices must have a GH mode. In section \ref{one-periodic-mechanism}, we saw that the standard Kagome lattice (which is non-degenerate) has a GH mode, and the twisted Kagome lattices (which are degenerate) do not have GH modes. This leads to the question whether the "reverse" of the Guest-Hutchinson theorem is true; that is: if a lattice has a degenerate linear elastic effective tensor $A_{\text{eff}}$, must there be no GH modes? In fact, such a result is not correct. The two-periodic degenerate Kagome lattice $L_{\theta_1, \theta_2, \theta_3}$ has GH modes. In fact, we will show that the space of GH modes is two-dimensional for all the two-periodic Kagome lattices $L_{\theta_1, \theta_2, \theta_3}$ except for the standard Kagome lattice. An example is shown in Figure \ref{fig:two-periodic-deformed-GH-modes}.
	
	To get started, we observe that among the family of two-periodic Kagome lattices $L_{\theta_1, \theta_2, \theta_3}$, only the standard Kagome lattice is non-degenerate. This is true because for other $L_{\theta_1, \theta_2, \theta_3}$, the compression ratio in \eqref{eqn:two_periodic_macroscopic} is less than 1. This means that we can change $\theta_1, \theta_2, \theta_3$ to further compress or expand it. Therefore, there are mechanisms around $L_{\theta_1, \theta_2, \theta_3}$ such that the infinitesimal version of its macroscopic deformation gradient does not vanish. By Proposition \ref{degeneracy}, the two-periodic Kagome lattice $L_{\theta_1, \theta_2, \theta_3}$ is degenerate. 
	
	However, there is a mechanism that does not change the macroscopic deformation gradient. Since the associated $\dot{F}(0)$ vanishes, there must exist a GH mode by Proposition \ref{degeneracy}. In fact, we shall prove that the space of GH modes is two-dimensional for the deformed two-periodic Kagome lattice $L_{\theta_1, \theta_2, \theta_3}$ by using the implicit function theorem as follows.
	\begin{figure}[H]
		\begin{minipage}[b]{.48\linewidth}
			\centering
			\subfloat[]{\includegraphics[width=0.85\linewidth]{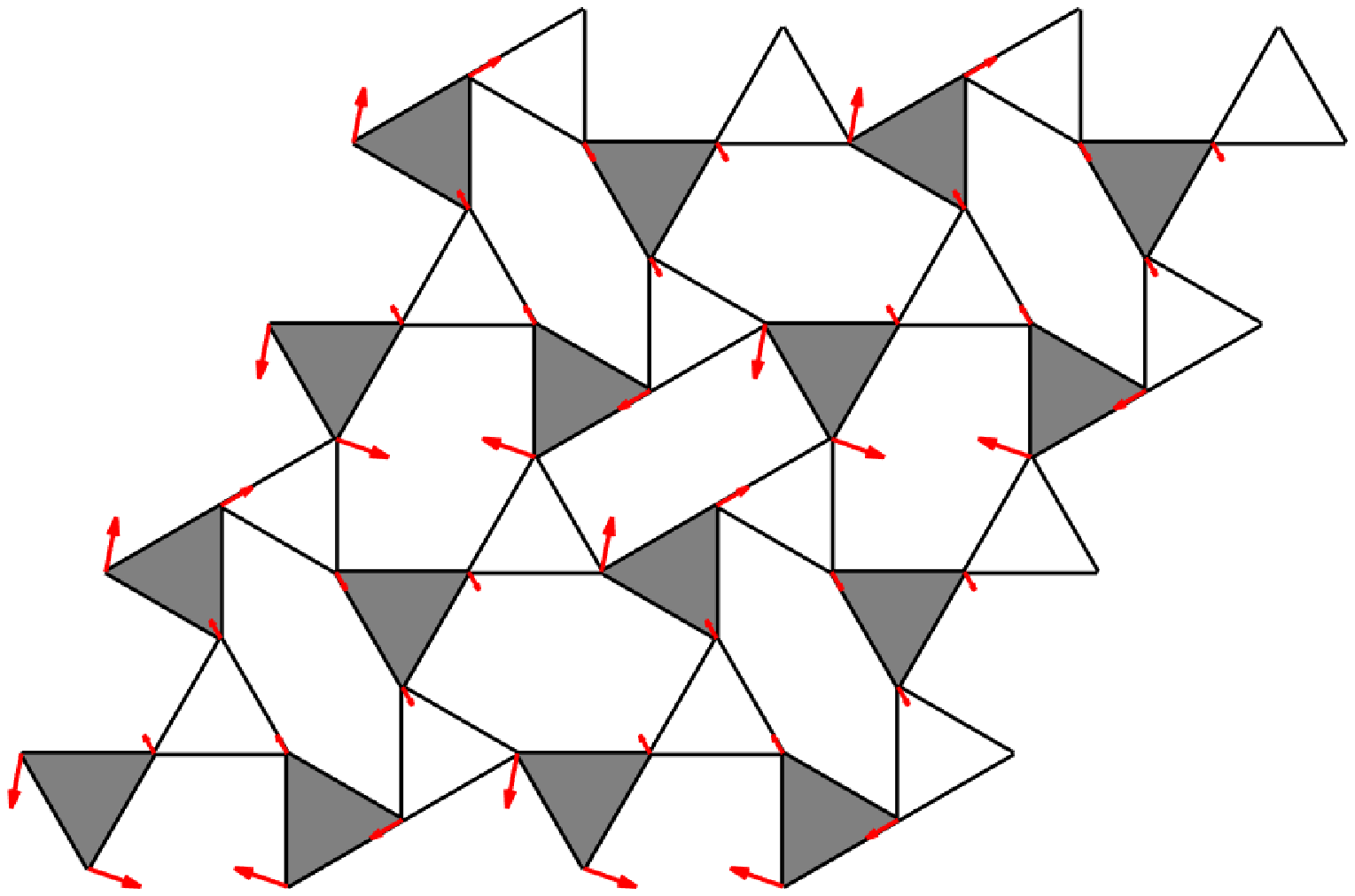}}
		\end{minipage}
		\begin{minipage}[b]{.48\linewidth}
			\centering
			\subfloat[]{\includegraphics[width=0.85\linewidth]{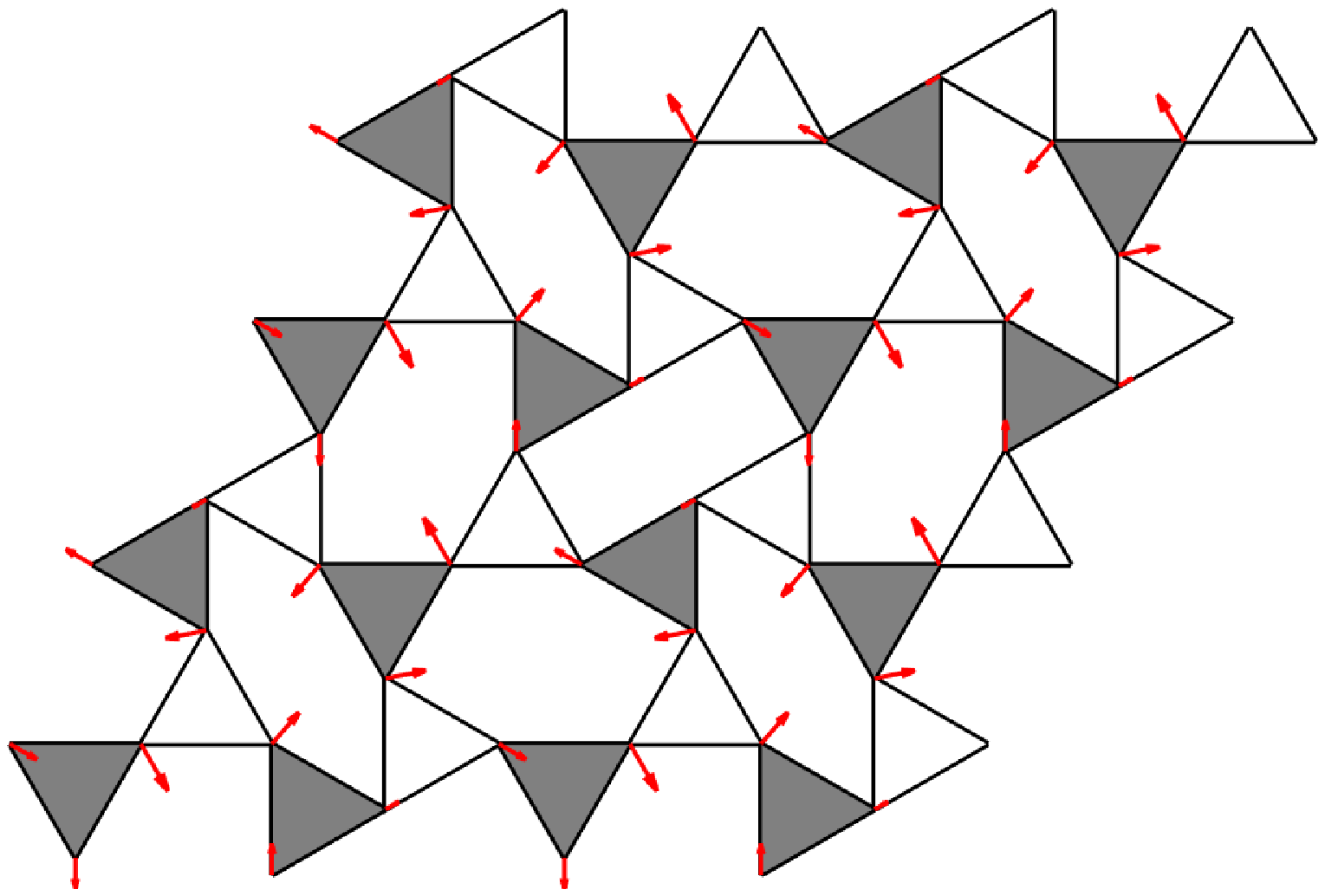}}
		\end{minipage}
		\caption{Two linearly independent GH modes of the deformed two-periodic Kagome lattice $L_{\theta_1, \theta_2, \theta_3}$ with $\theta_1 = \frac{\pi}{3}, \theta_2 = \frac{\pi}{6}$ and $\theta_3=\frac{\pi}{2}$. The arrows indicate the displacement vector of each GH mode at every vertex.}\label{fig:two-periodic-deformed-GH-modes}
	\end{figure}
	
	Consider the function $F(\theta_1, \theta_2, \theta_3, \theta_4)$, defined on the four angles $\theta_1, \theta_2, \theta_3, \theta_4$ that appeared in the construction of the two-periodic Kagome lattice:
	\begin{align*}
		& F(\theta_1, \theta_2, \theta_3, \theta_4) = \begin{pmatrix}
			\frac{1}{4} \big(\cos(\theta_1 - \frac{\pi}{3}) + \cos(\theta_2 - \frac{\pi}{3}) + \cos(\theta_3 - \frac{\pi}{3}) + \cos(\theta_4 - \frac{\pi}{3}) \big)\\
			\sin(\theta_1 - \frac{\pi}{3}) + \sin(\theta_2 - \frac{\pi}{3}) + \sin(\theta_3 - \frac{\pi}{3}) + \sin(\theta_4 - \frac{\pi}{3}).
		\end{pmatrix},
	\end{align*}
	The first component is the macroscopic compression ratio $c_{\theta_1, \theta_2, \theta_3, \theta_4}$ in \eqref{eqn:two_periodic_macroscopic}, and the second component will permit us to avoid macroscopic rotation, using \eqref{eqn:theta_4}. It suffices to show that the level set of $F(\theta_1, \theta_2, \theta_3, \theta_4) = (c, 0)^T$ is a differentiable 2-manifold when the two-periodic Kagome lattice $L_{\theta_1, \theta_2, \theta_3}$ is not the standard Kagome lattice. To use the implicit function theorem, we take the partial derivative to $F$ w.r.t. $\theta_1$ and $\theta_2$
	\begin{align*}
		&\partial_{\theta_1, \theta_2} F = \begin{pmatrix}
			-\frac{1}{4} \sin(\theta_1 - \frac{\pi}{3}) & -\frac{1}{4} \sin(\theta_2 - \frac{\pi}{3})\\
			\cos(\theta_1 - \frac{\pi}{3}) & \cos(\theta_2 - \frac{\pi}{3})
		\end{pmatrix}.
	\end{align*}
	Its determinant is $\frac{1}{4}\sin(\theta_2 - \theta_1)$. When $\theta_1 \neq \theta_2$, the Jacobian matrix $\partial_{\theta_1, \theta_2} F$ is invertible and the implicit function theorem guarantees that the level set $F(\theta_1, \theta_2, \theta_3, \theta_4) = (c, 0)^T$ is a differentiable two-dimensional manifold. Moving along any curve on this manifold, the macroscopic deformation gradient does not change. Thus, the two-dimensional tangent space is a subspace of the space of GH modes. 
	
	So far, we have seen that when a deformed two-periodic Kagome lattice $L_{\theta_1, \theta_2,\theta_3}$ has $\theta_1\neq\theta_2$, its space of GH modes is at least two-dimensional. In fact, using the same method, we can show that if the four angles $\theta_1, \theta_2,\theta_3, \theta_4$ are not the same, then the space of GH mode for this two-periodic Kagome lattice is two-dimensional. It remains to show that when the four angles are the same, it must be the standard Kagome lattice. This is true because when the four angles are the same and the second component of $F$ is zero, the four angles must be $\frac{\pi}{3}$, i.e. $L_{\theta_1, \theta_2,\theta_3}$ is the standard Kagome lattice. Therefore, for all two-periodic degenerate Kagome lattices $L_{\theta_1, \theta_2,\theta_3}$, their spaces of GH modes are at least two-dimensional. In fact, one can check that the space of GH modes of every degenerate $L_{\theta_1, \theta_2,\theta_3}$ is indeed two-dimensional by computing the null space of its compatibility matrix.
	
	\subsection{Special deformed two-periodic Kagome lattices}\label{subsec:special-two-periodic}
	There are two special \textit{one-parameter} two-periodic Kagome lattices in the family of deformed two-periodic Kagome lattices $L_{\theta_1, \theta_2, \theta_3}$, obtained by choosing special angle relations between $\theta_1, \theta_2, \theta_3$. The first one is a two-by-one periodic Kagome lattice, i.e. the unit cell of this Kagome lattice only contains four triangles instead of eight. We can achieve it from $L_{\theta_1, \theta_2, \theta_3}$ by choosing the three angles as a function of one parameter $\gamma$:
	\begin{align*}
		\theta_1 &= \gamma & \theta_2&= \gamma\\
		\theta_3 &= \frac{2\pi}{3}-\gamma & \theta_4&= \frac{2\pi}{3}-\gamma.
	\end{align*}
	We name this two-by-one periodic Kagome lattice $L^{2,1}_{\gamma}$; it is shown in Figure \ref{fig:two-periodic-special}(a). Its primitive vectors $\bm{v}_1$ and $\bm{v}_2$ are
	\begin{align*}
		\bm{v}_1 &= \cos(\frac{\pi}{3} - \gamma)(2,0),\\
		\bm{v}_2 &= \cos(\frac{\pi}{3} - \gamma)(2, 2\sqrt{3}),
	\end{align*}
	where $\big|\bm{v}_1\big| = \frac{1}{2} \big|\bm{v}_2\big|$ since $L_{\gamma}$ is two-by-one periodic. The standard Kagome lattice is achieved when $\gamma = \frac{\pi}{3}$.
	\begin{figure}[!htb]
		\begin{minipage}[b]{.48\linewidth}
			\centering
			\subfloat[]{\includegraphics[width=0.7\linewidth]{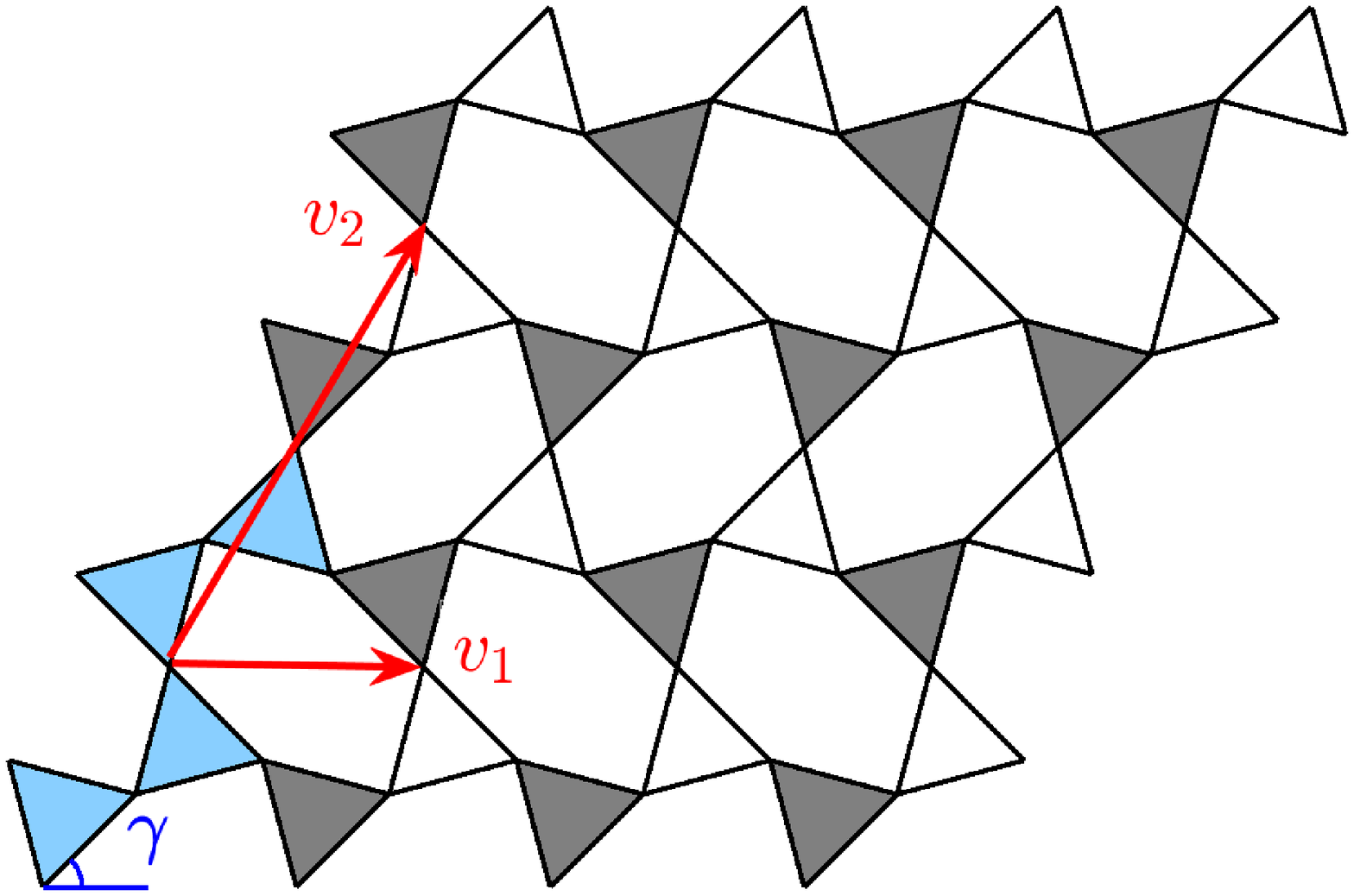}}
		\end{minipage}
		\begin{minipage}[b]{.48\linewidth}
			\centering
			\subfloat[]{\includegraphics[width=0.72\linewidth]{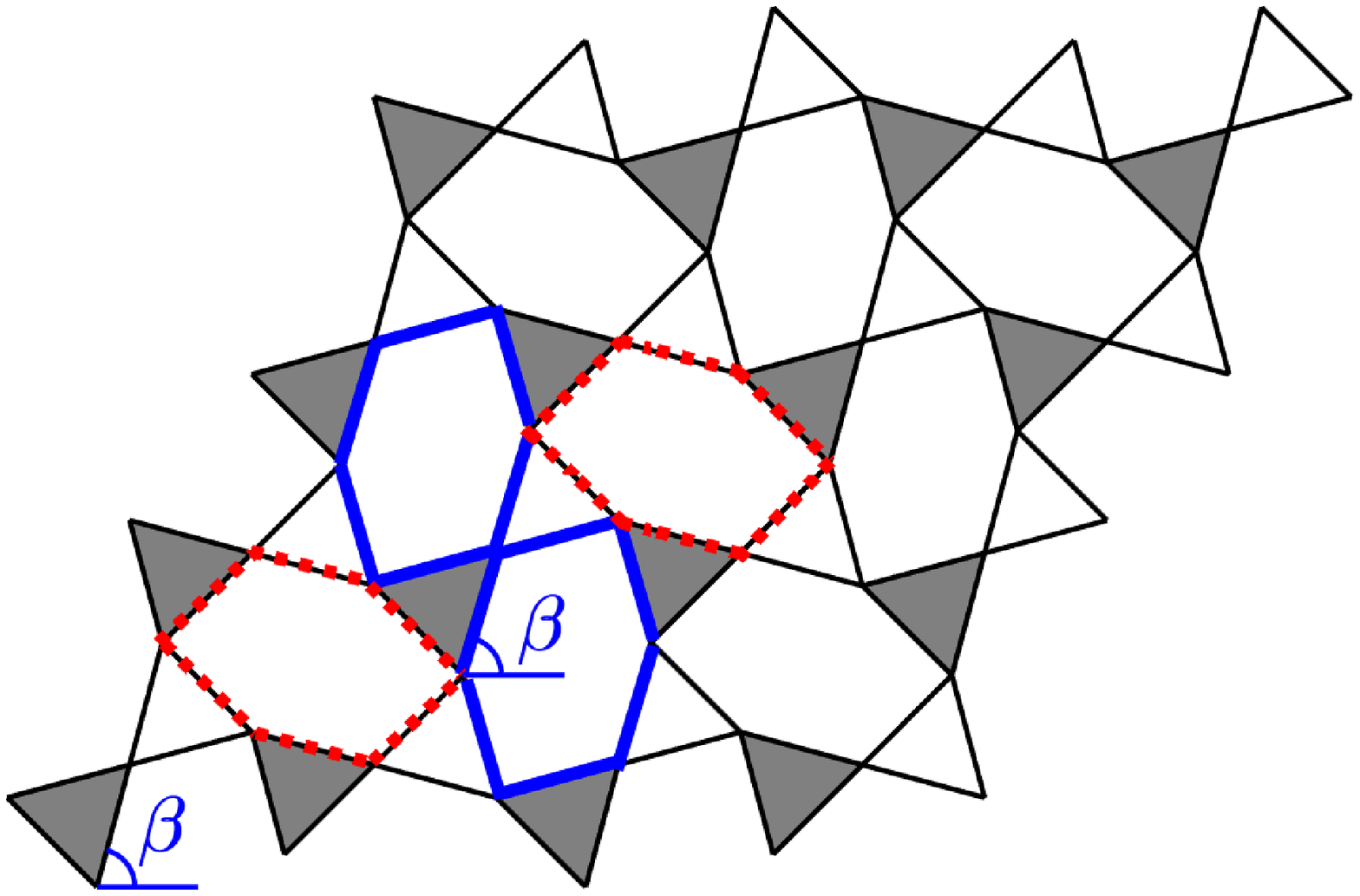}}
		\end{minipage}
		\caption{(a) The two-by-one periodic Kagome lattice $L^{2,1}_{\gamma}$ with $\gamma = \frac{\pi}{4}$: the unit cell consists of four lightly shaded triangles; (b) the special two-periodic Kagome lattice $L^{2,2}_{\beta}$ with $\beta = \frac{5\pi}{12}$: the two hexagons with dotted edges are identical to each other, same for the two hexagons with solid dashed lines .}
		\label{fig:two-periodic-special}
	\end{figure}  
	
	Another special two-periodic Kagome lattice is obtained from the two-periodic mechanism $L_{\theta_1, \theta_2, \theta_3}$ by choosing $\theta_1. \theta_2, \theta_3$ as functions of one parameter $\beta$
	\begin{align*}
		\theta_1 &= \beta & \theta_2&= \frac{2\pi}{3}  - \beta\\
		\theta_3  &= \frac{2\pi}{3}  - \beta & \theta_4 &= \beta.
	\end{align*}
	We name this two-periodic Kagome lattice $L^{2,2}_{\beta}$. One example is shown in Figure \ref{fig:two-periodic-special}(b); as one sees in the figure, the four hexagons are two pairs of identical hexagons. The two primitive vectors $\bm{v}_1, \bm{v}_2$ are
	\begin{align*}
		\bm{v}_1 &= \cos(\frac{\pi}{3} - \beta)(4,0),\\
		\bm{v}_2 &= \cos(\frac{\pi}{3} - \beta)(2, 2\sqrt{3});
	\end{align*}
	here the two primitive vectors are of the same magnitude and the angle between them is 60 degrees.
	\begin{remark}
		These one-parameter two-periodic families of lattices come from two special two-periodic mechanisms of the standard Kagome lattice obtained by smoothly changing $\gamma$ and $\beta$ away from $\frac{\pi}{3}$. It can be checked from the primitive vectors that the macroscopic deformation gradients for the two special mechanisms are
		\begin{align*}
			F^{2,1}_{\gamma} &= \cos(\frac{\pi}{3} - \gamma) I & F^{2,2}_{\beta} &= \cos(\frac{\pi}{3} - \beta) I,
		\end{align*}
		where $F^{2,1}_{\gamma}$ is associated with the two-by-one periodic mechanism and $F^{2,2}_{\beta}$ is associated with the special two-periodic mechanism. Evidently, the two macroscopic deformation gradients are the same as the macroscopic deformation gradient of the one-periodic mechanism $u_{\frac{\pi}{3} \shortto \theta}$ in \eqref{macroscopic-one-periodic} when we choose $\gamma = \beta = \theta$.
	\end{remark}
	
	\section{Which GH modes are linearizations of periodic mechanisms?}\label{sec:necessary-condition}
	{\color{black}In this section, we focus on the question which GH modes come from periodic mechanisms. This is related to the question in finite structures which first-order flexes come from fully nonlinear flexes. It is well-known that there is a necessary condition for a first-order flex to come from a nonlinear flex, i.e. the second-order stress test (see e.g. \cite{connelly1996second}). We shall present a similar necessary condition for GH modes to come from periodic mechanisms (see section \ref{subsec:necessary-condition}). The rest of this section contains applications of the necessary condition to two-periodic GH modes and Fleck-Hutchinson modes to see whether they come from mechanisms.}
	
	{\color{black}Let us briefly explain why not every two-periodic GH mode from the standard Kagome lattice comes from a mechanism.} The necessary condition comes from a geometric observation about the two-periodic mechanism in Figure \ref{fig:two-periodic-overview}. Though straight lines in the reference lattice are deformed into zigzag lines, they still have to experience the same macroscopic contraction on each pair of parallel lines to fit the macroscopic deformation gradient. This leads to, as we shall explain, a constraint on a GH mode $\varphi_1(x)$ for it to come from a mechanism: roughly speaking, the "quadratic part" of the GH mode $\varphi_1(x)$ must average to the same amount on every parallel line in the three lattice directions. This constraint, which we refer to as the \textit{consistency condition}, is explained in sections \ref{subsec:necessary-condition} and \ref{subsection:consistency-condition}. Then we apply this consistency condition in sections \ref{two-periodic-GH-modes} and \ref{subsection:fleck-hutchinson} to understand (a) which linear combinations of our explicit GH modes $\varphi_1^1(x), \varphi_1^2(x), \varphi_2^2(x), \varphi_3^2(x)$ come from mechanisms; and (b) which Fleck-Hutchinson modes come from mechanisms. 
	\begin{figure}[!htb]
		\begin{minipage}[b]{.32\linewidth}
			\centering
			\subfloat[]{\includegraphics[width=0.9\linewidth]{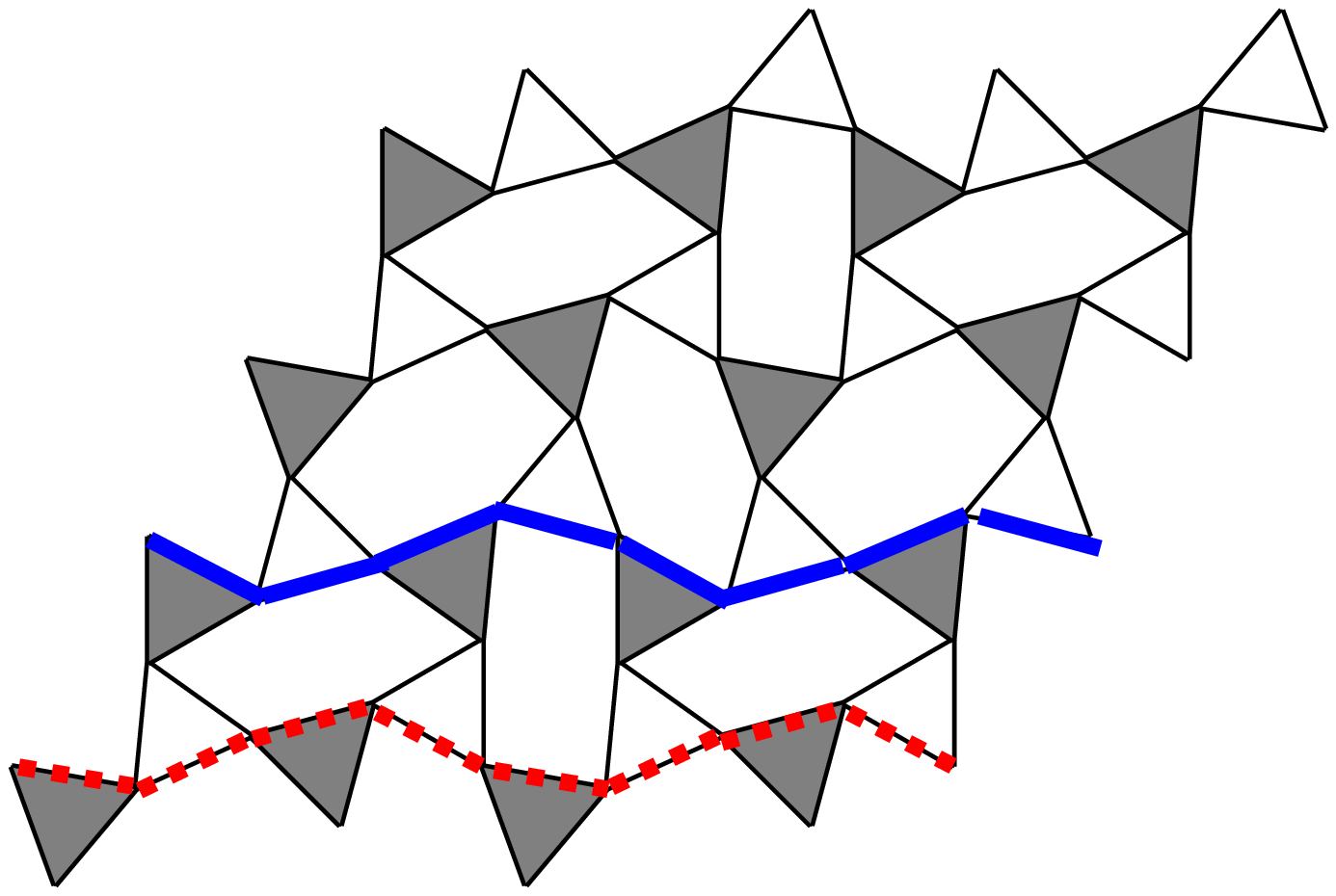}}
		\end{minipage}
		\begin{minipage}[b]{.32\linewidth}
			\centering
			\subfloat[]{\includegraphics[width=0.9\linewidth]{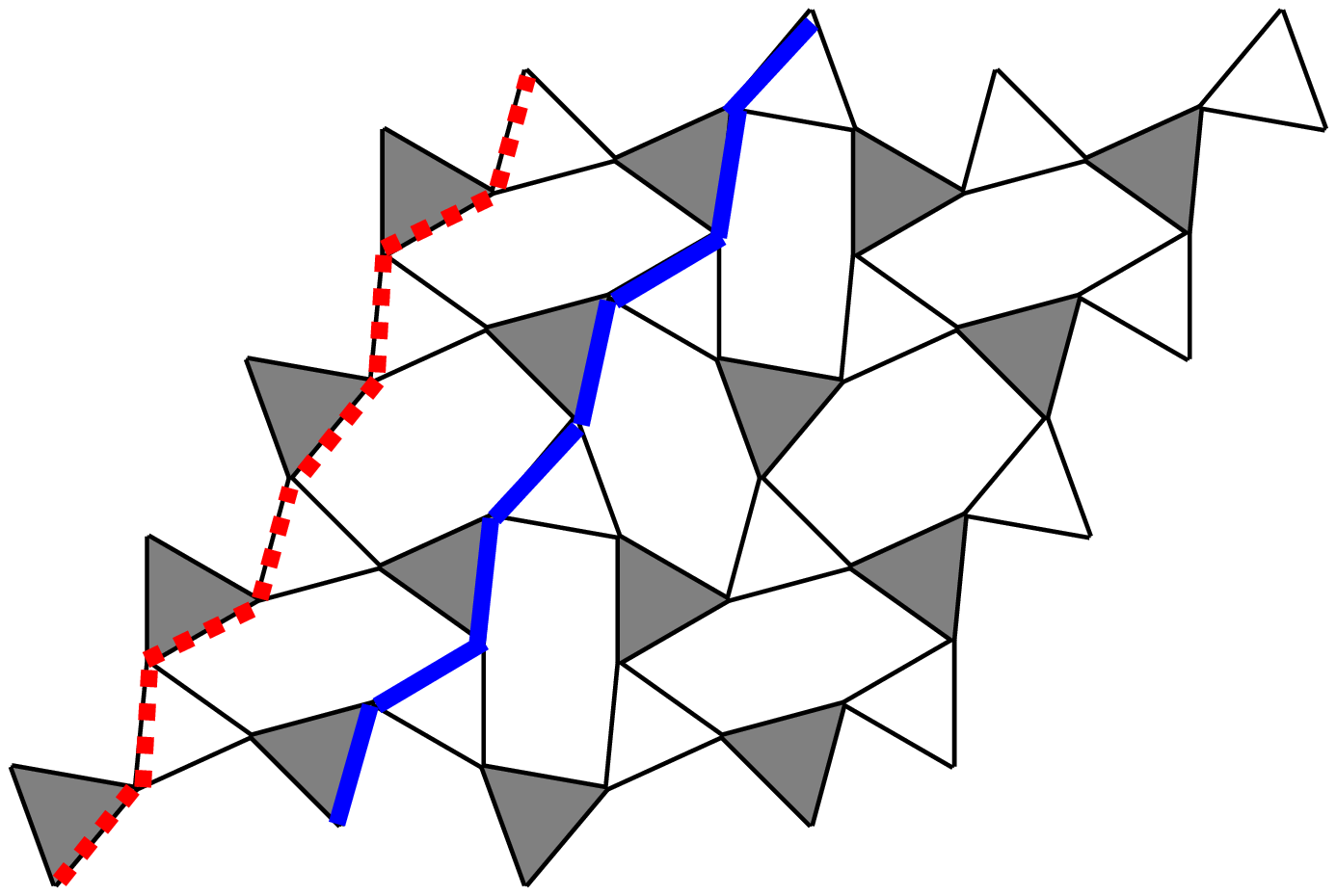}}
		\end{minipage}
		\begin{minipage}[b]{.32\linewidth}
			\centering
			\subfloat[]{\includegraphics[width=0.9\linewidth]{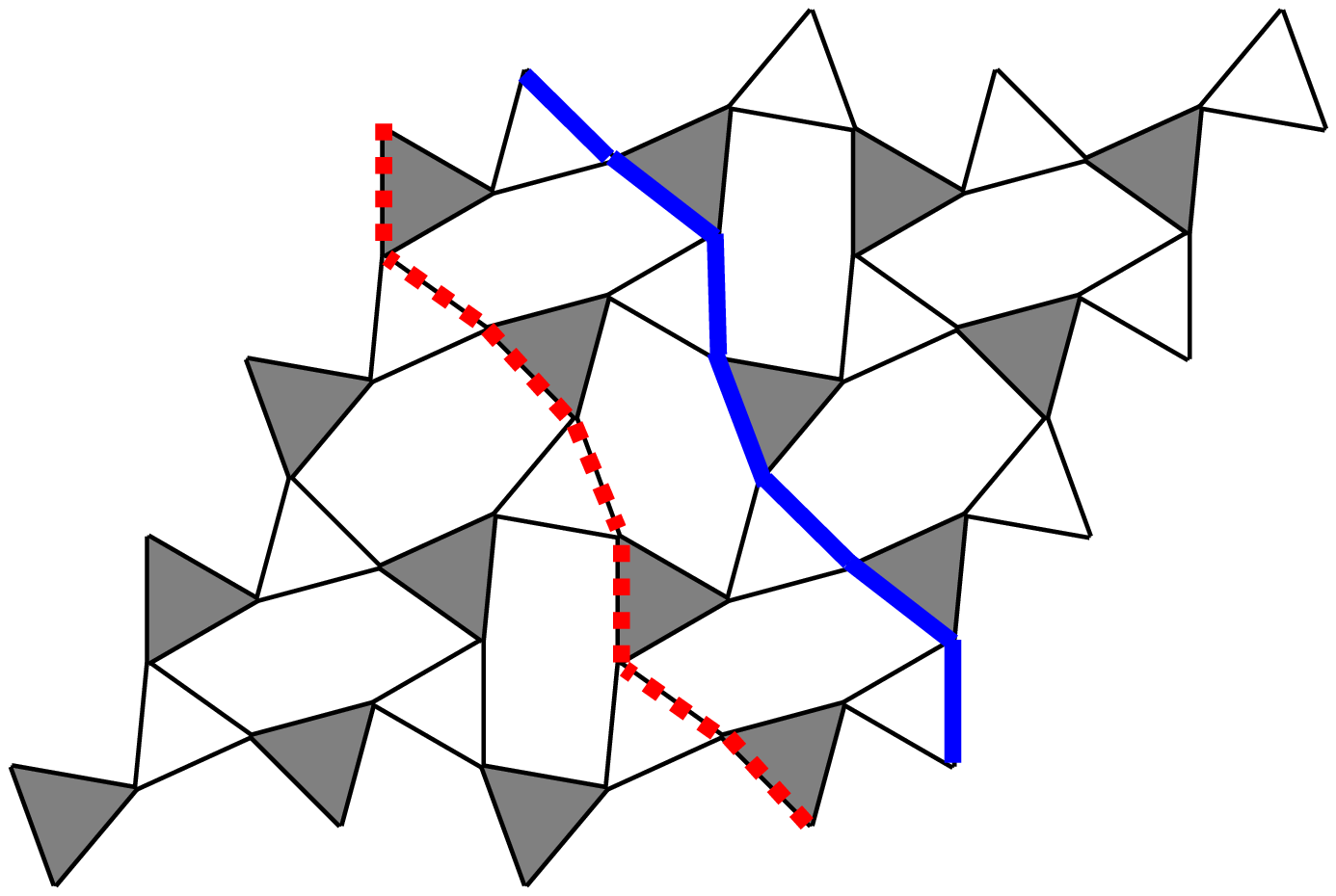}}
		\end{minipage}
		\caption{Zigzag lines in the two-periodic Kagome lattice: (a) two zigzag lines in the horizontal direction; (b) two zigzag lines in the 60 degree direction; (c) two zigzag lines in the 120 degree direction. The two zigzag lines in all three figures must average to the same amount.}
		\label{fig:two-periodic-overview}
	\end{figure}
	
	\subsection{A necessary condition}\label{subsec:necessary-condition}
	Before we discuss the consistency condition, we first derive a necessary condition for a GH mode to come from a mechanism on a general lattice. We need this necessary condition because not every lattice has straight lines of springs like the standard Kagome lattice. Our consistency condition is obtained by specializing {\color{black}our} necessary condition to the standard Kagome lattice.
	
	The necessary condition comes from expanding the potential mechanism to second order. If a GH mode $\varphi_1(x)$ comes from a $N$-periodic smooth mechanism $u(x,t)$, then $u(x,t) = F(t) x + \varphi(x,t)$ with $\dot{\varphi}(x, 0) = \varphi_1(x)$ and $\dot{F}(0) = 0$. The Taylor expansion of $u(x,t)$ around $t=0$ to second order gives
	\begin{align}
		u(x,t) &= x + t \big(\xi_1 x + \varphi_1(x)\big) + t^2 \big(\xi_2 x + \varphi_2(x)\big) + O(t^3), \label{eqn:taylor-expansion}
	\end{align}
	where $u(x,0) = x$ is the reference lattice and $\xi_1 = \dot{F}(0) = 0$. The symmetric $\xi_2 \in \mathbb{R}^{2\times 2}_{\text{sym}}$ is the macroscopic strain at second order and $\varphi_2(x)$ is the second order periodic oscillation.
	
	The existence of $\xi_2$ and $\varphi_2(x)$ gives a constraint on the quadratic part of the GH mode $\varphi_1(x)$, since the mechanism $u(x,t)$ must preserve the length of every spring to second order. On the spring connecting $x_i, x_j$, the mechanism $u(x,t)$ preserves the length of the spring for any $t$, so
	\begin{align}
		\big|x_i-x_j\big|^2 &= \big|u(x_i, t) - u(x_j,t)\big|^2 \label{eqn:taylor-expansion-spring}\\
		&= \big|x_i - x_j\big|^2 + 2t\big\langle x_i - x_j, \varphi_1(x_i) - \varphi_1(x_j)\big \rangle \nonumber\\
		& + t^2 \bigg[\big|\varphi_1(x_i) - \varphi_1(x_j)\big|^2 + 2\big\langle x_i - x_j, \xi_2 (x_i - x_j)\big\rangle + 2\big\langle x_i - x_j, \varphi_2(x_i) - \varphi_2(x_j)\big\rangle \bigg] + O(t^3) \nonumber.
	\end{align}
	The first order term $\big\langle x_i - x_j, \varphi_1(x_i) - \varphi_1(x_j)\big \rangle$ automatically vanishes because $\varphi_1(x)$ is a GH mode (see \eqref{first_order_bond_extension}). At second order, we have
	\begin{align}
		0&=\big|\varphi_1(x_i) - \varphi_1(x_j)\big|^2 + 2\big\langle x_i - x_j, \xi_2 (x_i - x_j)\big\rangle + 2\big\langle x_i - x_j, \varphi_2(x_i) - \varphi_2(x_j)\big\rangle, \nonumber \\
		&= \big|\varphi_1(x_i) - \varphi_1(x_j)\big|^2 + 2l_{ij}^2 \hat{b}_{ij}^T\xi_2 \hat{b}_{ij} + 2 l_{ij} \big\langle \hat{b}_{ij}, \varphi_2(x_i) - \varphi_2(x_j)\big\rangle, \label{eqn:necessary_condition_original}
	\end{align}
	where $l_{ij} =  \big|x_i - x_j\big|$ and $\hat{b}_{ij} = \frac{x_i - x_j}{|x_i - x_j|}$. The last term in vector form is $2 L C \bm{\varphi}_2$, where $L$ is the diagonal matrix with diagonal entries $l_{ij}$ and $C$ is the $N$-periodic compatibility matrix. Writing the first two terms in the vector form, we achieve our necessary condition from \eqref{eqn:necessary_condition_original}:
	\begin{align}
		\bm{e}_{\varphi_1} + \bm{d}_{\xi_2} = L C\bm{\varphi}_2, \label{eqn:necessary_condition}
	\end{align}
	where each entry of $\bm{e}_{\varphi_1}$ is a quadratic term $|\varphi_1(x_i) - \varphi_1(x_j)|^2$ and each entry of $\bm{d}_{\xi_2}$ is $2l_{ij}^2 \hat{b}_{ij}^T\xi_2 \hat{b}_{ij}$.
	
	\subsection{The necessary condition for the standard Kagome lattice}\label{subsection:consistency-condition}
	Now we use our understanding of the standard Kagome lattice -- especially our understanding of its self-stresses -- to specialize the necessary condition \eqref{eqn:necessary_condition} to this setting. While the details are specific to the standard Kagome lattice, our calculation is similar in spirit to the second-order stress test for bars in \cite{connelly1996second} -- a condition that must be satisfied by a first-order flex of a bar framework if it comes from a mechanism.
	
	Without loss of generality, we assume that the length of each spring in the reference lattice is 1. The necessary condition in \eqref{eqn:necessary_condition} tells that if a GH mode $\varphi_1(x)$ comes from a $N$-periodic mechanism, then there is a symmetric matrix $\xi_2$ such that  $\bm{e}_{\varphi_1}+ \bm{d}_{\xi_2} \in \Ima(LC) = \Ima(C)$ ($L = I$ because we assume the lengths of all springs are equal to 1). By the Fredholm Alternative,  it is equivalent to require that $\color{black}\bm{e}_{\varphi_1} + \bm{d}_{\xi_2}$ be orthogonal to all $N$-periodic self-stresses. Thus, the equivalent necessary condition is: there exists a symmetric matrix $\xi_2$ such that for all $\bm{s} \in \ker(C^T)$, 
	\begin{align}
		\langle \bm{e}_{\varphi_1}+ \bm{d}_{\xi_2}, s \rangle = 0. \label{eqn:necessary_stresses}
	\end{align}
	For simplicity, we discuss the case $N=2$; it will be clear that the same method can be applied to in the $N$-periodic case, for any $N$. We apply our equivalent necessary condition in \eqref{eqn:necessary_stresses} with the six explicit two-periodic self-stresses in Figure \ref{fig:self-stresses}(d)-(f). We first plug the two horizontal self-stresses in Figure \ref{fig:self-stresses}(d) into \eqref{eqn:necessary_stresses} and get
	\begin{align}
		\sum_{x_i, x_j \text{ on the solid line}}\big|\varphi_1(x_i) - \varphi_1(x_j)\big|^2 + 2\big\langle x_i - x_j, \xi_2 (x_i - x_j)\big\rangle = 0, \label{eqn:red}\\
		\sum_{x_i, x_j \text{ on the dotted line}}\big|\varphi_1(x_i) - \varphi_1(x_j)\big|^2 + 2\big\langle x_i - x_j, \xi_2 (x_i - x_j)\big\rangle = 0. \label{eqn:orange}
	\end{align}
	Notice that every vector $x_i - x_j$ is a unit vector in the horizontal direction. Therefore, the sum of $2\big\langle x_i - x_j, \xi_2 (x_i - x_j)\big\rangle$ over the two lines in the horizontal direction are the same. This indicates that the quadratic part of $\varphi_1(x)$ must sum up to the same amount over the two lines in the horizontal direction
	\begin{align}
		\sum_{x_i, x_j \text{ on the solid line}}\big|\varphi_1(x_i) - \varphi_1(x_j)\big|^2= \sum_{x_i, x_j \text{ on the dotted line}}\big|\varphi_1(x_i) - \varphi_1(x_j)\big|^2. \label{eqn:consistency_horizontal}
	\end{align}
	Similarly, we get two more conditions in the 60 and 120 degree directions by plugging the four self-stresses in Figure \ref{fig:self-stresses}(e) and (f) into \eqref{eqn:necessary_stresses}. The three conditions say that the \textit{quadratic part} of $\varphi_1(x)$ must sum up to the same amount on every parallel line in the three lattice directions. We call these three conditions the consistency condition. 
	
	\begin{remark}
		When $N > 2$, the consistency condition still requires that the quadratic part of a given GH mode $\varphi_1(x)$ must sum to the same amount {\color{black}on each of the parallel lines in one of the three lattice directions}. For example, on the horizontal direction, there are in total $N$ parallel lines (each line consists of $2N$ springs). The consistency condition requires that the quadratic part of $\varphi_1(x)$ {\color{black}on each of the $N$ horizontal lines} sum up to the same amount. The other two directions follow the same rule.
	\end{remark}
	
	So far, the consistency condition seems weaker than the necessary condition in \eqref{eqn:necessary_condition}. However, the consistency condition is indeed equivalent to the necessary condition on the standard Kagome lattice for any periodicity. This is true because we can determine $\xi_2$ by adding up the quadratic part of $\varphi_1(x)$ in the three lattice directions, when a GH mode $\varphi_1(x)$ satisfies the consistency condition. Let us {\color{black}again} take $N=2$ as an example. If a two-periodic GH mode $\varphi_1(x)$ satisfies the consistency condition, then its quadratic part must satisfy \eqref{eqn:red}-\eqref{eqn:orange}. By summing up the two equations \eqref{eqn:red}-\eqref{eqn:orange}, we get
	\begin{align}
		\sum_{(x_i-x_j) \parallel \; (1,0)}\big|\varphi_1(x_i) - \varphi_1(x_j)\big|^2 &= \sum_{x_i, x_j \text{ on the solid line}}\big|\varphi_1(x_i) - \varphi_1(x_j)\big|^2 + \sum_{x_i, x_j \text{ on the dotted line}}\big|\varphi_1(x_i) - \varphi_1(x_j)\big|^2 \nonumber \\
		&= -2*8\begin{pmatrix} 1 & 0 \end{pmatrix} \xi_2 \begin{pmatrix} 1\\ 0 \end{pmatrix}, \label{eqn:horizontal}
	\end{align}
	where 8 is the number of springs in the horizontal direction in the two-periodic unit cell. Similarly for the 60 and 120 degree directions, we get
	\begin{align}
		& \sum_{(x_i-x_j) \parallel \; \left(\frac{1}{2},\frac{\sqrt{3}}{2}\right)}\big|\varphi_1(x_i) - \varphi_1(x_j)\big|^2 = -2 *8\begin{pmatrix} \frac{1}{2} & \frac{\sqrt{3}}{2} \end{pmatrix} \xi_2 \begin{pmatrix} \frac{1}{2} \\ \frac{\sqrt{3}}{2} \end{pmatrix}, \label{eqn:60}\\
		& \sum_{(x_i-x_j) \parallel \; \left(-\frac{1}{2},\frac{\sqrt{3}}{2}\right)} \big|\varphi_1(x_i) - \varphi_1(x_j)\big|^2 = -2 *8 \begin{pmatrix} -\frac{1}{2} & \frac{\sqrt{3}}{2} \end{pmatrix} \xi_2 \begin{pmatrix} -\frac{1}{2} \\ \frac{\sqrt{3}}{2} \end{pmatrix}. \label{eqn:120}
	\end{align}
	Since $\xi_2$ is a symmetric matrix with three degrees of freedom, it is fully determined by the averaged value of the quadratic part of $\varphi_1(x)$ in the three lattice directions. For $N\neq 2$, equations \eqref{eqn:horizontal}-\eqref{eqn:120} still hold but the number 8 is replaced by $2N^2$ since there are $2N^2$ springs in each lattice direction. Thus, we have shown that the consistency condition is indeed equivalent to the necessary condition in section \ref{subsec:necessary-condition} for the standard Kagome lattice for any periodicity.
	
	\subsection{A complete understanding of the two-periodic GH modes}\label{two-periodic-GH-modes}
	We apply the consistency condition to a linear combination of the four explicit two-periodic GH modes $\varphi_1^1(x), \varphi_1^2(x), \varphi_2^2(x), \varphi_2^3(x)$. We already know that $\varphi_1^1$ comes from a one-periodic mechanism and any linear combination of $\varphi_1^2, \varphi_2^2, \varphi_3^2$ comes from a two-periodic mechanism. So the remaining question is whether any other linear combinations of these GH modes ever satisfy the consistency condition. The answer is no.
	
	To see why the consistency condition fails, it is equivalent to check when a sum of two GH modes that both satisfy the consistency condition still satisfies the consistency condition.  Let us write any two-periodic GH mode as $\psi= \psi^1 + \psi^2$, where $\psi^1 \in \text{span}\{\varphi^1_1\}$ and $\psi^2 \in \text{span}\{\varphi^2_1, \varphi^2_2, \varphi^2_3\}$. The two GH modes $\psi^1, \psi^2$ both satisfy the consistency condition. If the sum $\psi$ satisfies the consistency condition, then its quadratic part should sum up to the same amount in every lattice directions. For example, in the horizontal direction, we get
	\begin{align*}
		& \sum_{x_i, x_j \text{ on the solid line}} \big|\psi(x_i) - \psi(x_j)\big|^2 = \sum_{x_i, x_j \text{ on the dotted line}} \big|\psi(x_i) - \psi(x_j)\big|^2,
	\end{align*}
	Expanding it in terms of $\psi^1, \psi^2$, we get
	\begin{align*}
		& \sum_{x_i, x_j \text{ on the solid line}} \big|\psi^1(x_i) - \psi^1(x_j)\big|^2 + \big|\psi^2(x_i) - \psi^2(x_j)\big|^2 + 2\big\langle \psi^1(x_i) - \psi^1(x_j), \psi^2(x_i) - \psi^2(x_j) \big\rangle\\
		= & \sum_{x_i, x_j \text{ on the dotted line}} \big|\psi^1(x_i) - \psi^1(x_j)\big|^2 + \big|\psi^2(x_i) - \psi^2(x_j)\big|^2 + 2\big\langle \psi^1(x_i) - \psi^1(x_j), \psi^2(x_i) - \psi^2(x_j) \big\rangle.
	\end{align*}
	Since the two GH modes $\psi^1$ and $\psi^2$ satisfy the consistency condition, the first two terms are already matched. This leads to a constraint on the cross term
	\begin{align}
		& \sum_{x_i, x_j \text{ on the solid line}} \big\langle \psi^1(x_i) - \psi^1(x_j), \psi^2(x_i) - \psi^2(x_j) \big\rangle
		= \sum_{x_i, x_j \text{ on the dotted line}} \big\langle \psi^1(x_i) - \psi^1(x_j), \psi^2(x_i) - \psi^2(x_j) \big\rangle. \label{eqn:two-periodic-horizontal}
	\end{align}
	Similarly, we get two more constraints for the cross terms on the 60 and 120 degree directions. We observe that these constraints are bilinear in $\psi_1$ and $\psi_2$ (that is, linear in either $\psi_1$ or $\psi_2$ if the other is held fixed).
	
	We would like to write down the consistency condition as a linear system for the coefficients of a GH mode in terms of our explicit basis. WLOG, we can choose $\psi = \varphi^1_1 + a_1 \varphi^2_1 + a_2 \varphi^2_2 + a_3 \varphi^2_3$. By plugging $\psi^1 = \varphi^1_1$ and $\psi^2 = a_1 \varphi^2_1 + a_2 \varphi^2_2 + a_3 \varphi^2_3$ into \eqref{eqn:two-periodic-horizontal}, the constraint becomes a linear system in terms of $a_1, a_2, a_3$, and it is
	\begin{align*}
		0 = \: & a_1 \bigg[\sum_{x_i, x_j \text{ on the solid line}} \big\langle \varphi^1_1(x_i) - \varphi^1_1(x_j), \varphi^2_1(x_i) - \varphi^2_1(x_j) \big\rangle - \sum_{x_i, x_j \text{ on the dotted line}} \big\langle \varphi^1_1(x_i) - \varphi^1_1(x_j), \varphi^2_1(x_i) - \varphi^2_1(x_j) \big\rangle \bigg]\\
		+ \: & a_2 \bigg[\sum_{x_i, x_j \text{ on the solid line}} \big\langle \varphi^1_1(x_i) - \varphi^1_1(x_j), \varphi^2_2(x_i) - \varphi^2_2(x_j) \big\rangle - \sum_{x_i, x_j \text{ on the dotted line}} \big\langle \varphi^1_1(x_i) - \varphi^1_1(x_j), \varphi^2_2(x_i) - \varphi^2_2(x_j) \big\rangle \bigg]\\
		+ \: & a_3 \bigg[\sum_{x_i, x_j \text{ on solid line}} \big\langle \varphi^1_1(x_i) - \varphi^1_1(x_j), \varphi^2_3(x_i) - \varphi^2_3(x_j) \big\rangle - \sum_{x_i, x_j \text{ on the dotted line}} \big\langle \varphi^1(x_i) - \varphi^1(x_j), \varphi^2_3(x_i) - \varphi^2_3(x_j) \big\rangle \bigg].
	\end{align*}
	We also get two more linear constraints on $a_1, a_2, a_3$ from the 60 and 120 degree direction. Using the explicit forms of the four GH modes $\varphi_1^1, \varphi_1^2, \varphi_2^2, \varphi_2^2$ (see Appendix \ref{appendix-c}), we {\color{black}find that} the linear system for $a_1, a_2, a_3$ is
	\begin{align*}
		\begin{bmatrix}
			4  &  4    &    0\\
			4    &     0  &  4\\
			0 &  -4  &  -4
		\end{bmatrix} \begin{bmatrix}
			a_1\\
			a_2\\
			a_3
		\end{bmatrix} = \begin{bmatrix}
			0\\
			0\\
			0
		\end{bmatrix}.
	\end{align*} 
	Clearly, the system only has the zero solution. This indicates that among all GH modes in the form of $\psi = \varphi_1^1 + a_1 \varphi_1^2 + a_2 \varphi_2^2 + \varphi_3^2$, only $\varphi_1^1$ satisfies the consistency condition. Thus, a non-trivial linear combination of the one-periodic GH mode and a two-periodic GH mode does not come from a two-periodic mechanism. The set of two-periodic GH modes that come from two-periodic mechanisms is shown in Figure \ref{fig:tangent-cone}. It is the union of a line generated by $\varphi^1_1(x)$ and a three-dimensional subspace generated by $\varphi_1^2, \varphi_2^2,\varphi_3^2$. A GH mode outside this set does not come from a two-periodic mechanism.
	\begin{figure}[!htb]
		\centering
		\includegraphics[width=0.5\linewidth]{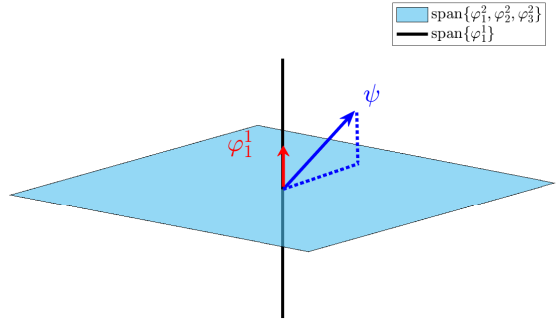}
		\caption{The set of two-periodic GH modes that come from two-periodic mechanisms, also known as the tangent cone, is a union of two subspaces. One subspace is generated by the GH mode $\psi_1^1$ from the one-periodic mechanism $u_{\frac{\pi}{3} \shortto \theta}$ and another one is generated by the three GH modes $\varphi_1^2, \varphi_2^2, \varphi_3^2$ from the two-periodic mechanism $u_{\theta_1, \theta_2, \theta_3}(x)$. Any GH mode outside this tangent cone, for example $\psi$ in the figure, does not come from a two-periodic mechanism.}\label{fig:tangent-cone}
	\end{figure}
	
	\begin{remark}\label{rmk:singularity}
		The preceding discussion was geometric. A different, more abstract explanation why some two-periodic GH modes do not come from mechanisms starts by considering the zero level set of the elastic energy on the two-periodic unit cell. The image of every two-periodic mechanism lies in this zero level set. The infinitesimal version of a two-periodic mechanism is thus a tangent vector of a curve on the level set. The space of two-periodic GH modes is the tangent space. If this zero level set is a smooth manifold, where the implicit function theorem applies, then any vector in the tangent space is a tangent vector of a curve on the level set. However, the zero level set is not a smooth manifold at the two-periodic standard Kagome lattice, since not every two-periodic GH mode comes from a mechanism. This means the two-periodic standard Kagome lattice is a \textit{singular point} (see e.g. Lecture 20 in \cite{harris2013algebraic}). The real set of all two-periodic GH modes that come from mechanisms, defined as the \textit{tangent cone}, is a union of a line and a three-dimensional subspace, which is not a vector space. A GH mode outside the tangent cone does not come from a two-periodic mechanism.
	\end{remark}
	
	\begin{remark}
		{\color{black}In a lattice with a line of springs, a GH mode must take a straight line to a zigzag line. The existence of such GH modes is very intuitive, since nodal displacements normal to the line do not stretch the springs in the linear elastic approximation. But this observation does not help us see which GH modes come from mechanisms.
			
			Do all $k$-periodic GH modes come from $k$-periodic mechanisms? This question is related to asking whether the set of $k$-periodic mechanisms is singular. For the standard Kagome lattice, all one-periodic GH modes come from one-periodic mechanisms; but not all two-periodic GH modes come from two-periodic mechanisms. As discussed in Remark \ref{rmk:singularity}, this shows that the set of two-periodic mechanisms is singular.}
	\end{remark}
	
	\subsection{Applying the consistency condition to Fleck-Hutchinson modes}\label{subsection:fleck-hutchinson}
	Fleck and Hutchinson \cite{hutchinson2006structural} found a special class of GH modes by studying a linear elasticity problem in the unit cell with a Bloch-type boundary condition. We call these special GH modes Fleck-Hutchinson modes. The Fleck-Hutchinson modes provide a basis for the space of $N$-periodic GH modes. It is natural to ask whether the special features of a Fleck-Hutchinson mode assure that it comes from a mechanism. The answer is no: in fact, for the standard Kagome lattice, there are very few examples of Fleck-Hutchinson modes that come from mechanisms. This section justifies the preceding statement, as an application of our consistency condition in section \ref{subsection:consistency-condition}. In particular, we shall show that the $N$-by-one periodic Fleck-Hutchinson modes almost never come from a mechanism (see below for a more complete and precise summary of this section's results). 
	
	Evidently, considering individual Fleck-Hutchinson modes is not an efficient means of finding nonlinear mechanisms. This raises the question what \emph{other} tool might be used to find mechanisms. For the $N$-by-one mechanisms of the standard Kagome lattice, we shall offer an approach based on layering in section \ref{subsec:non-periodic}. 
	
	We start by reviewing some properties of the Fleck-Hutchinson modes (see Appendix \ref{appendix-e-1} for a detailed review).
	\begin{itemize}
		\item For any $N \geq 2$, these $N$-periodic GH modes of the standard Kagome lattice are obtained by considering complex displacements $d(x)$ with vanishing linear strain on the unit cell with a Bloch-type boundary condition (see \eqref{eqn:bloch-appendix})
		\begin{align}
			d(\bm{j} + \bm{x}) &= d(\bm{j}) \exp(2\pi i \bm{x}\cdot \bm{w}), \label{eqn:bloch-main}
		\end{align}
		where $\bm{j}$ are vertices in the unit cell of the standard Kagome lattice, i.e. $\bm{j} = A,B,C$ in Figure \ref{fig:standardkagomespring}(b). The vector $\bm{w}$ is the Bloch wave number. It is in the form of $\bm{w} = w_1 \bm{a}_1 + w_2 \bm{a}_2$, where $\bm{a}_1, \bm{a}_2$ are primitive vectors in the Brillouin zone (see their explicit values around \eqref{eqn:brillouin}).
		
		\item There are three types of $\bm{w}$ that give $N$-periodic GH modes: (1) $w_1 = w_2 = \frac{s}{N}$; (2) $w_1 = 0, w_2 = \frac{s}{N}$; and (3) $w_2 = 0, w_1 = \frac{s}{N}$. In all three cases, the integer $s$ can be chosen in the range $0 \leq s \leq \lfloor \frac{N}{2} \rfloor$. We shall focus here only on the GH modes associated to the first case $w_1 = w_2 = \frac{s}{N}$, since the other two cases are related to these by symmetry (see Remark \ref{rmk:remark-symmetry}).
		
		\item We achieve a one-dimensional family of complex displacements $d(x)$ by solving the relevant linear system with the Bloch-type boundary condition in \eqref{eqn:bloch-main} and with Bloch wave number $w_1 = w_2 = \frac{s}{N}$. Taking the real and imaginary parts of this unique $d(x)$ gives two real-valued GH modes $u_1^{s,N}(x)$ and $u_2^{s,N}(x)$.
		
		\item The special GH modes $u_1^{s,N}(x)$ and $u_2^{s,N}(x)$ are actually $N$-by-1 periodic for any $s$ in the range $0 \leq s \leq \lfloor \frac{N}{2} \rfloor$. As explained in Appendix \ref{appendix-e-1}, they have period 1 in the horizontal direction and period $N$ in the 60 degree direction (see Figure \ref{fig:three-by-one-consistency}(a)-(b) for an example with $N=3$). Moreover, their values can be written down explicitly: there are a total of $3N$ vertices in the $N$-by-one periodic unit cell, and we refer to them as $A_{0,k}, B_{0,k},C_{0,k}$ with $k = 0,1,\dots, N-1$ (see Figure \ref{fig:three-by-one-consistency}(c) for the case $N=3$); the exact values of $u_1^s(x)$ and $u_2^s(x)$ on the $3N$ vertices are then
		\begin{align}
			u_1^{s,N}(A_{0,k}) &= (0,0)^T & u_1^{s,N}(B_{0,k}) &= \cos(\frac{2k\pi}{N})\begin{pmatrix}
				\frac{\sqrt{3}}{2}, -\frac{1}{2}
			\end{pmatrix}^T & u_1^{s,N}(C_{0,k}) &= \cos(\frac{2ks\pi}{N})\begin{pmatrix}
				\frac{\sqrt{3}}{2}, \frac{1}{2}
			\end{pmatrix}^T \label{eqn:u_1}\\
			u_2^{s,N}(A_{0,k}) &= (0,0)^T & u_2^{s,N}(B_{0,k}) &= \sin(\frac{2k\pi}{N})\begin{pmatrix}
				\frac{\sqrt{3}}{2}, -\frac{1}{2}
			\end{pmatrix}^T & u_2^{s,N}(C_{0,k}) &= \sin(\frac{2ks\pi}{N})\begin{pmatrix}
				\frac{\sqrt{3}}{2},\frac{1}{2}
			\end{pmatrix}^T,\label{eqn:u_2}
		\end{align}
		where $0 \leq s \leq \lfloor \frac{N}{2} \rfloor$.
		
		\item By varying $s$, we get $N$ linearly independent Fleck-Hutchinson modes of the form $u_1^{s,N}(x)$ or $u_2^{s,N}(x)$. In fact, these $N$ Fleck-Hutchinson modes provide a basis for the space of $N$-by-one periodic GH modes (see Proposition \ref{app-d:prop-basis}).
		
		\item The preceding arguments apply equally to the other two families of Fleck-Hutchinson modes mentioned in the second bullet. All three families contain the $1$-periodic GH mode, but aside from this they are linearly independent. As a result, the three families taken together provide a basis for the entire $(3N-2)$-dimensional  space of $N$-periodic GH modes (see Remark \ref{app-d-remark}). 
	\end{itemize}
	
	From the last two bullets, we know that there must be some linear combinations of Fleck-Hutchinson modes that come from mechanisms, since there are $N$-periodic mechanisms (see e.g. \cite{kapko2009collapse} and sections \ref{sec:two-periodic-mechanism} and \ref{subsec:non-periodic} of this paper) and their infinitesimal versions are linear combinations of Fleck-Hutchinson modes. But our focus here is different: we want to know whether these special basis elements themselves come from mechanisms. The answer is mostly negative. In fact, we shall show that:
	\begin{enumerate}[(1)]
		\item When $s=0$, $u_1^{0,N}(x)$ is one-periodic and comes from the one-periodic mechanism, as shown in Figure \ref{fig:one-periodic-mechanism}(a); $u_2^{0,N}(x)$ vanishes.
		
		\item When $s \geq 1$ and $N$ is odd, neither $u_1^{s,N}(x)$ nor $u_2^{s,N}(x)$ comes from a mechanism; when $s \geq 1$ and $N$ is even, the same conclusion applies except for $u_1^{\frac{N}{2},N}(x)$, which does come from a mechanism. This special Fleck-Hutchinson mode $u_1^{\frac{N}{2},N}(x)$ is two-by-one periodic, and it comes from the two-by-one periodic mechanism shown in Figure \ref{fig:two-periodic-special}(a).
		
		\item When $s \geq 1$, a linear combination of the two Fleck-Hutchinson modes with the same Bloch-type boundary condition, i.e. $t_1u_1^{s,N}(x) + t_2u_2^{s,N}(x)$ with the same $s$, almost never comes from a mechanism: (1) when $\frac{N}{4} \notin \mathbb{Z}$, a non-zero $t_1u_1^{s,N}(x) + t_2u_2^{s,N}(x)$ never comes from a mechanism; (2) when $\frac{N}{4} \in \mathbb{Z}$, a non-zero $t_1u_1^{s,N}(x) + t_2u_2^{s,N}(x)$ comes from a mechanism if and only if $s = \frac{N}{4}$ and $|t_1| = |t_2|$. Moreover, these special linear combinations are four-by-one periodic, and they come from four-by-one periodic mechanisms.
	\end{enumerate}
	
	The rest of this section is devoted to proving these assertions. Assertion (1) is straightforward: it is easy to check that when $s=0$, $u^{0,N}_1(x)$ comes from the one-periodic mechanism and $u^{0,N}_2(x)$ vanishes. Turning to assertion (2): the proof uses the consistency condition associated with the horizontal lines of the Kagome lattice.
	\begin{figure}[!htb]
		\begin{minipage}[b]{.48\linewidth}
			\centering
			\subfloat[]{\includegraphics[width=0.8\linewidth]{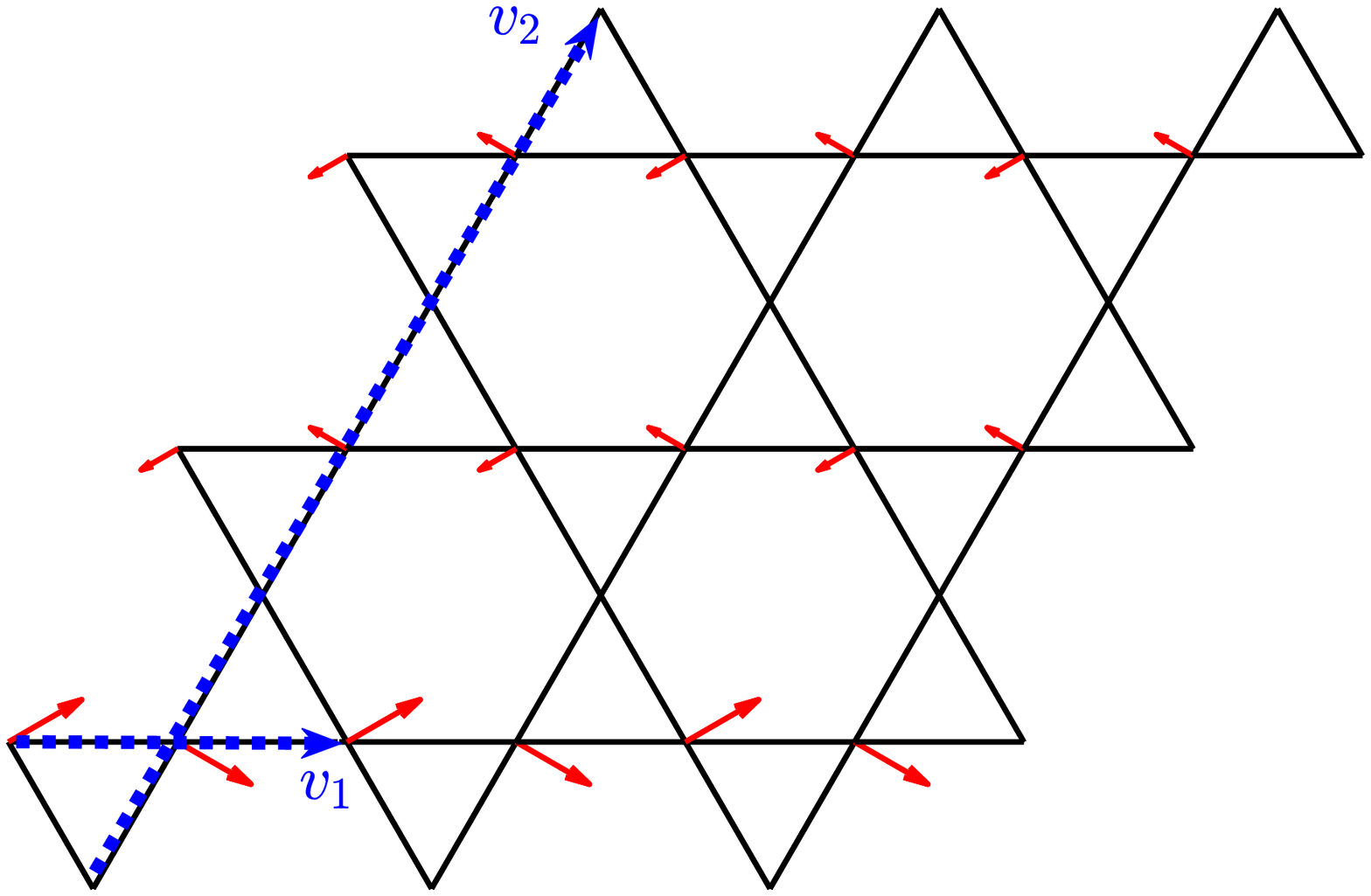}}
		\end{minipage}
		\begin{minipage}[b]{.48\linewidth}
			\centering
			\subfloat[]{\includegraphics[width=0.8\linewidth]{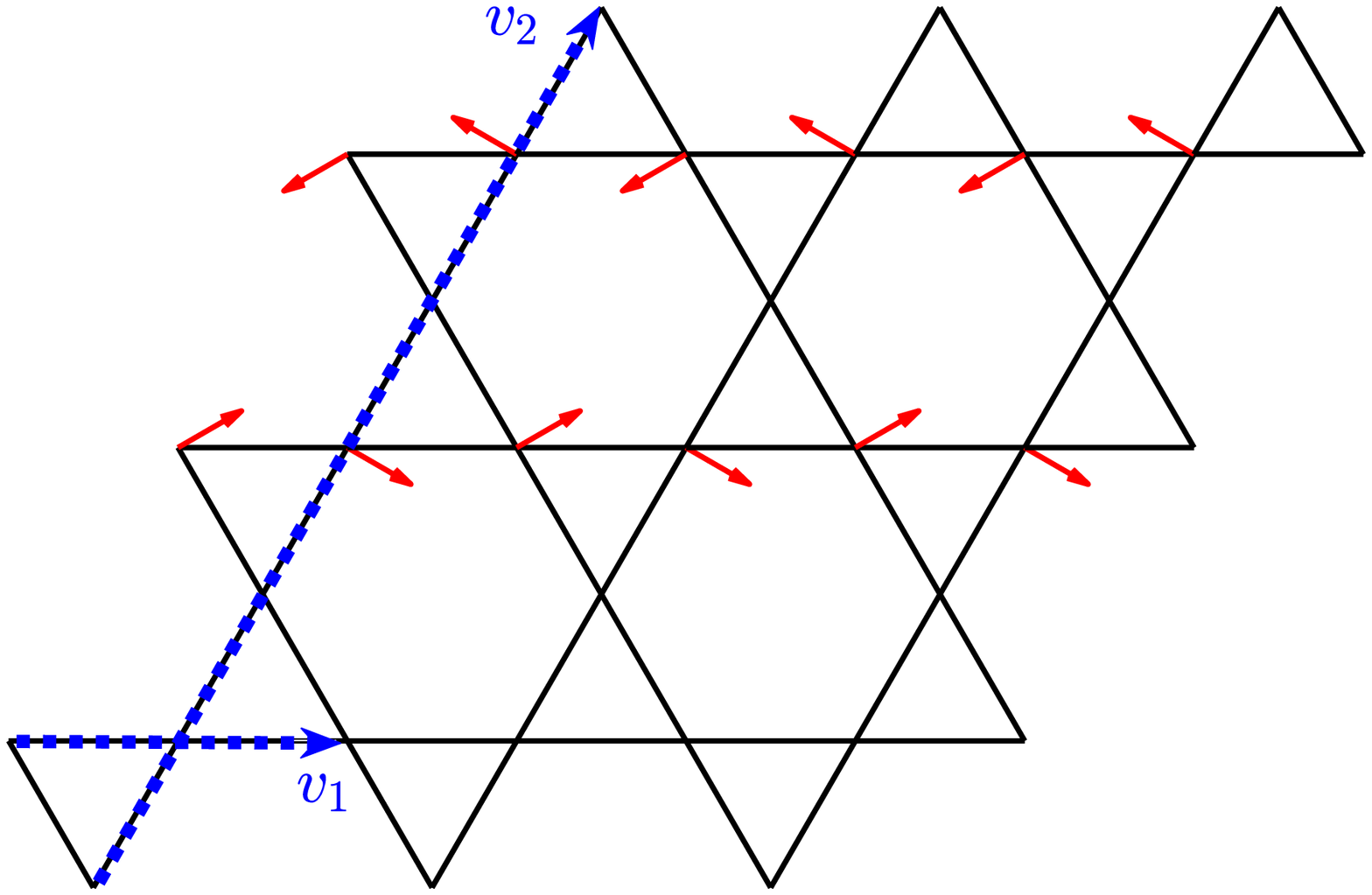}}
		\end{minipage}\par \medskip
		\centering
		\begin{minipage}[b]{.48\linewidth}
			\centering
			\subfloat[]{\includegraphics[width=0.75\linewidth]{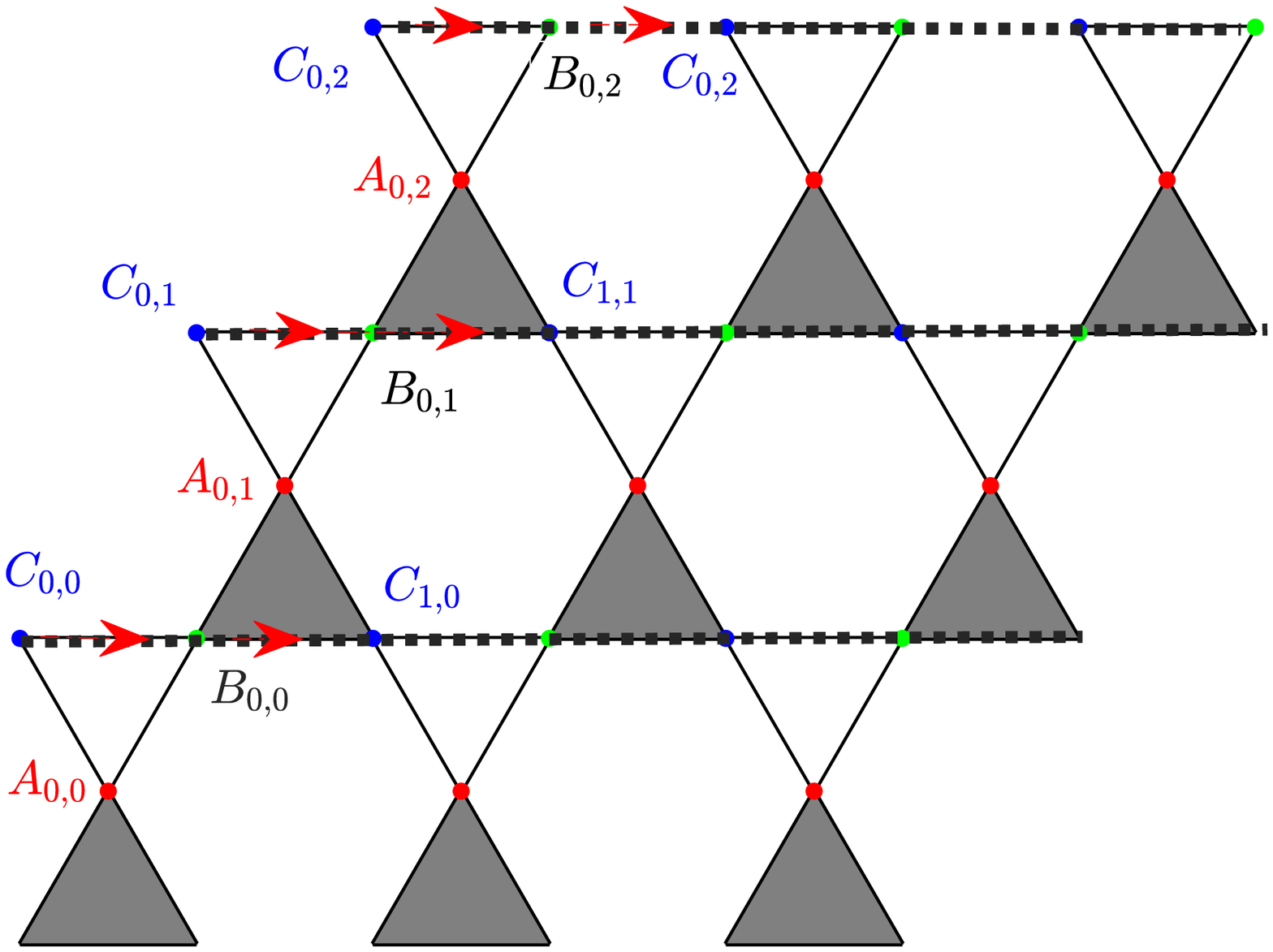}}
		\end{minipage}
		\caption{The three-by-one Fleck-Hutchinson modes and the three-by-one periodic mechanism: (a) the Fleck-Hucthinson mode $u_1^{1,3}(x)$; and (b) the Fleck-Hutchinson mode $u_2^{1,3}(x)$. We can see clearly that the two Fleck-Hutchinson modes are one-periodic in the horizontal direction; (c) the consistency condition checks the sum of quadratic parts over the two springs marked by arrows in each horizontal line.}
		\label{fig:three-by-one-consistency}
	\end{figure}
	To apply the consistency condition to $u_1^{s,N}(x)$ and $u_2^{s,N}(x)$ with fixed $s$, we need to check whether their quadratic parts sum to the same amount on each horizontal line.  Since $u_1^{s,N}(x)$ and $u_2^{s,N}(x)$ are $N$-by-1 periodic, we only need to check the quadratic part on a sum over two springs for each horizontal line. Let us take $u_1^{s,N}(x)$ when $N=3$ as an illustrative example. The consistency condition involves the sum of the quadratic part of $u_1^{s,3}(x)$ over 6 springs in each horizontal line (the 6 springs in each horizontal line are dotted in Figure \ref{fig:three-by-one-consistency}(c)). But due to the periodicity of $u_1^{s,3}(x)$ and $u_2^{s,3}(x)$ in the horizontal direction, the quadratic part over the six springs is a recurrence of the two springs marked by arrows in Figure \ref{fig:three-by-one-consistency}(c).  So the consistency condition for $u_1^{s,3}(x)$ requires us to check whether the following three terms are the same (vertices used below are as marked in Figure \ref{fig:three-by-one-consistency}(c))
	\begin{align*}
		\big|u_1^{s,3}(C_{0,0}) - u_1^{s,3}(B_{0,0})\big|^2 + \big|u_1^{s,3}(B_{0,0}) - u_1^{s,3}(C_{1,0})\big|^2,\\
		\big|u_1^{s,3}(C_{0,1}) - u_1^{s,3}(B_{0,1})\big|^2 + \big|u_1^{s,3}(B_{0,1}) - u_1^{s,3}(C_{1,1})\big|^2,\\
		\big|u_1^{s,3}(C_{0,2}) - u_1^{s,3}(B_{0,2})\big|^2 + |u_1^{s,3}(B_{0,2}) - u_1^{s,3}(C_{1,2})\big|^2.
	\end{align*}
	The two terms in each line are the same because $u_1^{s,3}(x)$ is three-by-one periodic, i.e. $u_1^{s,3}(C_{0,k}) = u_1^{s,3}(C_{1,k})$ for $k=\{0,1,2\}$. Therefore, we need only check whether the following three terms are the same:
	\begin{align*}
		2 \big|u_1^{s,3}(C_{0,0}) - u_1^{s,3}(B_{0,0})\big|^2, \qquad 2\big|u_1^{s,3}(C_{0,1}) - u_1^{s,3}(B_{0,1})\big|^2, \qquad 2\big|u_1^{s,3}(C_{0,2}) - u_1^{s,3}(B_{0,2})\big|^2.
	\end{align*}
	Similarly to the $N=3$ case, for a general $N$ and a fixed $s$, we need only check whether the following $N$ terms are the same
	\begin{align*}
		2\big|u_1^{s,N}(C_{0,0}) - u_1^{s,N}(B_{0,0})\big|^2,\qquad
		2\big|u_1^{s,N}(C_{0,1}) - u_1^{s,N}(B_{0,1})\big|^2,\qquad \dots, \qquad 
		2\big|u_1^{s,N}(C_{0,N-1}) - u_1^{s,N}(B_{0,N-1})\big|^2,
	\end{align*}
	where $k = 0,1,\dots, N-1$. From the explicit form \eqref{eqn:u_1} of $u_1^{s,N}(x)$, we see that 
	\begin{align*}
		\big|u_1^{s,N}(C_{0,k}) - u_1^{s,N}(B_{0,k})\big|^2 &= \cos^2(\frac{2ks\pi}{N}) \big|u_1^{0,N}(C_{0,0}) - u_1(B_{0,0})\big|^2 \neq \big|u_1^{0,N}(C_{0,0}) - u_1(B_{0,0})\big|^2,
	\end{align*}
	unless $\cos^2(\frac{2ks\pi i}{N})  = 1$ for every $k=0,1,\dots,N-1$. This is only true when $s = \frac{N}{2}$ since $1 \leq s \leq \frac{N}{2}$. Thus, the consistency condition in the horizontal direction only holds for $u_1^{s,N}(x)$ if and only if $N$ is even and $s= \frac{N}{2}$. By a similar calculation, the consistency condition in the horizontal direction for $u_2^{s,N}(x)$ checks whether the following $N$ terms are the same
	\begin{align}
		\big|u_2^{s,N}(C_{0,k}) - u_2^{s,N}(B_{0,k})\big|^2 &= \sin^2(\frac{2ks\pi}{N}) \big|u_1^{s,N}(C_{0,0}) - u_1^{s,N}(B_{0,0})\big|^2, \label{eqn:consistency-N-2}
	\end{align}
	for every $k = 0,1,\dots, N-1$. However, the consistency condition never holds for non-trivial $u_2^{s,N}(x)$. To see why, we observe that the term in \eqref{eqn:consistency-N-2} vanishes when $k=0$, but it does not vanish when $k = 1$ unless $s = \frac{N}{2}$. However, when $N$ is even, the special mode $u_2^{\frac{N}{2}, N}(x) = 0$. Thus, among all $u_1^{s,N}(x), u_2^{s,N}(x)$, only the special Fleck-Hutchinson mode $u_1^{\frac{N}{2},N}(x)$ satisfies the consistency condition with even $N$. Moreover, the special mode $u_1^{\frac{N}{2},N}(x)$ is two-by-one periodic (for any even $N$) and it comes from the two-by-one periodic mechanism, as shown in Figure \ref{fig:two-periodic-special}(a) (see the end of Appendix \ref{appendix-c} for a detailed discussion).
	
	We turn now to our assertion (3), which addresses whether a linear combination of the two non-zero Fleck-Hutchinson modes with the same Bloch-type boundary condition, i.e. $u_1^{s,N}(x)$ and $u_2^{s,N}(x)$ with the same $s$ and $N$, comes from a mechanism. The answer is yes when $\frac{N}{4} \in \mathbb{Z}$, but no for all other cases. We shall show that the consistency condition is satisfied in the case $\frac{N}{4} \in \mathbb{Z}$ \textit{only} at two special linear combinations $u_1^{s,N}(x) \pm u_2^{s,N}(x)$ (or a scaled version of them). Let us consider $u(x) = t_1u_1^{s,N}(x) + t_2 u_2^{s,N}(x)$ with the same $s,N$ and constrain $t_1t_2 \neq 0$ to eliminate the case where one of $t_1,t_2$ vanishes. The consistency condition requires that the quadratic part of $t_1 u_1^{s,N}(x) + t_2 u_2^{s,N}(x)$ sums to the same amount on the two springs in every horizontal line, i.e. 
	\begin{align}
		e_{k}^{s,N} = & 2\big|t_1 \big(u_1^{s,N}(C_{0,k}) - u_1^{s,N}(B_{0,k})\big) + t_2 \big(u_2^{s,N}(C_{0,k}) - u_2^{s,N}(B_{0,k})\big)\big|^2 \label{eqn:quadratic_ek}
	\end{align}
	must be the same for $k = 0,1,\dots,N-1$. We observe from \eqref{eqn:u_1}-\eqref{eqn:u_2} that for any $s$ and $N$,
	\begin{align*}
		u_1^{s,N}(B_{0,0}) &= \begin{pmatrix}
			\frac{\sqrt{3}}{2}, -\frac{1}{2}
		\end{pmatrix}^T & u_1^{s,N}(C_{0,0}) &= \begin{pmatrix}
			\frac{\sqrt{3}}{2}, \frac{1}{2}
		\end{pmatrix}^T\\
		u_1^{s,N}(B_{0,k}) &= \cos(\frac{2ks\pi}{N}) u_1^{s,N}(B_{0,0}) & u_1^{s,N}(C_{0,k}) &= \cos(\frac{2ks\pi}{N}) u_1^{s,N}(C_{0,0})\\
		u_2^{s,N}(B_{0,k}) &= \sin(\frac{2ks\pi}{N})u_1^{s,N}(B_{0,0}) & u_2^{s,N}(B_{0,k}) &= \sin(\frac{2ks\pi}{N})u_1^{s,N}(C_{0,0}).
	\end{align*}
	Using this relationship, the quadratic part $e_{k}^{s,N}$ in \eqref{eqn:quadratic_ek} becomes
	\begin{align*}
		e_k^{s,N} &= 2\bigg(t_1 \cos(\frac{2ks\pi}{N})+ t_2 \sin(\frac{2ks\pi}{N})\bigg)^2 \big|u_1^{s,N}(C_{0,0}) - u_1^{s,N}(B_{0,0})\big|^2.
	\end{align*}
	Therefore, the consistency condition requires that $\bigg(t_1 \cos(\frac{2ks\pi}{N})+ t_2 \sin(\frac{2ks\pi}{N})\bigg)^2$ be the same for all $k=0,1,\dots,N-1$. We shall show that this can hold only when $s = \frac{N}{4}$ or $s = \frac{N}{2}$. To see why, we first take $k=1$ and $k = N-1$, and observe that the equality $e_1^{s,N} = e_{N-1}^{s,N}$ gives 
	\begin{align*}
		\Big(t_1 \cos(\frac{2s\pi}{N})+ t_2 \sin(\frac{2s\pi}{N})\Big)^2 &= \Big(t_1 \cos(\frac{2s(N-1)\pi}{N})+ t_2 \sin(\frac{2s(N-1)\pi}{N})\Big)^2\\
		&= \Big(t_1 \cos(\frac{2s\pi}{N})- t_2 \sin(\frac{2s\pi}{N})\Big)^2.
	\end{align*}
	We conclude that $4 t_1 t_2 \cos(\frac{2s\pi}{N}) \sin(\frac{2s\pi}{N}) = 2t_1 t_2 \sin(\frac{4s\pi}{N}) = 0$. Due to our constraint $t_1 t_2 \neq 0$, we must have $\sin(\frac{4s\pi}{N}) = 0$, i.e. $\frac{4s}{N} \in \mathbb{Z}$. Since $1 \leq s \leq \frac{N}{2}$, the consistency condition is only satisfied when $s = \frac{N}{4}$ or $s = \frac{N}{2}$. When $s = \frac{N}{2}$, this becomes the special case we discussed earlier where $u_1^{\frac{N}{2}, N}(x)$ is two-periodic and $u_2^{\frac{N}{2}, N}(x)$ vanishes. When $s = \frac{N}{4}$, the consistency condition requires $\bigg(t_1 \cos(\frac{k\pi}{2})+ t_2 \sin(\frac{k\pi}{2})\bigg)^2$ to be the same for all $k=0,1,\dots, N-1$. This condition is satisfied if and only if $t_1^2 = t_2^2$. As a summary, for the two non-zero Fleck-Hutchinson modes $u_1^{\frac{N}{2}, N}(x), u_2^{\frac{N}{2}, N}(x)$ with the same $s,N$, a linear combination $t_1 u_1^{s,N}(x) + t_2 u_2^{s,N}(x)$ satisfies the consistency condition if and only if $s = \frac{N}{4} \in \mathbb{Z}$ and $t_1 = \pm t_2$. Moreover, the two special linear combinations $u_1^{s,N}(x) \pm u_2^{s,N}(x)$ when $s = \frac{N}{4}$ come from four-by-one periodic mechanisms (see Appendix \ref{appendix-f-2}).
	
	\section{$N$-by-one periodic and non-periodic mechanisms}\label{subsec:non-periodic}
	In section \ref{subsection:fleck-hutchinson}, we have seen that while the Fleck-Hutchinson modes $u_1^{s,N}(x)$ and $u_2^{s,N}(x)$ are $N$-by-one periodic, they are usually not associated with $N$-by-one periodic mechanisms. Thus, the use of Fleck-Hutchinson modes is not an efficient means of finding nonlinear mechanisms. This section offers an entirely different approach to understanding the $N$-by-one periodic mechanisms of the standard Kagome lattice. Our approach, which is based on layering, is relatively simple; moreover, besides providing a classification of all $N$-by-one periodic mechanisms, it also provides examples of non-periodic mechanisms.
	
	We start by considering the four-by-one periodic mechanism shown in Figure \ref{fig:non-periodic-four}(a) (see Appendix \ref{appendix-f-2} for its details), which provides the building blocks we shall use to construct $N$-by-one periodic mechanisms. There are four layers in this four-by-one periodic mechanisms, namely $G_1, G_2, W_1, W_2$. The two shaded layers $G_1, G_2$ are one-periodic layers achieving the same compression ratio $\frac{1}{2} \leq c \leq 1$ (we choose $\frac{1}{2}$ as the lower bound to avoid triangles intersecting each other). Note that they rotate the triangles in their unit cells in opposite directions. The two unshaded layers $W_1, W_2$ come from the two-by-one periodic mechanism in Figure \ref{fig:two-periodic-special}(a), and they must achieve the same compression ratio $c$ to fit the one-periodic layers. We observe from the four-by-one periodic mechanism that the four layers fit each other perfectly in a corresponding relationship: a $W_1$ layer fits above a $G_1$ layer, a $G_2$ layer fits above a $W_1$ layer, etc (see Figure \ref{fig:non-periodic-four}(a)). We also know from the one-periodic and the two-by-one periodic mechanism that a $G_1$ layer fits above a $G_1$ layer (same for $G_2$) and a $W_1$ layer fits above a $W_2$ layer (same for $W_2$). Therefore, we summarize the layering relationship for the four basic layers $G_1, G_2, W_1, W_2$ as
	\begin{align}
		G_1 &\begin{casesnew}[-stealth,black]
			G_1\\
			W_1
		\end{casesnew}, & G_2 & \begin{casesnew}[-stealth,black]
			G_2\\
			W_2
		\end{casesnew}, & W_1 & \begin{casesnew}[-stealth,black]
			G_2\\
			W_2
		\end{casesnew}, & W_2 & \begin{casesnew}[-stealth,black]
			G_1\\
			W_1
		\end{casesnew}, \label{eqn:layering-relationship}
	\end{align}
	where the arrow in $W_1 \rightarrow W_2$ means $W_2$ fits above $W_1$.
	\begin{figure}[!htb]
		\begin{minipage}[b]{.33\linewidth}
			\centering
			\subfloat[]{\includegraphics[width=0.85\linewidth]{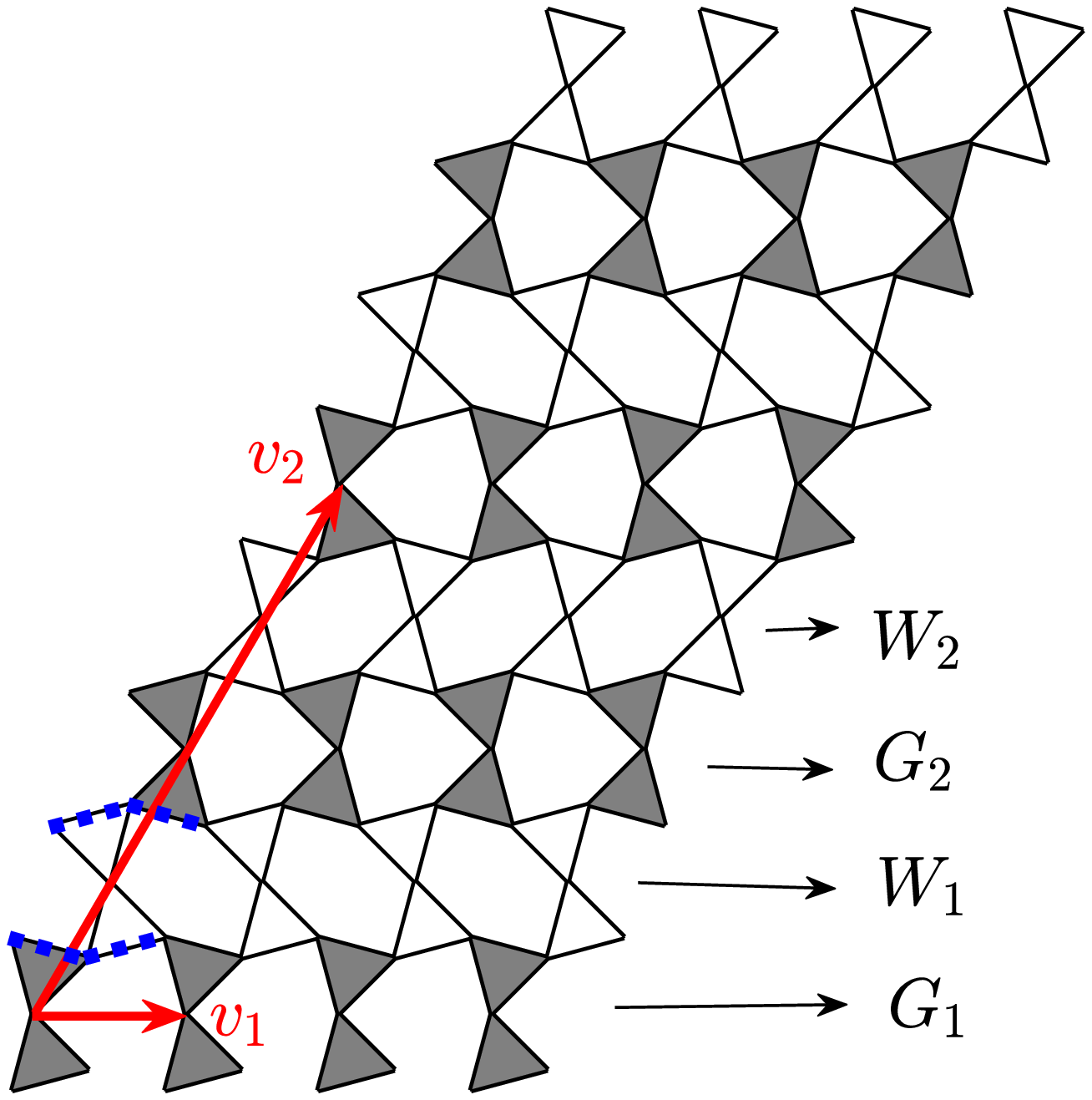}}
		\end{minipage}
		\begin{minipage}[b]{.33\linewidth}
			\centering
			\subfloat[]{\includegraphics[width=0.85\linewidth]{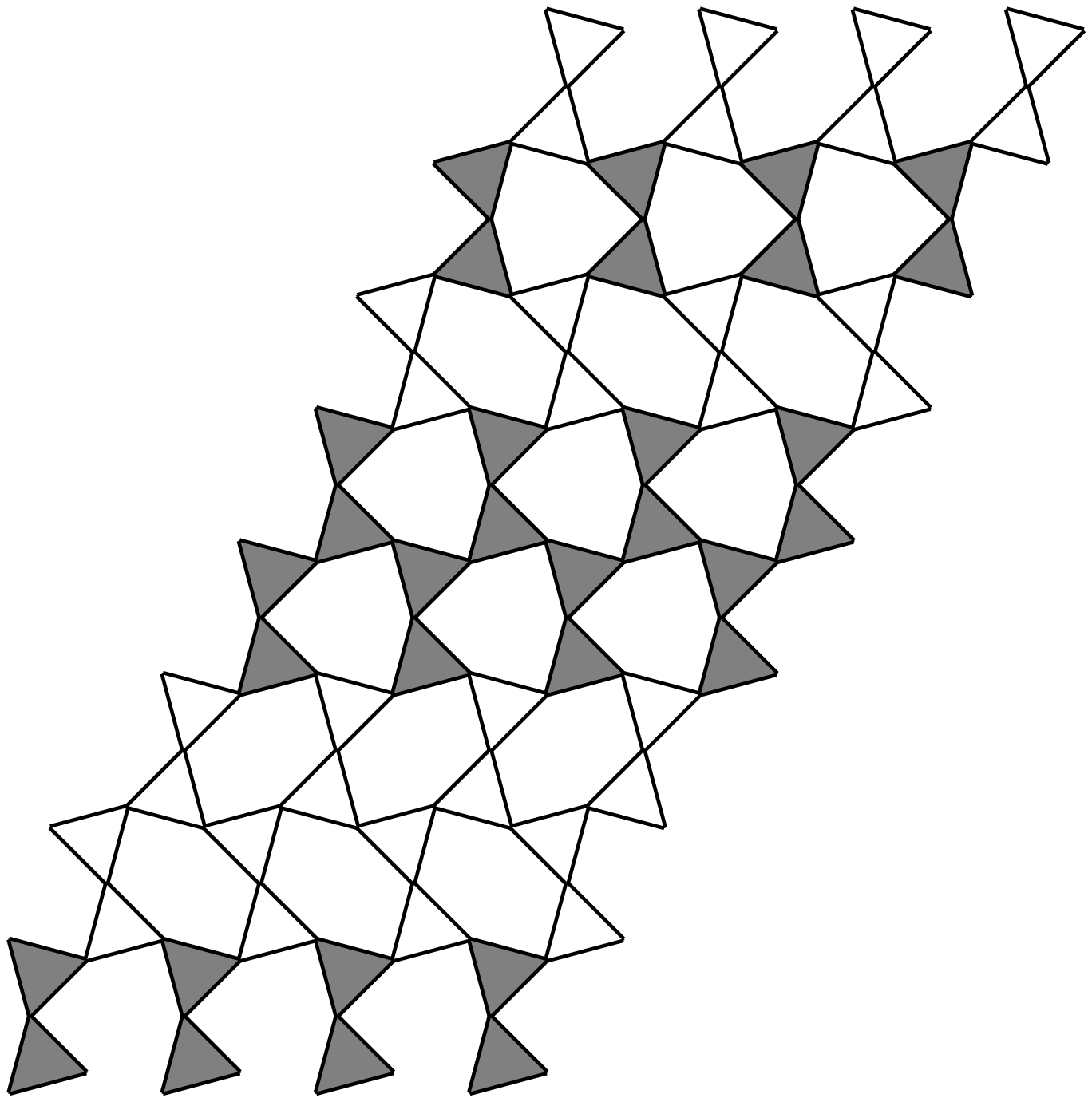}}
		\end{minipage}
		\begin{minipage}[b]{0.33\linewidth}
			\centering
			\subfloat[]{\includegraphics[width=0.85\linewidth]{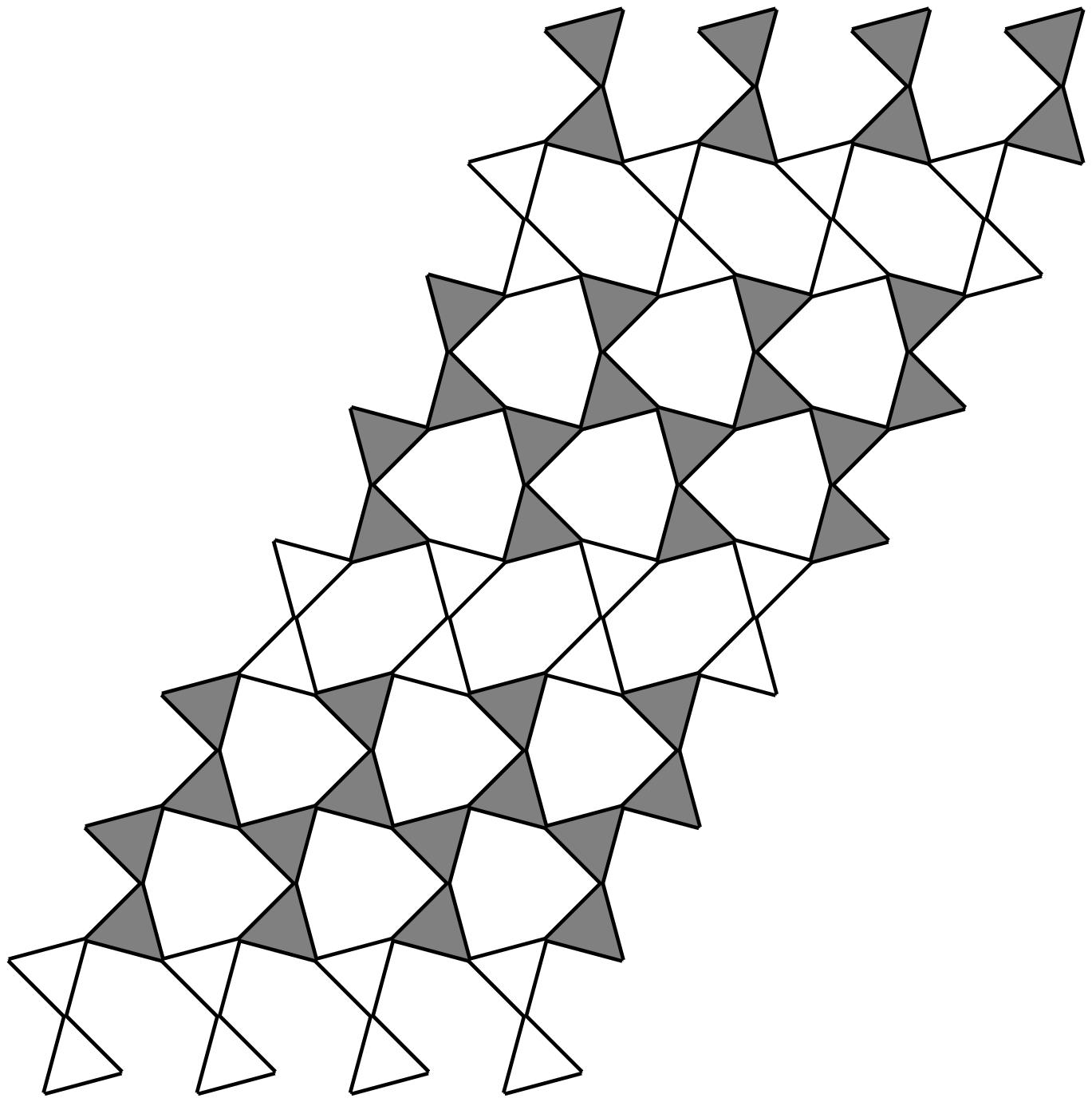}}
		\end{minipage}
		\caption{A layering scheme and all mechanisms achieve a macroscopic compression ratio $c = \cos(\frac{\pi}{12})$: (a) the four-by-one periodic mechanism; (b) a mechanism achieved by the sequence $\{G_1, W_1,W_2,G_1,G_1,W_1,G_2,W_2,\dots\}$; (c) another mechanism achieved by the sequence $\{W_1,G_2,G_2,W_2,G_1,G_1,W_1,G_2,\dots\}$.}
		\label{fig:non-periodic-four}
	\end{figure}
	
	A geometric explanation of the layering relationship \eqref{eqn:layering-relationship} is that the zigzag lines in the horizontal direction can deform only in two ways (marked as dotted) in Figure \ref{fig:non-periodic-four}(a). Each zigzag line must have a symmetric wedge pointing either upwards or downwards. Take the zigzag line formed by $G_1$ and $W_1$ layers as an example. The dotted line as an edge of the gray triangle has a negative slope. Hence, the layer above it must provide a deformed line with a positive slope. This is why $G_1$ and $W_1$ layers fit above a $G_1$ layer. The other layering relationship holds for the same reason.
	
	Using this relationship between layers, we can construct many $N$-by-one periodic mechanisms for any $N$, and also many non-periodic mechanisms. In fact, we can choose a sequence $\{a_n\}_{n \in \mathbb
		N}$ with $a_n \in \{G_1, G_2, W_1, W_2\}$. This sequence must satisfy the layering relationship in \eqref{eqn:layering-relationship} with $\color{black}a_{n} \rightarrow a_{n+1}$ for every $n \in \mathbb{N}$. We can construct a mechanism $u_{a_n}(x)$ based on the sequence $\{a_n\}_{n \in \mathbb{N}}$ by stacking the corresponding $a_n$ in the $n$th layer (see Figure \ref{fig:non-periodic-four}(b)-(c) for examples). If $\{a_n\}_{n \in \mathbb{N}}$ is $N$-periodic, then the corresponding mechanism $u_{a_n}(x)$ is $N$-by-one periodic. If the sequence $\{a_n\}_{n \in \mathbb{N}}$ is not periodic, then the mechanism $u_{a_n}(x)$ is non-periodic.
	
	Our argument actually finds \textit{all} $N$-by-one periodic mechanisms with period 1 in the horizontal direction. To explain why, we observe that for any $N$-by-one periodic mechanism with period 1 in the horizontal direction, we can separate it into a sequence of layers as we move along the 60 degree direction. Due to being period 1 in the horizontal direction, these mechanisms must deform the horizontal lines into zigzags with symmetric wedges pointing upwards or downwards, as shown in Figure \ref{fig:non-periodic-four}(a). It is easy to check that such zigzag lines can only be achieved by the one-periodic mechanisms and two-by-one periodic mechanisms that are our building blocks. Therefore, any $N$-by-one periodic mechanism with period 1 in the horizontal direction arises from our layering procedure.
	
	\section{A special case: some Maxwell lattices must have mechanisms} \label{sec:open-questions}
	{\color{black}We have used the GH modes of the standard Kagome lattice with different periodicities as examples to show that not every GH mode in a Maxwell lattice comes from a periodic mechanism. We also note that not every Maxwell lattice has a mechanism: Borcea and Streinu \cite{borcea2010periodic} found a 2D Maxwell lattice with overlapping springs that has no mechanisms. These observations make us wonder whether there is a sufficient condition on a Maxwell lattice such that every GH mode of it must come from a mechanism.}
	
	{\color{black}This section offers a result of this type, for a rather special class of Maxwell lattices. Briefly, we show that a non-degenerate 2D Maxwell lattice must have a mechanism if its space of GH modes is one-dimensional (the one-periodic standard Kagome lattice satisfies this assumption).  In fact, we shall prove the following Proposition.}
	\begin{proposition}
		For a non-degenerate 2D Maxwell lattice, if $\varphi_1(x)$ is the only GH mode (up to scalar multiplication), then we can find a mechanism parameterized by $t$ in the form
		\begin{align}
			u(x,t) = x + t \varphi_1(x) + t^2\big[\xi_2 x + \varphi_2(x)\big] + \dots, \label{eqn:sec-6-ift-ansatz}
		\end{align}
		where $\xi_2, \varphi_2(x)$ satisfies the necessary condition \eqref{eqn:necessary_condition_original}.
	\end{proposition}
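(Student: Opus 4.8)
The plan is to realize the desired mechanism as a curve on the zero level set of the spring-length constraint, obtained through the implicit function theorem, rather than by directly summing the formal series \eqref{eqn:sec-6-ift-ansatz}. Write a candidate deformation as $u(x) = \xi x + \varphi(x)$ with $\xi \in \mathbb{R}^{2\times 2}_{\text{sym}}$ and $\varphi$ periodic, and let $G$ be the map sending such a $u$ to the vector of squared length changes $G(u)_{ij} = |u(x_i) - u(x_j)|^2 - |x_i - x_j|^2$ over the springs of one unit cell. A mechanism is exactly a curve in $G^{-1}(0)$ through the identity. The difficulty is that the linearization $DG$ at the identity is far from injective: its kernel contains the two translations and the GH mode $\varphi_1$, so the naive implicit function theorem cannot be used to solve for a unique curve in a prescribed direction. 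The remedy is to show instead that $DG$ at the identity is \emph{surjective}; then $G^{-1}(0)$ is a smooth manifold near the identity whose tangent space is precisely $\ker DG$, and the GH mode, being a tangent vector, is automatically the velocity of an honest curve in $G^{-1}(0)$.

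First I would compute $DG$ at the identity. Differentiating gives $DG(\mathrm{id})[\xi,\varphi]_{ij} = 2\langle x_i - x_j,\, \xi(x_i - x_j) + \varphi(x_i)-\varphi(x_j)\rangle$, which up to the invertible diagonal scaling by the lengths $l_{ij}$ is the extended compatibility operator $\widetilde{C}(\xi,\varphi)_{ij} = l_{ij}\hat{b}_{ij}^T \xi \hat{b}_{ij} + \langle \hat{b}_{ij}, \varphi(x_i)-\varphi(x_j)\rangle$, i.e. the first-order spring extension of $\xi x + \varphi(x)$ from \eqref{eqn:first_order_bond_effective}. So $DG(\mathrm{id})$ is surjective if and only if $\widetilde{C}$ is. A short computation identifies the cokernel: $\widetilde{C}^{\,T}\bm{s} = 0$ means both $C^T \bm{s} = 0$ (so $\bm{s}$ is a self-stress) and $\sum_{ij} s_{ij} l_{ij}\, \hat{b}_{ij}\otimes \hat{b}_{ij} = 0$ (so the macroscopic stress of $\bm{s}$ vanishes). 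Thus $\ker \widetilde{C}^{\,T}$ is the space of self-stresses that carry no macroscopic stress.

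This is where the two hypotheses enter, and I regard it as the heart of the argument. By \eqref{eqn:macroscopic_stress_main} and the counting argument in the proof of Theorem \ref{thm:GH}, non-degeneracy of $A_{\text{eff}}$ makes the macroscopic-stress map $\bm{s}\mapsto \sum_{ij} s_{ij} l_{ij}\,\hat{b}_{ij}\otimes\hat{b}_{ij}$ from $\ker(C^T)$ onto $\mathbb{R}^{2\times 2}_{\text{sym}}$ surjective. On the other hand, the assumption that the GH space is one-dimensional forces $\dim\ker(C) = 3$, and since a Maxwell lattice has square $C$ we get $\dim\ker(C^T) = 3$. A surjection between three-dimensional spaces is an isomorphism, so its kernel is trivial: the only self-stress with vanishing macroscopic stress is $\bm{s} = 0$. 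Hence $\ker\widetilde{C}^{\,T} = \{0\}$ and $\widetilde{C}$, equivalently $DG(\mathrm{id})$, is surjective. The main obstacle is precisely arranging this surjectivity, since it requires using \emph{both} hypotheses at once — non-degeneracy to get surjectivity of the stress map, and one-dimensionality to supply the dimension count that upgrades it to an isomorphism.

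With surjectivity in hand, the regular value theorem shows $G^{-1}(0)$ is a smooth, indeed real-analytic (since $G$ is polynomial), three-dimensional manifold near the identity, with tangent space $\ker DG(\mathrm{id}) = \mathrm{span}\{\text{the two translations},\, \varphi_1\}$. Choosing a curve $u(\cdot, t)$ in this manifold with $u(\cdot,0) = \mathrm{id}$ and $\dot{u}(\cdot,0) = \varphi_1$ produces a smooth family of length-preserving periodic deformations; since $\varphi_1$ is a GH mode and not a translation, the family is a genuine (non-rigid) periodic mechanism, and restricting the affine part to symmetric matrices automatically removes overall rotation. Its Taylor expansion then has exactly the form \eqref{eqn:sec-6-ift-ansatz} with $\xi_1 = \dot{F}(0) = 0$, and the order-$t^2$ terms of the length constraint are precisely \eqref{eqn:necessary_condition_original}, so the resulting $\xi_2, \varphi_2(x)$ satisfy the necessary condition automatically. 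This route also sidesteps any convergence question: the analytic implicit function theorem delivers a genuine curve, so one never has to prove by hand that the formal series \eqref{eqn:sec-6-ift-ansatz} converges.
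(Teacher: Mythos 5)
Your proof is correct, and it reaches the result by a genuinely different route than the paper. The paper works in two explicit steps: it first determines the second-order data uniquely from $\varphi_1$ --- solving the square system $\bm{s}_i^T L^{-1}\bm{e}_{\varphi_1} + \bm{s}_i^T\bm{b}_{\xi_2} = 0$, $i=1,2,3$, against a basis of self-stresses, where uniqueness is exactly where non-degeneracy enters (a nonzero null direction $\xi_2^0$ would force $E_{\text{eff}}(\xi_2^0)=0$), and then fixing $\varphi_2$ by three orthogonality normalizations --- and only then applies the implicit function theorem, to the rescaled constraint obtained by dividing the length defect by $t^2$, with unknowns $\widetilde{\xi}_2(t), \widetilde{\varphi}_2(\cdot,t)$; invertibility of the relevant Jacobian is precisely the unique solvability just established. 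You never construct $(\xi_2,\varphi_2)$: you show the unscaled constraint map $G$ is a submersion at the reference configuration by identifying its cokernel with the self-stresses carrying zero macroscopic stress, and you annihilate that space using non-degeneracy (surjectivity of the stress map, via \eqref{eqn:macroscopic_stress_main}) together with the count $\dim\ker(C^T)=\dim\ker(C)=3$ coming from the Maxwell and one-GH-mode hypotheses; the regular value theorem then makes $G^{-1}(0)$ a smooth three-dimensional manifold whose tangent space is spanned by the two translations and $\varphi_1$, and any curve tangent to $\varphi_1$ is the desired mechanism, with \eqref{eqn:necessary_condition_original} holding automatically at order $t^2$. The two linear-algebra cores are dual to one another --- the paper proves injectivity of $\xi\mapsto(\bm{s}_i^T\bm{b}_\xi)_{i=1,2,3}$, you prove surjectivity of its adjoint, the macroscopic-stress map on self-stresses, and for linear maps between three-dimensional spaces these are equivalent --- so the same structural facts about the lattice are being used, but the packaging differs meaningfully. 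The paper's version buys explicit, computable formulas for $\xi_2$ and $\varphi_2$ and ties the existence proof directly to the necessary condition of section \ref{sec:necessary-condition}; your version is shorter and more geometric, and it establishes the stronger statement that the zero-energy variety is non-singular at the reference configuration --- exactly the smoothness whose failure at the two-periodic standard Kagome lattice is the subject of Remark \ref{rmk:singularity} --- from which integrability of the GH mode (indeed of every tangent direction) is automatic.
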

	
	\begin{proof}
		The proof proceeds in two steps: (1) We show that $\xi_2, \varphi_2(x)$ are determined \textit{uniquely} by $\varphi_1(x)$; (2) We prove that there really is a $u(x,t)$ with this leading order Taylor expansion using the implicit function theorem.
		
		We first show how $\xi_2, \varphi_2(x)$ can be determined \textit{uniquely} by $\varphi_1(x)$. To determine $\xi_2, \varphi_2(x)$, we know that $u(x,t)$ must keep the lengths of all springs at second order, i.e. \eqref{eqn:necessary_condition_original} and \eqref{eqn:necessary_condition} hold for every spring. Multiplying $L^{-1}$ to both sides of \eqref{eqn:necessary_condition}, we get
		\begin{align}
			L^{-1}\bm{e}_{\bm{\varphi_1}} + L^{-1}\bm{d}_{\xi_2} = C \bm{\varphi_2},
		\end{align}
		where each entry of $L^{-1}\bm{d}_{\xi_2}$ is in the form $l_{ij} \hat{b}_{ij}^T \xi_2 \hat{b}_{ij}$. Notice that the vector form of $l_{ij} \hat{b}_{ij}^T \xi_2 \hat{b}_{ij}$ is in fact $\bm{b}_{\xi_2}$ (see \eqref{linear_algebra_effective_energy}), i.e. $L^{-1}\bm{d}_{\xi_2} = \bm{b}_{\xi_2}$. The symmetric matrix $\xi_2$ has three degrees of freedom and can be found uniquely using self-stresses. We start by choosing three linearly independent self-stresses $\bm{s}_1, \bm{s}_2, \bm{s}_3$, i.e. $C^T\bm{s}_i = 0$ for $i=1,2,3$. We can choose them because $\ker(C) = \ker(C^T) = \dim\{\text{GH modes}\}+2 = 3$. Left multiplying $\bm{s}_i$ to both sides of \eqref{eqn:necessary_condition}, we achieve
		\begin{align}
			\bm{s}_i^T L^{-1}\bm{e}_{\bm{\varphi_1}} + \bm{s}_i^T  \bm{b}_{\xi_2} =  0, \qquad i = 1,2,3. \label{eqn:appendix-g-linear-system}
		\end{align}
		This system {\color{black}for} $\xi_2$ has three equations and three degrees of freedom. We claim that it always has a unique solution based on our non-degenerate assumption. Suppose the system does not have a unique solution, then there must exist a non-trivial $\xi_2^0 \neq 0$ such that $\bm{s}_i^T  \bm{b}_{\xi_2^0} =  0, i = 1,2,3$. This indicates that for any self-stress $\bm{s}$ such that $C^T \bm{s} = 0$, we must have
		\begin{align}
			\bm{s}^T  \bm{b}_{\xi_2^0} =  0. \label{eqn:appendix-g-self-stress}
		\end{align}
		We shall show that $E_{\text{eff}}(\xi_2^0) = \langle A_{\text{eff}}\xi_2^0, \xi_2^0\rangle = 0$. We start by choosing the self-stress $\bm{t}_{\xi_2^0}^* = K (\bm{b}_{\xi_2^0} + C \bm{\varphi}_{\xi_2^0}^*)$ that corresponds to the macroscopic stress $A_{\text{eff}} \xi_2^0$ in \eqref{tension_vector_form}. By plugging this self-stress $\bm{t}_{\xi_2^0}^*$ into \eqref{eqn:appendix-g-self-stress}, we get
		\begin{align}
			0 &= (\bm{t}_{\xi_2^0}^*)^T \bm{b}_{\xi_2^0} =\bm{b}_{\xi_2^0}^T K \bm{b}_{\xi_2^0} + (\varphi_{\xi_2^0}^*)^T C^T K \bm{b}_{\xi_2^0}.
			\label{eqn:appendix-g-vanishing-stress}
		\end{align}
		We have another equality from the optimal condition for $\varphi_{\xi_2^0}^*$ in \eqref{optimal_phi_condition}. Left multiplying $\bm{\varphi}_{\xi_2^0}^*$ to both sides of \eqref{optimal_phi_condition}, we get
		\begin{align}
			0 &=(\bm{\varphi}_{\xi_2^0}^*)^T C^T K C \bm{\varphi}_{\xi_2^0}^* + C^T K \bm{b}_{\xi_2^0}. \label{eqn:optimal-condition}
		\end{align}
		Combining \eqref{eqn:appendix-g-vanishing-stress}- \eqref{eqn:optimal-condition} and using the formula of $E_{\text{eff}}(\xi)$ in \eqref{effective_energy_appendix}-\eqref{linear_algebra_effective_energy}, we get
		\begin{align*}
			E_{\text{eff}}(\xi_2^0) = \langle A_{\text{eff}}\xi_2^0, \xi_2^0\rangle = (\bm{b}_{\xi_2^0} + C \bm{\varphi}_{\xi_2^0}^*)^T K (\bm{b}_{\xi_2^0} + C \bm{\varphi}_{\xi_2^0}^*) = 0.
		\end{align*}
		This violates our assumption that $A_{\text{eff}}$ is non-degenerate. Thus, a unique $\xi_2$ is determined by $\varphi_1(x)$ using \eqref{eqn:appendix-g-linear-system}. The periodic $\varphi_2(x)$ can be found correspondingly by solving \eqref{eqn:appendix-g-linear-system}. Notice that the solution to \eqref{eqn:appendix-g-linear-system} is not unique since the null space of $C$ is 3-dimensional. We can ensure its uniqueness by imposing an extra condition that $\varphi_2(x)$ is orthogonal to the null space of $C$. This extra condition is in fact equivalent to three linear conditions:
		\begin{align*}
			\varphi_2(x) \perp d_1(x), \qquad \varphi_2(x) \perp d_2(x),  \qquad \varphi_2(x) \perp \varphi_1(x),
		\end{align*}
		where $d_1(x), d_2(x)$ are two linearly independent 2-dimensional translations and $\varphi_2(x) \perp d_1(x)$ means the vector forms of $\varphi_2(x), d_1(x)$ are orthogonal. With these three conditions, we can uniquely determine $\varphi_2(x)$ in \eqref{eqn:appendix-g-linear-system}.
		
		Thus far, we have shown how $\xi_2, \varphi_2(x)$ can be uniquely determined by $\varphi_1(x)$, so that the deformation $u(x,t)$ in the form \eqref{eqn:sec-6-ift-ansatz} preserves the lengths of all springs to second order. We now show that there exists $u(x,t)$ preserving the lengths of springs at higher order using the implicit function theorem. In other words, we seek a mechanism 
		\begin{align}
			u(x,t) = x + t \varphi_1(x) + t^2 \Big[\widetilde{\xi}_2(t) x + \widetilde{\varphi}_2(x,t)\Big], \label{eqn:mechanism-ift}
		\end{align}
		where $\widetilde{\xi}_2(0) = \xi_2$ and $\widetilde{\varphi}_2(x,0) = \varphi_2(x)$ are determined uniquely by the given $\varphi_1(x)$. To rigorously show the existence of a mechanism with the form \eqref{eqn:mechanism-ift}, we must show that $\widetilde{\xi}_2(t), \widetilde{\varphi}_2(x,t)$ can be chosen such that
		\begin{align}
			& 0 = \frac{|\widetilde{u}(x_i, t) - \widetilde{u}(x_j, t)|^2 -|x_i - x_j|^2 }{t^2} = |\varphi_1(x_i) - \varphi_1(x_j)\big|^2 + 2\big\langle x_i - x_j, \widetilde{\xi}_2(x_i - x_j)\big\rangle \nonumber\\
			& \qquad \qquad \qquad \qquad \qquad \qquad \qquad \qquad + 2\big\langle x_i - x_j, \widetilde{\varphi}_2(x_i) - \widetilde{\varphi}_2(x_j)\big\rangle + O(t), \label{eqn:appendix-g-constraint}\\
			& \widetilde{\varphi}_2(x) \perp d_1(x), \qquad \widetilde{\varphi}_2(x) \perp d_2(x),  \qquad \widetilde{\varphi}_2(x) \perp \varphi_1(x). \label{eqn:orthonal-condition-varphi-2}
		\end{align}
		For simplicity, we refer to the vector form of $\widetilde{\varphi}_2(x)$ as $\bm{\widetilde{\varphi}_2}$ and the system in \eqref{eqn:appendix-g-constraint}-\eqref{eqn:orthonal-condition-varphi-2} as $F(\widetilde{\xi}_2, \bm{\widetilde{\varphi}_2}, t) = 0$. This system has $e+3$ equations and $2d+3$ degrees of freedom, where $e,d$ are the number of springs and vertices in the unit cell ($2d=e$ in the case of Maxwell lattices). We notice that at $t=0$, these constraints can be satisfied by taking $\widetilde{\varphi}_2 (x)= \varphi_2(x)$ and $\widetilde{\xi}_2 = \xi_2$, i.e. $F(\xi_2, \bm{\varphi}_2, 0) = 0$. It can also be checked that $\partial_{\widetilde{\xi}_2, \bm{\widetilde{\varphi}_2}} F$ evaluated at $\widetilde{\varphi}_2(x) = \varphi_2(x), \widetilde{\xi}_2 = \xi_2, t=0$ is invertible, since $\xi_2, \varphi_2(x)$ are uniquely determined by $\varphi_1(x)$ in \eqref{eqn:appendix-g-constraint} at $t=0$. Thus, the implicit function theorem indicates that there exists $\widetilde{\xi}_2(t), \bm{\widetilde{\varphi}_2}(t)$ around $t=0$ such that $u(x,t) = x + t \varphi_1(x) + t^2 \Big[\widetilde{\xi}_2(t) x + \widetilde{\varphi}_2(x,t)\Big]$ is a mechanism. Evidently, the infinitesimal version of this mechanism is the given GH mode $\varphi_1(x)$.
	\end{proof}
	
	A similar argument seems unavailable when the space of GH modes has dimension greater than one. To briefly explain the difficulty: in general, $\varphi_1$ must satisfy the necessary condition we obtained in section \ref{sec:necessary-condition}. This assures the existence of $\xi_2, \varphi_2$ such that $|u(x_i)-u(x_j))|^2$ vanishes to second order for every connected $x_i,x_j$. But when we continue to the next order, the existence of $\xi_3, \varphi_3$ requires a new necessary condition involving $\varphi_1, \xi_2, \varphi_2$. Though $\varphi_2$ as a solution to \eqref{eqn:necessary_condition} is not unique, we do not see how to use the freedom in $\varphi_2$ to assure the required necessary condition at the next order.
	
	In fact, finding a mechanism amounts to finding a one-parameter family of solutions to a system of quadratic equations, while the existence of a GH mode solves a linearized version of that system. In such a setting, the linearized system does not necessarily contain enough information to know the dimension of the actual solution set. A simple example in 3D shows that a linearized system might not count the nonlinear solutions correctly: if two spheres in 3D only meet at the origin, the intersection of their tangent spaces at the origin is a plane.
	
	\section*{Acknowledgements}
	This research was partially supported by NSF (through grant DMS-2009746) and by the Simons Foundation (through grant 733694). We are grateful to Raghavendra Venkatraman for many useful discussions and questions. We also thank Ian Tobasco for some very helpful comments; {\color{black} and we thank the anonymous referees for their helpful suggestions, which significantly improved this paper.}
	%%%%%%%%%%%%%%%%%%%%%%%%%%%%%%%%%%%%%%%%%%%%%%%%%%%%%%%%%%%%%%%%%%%%%%%%%%%%%%%%%%
	\appendix
	\section*{Appendix}
	\addcontentsline{toc}{section}{Appendix}
	\renewcommand{\thesubsection}{\Alph{subsection}}
	\setcounter{equation}{0}
	\renewcommand{\theequation}{\thesubsection.\arabic{equation}}
	\subsection{The effective Hooke's law}\label{appendix-a}
	The effective linear elastic energy in its variational form is
	\begin{align}\label{effective_energy_appendix}
		E_{\text{eff}}(\xi) &= \frac{1}{S}\min_{\substack{\varphi(x) \text{ is }\\ Q\text{-periodic}}} F(\xi, \varphi), &\text{with }F(\xi, \varphi) &= {\color{black}\frac{1}{2}}\sum_{i \sim j} k_{ij} \bigg(l_{ij} \hat{b}_{ij}^T \xi \hat{b}_{ij} + \big\langle\varphi(x_i) - \varphi(x_j), \hat{b}_{ij} \big \rangle \bigg)^2. 
	\end{align}
	We would like to write it as a quadratic minimization problem using linear algebra. Since we only consider displacements $v(x_j)$ for vertices $x_j$ in the unit cell, the minimization problem in \eqref{effective_energy_appendix} is indeed a finite-dimensional quadratic optimization. Using the notation in \eqref{compatibility_matrix}, we have $\big\langle \varphi(x_i) - \varphi(x_j), \hat{b}_{ij}\big\rangle$ is an entry of $C \bm{\varphi}$, where $\bm{\varphi}$ is the vector form of periodic function $\varphi(x)$. We can also gather $l_{ij} \hat{b}_{ij}^T \xi \hat{b}_{ij}$ as a vector $\bm{b}_\xi$ in the same order of gathering $\big\langle \varphi(x_i) - \varphi(x_j), \hat{b}_{ij}\big\rangle$. Then $F(\xi, \varphi)$ can be expressed as
	\begin{align}\label{linear_algebra_effective_energy}
		F(\xi, \varphi) = {\color{black}\frac{1}{2}} \sum_{i \sim j} k_{ij} \bigg(l_{ij} \hat{b}_{ij}^T \xi \hat{b}_{ij} + \big\langle\varphi(x_i) - \varphi(x_j), \hat{b}_{ij} \big \rangle \bigg)^2 = {\color{black}\frac{1}{2}} \big(\bm{b}_\xi + C \bm{\varphi}\big)^T K \big(\bm{b}_\xi + C \bm{\varphi}\big),
	\end{align}
	where $K$ is a diagonal matrix whose entries are the spring constants $k_{ij}$. For a given strain $\xi$, minimizing $F(\xi, \varphi)$ over all periodic functions $\varphi(x)$ amounts to minimizing a convex, quadratic function of the vector $\bm{\varphi}$ (the Hessian w.r.t $\bm \varphi$ is $C^T K C \succeq 0$). Thus, the optimal $\bm{\varphi}^*_\xi$ must exist and satisfy the equation
	\begin{align}\label{optimal_phi_condition}
		\nabla_{\varphi}  F(\xi, \varphi) = C^T K \bm{b}_\xi + C^T K C \bm\varphi = 0.
	\end{align}
	We know that the compatibility matrix $C$ is not full rank (its null space includes at least two translations), thus the matrix $C^T K C$ is not full rank as well. Solutions to \eqref{optimal_phi_condition} are therefore not unique, but they share the same value of $F(\xi, \varphi)$, since the objective function $F(\xi, \varphi)$ is convex on $\bm{\varphi}$. To avoid future confusion on the non-uniqueness of optimal $\bm{\varphi}^*_\xi$, we stick to the notation $\bm{\varphi}^*_\xi$ for the optimal solution in \eqref{optimal_phi_condition} with the smallest norm, i.e. $\bm{\varphi}^*_\xi = \argmin\limits_{C^T K (C \bm{\varphi} + \bm{b}_\xi)=0}|\bm{\varphi}|_2$. We recall that for the minimum norm solution for an undetermined system $A\bm x = \bm b$ (if $A$ has full rank) is $\bm x^* = A^T (A A^T)^{-1}\bm b $. In our case, the linear constraint $C^T K C$ is a square matrix but not full rank. Therefore, we take a $QR$ decomposition $C^T K C = QR$ to grab the full rank part, and the minimum norm problem for $\bm{\varphi}^*_\xi$ can be written as
	\begin{align}
		\bm{\varphi}^*_\xi = \argmin_{R \bm \varphi = -Q^T C^T K \bm{b}_{\xi}} |\bm \varphi|_2 = -R^T (R R^T)^{-1} Q^T C^T K \bm{b}_{\xi}\label{eqn:unique-varphi}
	\end{align}
	This special $\bm{\varphi}^*_\xi$ is unique and linear in $\xi$, since $\bm{b}_{\xi}$ is linear in $\xi$. The following proposition tells that this optimal $\bm{\varphi}^*_\xi$ yields a self-stress. In fact, any optimal solution in \eqref{optimal_phi_condition} can give a self-stress.
	\begin{proposition}\label{proposition:special_self_stress}
		For a given strain $\xi$, there is a self-stress $\bm{t}^*_\xi$ that depends linearly on $\xi$. Its tension $t^*_{\xi, ij}$ in the spring between $x_i, x_j$ is
		\begin{align}\label{special_tension}
			t^*_{\xi, ij}&= k_{ij} \bigg(l_{ij} \hat{b}_{ij}^T \xi \hat{b}_{ij} + \big\langle \varphi^*_\xi(x_i) - \varphi^*_\xi(x_j), \hat{b}_{ij}\big\rangle\bigg).
		\end{align}
	\end{proposition}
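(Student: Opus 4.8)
The plan is to recognize the claimed tension as the vector $\bm{t}^*_\xi = K(\bm{b}_\xi + C\bm{\varphi}^*_\xi)$ and to observe that the first-order optimality condition satisfied by $\bm{\varphi}^*_\xi$ is literally the equilibrium (self-stress) condition $C^T\bm{t}^*_\xi = 0$. The whole content of the proposition is this single algebraic coincidence, so the argument is short.

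First I would assemble the tensions into a vector. By \eqref{eqn:first_order_bond_effective}, the quantity $l_{ij}\hat{b}_{ij}^T\xi\hat{b}_{ij} + \langle\varphi^*_\xi(x_i)-\varphi^*_\xi(x_j),\hat{b}_{ij}\rangle$ is exactly the entry of $\bm{b}_\xi + C\bm{\varphi}^*_\xi$ associated to the spring between $x_i$ and $x_j$ (recall $\langle\varphi(x_i)-\varphi(x_j),\hat{b}_{ij}\rangle$ is an entry of $C\bm{\varphi}$, and $\bm{b}_\xi$ gathers the $l_{ij}\hat{b}_{ij}^T\xi\hat{b}_{ij}$ in the same order). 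Multiplying entrywise by the spring constants $k_{ij}$, i.e. applying the diagonal matrix $K$, produces precisely the tensions \eqref{special_tension}, so in vector form $\bm{t}^*_\xi = K(\bm{b}_\xi + C\bm{\varphi}^*_\xi)$.

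The key step is then immediate. A self-stress is by definition a null vector of $C^T$, and by \eqref{compatibility_transpose} the vector $C^T\bm{t}$ records the net forces, so $\bm{t}$ is a self-stress exactly when it generates zero net force at every vertex. Now the optimality condition \eqref{optimal_phi_condition} for $\bm{\varphi}^*_\xi$ reads $C^T K \bm{b}_\xi + C^T K C\bm{\varphi}^*_\xi = C^T\bigl(K(\bm{b}_\xi + C\bm{\varphi}^*_\xi)\bigr) = 0$, which is exactly $C^T\bm{t}^*_\xi = 0$. Hence $\bm{t}^*_\xi$ is a self-stress. I would note in passing that the same computation shows that \emph{any} solution of \eqref{optimal_phi_condition}, not merely the minimum-norm one, yields a self-stress, since the argument uses only the normal equation $C^T K(\bm{b}_\xi + C\bm{\varphi}) = 0$.

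Finally, for the linear dependence on $\xi$: the entries $l_{ij}\hat{b}_{ij}^T\xi\hat{b}_{ij}$ of $\bm{b}_\xi$ are linear in $\xi$, and the minimum-norm solution $\bm{\varphi}^*_\xi$ is linear in $\xi$ by the explicit formula \eqref{eqn:unique-varphi}; therefore $\bm{t}^*_\xi = K(\bm{b}_\xi + C\bm{\varphi}^*_\xi)$ depends linearly on $\xi$. There is no genuinely hard step here, since the result is just the observation that the normal equations of the energy minimization coincide with force balance. The only point requiring care is the bookkeeping: one must verify that the entries of $\bm{b}_\xi$ and of $C\bm{\varphi}$ are indexed in the same spring-by-spring order, so that $K(\bm{b}_\xi + C\bm{\varphi}^*_\xi)$ is honestly the vector of spring tensions.
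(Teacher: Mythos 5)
Your proposal is correct and follows essentially the same route as the paper's proof: identify the tension vector as $\bm{t}^*_\xi = K(\bm{b}_\xi + C\bm{\varphi}^*_\xi)$, left-multiply by $C^T$, and observe that the optimality condition \eqref{optimal_phi_condition} gives $C^T\bm{t}^*_\xi = 0$, with linearity in $\xi$ following from the linearity of $\bm{b}_\xi$ and of the minimum-norm solution $\bm{\varphi}^*_\xi$. Your side remark that any optimizer (not just the minimum-norm one) yields a self-stress is also made in the paper.
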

	
	\begin{proof}
		It is easy to check that the vector form of this special tension $t^*_{\xi, ij}$ is
		\begin{align}\label{tension_vector_form}
			\bm{t}^*_\xi = K (\bm{b}_\xi + C \bm{\varphi}^*_\xi).
		\end{align}
		We know that the optimal solution $\bm{\varphi}^*_\xi$ satisfies the optimality condition \eqref{optimal_phi_condition}. Multiplying $C^T$ on the left gives $C^T \bm{t}^*_\xi = C^T K \big(\bm{b}_\xi + C \bm{\varphi}^*_\xi\big) = 0$. Thus, $\bm{t}^*_\xi$ is a self-stress; moreover it is linear in $\xi$ since $\bm{b}_\xi$ and $\bm{\varphi}_\xi^*$ are linear in $\xi$.
	\end{proof}
	
	In the following proposition and lemma, we prove that the effective linear elastic energy $E_\text{eff}(\xi)$ is independent of the size of the unit cell, and quadratic in the symmetric strain $\xi$. We also provide an exact formula for effective linear elastic energy $E_{\text{eff}}(\xi) = \langle A_{\text{eff}} \xi, \xi\rangle$ in Lemma \ref{quadratic_law}.
	\begin{proposition}\label{prop:independence-of-size}
		The effective linear elastic energy $E_{\text{eff}}(\xi)$ does not depend on the size of the unit cell.
	\end{proposition}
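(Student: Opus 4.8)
The plan is to fix the \emph{minimal} unit cell $Q$, whose period lattice $\Lambda$ is the full group of translational symmetries of the reference structure, and to compare it against an arbitrary larger admissible cell $Q_N$ whose period lattice $\Lambda_N\subseteq\Lambda$ has finite index $m=[\Lambda:\Lambda_N]$ (so that $Q_N$ is a disjoint union of $m$ translates of $Q$ and has area $mS$). Every admissible cell is a super-cell of the minimal one, so if I write $E_{\text{eff}}^{Q}$ and $E_{\text{eff}}^{Q_N}$ for the energy \eqref{effective_energy_appendix} computed with the two cells, then proving $E_{\text{eff}}^{Q_N}(\xi)=E_{\text{eff}}^{Q}(\xi)$ for every $\xi$ settles the claim for all cells by transitivity through $Q$. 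I will establish the two inequalities separately; the bound $E_{\text{eff}}^{Q_N}\le E_{\text{eff}}^{Q}$ is elementary, while the reverse inequality is the crux.

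For the easy direction, observe that any $Q$-periodic $\varphi$ is automatically $Q_N$-periodic, and that the functional $F$ in \eqref{effective_energy_appendix} summed over the $m$ copies of $Q$ inside $Q_N$ is exactly $m$ times its value summed over $Q$. After dividing by the area $mS$, such a $\varphi$ therefore produces the \emph{same} average energy whether evaluated over $Q$ or over $Q_N$. Since the minimization defining $E_{\text{eff}}^{Q_N}$ ranges over the strictly larger class of $Q_N$-periodic functions, its minimum can only be smaller, giving $E_{\text{eff}}^{Q_N}(\xi)\le E_{\text{eff}}^{Q}(\xi)$.

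The reverse inequality is the main obstacle, and I would handle it by a translation-averaging argument that exploits the convexity of $F(\xi,\cdot)$ recorded in \eqref{linear_algebra_effective_energy}. Let $\varphi^*$ be an optimal $Q_N$-periodic displacement, and for each coset representative $a$ of $\Lambda/\Lambda_N$ set $\varphi_a(x)=\varphi^*(x+a)$. Two facts drive the argument. First, each $\varphi_a$ is again $Q_N$-periodic and the energy is translation invariant, $F_{Q_N}(\xi,\varphi_a)=F_{Q_N}(\xi,\varphi^*)$: translating by the lattice vector $a$ merely relabels the springs of the super-cell, and by $\Lambda_N$-periodicity this leaves the total sum unchanged. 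Second, the average $\bar\varphi=\frac{1}{m}\sum_a\varphi_a$ is $\Lambda$-periodic, i.e. $Q$-periodic, because translating by a generator of $\Lambda$ permutes the cosets and hence permutes the summands. Since $F_{Q_N}(\xi,\cdot)$ is a convex quadratic (the matrix $K$ in \eqref{linear_algebra_effective_energy} is positive semidefinite), Jensen's inequality yields
\[
F_{Q_N}(\xi,\bar\varphi)\le \frac{1}{m}\sum_a F_{Q_N}(\xi,\varphi_a)=F_{Q_N}(\xi,\varphi^*).
\]
Now $\bar\varphi$ is $Q$-periodic, so $F_{Q_N}(\xi,\bar\varphi)=m\,F_Q(\xi,\bar\varphi)$, while $F_{Q_N}(\xi,\varphi^*)=mS\,E_{\text{eff}}^{Q_N}(\xi)$. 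Dividing by $mS$ gives $\frac{1}{S}F_Q(\xi,\bar\varphi)\le E_{\text{eff}}^{Q_N}(\xi)$, and since $\bar\varphi$ competes in the minimization defining $E_{\text{eff}}^{Q}$ I conclude $E_{\text{eff}}^{Q}(\xi)\le E_{\text{eff}}^{Q_N}(\xi)$.

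Combining the two inequalities gives $E_{\text{eff}}^{Q_N}=E_{\text{eff}}^{Q}$, which is the Proposition. I expect the only genuinely subtle point to be the translation-invariance identity $F_{Q_N}(\xi,\varphi_a)=F_{Q_N}(\xi,\varphi^*)$, which must be verified spring-by-spring using that the reference lattice is $\Lambda$-invariant, so that translation by $a$ carries each spring of $Q_N$ to another spring of the super-cell with identical $k_{ij},l_{ij},\hat b_{ij}$; everything else is bookkeeping of the area normalization together with the application of convexity.
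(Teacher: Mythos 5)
Your proof is correct, but it follows a genuinely different route from the paper's. The paper argues through the first-order optimality condition: the $Q$-periodic minimizer $\varphi^*_{\xi,1}$, extended periodically to the super-cell, still satisfies the optimality system \eqref{optimal_phi_condition} of the large-cell problem (when restricted to $Q$-periodic candidates, that linear system decomposes into $N^2$ copies of the small-cell system), and since the large-cell problem is convex, this critical point is automatically a global minimizer; equality of the normalized energies then follows in one step. You instead prove two inequalities: $E_{\text{eff}}^{Q_N}\le E_{\text{eff}}^{Q}$ by inclusion of competitor classes, and the reverse bound by averaging the translates $\varphi^*(\cdot+a)$ of the large-cell minimizer over coset representatives $a$ of $\Lambda/\Lambda_N$ and applying Jensen's inequality to the convex quadratic functional \eqref{linear_algebra_effective_energy}. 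Both arguments rest on convexity but deploy it differently: the paper uses ``a critical point of a convex problem is a global minimum,'' while you use Jensen on a translation average. Your route never needs to write down the optimality system, and it applies verbatim to an arbitrary finite-index sublattice $\Lambda_N\subseteq\Lambda$, which is slightly more general than the paper's treatment of $N\times N$ super-cells (with a remark that $M\times N$ cells are handled the same way). What the paper's route buys in exchange is an explicit identification of the large-cell minimizer as the periodic extension of the small-cell one, which is sharper information than mere equality of the minimum values. The two points you flag or leave implicit -- translation invariance of $F_{Q_N}$ under shifts by $a\in\Lambda$, and the bookkeeping fact that the springs of $Q_N$ are the disjoint union of $m$ translates of the springs of $Q$, so that $F_{Q_N}(\xi,\varphi)=m\,F_{Q}(\xi,\varphi)$ for $Q$-periodic $\varphi$ -- are both sound, so there is no gap.
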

	
	\begin{proof}
		Let us denote $E_{\text{eff}}^N(\xi)$ as the effective linear elastic energy for a given strain $\xi$ on a $NQ$-periodic unit cell, i.e. repeating the smallest unit cell $N^2$ times,
		\begin{align}\label{N_periodic_effective}
			E_{\text{eff}}^N(\xi) &= \frac{1}{S_N} \min_{\substack{\varphi_N(x) \text{ is }\\ NQ\text{-periodic}}} F(\xi, \varphi_N),
		\end{align}
		where\footnote{Here $Q$ is the smallest unit cell of our lattice. This has, of course, nothing to do with the orthogonal matrix $Q$ earlier in the $QR$ decomposition of $C^T K C$.} $S_N$ is the area of the $NQ$-periodic unit cell ($S_N = N^2 S_1$.). We prove that $E_{\text{eff}}^N(\xi) = E_{\text{eff}}^1(\xi)$ for any choice of $N$.
		
		First, notice that a $Q$-periodic function $\varphi_1(x)$ is also a $NQ$-periodic function by repeating itself on the $NQ$-periodic unit cell. The optimal $\varphi_{\xi, 1}^*(x)$ satisfies \eqref{optimal_phi_condition} on the unit cell $Q$ also satisfies the optimality condition on the enlarged unit cell $NQ$. In fact, the linear system in \eqref{optimal_phi_condition} on the unit cell $NQ$ becomes $N^2$ copies of the linear system on the unit cell $Q$ if we constrain $\varphi_N(x)$ to be $Q$-periodic instead of $NQ$-periodic. Since the optimization problem in \eqref{N_periodic_effective} is convex, all optimal solutions reach the same optimum,
		\begin{align*}
			E_{\text{eff}}^N(\xi) &= \frac{N^2}{S_N} F(\xi, \varphi_{\xi, 1}^*) = E_{\text{eff}}^1(\xi).
		\end{align*}
		Using the same method, we can prove the effective linear elastic energy is the same on any $MQ \times NQ$ unit cell with $M\neq N$.
	\end{proof}
	
	\begin{remark}
		Since the effective linear elastic energy is independent of the size of the unit cell, we use the notation $E_{\text{eff}}(\xi)$ to denote the effective linear elastic energy on the smallest unit cell, i.e. $E_{\text{eff}}(\xi) = E_{\text{eff}}^1(\xi)$, to avoid confusion.
	\end{remark}
	
	\begin{lemma}\label{quadratic_law}
		The effective linear elastic energy $E_{\text{eff}}(\xi)$ is quadratic in $\xi$, so it has the form
		\begin{align}
			& E_{\text{eff}}(\xi) = {\color{black} \frac{1}{2}} \langle A_\text{eff}\xi, \xi\rangle,
		\end{align}
		where $A_{\text{eff}}$ is a constant symmetric 4-tensor, i.e. $A_\text{eff}\xi$ is symmetric and linear in $\xi$. The exact formula for $A_\text{eff}\xi$ is
		\begin{align}\label{macroscopic_stress}
			A_\text{eff}\xi = \frac{1}{S} \sum_{i \sim j} t^*_{\xi, ij} l_{ij} \hat{b}_{ij} \otimes \hat{b}_{ij},
		\end{align}
		where $t^*_{\xi, ij}$ is the self-stress on spring between $x_i, x_j$ in \eqref{special_tension}.
	\end{lemma}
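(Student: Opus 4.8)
The plan is to work entirely within the linear-algebra formulation \eqref{linear_algebra_effective_energy}, exploiting the single fact from Proposition \ref{proposition:special_self_stress} that the optimal tension $\bm{t}^*_\xi = K(\bm{b}_\xi + C\bm{\varphi}^*_\xi)$ is a self-stress, i.e. $C^T\bm{t}^*_\xi = 0$. For brevity I would write $\bm{g}_\xi := \bm{b}_\xi + C\bm{\varphi}^*_\xi$, so that $\bm{t}^*_\xi = K\bm{g}_\xi$ (by \eqref{tension_vector_form}) and $E_{\text{eff}}(\xi) = \tfrac{1}{S}F(\xi,\varphi^*_\xi) = \tfrac{1}{2S}\bm{g}_\xi^T K\bm{g}_\xi$.

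First I would show $E_{\text{eff}}$ is a quadratic form. Each entry of $\bm{b}_\xi$ equals $l_{ij}\hat{b}_{ij}^T\xi\hat{b}_{ij}$, which is linear in $\xi$, and $\bm{\varphi}^*_\xi$ is linear in $\xi$ by the explicit formula \eqref{eqn:unique-varphi}; hence $\xi \mapsto \bm{g}_\xi$ is a linear map. Consequently $B(\xi,\eta) := \tfrac{1}{S}\bm{g}_\xi^T K\bm{g}_\eta$ is a symmetric bilinear form on $\mathbb{R}^{2\times 2}_{\text{sym}}$ (symmetry because $K$ is symmetric), and $E_{\text{eff}}(\xi) = \tfrac{1}{2}B(\xi,\xi)$. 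Representing $B$ through the Frobenius inner product produces a unique symmetric $4$-tensor $A_{\text{eff}}$ with $B(\xi,\eta) = \langle A_{\text{eff}}\xi, \eta\rangle$ and $A_{\text{eff}}\xi$ linear in $\xi$, which is exactly the form \eqref{eqn:effective_energy_quadratic_main}.

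Next I would identify $A_{\text{eff}}\xi$ with the stress formula \eqref{macroscopic_stress}. For any symmetric $\eta$, using $\hat{b}_{ij}^T\eta\hat{b}_{ij} = \langle \hat{b}_{ij}\otimes\hat{b}_{ij}, \eta\rangle$ gives
\[
\Big\langle \tfrac{1}{S}\sum_{i\sim j} t^*_{\xi,ij}\, l_{ij}\, \hat{b}_{ij}\otimes\hat{b}_{ij},\ \eta\Big\rangle = \tfrac{1}{S}\sum_{i\sim j} t^*_{\xi,ij}\, l_{ij}\, \hat{b}_{ij}^T\eta\hat{b}_{ij} = \tfrac{1}{S}(\bm{t}^*_\xi)^T\bm{b}_\eta = \tfrac{1}{S}\bm{g}_\xi^T K \bm{b}_\eta.
\]
Now comes the crux. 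Writing $\bm{b}_\eta = \bm{g}_\eta - C\bm{\varphi}^*_\eta$ and invoking the self-stress identity $C^T\bm{t}^*_\xi = 0$ kills the cross term, since $(\bm{t}^*_\xi)^T C\bm{\varphi}^*_\eta = (C^T\bm{t}^*_\xi)^T\bm{\varphi}^*_\eta = 0$; hence $\tfrac{1}{S}\bm{g}_\xi^T K\bm{b}_\eta = \tfrac{1}{S}\bm{g}_\xi^T K\bm{g}_\eta = \langle A_{\text{eff}}\xi,\eta\rangle$. As this holds for every symmetric $\eta$ and both matrices are symmetric, the formula \eqref{macroscopic_stress} follows. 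Linearity of $A_{\text{eff}}\xi$ in $\xi$ comes from Proposition \ref{proposition:special_self_stress} (since $t^*_{\xi,ij}$ is linear in $\xi$), and symmetry of $A_{\text{eff}}\xi$ from the symmetry of each dyad $\hat{b}_{ij}\otimes\hat{b}_{ij}$.

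I expect the one genuinely subtle step to be recognizing and using the self-stress property $C^T\bm{t}^*_\xi = 0$ to eliminate the cross term $(\bm{t}^*_\xi)^T C\bm{\varphi}^*_\eta$. Without it one only obtains the value $\langle A_{\text{eff}}\xi,\xi\rangle = \tfrac{1}{S}\bm{g}_\xi^T K\bm{b}_\xi$, and the passage from this quadratic value to the tensor-level identity \eqref{macroscopic_stress} is not transparent. An alternative to the cross-term computation is an envelope-theorem argument: since $\partial_\varphi F = 0$ at the minimizer by \eqref{optimal_phi_condition}, differentiating $E_{\text{eff}}(\xi + s\eta)$ in $s$ only sees the explicit $\xi$-dependence of $F$, reproducing the same formula; but the purely linear-algebraic route above is cleaner and self-contained.
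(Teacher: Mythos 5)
Your proof is correct, and it rests on exactly the same two facts as the paper's: the matrix--vector form \eqref{linear_algebra_effective_energy} of the energy, and the vanishing of the cross term $(\bm{t}^*_\xi)^T C\bm{\varphi}^*$, which via Proposition \ref{proposition:special_self_stress} is just the optimality condition \eqref{optimal_phi_condition} in disguise (the paper invokes \eqref{optimal_phi_condition} directly where you invoke $C^T\bm{t}^*_\xi=0$; these are the same identity). The difference is the direction of the logic. The paper \emph{defines} $A_{\text{eff}}\xi$ to be the right-hand side of \eqref{macroscopic_stress}; symmetry and linearity in $\xi$ are then immediate from Proposition \ref{proposition:special_self_stress}, and all that remains to check is the scalar identity relating $\langle A_{\text{eff}}\xi,\xi\rangle$ to $E_{\text{eff}}(\xi)$, i.e.\ the diagonal case $\eta=\xi$ of your computation. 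You instead define $A_{\text{eff}}$ by polarizing the quadratic form $E_{\text{eff}}$ and then recover \eqref{macroscopic_stress} by testing against arbitrary symmetric $\eta$. Your route is slightly longer (it needs the Riesz representation on $\mathbb{R}^{2\times 2}_{\mathrm{sym}}$ and the off-diagonal cancellation), but it buys two things: the tensor-level identity \eqref{macroscopic_stress} is derived rather than posited, and the symmetry of your bilinear form $B$ yields the major symmetry $\langle A_{\text{eff}}\xi,\eta\rangle=\langle A_{\text{eff}}\eta,\xi\rangle$ for free, which is more than the paper's stated notion of symmetry (namely, that $A_{\text{eff}}\xi$ is a symmetric matrix). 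An incidental benefit of your bookkeeping: it makes transparent that the intermediate line in the paper's proof, $\langle A_{\text{eff}}\xi,\xi\rangle = E_{\text{eff}}(\xi) - \frac{1}{S}\big(\bm{b}_\xi+C\bm{\varphi}^*_\xi\big)^T K C\bm{\varphi}^*_\xi$, contains a factor-of-two slip (the first term on the right should be $2E_{\text{eff}}(\xi)$, since $E_{\text{eff}}$ carries the prefactor $\frac{1}{2S}$), although the paper's final conclusion $E_{\text{eff}}(\xi)=\frac{1}{2}\langle A_{\text{eff}}\xi,\xi\rangle$ is the correct one and agrees with yours.
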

	
	\begin{proof}
		First, we denote the macroscopic stress as $\bar{\sigma} = \frac{1}{S} \sum_{i \sim j} t^*_{\xi, ij} l_{ij} \hat{b}_{ij} \otimes \hat{b}_{ij}$. We know $\bar{\sigma}$ is symmetric and linear in $\xi$ since the self-stress $t^*_{\xi, ij}$ is linear in $\xi$. We can write $\bar{\sigma} = A_{\text{eff}}\xi$, where $A_{\text{eff}}$ is a constant symmetric 4-tensor. We claim that $E_{\text{eff}}(\xi) = \langle A_\text{eff}\xi, \xi \rangle$.
		
		To prove this, we write both sides in the matrix-vector form
		\begin{align*}
			E_{\text{eff}}(\xi) &= \frac{1}{S} F(\xi, \varphi^*_\xi) = \frac{1}{S} \sum_{i \sim j} {\color{black} \frac{1}{2}} k_{ij} \bigg(l_{ij} \hat{b}_{ij}^T \xi \hat{b}_{ij} + \big\langle\varphi^*_\xi(x_i) - \varphi^*_\xi(x_j), \hat{b}_{ij} \big \rangle \bigg)^2 = {\color{black} \frac{1}{2S}} \big(\bm{b}_\xi + C \bm{\varphi}^*_\xi\big)^T K \big(\bm{b}_\xi + C \bm{\varphi}^*_\xi\big),\\
			\langle A_\text{eff}\xi, \xi \rangle &= {\color{black} \frac{1}{S}} \sum_{i \sim j} \big\langle t^*_{\xi, ij} l_{ij} \hat{b}_{ij} \otimes \hat{b}_{ij}, \xi \big\rangle  = \frac{1}{S} \sum_{i \sim j} t^*_{\xi, ij} \big(l_{ij} \hat{b}_{ij}^T \xi \hat{b}_{ij}\big) = \frac{1}{S} \big(\bm{t}^*_\xi\big)^T \bm{b}_\xi = \frac{1}{S}\big(\bm{b}_\xi + C \bm{\varphi}^*_\xi\big)^T K \bm{b}_{\xi}\\
			&= E_{\text{eff}}(\xi) - \frac{1}{S}\big(\bm{b}_\xi + C \bm{\varphi}^*_\xi\big)^T K C \bm{\varphi}_\xi^*.
		\end{align*}
		The difference $\big(\bm{b}_\xi + C \bm{\varphi}^*_\xi\big)^T K C \bm{\varphi}_\xi^* = 0$ because the optimal $\bm{\varphi}^*_\xi$ satisfies the optimal condition \eqref{optimal_phi_condition}. Therefore, $E_{\text{eff}}(\xi) = {\color{black} \frac{1}{2}} \langle A_\text{eff}\xi, \xi \rangle$.
	\end{proof}
	\setcounter{equation}{0}
	\subsection{The one-periodic mechanism and the corresponding GH mode} \label{appendix-b}
	We present the exact formula for the one-periodic mechanism $u_{\frac{\pi}{3} \shortto \theta}(x)$ discussed in section \ref{one-periodic-mechanism}. For this appendix, we always fix the side length of each equilateral triangle to be 1. We classify vertices in the reference lattice into three types: $A,B$ and $C$ and vertices in the deformed twisted Kagome lattice into three types $\tilde{A}, \tilde{B}$ and $\tilde{C}$.
	\begin{figure}[!htb]
		\centering
		\includegraphics[width=0.8\linewidth]{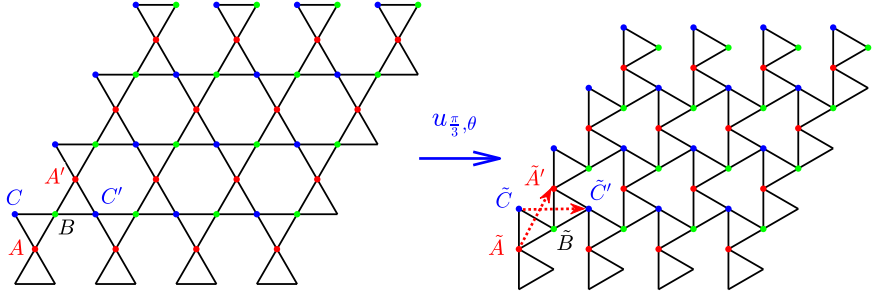}
		\caption{The one-periodic mechanism $u_{\frac{\pi}{3}\shortto \theta}(x)$ from the standard Kagome lattice to the twisted Kagome lattice $L_{\theta}$ ($\theta = \frac{\pi}{6}$). The unit cell of the one-periodic Kagome lattice contains $A,B,C$ three vertices (marked in red, green and blue), and the unit cell of the twisted Kagome lattice contains three vertices $\tilde{A}, \tilde{B}, \tilde{C}$.}
		\label{fig:one-periodic-mechanism-dots}
	\end{figure}
	Once we know the deformation on the five vertices $A,B,C,A',C'$ in Figure \ref{fig:one-periodic-mechanism-dots}, we know everything about the one-periodic mechanism, including the macroscopic deformation gradient. The exact formulas of $u_{\frac{\pi}{3} \shortto \theta}(x)$ on these vertices are
	\begin{align*}
		u_{\frac{\pi}{3} \shortto \theta}(A) &= \tilde{A} = (0,0), & u_{\frac{\pi}{3} \shortto \theta}(B) &= \tilde{B} = (\cos \theta, \sin \theta), & u_{\frac{\pi}{3} \shortto \theta}(C) &=\tilde{C}= \left(\cos(\theta + \frac{\pi}{3}), \sin(\theta + \frac{\pi}{3})\right),\\
		u_{\frac{\pi}{3} \shortto \theta}(A') &= \tilde{A}' = \cos(\theta - \frac{\pi}{3}) (1, \sqrt{3}), & && u_{\frac{\pi}{3} \shortto \theta}(C')&=\tilde{C}'= \left(2\cos(\theta - \frac{\pi}{3}), \sin(\theta + \frac{\pi}{3})\right).
	\end{align*}
	The two primitive vectors $\bm{v}_1^{\text{def}}, \bm{v}_2^{\text{def}}$ for the deformed lattice are the two dotted vector in Figure \ref{fig:one-periodic-mechanism-dots}.  A brief calculation reveals that
	\begin{align*}
		\bm{v}_1^{\text{def}} &= \overrightarrow{\tilde{C}\tilde{C}'} = \cos(\frac{\pi}{3}-\theta) (2,0),\\
		\bm{v}_2^{\text{def}} &= \overrightarrow{\tilde{A}\tilde{A}'} = \cos(\frac{\pi}{3}-\theta) (1,\sqrt{3}).
	\end{align*}
	
	In section \ref{one-periodic-mechanism}, we write the one-periodic mechanism $u_{\frac{\pi}{3} \shortto \theta}(x)$ as $u_{\frac{\pi}{3} \shortto \frac{\pi}{3}+t}(x) = F(t)\cdot x + \varphi(x,t)$ in \eqref{local-one-periodic-mechanism} by changing $\theta = \frac{\pi}{3} + t$. We have seen that $\dot{F}(0) = 0$ and $\dot{\varphi}(x,0)$ is a one-periodic GH mode. We denote this one-periodic GH mode as $\varphi_1^1(x)$; its explicit value at vertex $x$ is
	\begin{align*}
		\dot{\varphi}(x,0) &= \frac{d u_{\frac{\pi}{3} \shortto \frac{\pi}{3}+t} (x)}{d \theta}\bigg|_{t=0} = \frac{d u_{\frac{\pi}{3} \shortto \theta} (x)}{d \theta}\bigg|_{\theta = \frac{\pi}{3}}.
	\end{align*}
	By plugging the exact formulas for $u_{\frac{\pi}{3} \shortto \theta}(x)$, we get 
	\begin{align*}
		\varphi_1^1(A) &= \frac{d (0,0)}{d \theta}\bigg|_{\theta = \frac{\pi}{3}} = (0,0),\\
		\varphi_1^1(B) &= \frac{d (\cos \theta, \sin \theta)}{d \theta}\bigg|_{\theta = \frac{\pi}{3}} = \left(-\frac{\sqrt{3}}{2}, \frac{1}{2}\right)\\
		\varphi_1^1(C)&= \frac{d \left(\cos(\theta + \frac{\pi}{3}), \sin(\theta + \frac{\pi}{3})\right)}{d \theta}\bigg|_{\theta = \frac{\pi}{3}} = \left(-\frac{\sqrt{3}}{2}, -\frac{1}{2}\right).
	\end{align*}
	\setcounter{equation}{0}
	\subsection{The two-periodic mechanism $u_{\theta_1, \theta_2, \theta_3}(x)$ and its corresponding GH modes} \label{appendix-c}
	We provide the explicit formula for the two-periodic mechanism $u_{\theta_1, \theta_2, \theta_3}(x)$ in section \ref{sec:two-periodic-mechanism}. The parameters $\theta_1, \theta_2, \theta_3$ are shown in Figure \ref{fig:two-periodic-unit-cell}. We use the same notation as in section \ref{sec:two-periodic-mechanism} and classify the vertices in the reference lattice into three types: $A$, $B$ and $C$. The unit cell for the two-periodic standard Kagome lattice has four vertices of each type $A,B,C$. Therefore, we denote them as $A_{i,j}, B_{i,j}, C_{i,j}$ with $i,j \in \{0,1\}$; we denote vertices in the deformed lattice as $\tilde{A}_{i,j}, \tilde{B}_{i,j}, \tilde{C}_{i,j}$ with $i,j \in \{0,1\}$ (shown in Figure \ref{fig:two-periodic-mechanism-dots}).
	\begin{figure}[!htb]
		\centering
		\includegraphics[width=0.9\linewidth]{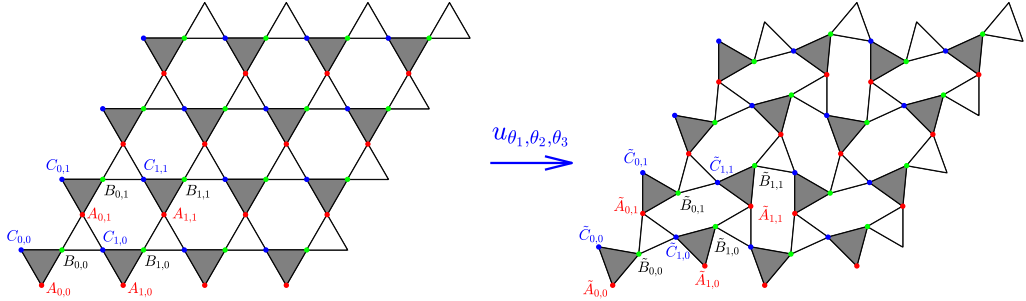}
		\caption{The three-parameter two-periodic mechanism $u_{\theta_1, \theta_2, \theta_3}(x)$.}
		\label{fig:two-periodic-mechanism-dots}
	\end{figure}
	For simplicity, we denote $u_{\theta_1, \theta_2, \theta_3}(x) = u(x)$. Then the explicit formula for each vertex under the two-periodic mechanism $u(x)$ is the following:
	\begin{align*}
		u(A_{0,0})&=\tilde{A}_{0,0} = (0,0), \qquad u(B_{0,0})=\tilde{B}_{0,0} \left(\cos\theta_1, \sin \theta_1\right), \qquad u(C_{0,0})=\tilde{C}_{0,0} \left(\cos(\theta_1+\frac{\pi}{3}), \sin(\theta_1+\frac{\pi}{3})\right),\\
		u(A_{1,0})&=\tilde{A}_{1,0}= \left(\cos\theta_1 - \cos(\theta_2 + \frac{\pi}{3}) + \cos(\theta_4 - \frac{\pi}{3}), \sin\theta_1 - \sin(\theta_2 + \frac{\pi}{3}) + \sin(\theta_4 - \frac{\pi}{3})\right),\\ 
		u(B_{1,0})&= \tilde{B}_{1,0}=\left(\cos\theta_1 + \cos(\theta_2 - \frac{\pi}{3}) + \cos(\theta_4 - \frac{\pi}{3}), \sin\theta_1 + \sin(\theta_2 - \frac{\pi}{3}) + \sin(\theta_4 - \frac{\pi}{3})\right),\\
		u(C_{1,0})&= \tilde{C}_{1,0}=\left(\cos\theta_1 + \cos(\theta_4 - \frac{\pi}{3}), \sin\theta_1 + \sin(\theta_4 - \frac{\pi}{3}) \right),\\
		u(A_{0,1})&= \tilde{A}_{0,1} =\left(\cos \theta_1 + \cos \theta_4, \sin \theta_1 + \sin \theta_4\right),\\ 
		u(B_{0,1})&= \tilde{B}_{0,1} =\left(\cos \theta_1 + \cos \theta_3 + \cos \theta_4, \sin \theta_1 + \sin \theta_3 + \sin \theta_4 \right),\\
		u(C_{0,1})&= \tilde{C}_{0,1} =\left(\cos \theta_1 +  \cos(\theta_3 + \frac{\pi}{3}) +\cos \theta_4, \sin \theta_1 + \sin(\theta_3 + \frac{\pi}{3}) + \sin \theta_4\right),\\
		u(A_{1,1}) &= \tilde{A}_{1,1}=\left(\cos \theta_1 + \cos(\theta_2 - \frac{\pi}{3}) +  \cos \theta_3 + \cos(\theta_4 - \frac{\pi}{3}), \sin \theta_1 + \sin(\theta_2 - \frac{\pi}{3}) +  \sin \theta_3 + \sin(\theta_4 - \frac{\pi}{3}) \right),\\
		u(B_{1,1}) &= \tilde{B}_{1,1}=\left(\cos \theta_1 + \cos(\theta_2 - \frac{\pi}{3}) + \cos \theta_3 + \sqrt{3}\cos(\theta_4 - \frac{\pi}{6}), \sin \theta_1 + \sin(\theta_2 - \frac{\pi}{3}) + \sin \theta_3 + \sqrt{3} \sin(\theta_4 - \frac{\pi}{6})\right),\\
		u(C_{1,1}) &= \tilde{C}_{1,1}=\left(\cos \theta_1 + \cos(\theta_2 - \frac{\pi}{3}) + \cos \theta_3 + \cos \theta_4, \sin \theta_1 + \sin(\theta_2 - \frac{\pi}{3}) + \sin \theta_3 + \sin \theta_4\right).
	\end{align*}
	
	Using the same method as in Appendix \ref{appendix-b}, we can compute the three GH modes $\varphi_1^2(x), \varphi_2^2(x), \varphi_3^2(x)$ in section \ref{subsection:two-periodic-GH} in \eqref{eqn:two-periodic-GH-def}. The explicit formula for $\varphi_1^2(x)$ is
	\begin{align*}
		\varphi_1^2(A_{0,0})&= (0,0), & \varphi_1^2(B_{0,0})&= \left(-\frac{\sqrt{3}}{2}, \frac{1}{2}\right),& \varphi_1^2(C_{0,0})&= \left(-\frac{\sqrt{3}}{2}, -\frac{1}{2}\right),\\
		\varphi_1^2(A_{1,0})&= \left(-\frac{\sqrt{3}}{2}, -\frac{1}{2}\right),& \varphi_1^2(B_{1,0})&= \left(-\frac{\sqrt{3}}{2}, -\frac{1}{2}\right),& \varphi_1^2(C_{1,0})&= \left(-\frac{\sqrt{3}}{2}, -\frac{1}{2}\right),\\
		\varphi_1^2(A_{0,1})&= (0,0), & \varphi_1^2(B_{0,1})&= (0,0), & \varphi_1^2(C_{0,1})&= (0,0),\\
		\varphi_1^2(A_{1,1})&= (0,0), & \varphi_1^2(B_{1,1})&= (0,-1),& \varphi_1^2(C_{1,1})&= (0,0).
	\end{align*}
	The explicit formula for $\varphi_2^2(x)$ is
	\begin{align*}
		\varphi_2^2(A_{0,0})&= (0,0), & \varphi_2^2(B_{0,0})&=(0,0),& \varphi_2^2(C_{0,0})&= (0,0), \\
		\varphi_2^2(A_{1,0})&= \left(-\frac{\sqrt{3}}{2}, -\frac{1}{2}\right),& \varphi_2^2(B_{1,0})&= (0,0), & \varphi_2^2(C_{1,0})&= (0,-1), \\
		\varphi_2^2(A_{0,1})&= \left(\frac{\sqrt{3}}{2}, -\frac{1}{2}\right), & \varphi_2^2(B_{0,1})&= \left(\frac{\sqrt{3}}{2}, -\frac{1}{2}\right), & \varphi_2^2(C_{0,1})&= \left(\frac{\sqrt{3}}{2}, -\frac{1}{2}\right),\\
		\varphi_2^2(A_{1,1})&= (0,0), & \varphi_2^2(B_{1,1})&= \left(\frac{\sqrt{3}}{2}, -\frac{1}{2}\right), & \varphi_2^2(C_{1,1})&= \left(\frac{\sqrt{3}}{2}, \frac{1}{2}\right).
	\end{align*}
	The explicit formula for $\varphi_3^2(x)$ is
	\begin{align*}
		\varphi_3^2(A_{0,0})&= (0,0), & \varphi_3^2(B_{0,0})&=(0,0),& \varphi_3^2(C_{0,0})&= (0,0), \\
		\varphi_3^2(A_{1,0})&= (0,-1),& \varphi_3^2(B_{1,0})&= (0,-1), & \varphi_3^2(C_{1,0})&= (0,-1), \\
		\varphi_3^2(A_{0,1})&= \left(\frac{\sqrt{3}}{2}, -\frac{1}{2}\right), & \varphi_3^2(B_{0,1})&= (0,0), & \varphi_3^2(C_{0,1})&= (0,-1),\\
		\varphi_3^2(A_{1,1})&= \left(-\frac{\sqrt{3}}{2}, -\frac{1}{2}\right), & \varphi_3^2(B_{1,1})&= (0,-1), & \varphi_3^2(C_{1,1})&= (0,0).
	\end{align*}
	
	In section \ref{subsec:special-two-periodic}, we noted two special cases of $u_{\theta_1, \theta_2, \theta_3}(x)$ that have additional symmetries. One, shown in Figure \ref{fig:two-periodic-special}(a), is actually a two-by-one periodic mechanism. We now show that the associated GH mode is the Fleck-Hutchinson mode $u_1(x)$ obtained by taking $N=2$ in \eqref{eqn:u1}. We recall that the two-by-one periodic mechanism is a special case of $u_{\theta_1, \theta_2, \theta_3}(x)$ by choosing $\theta_1 = \gamma, \theta_2 = \gamma, \theta_3 = \frac{2\pi}{3} - \gamma$. The associated two-by-one periodic GH mode, which we shall refer to as $\varphi^{2,1}(x)$, is the infinitesimal version of this mechanism. A brief calculation gives $\varphi^{2,1}(x) = -\varphi_1^2(x) - \varphi_2^2(x) + \varphi_3^2(x)$ (shown in Figure \ref{fig:two-by-one-fleck}). The values of $\varphi^{2,1}(x)$ on the vertices of the two-by-one periodic unit cell are
	\begin{align*}
		\varphi^{2,1}(A_{0,0})&= (0,0), & \varphi^{2,1}(B_{0,0})&= \left(\frac{\sqrt{3}}{2}, -\frac{1}{2}\right),& \varphi^{2,1}(C_{0,0})&= \left(\frac{\sqrt{3}}{2}, \frac{1}{2}\right), \\
		\varphi^{2,1}(A_{0,1})&= (0,0), & \varphi^{2,1}(B_{0,1})&= \left(-\frac{\sqrt{3}}{2}, \frac{1}{2}\right), & \varphi^{2,1}(C_{0,1})&= \left(-\frac{\sqrt{3}}{2}, -\frac{1}{2}\right).
	\end{align*}
	It is easy to check that $\varphi^{2,1}(x)$ is exactly $u_1^{\frac{ N}{2}, N}(x)$ when $N$ is even in \eqref{eqn:u_1} since
	\begin{align*}
		u_1^{\frac{ N}{2}, N}(A_{0,k}) &= (0,0) & u_1^{\frac{ N}{2}, N}(B_{0,k}) &= \cos(k\pi) \left(\frac{\sqrt{3}}{2}, -\frac{1}{2}\right) &u_1^{\frac{ N}{2}, N}(C_{0,k}) &= \cos(k\pi) \left(\frac{\sqrt{3}}{2}, \frac{1}{2}\right),
	\end{align*}
	with $k=0,1,\dots, N-1$.
	\begin{figure}[!htb]
		\centering
		\includegraphics[width=0.4\linewidth]{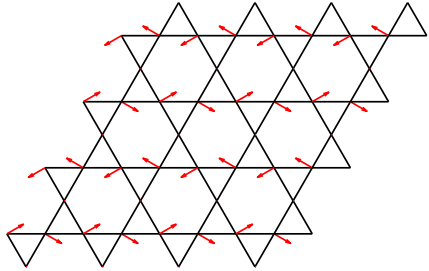}
		\caption{The two-by-one periodic GH mode $\varphi^{2,1}(x)$.}
		\label{fig:two-by-one-fleck}
	\end{figure}
	
	\setcounter{equation}{0}
	\subsection{More details on the Fleck-Hutchinson modes} \label{appendix-e}
	\subsubsection{Review of Fleck-Hutchinson modes} \label{appendix-e-1}
	The Fleck-Hutchinson modes are obtained by considering complex-valued displacements with vanishing linear elastic strains and Bloch-type boundary conditions. In section \ref{sec:preliminary-GH}, we have seen that GH modes are periodic displacements whose linear elastic strains vanish. Fleck-Hutchinson modes are similar, but with two different assumptions: (1) the displacements $d(x)$ are now complex, i.e. $d(x) \in \mathbb{C}^2$; and (2) the Bloch-type boundary condition requires the displacement $d(x)$ to satisfy
	\begin{align}
		d(\bm{j} + \bm{x}) &= d(\bm{j}) \exp(2\pi i \bm{x}\cdot \bm{w}), \label{eqn:bloch-appendix}
	\end{align}
	where $\bm{j}$ are vertices in the unit cell and $\bm{x} = n_1\bm{v}_1 + n_2 \bm{v}_2$ is a translation vector with integer-valued $n_1, n_2$ and primitive vectors $\bm{v}_1, \bm{v}_2$ of the reference lattice. The vector $\bm{w}$ is the so-called Bloch wave number; it is chosen as $\bm{w} = w_1 \bm{a}_1 + w_2 \bm{a_2}$, where $\bm{a}_1, \bm{a}_2$ are primitive vectors in the Brillouin zone and $w_1, w_2 \in (0,1]$. 
	
	From now on, we shall focus on the one-periodic standard Kagome lattice and find the corresponding Fleck-Hutchinson modes. There are three vertices $A,B,C$ in the unit cell of the one-periodic standard Kagome lattice; vertex $\bm{j}$ are chosen from the three vertices. We choose a different pair of primitive vectors with $\bm{v}_1 = (1,\sqrt{3})$ in the 60 degree direction and $\bm{v}_2 = (-1,\sqrt{3})$ in the 120 degree direction. The  corresponding primitive vectors $\bm{a}_1, \bm{a}_2$ in the Brillouin zone are $\bm{a}_1 = \frac{1}{2\sqrt{3}}\left(\sqrt{3},1\right), \bm{a}_2 = \frac{1}{2\sqrt{3}}\left(-\sqrt{3},1\right)$ satisfying
	\begin{align}
		\bm{a}_i \cdot \bm{v}_j = \delta_{ij}, \qquad i,j \in \{1,2\}, \label{eqn:brillouin}
	\end{align}
	where $\delta_{ij}$ is the Kronecker delta. 
	
	There are three special choices of $w_1, w_2$ such that the displacement $d(x)$ becomes $N$-periodic: (1) $w_1 = w_2 = \frac{s}{N}$; (2) $w_1 = 0, w_2 = \frac{s}{N}$; and (3) $w_2 = 0, w_1 = \frac{s}{N}$. In all three cases, $s$ is an integer in the range $0 \leq s \leq N-1$. Let us focus on the first case $w_1 = w_2 = \frac{s}{N}$, and the other two cases are similar. We shall show that the displacement $d(x)$ is $N$-periodic when $w_1 = w_2 = \frac{s}{N}$, i.e. $d(\bm{j} + \bm{x}) = d(\bm{j})$ for any translation vector $\bm{x} = n_1 \bm{v}_1 + n_2 \bm{v}_2$ with $n_1, n_2$ as multiples of $N$ ($n_1 = m_1N, n_2 = m_2 N$ and $m_1, m_2 \in \mathbb{Z}$). The $N$-periodicity comes from a simple calculation: the factor $\exp(2\pi i \bm{x} \cdot \bm{w})$ in \eqref{eqn:bloch-appendix} using \eqref{eqn:brillouin} becomes
	\begin{align*}
		\exp(2\pi i \bm{x} \cdot \bm{w}) &= \exp(2\pi i (n_1 w_1 + n_2 w_2)).
	\end{align*}
	When $w_1 = w_2 = \frac{s}{N}$ and $n_1 = m_1N, n_2 = m_2 N$, this factor becomes 1. Thus, we obtain the displacement $d(x)$ is $N$-periodic.
	
	Actually, the displacement $d(x)$ associated to $w_1 = w_2 = \frac{s}{N}$ is not just $N$-periodic, it is indeed $N$-by-one periodic. By this, we mean that $d(x)$ is indeed one-periodic in the horizontal direction, i.e. $d(\bm{j} + \bm{x}) = d(\bm{j})$ and $\bm{x} = \bm{v}_1 - \bm{v}_2 = (2,0)$ is the smallest translation vector in the horizontal direction. This is true because the factor $\exp(2\pi i \bm{x} \cdot \bm{w})$ becomes $\exp(2\pi i \bm{x} \cdot \bm{w}) = \exp(2\pi i (w_1 - w_2)) = 1$. 
	
	\begin{remark}\label{rmk:remark-symmetry}
		For the other two cases, for example $w_1 = 0, w_2 = \frac{s}{N}$, we get $N$-by-one displacements $d(x)$ and their shorter period occurs in the 60 degree direction. For the $w_2 = 0, w_1 = \frac{s}{N}$ case, we get $N$-by-one periodic displacements $d(x)$ with shorter period in the 120 degree direction. 
	\end{remark}
	
	So far, we have seen that the Bloch-type boundary condition becomes periodic boundary conditions with special choices of bloch wave number $\bm{w}$. Now we find the displacement $d(x)$ with vanishing linear elastic strain for a general bloch wave number $\bm{w}$. For a chosen $\bm{w}$, the first-order spring extension $e_{ij}(\bm{w})$ on the spring between $x_i, x_j$ is
	\begin{align}
		e_{ij}(\bm{w}) &= \langle d(x_i) - d(x_j), \hat{b}_{ij}\rangle, \label{eqn:complex_first_order_bond}
	\end{align}
	where $\hat{b}_{ij} = \frac{x_i - x_j}{|x_i - x_j|}$ indicates the spring direction. Notice that $e_{ij}(\bm{w})$ is complex-valued. Similarly to the real compatibility matrix in \eqref{compatibility_matrix}, we can write the linear relationship between the displacement value $d(x)$ on vertices in the unit cell and the first-order spring extension $e_{ij}(\bm{w})$ in terms of the complex version of the compatibility matrix $C(\bm{w})$ w.r.t. the Bloch wave number $\bm{w}$. The complex compatibility matrix $C(\bm{w})$ for the standard Kagome lattice with the smallest unit cell was found in \cite{hutchinson2006structural}. Our $C(\bm{w})$ looks a little different from the one in \cite{hutchinson2006structural} because we choose a different set of springs and vertices in the unit cell as shown in Figure \ref{fig:standardkagomespring}(b); our $C(\bm{w})$ transforms the vector form of displacement $d(x)$, i.e. $\begin{pmatrix}
		d(A) & d(B) & d(C)
	\end{pmatrix}^T \in \mathbb{C}^6$, to the vector form of $e_{ij}(\bm{w})$, i.e. $\begin{pmatrix}
		e_1 & e_2 & e_3 & e_4 & e_5 & e_6
	\end{pmatrix}^T \in \mathbb{C}^6$, and its explicit form is
	\begin{align*}
		C(\bm{w}) &= \begin{pmatrix}
			-\frac{1}{2}& -\frac{\sqrt{3}}{2} & \frac{1}{2} & \frac{\sqrt{3}}{2} & 0 & 0\\
			0 &  0 & 1 & 0 & -1 & 0\\
			\frac{1}{2}& -\frac{\sqrt{3}}{2} & 0 & 0 & -\frac{1}{2} & \frac{\sqrt{3}}{2}\\
			0 &  0 & -1 & 0 & z_1 \bar{z}_2 & 0\\
			-\frac{1}{2}z_2& \frac{\sqrt{3}}{2}z_2 & 0 & 0 & \frac{1}{2} & -\frac{\sqrt{3}}{2}\\
			\frac{1}{2}& \frac{\sqrt{3}}{2} & -\frac{1}{2}\bar{z}_1 & -\frac{\sqrt{3}}{2}\bar{z}_1 & 0 & 0
		\end{pmatrix},
	\end{align*}
	where $\bm{z}_j = \exp(2\pi i w_j)$ and $j=1,2$. This complex compatibility matrix has null vectors in three cases: (1) $w_1 = w_2$; (2) $w_1 = 0$; and (3) $w_2 = 0$. In each case, the null space is one-dimensional. For example, when $w_1 = w_2$, the null vector $d_{\bm{w}}(x)$ has values on the three vertices in the smallest unit cell
	\begin{align}
		d_{\bm{w}}(A) &= \begin{pmatrix}
			0, 0
		\end{pmatrix}^T & d_{\bm{w}}(B) &= \begin{pmatrix}
			\frac{\sqrt{3}}{2}, -\frac{1}{2}
		\end{pmatrix}^T & d_{\bm{w}}(C) &= \begin{pmatrix}
			\frac{\sqrt{3}}{2}, \frac{1}{2}
		\end{pmatrix}^T. \label{eqn:complex-displacement}
	\end{align}
	The values of $d_{\bm{w}}(x)$ on the remaining vertices are determined by \eqref{eqn:bloch-appendix}. Notice that the three cases here contain the three cases where $d(x)$ is $N$-periodic.
	
	For the standard Kagome lattice, when the compatibility matrix $C(\bm{w})$ has a complex null vector $d_{\bm{w}}(x)$, its real and imaginary parts will be GH modes, provided $d_{\bm{w}}(x)$ is periodic. Therefore, the real and complex parts of the displacement $d_{\bm{w}}(x)$ are two $N$-by-one periodic GH modes when $w_1 = w_2 = \frac{s}{N}$ (fixing $s$), and similar results hold for the other two cases. Let us focus on case $w_1 = w_2 = \frac{s}{N}$ and compute the exact values of $d_{\bm{w}}(x)$ on vertices in the $N$-by-one periodic unit cell. Using \eqref{eqn:bloch-appendix} and \eqref{eqn:complex-displacement}, we have 
	\begin{align*}
		d_{\bm{w}}(A_{0,1}) &= \begin{pmatrix}
			0, 0
		\end{pmatrix}^T & d_{\bm{w}}(B_{0,1}) &= \exp(\frac{2s\pi i}{N}) \begin{pmatrix}
			\frac{\sqrt{3}}{2}, -\frac{1}{2}
		\end{pmatrix}^T & d_{\bm{w}}(C_{0,1}) &= \exp(\frac{2s\pi i}{N}) \begin{pmatrix}
			\frac{\sqrt{3}}{2}, \frac{1}{2}
		\end{pmatrix}^T,\\
		d_{\bm{w}}(A_{0,2}) &= \begin{pmatrix}
			0, 0
		\end{pmatrix}^T & d_{\bm{w}}(B_{0,2}) &= \exp(\frac{4s\pi i}{N})\begin{pmatrix}
			\frac{\sqrt{3}}{2}, -\frac{1}{2}
		\end{pmatrix}^T & d_{\bm{w}}(C_{0,2}) &= \exp(\frac{4s\pi i}{N}) \begin{pmatrix}
			\frac{\sqrt{3}}{2}, \frac{1}{2}
		\end{pmatrix}^T,\\
		& \dots & &\dots & & \dots\\
		d_{\bm{w}}(A_{0,N-1}) &= \begin{pmatrix}
			0, 0
		\end{pmatrix}^T & d_{\bm{w}}(B_{0,N-1}) &= \exp(\frac{2s(N-1)\pi i}{N})\begin{pmatrix}
			\frac{\sqrt{3}}{2}, -\frac{1}{2}
		\end{pmatrix}^T & d_{\bm{w}}(C_{0,N-1}) &= \exp(\frac{2s(N-1)\pi i}{N})\begin{pmatrix}
			\frac{\sqrt{3}}{2}, \frac{1}{2}
		\end{pmatrix}^T,
	\end{align*}
	where $A_{0,k}, B_{0,k}, C_{0,k}, k = 0,1,\dots, N-1$ are the vertices in the $N$-by-one periodic unit cell (defined in section \ref{subsection:fleck-hutchinson}). We refer to the real part as $u_1^{s,N}(x)$ and the complex part as $u_2^{s,N}(x)$. Their values on the vertices are 
	\begin{align}
		u_1^{s,N}(A_{0,k}) &= (0,0)^T & u_1^{s,N}(B_{0,k}) &= \cos(\frac{2ks\pi}{N})\begin{pmatrix}
			\frac{\sqrt{3}}{2}, -\frac{1}{2}
		\end{pmatrix}^T & u_1^{s,N}(C_{0,k}) &= \cos(\frac{2ks\pi}{N})\begin{pmatrix}
			\frac{\sqrt{3}}{2}, \frac{1}{2}
		\end{pmatrix}^T \label{eqn:u1}\\
		u_2^{s,N}(A_{0,k}) &= (0,0)^T & u_2^{s,N}(B_{0,k}) &= \sin(\frac{2ks\pi}{N})\begin{pmatrix}
			\frac{\sqrt{3}}{2}, -\frac{1}{2}
		\end{pmatrix}^T & u_2^{s,N}(B_{0,k}) &= \sin(\frac{2ks\pi}{N})\begin{pmatrix}
			\frac{\sqrt{3}}{2},\frac{1}{2} 
		\end{pmatrix}^T. \label{eqn:u2}
	\end{align}
	\begin{proposition}\label{app-d:prop-basis}
		For a fixed $N$, there are in total $N$ linearly independent $u_1^{s,N}(x)$ and $u_2^{s,N}(x)$ by varying s in the range $0 \leq s \leq N-1$. Moreover, they form a basis for the space of $N$-by-one periodic GH modes.
	\end{proposition}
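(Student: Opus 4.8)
The plan is to reduce the entire statement to a standard fact about discrete Fourier analysis on $\mathbb{Z}/N$, and then to pin down the dimension of the $N$-by-one GH mode space by a Bloch decomposition. First I would observe that each mode $u_1^{s,N}$ and $u_2^{s,N}$ vanishes on every $A$-vertex and, by \eqref{eqn:u_1}--\eqref{eqn:u_2}, equals a \emph{fixed} nonzero vector times a scalar on the $B$- and $C$-vertices: $u_1^{s,N}(B_{0,k})=\cos(\tfrac{2\pi ks}{N})(\tfrac{\sqrt3}{2},-\tfrac12)^T$ and $u_2^{s,N}(B_{0,k})=\sin(\tfrac{2\pi ks}{N})(\tfrac{\sqrt3}{2},-\tfrac12)^T$, with the $C$-values given by the same scalar times $(\tfrac{\sqrt3}{2},\tfrac12)^T$. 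Hence the map sending a mode to its scalar $B$-coefficient sequence $k\mapsto(\text{coefficient at }B_{0,k})$ is linear and injective on the span of the $u_i^{s,N}$, so linear (in)dependence of the modes is \emph{equivalent} to linear (in)dependence of the sequences $(\cos(\tfrac{2\pi ks}{N}))_{k=0}^{N-1}$ and $(\sin(\tfrac{2\pi ks}{N}))_{k=0}^{N-1}$ in $\mathbb{R}^N$.

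These $2N$ sequences are exactly the real and imaginary parts of the DFT vectors $v_s=(e^{2\pi i ks/N})_{k=0}^{N-1}$, $0\le s\le N-1$, which form an orthogonal basis of $\mathbb{C}^N$. A standard real-Fourier count then gives the number of independent ones: $s=0$ contributes only the constant sequence (and $u_2^{0,N}=0$); each conjugate pair $\{s,N-s\}$ with $1\le s\le\lfloor\tfrac{N-1}{2}\rfloor$ contributes a two-dimensional space, since $\cos$ is even and $\sin$ is odd in $s$ modulo $N$; and if $N$ is even, $s=\tfrac{N}{2}$ contributes the single sequence $((-1)^k)_k$. Summing gives $1+2\lfloor\tfrac{N-1}{2}\rfloor=N$ for $N$ odd and $1+2(\tfrac{N}{2}-1)+1=N$ for $N$ even. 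This shows that exactly $N$ of the $u_1^{s,N},u_2^{s,N}$ are linearly independent, and simultaneously exhibits an explicit independent subset.

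It remains to show these span the full space of $N$-by-one periodic GH modes, which I would do by computing that space's dimension via Bloch decomposition. An $N$-by-one periodic displacement is periodic under the sublattice generated by $\bm v_1-\bm v_2$ and $N\bm v_1$; by \eqref{eqn:brillouin} the admissible Bloch wave numbers are precisely $w_1=w_2=\tfrac{s}{N}$, $s=0,\dots,N-1$, so the complex displacement space is a direct sum of $N$ Bloch sectors, each isomorphic to $\mathbb{C}^6$. Since the first-order extension operator block-diagonalizes in this decomposition, a vanishing-strain displacement has each Bloch component in $\ker C(\tfrac{s}{N},\tfrac{s}{N})$; this kernel is one-dimensional for $s\ne0$ (the case $w_1=w_2$ recorded in Appendix~\ref{appendix-e-1}) and three-dimensional for $s=0$, where $C(\bm 0)$ is the one-periodic matrix $C_1$ whose kernel consists of the two translations together with the one-periodic GH mode (section~\ref{sec:preliminary-GH}). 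Summing, the complex — hence real — vanishing-strain space has dimension $(N-1)+3=N+2$, so modulo the two translations the $N$-by-one GH mode space is $N$-dimensional; and because the explicit null vectors $d_{\bm w_s}$ span each sector, their real and imaginary parts $u_1^{s,N},u_2^{s,N}$ span it. An $N$-dimensional space containing $N$ linearly independent $u_i^{s,N}$ then forces these to be a basis.

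The hard part will be the spanning/dimension step, specifically justifying that the vanishing-strain condition decouples across the Bloch sectors and correctly treating the jump of $\dim\ker C(\tfrac{s}{N},\tfrac{s}{N})$ from $1$ to $3$ at $s=0$, where the two translations reside. I expect the linear-independence count to be routine once reduced to Fourier sequences. As an alternative to the Bloch argument, one can instead count lines of springs as in section~\ref{sec:preliminary-GH}: the $N$-by-one cell has $N$ horizontal lines and a single line in each of the $60^\circ$ and $120^\circ$ directions, so the null space of the $6N\times 6N$ compatibility matrix has dimension equal to the number of lines, $N+2$, again yielding an $N$-dimensional GH space after removing translations.
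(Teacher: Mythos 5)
Your proposal is correct, and its first half (linear independence) is essentially the paper's own argument: both reduce the count to invertibility of a real DFT-type matrix, after observing that the modes vanish on the $A$-vertices and are scalar multiples of fixed vectors on the $B$- and $C$-vertices. Where you genuinely differ is in proving that the space of $N$-by-one periodic GH modes is $N$-dimensional and spanned by these modes. The paper does this with no Bloch theory: it exhibits $N+2$ linearly independent self-stresses constant on lines ($N$ horizontal, one in each of the $60^\circ$ and $120^\circ$ directions), uses $\dim\ker(C)=\dim\ker(C^T)$ for the square (Maxwell) compatibility matrix, and sketches the matching upper bound by the reduction-of-conditions argument of section \ref{sec:preliminary-GH} --- i.e., precisely the ``alternative'' you offer in your closing sentence. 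Your primary route --- decomposing the $6N$-dimensional complex displacement space into the $N$ Floquet sectors $w_1=w_2=\tfrac{s}{N}$, noting that the extension operator preserves each sector, and summing $\dim_{\mathbb{C}}\ker C(\bm{w}_s)$ over sectors, namely $1$ for each $s\neq 0$ and $3$ for $s=0$ --- is a genuinely different and arguably cleaner mechanism: the upper bound on the kernel dimension comes for free rather than being left to the reader, and spanning follows structurally (each sector kernel is spanned by the explicit null vector $d_{\bm{w}_s}$, plus translations at $s=0$), not merely from comparing a count of independent vectors against a separately computed dimension. The price is that you lean on $\dim_{\mathbb{C}}\ker C(\tfrac{s}{N},\tfrac{s}{N})=1$ for $s\neq 0$, which Appendix \ref{appendix-e-1} asserts but does not prove (you are right, and more careful than that assertion, to except $s=0$, where the kernel is three-dimensional); this is a level of rigor comparable to the paper's sketched upper bound. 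One small point to make explicit at the end: since GH modes are displacements modulo translations, transferring independence of your $N$ vectors to the quotient requires that no nonzero combination of them be a translation; this is immediate from their vanishing on the $A$-vertices, an observation you already use, and which the paper invokes explicitly for exactly this purpose.
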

	\begin{proof}
		As mentioned in \cite{hutchinson2006structural}, the range of $s$ can be reduced to $0 \leq s \leq \frac{N}{2}$, since $u_1^{s,N}(x) = u_1^{N-s,N}(x)$ and $u_2^{s,N}(x) = -u_2^{N-s,N}(x)$. Therefore, we have $u_1^{0,N}(x), u_1^{1,N}(x), \dots, u_1^{\lfloor \frac{N}{2} \rfloor,N}(x)$ and $u_2^{1,N}(x), \dots, u_2^{\lfloor \frac{N}{2} \rfloor,N}(x)$ ($u_2^{0,N}(x)=0$ is neglected). When $N$ is even, it is easy to check that $u_2^{\frac{N}{2},N}(x)$ vanishes. Thus, for any $N$, we have $N$ distinct $N$-by-one periodic Fleck-Hutchinson modes: $u_1^{0,N}(x), u_1^{1,N}(x), \dots, u_1^{\lfloor \frac{N}{2} \rfloor,N}(x)$ and$u_2^{1,N}(x), \dots, u_2^{\lfloor \frac{N-1}{2} \rfloor,N}(x)$.
		
		Now we show that these $N$ Fleck-Hutchinson modes are linearly independent and non-trivial translations are not linear combinations of these modes. We first prove the linear independence: for a linear combination 
		\begin{align}
			a_0 u_1^{0,N}(x) + a_1 u_1^{1,N}(x) + \dots a_{\lfloor \frac{N}{2} \rfloor} u_1^{\lfloor \frac{N}{2} \rfloor,N}(x) + b_1 u_2^{1,N}(x) + \dots b_{\lfloor \frac{N-1}{2} \rfloor} u_2^{\lfloor \frac{N-1}{2} \rfloor,N}(x) = 0, \label{eqn:linear-combination}
		\end{align} 
		we plug in $x = B_{0,k}$. Using \eqref{eqn:u1}-\eqref{eqn:u2}, we get that for all $k = 0,1,\dots,N-1$,
		\begin{align}
			a_0 + a_1\cos(\frac{2k\pi}{N}) + \dots a_{\lfloor \frac{N}{2} \rfloor} \cos(\frac{2k\lfloor \frac{N}{2} \rfloor\pi}{N}) + b_1 \sin(\frac{2k\pi}{N}) + \dots b_{\lfloor \frac{N-1}{2} \rfloor} \sin(\frac{2k\lfloor \frac{N-1}{2}\rfloor\pi}{N})= 0. \label{eqn:linear-one}
		\end{align}
		We get the same equality if we plug in $C_{0,k}$ and \eqref{eqn:linear-combination} holds automatically for all $A_{0,k}$. Therefore, the two equalities \eqref{eqn:linear-combination} and \eqref{eqn:linear-one} are equivalent.  By writing \eqref{eqn:linear-one} in a matrix-vector form, we get
		\begin{align}
			\begin{pmatrix}
				1 & 1 & \dots & 1  & 0 & \dots & 0\\
				1 & \cos(\frac{2\pi}{N}) & \dots & \cos(\frac{2\lfloor \frac{N}{2} \rfloor\pi}{N}) & \sin(\frac{2\pi}{N}) & \dots & \sin(\frac{2\lfloor \frac{N-1}{2} \rfloor\pi}{N}) \\
				\vdots & \vdots  & \ddots & \vdots  & \vdots  & \ddots & \vdots \\
				1 & \cos(\frac{2(N-1)\pi}{N}) & \dots & \cos(\frac{2(N-1)\lfloor \frac{N}{2} \rfloor\pi}{N}) & \sin(\frac{2(N-1)\pi}{N}) & \dots & \sin(\frac{2(N-1)\lfloor \frac{N-1}{2} \rfloor\pi}{N}) 
			\end{pmatrix} \begin{pmatrix}
				a_0\\
				a_1\\
				\vdots\\
				a_{\lfloor \frac{N}{2} \rfloor}\\
				b_1\\
				\vdots\\
				b_{\lfloor \frac{N-1}{2} \rfloor}
			\end{pmatrix} = 0. \label{eqn:linear-expansion}
		\end{align} 
		It can be checked that the matrix in \eqref{eqn:linear-expansion} is invertible (briefly, the column space of this matrix is the same as the column space of the discrete Fourier transform matrix). Given the invertibility, \eqref{eqn:linear-expansion} only holds when $a_0 = a_1 = \dots = a_{\lfloor \frac{N}{2} \rfloor} = b_1 = \dots = b_{\lfloor \frac{N-1}{2} \rfloor} = 0$. Thus, the $N$ Fleck-Hutchinson modes are linearly independent. To show that a non-trivial translation is not a linear combination of these Fleck-Hutchinson modes, we observe that $u_1^{s,N}(A_{0,k}) = u_2^{s,N}(A_{0,k}) = 0$ for all $s$, $k$ and $N$. If a linear combination \eqref{eqn:linear-combination} gives a translation, then it mush actually vanish. Therefore, the space spanned by these $N$ Fleck-Hutchinson modes does not include translations.
		
		Lastly, we show that the $N$ linearly independent Fleck-Hutchinson modes form a basis for the space of $N$-by-one periodic GH modes. It is equivalent to show that the space of $N$-by-one periodic GH modes is $N$-dimensional. The argument is parallel to the one used in section \ref{sec:preliminary-GH}. First, we observe that if $C$ is the compatibility matrix introduced in section \ref{sec:preliminary-GH} (for the $N$-by-one periodic case), then $\ker(C) = \ker(C^T)$ is at least $(N+2)$-dimensional, since the self-stresses that are constant on a single line (and its periodic images) span an $(N+2)$-dimensional space (there are $N$ of them associated with horizontal lines, and two associated with lines in the 60 degree or 120 degree directions). Next, we observe that it is at most $(N+2)$-dimensional, by considering the linear equations $C d = 0$ and finding reductions similar to those used in section \ref{sec:preliminary-GH} (the details are left to the reader). So $\ker(C)$ has dimension exactly $N+2$. Eliminating the two translations, we conclude that the space of GH modes has dimension $N$.
	\end{proof}
	
	\begin{remark}\label{app-d-remark}
		By symmetry, when $w_1=0, w_2=\frac{s}{N}$, we get $N$ linearly independent Fleck-Hutchinson modes with period 1 in the 60 degree direction; and when $w_2=0, w_1 = \frac{s}{N}$, we get another $N$ linearly independent Fleck-Hutchinson modes with period 1 in the 120 degree direction. Each of these families includes the $N$-by-one periodic extension of the one-periodic GH mode. Aside from this, the three families can be shown to be linearly independent (by a calculation similar to the one done above). Therefore, taken together the three families of Fleck-Hutchinson modes span a $(3N-2)$-dimensional subspace of the $N$-periodic GH modes. We showed in section \ref{sec:preliminary-GH} that this space has dimension $3N-2$, so we have obtained a basis for the entire space of $N$-period GH modes. 
	\end{remark}
	
	\setcounter{equation}{0}
	\subsection{The four-by-one periodic mechanism}\label{appendix-f-2}
	In section \ref{subsec:non-periodic}, we showed that there is in fact a way to layer the one-periodic and two-by-one periodic mechanism as shown in Figure \ref{fig:non-periodic-four}. Here we present the details of the four-by-one periodic mechanism $u(x)$ in Figure \ref{fig:four-by-one-mechanism}(a) that achieves a given compression ratio $c = \cos(\frac{\pi}{3} - \theta)$. We use the same notation as in section \ref{subsection:consistency-condition} to denote the 12 vertices in the four-by-one periodic unit cell; they are $A_{0,k}, B_{0.k}, C_{0,k}$ with $k = 0,1,2,3$ as shown in Figure \ref{fig:three-by-one-consistency}(c). The values of $u(x)$ on these vertices are
	\begin{align*}
		u(A_{0,0}) &= (0,0), \qquad u(B_{0,0}) = (\cos \theta, \sin \theta), \qquad u(C_{0,0}) = (\cos(\theta + \frac{\pi}{3}), \sin(\theta + \frac{\pi}{3})),\\
		u(A_{0,1}) &= u(B_{0,0}) + \left(\cos(\frac{2\pi}{3}-\theta),\sin(\frac{2\pi}{3}-\theta)\right),\\
		u(B_{0,1}) &= u(A_{0,1}) + \left(\cos(\frac{2\pi}{3}-\theta),\sin(\frac{2\pi}{3}-\theta)\right),\\
		u(C_{0,1}) &= u(A_{0,1}) + \left(-\cos\theta, \sin \theta\right),\\
		u(A_{0,2}) &= u(B_{0,1}) + \left(\cos\theta, \sin \theta\right),\\
		u(B_{0,2}) &= u(A_{0,2}) + \left(\cos(\frac{2\pi}{3}-\theta),\sin(\frac{2\pi}{3}-\theta)\right),\\
		u(C_{0,2}) &= u(A_{0,2}) + \left(-\cos\theta, \sin \theta\right),\\
		u(A_{0,3}) &= u(B_{0,2}) + \left(\cos\theta, \sin \theta\right),\\
		u(B_{0,3}) &= u(A_{0,3}) + \left(\cos\theta,\sin\theta\right),\\
		u(C_{0,3}) &= u(A_{0,3}) + \left(\cos(\theta+\frac{\pi}{3}),\sin(\theta+\frac{\pi}{3})\right).
	\end{align*}
	The four-by-one periodic GH mode $\varphi_1^4(x)$ corresponding to the four-by-one periodic mechanism that rotates the two shaded triangles in the bottom layer towards each other in Figure \ref{fig:four-by-one-mechanism}(c) (by taking $\theta = \frac{\pi}{3} - t$) has values
	\begin{align*}
		\varphi_1^4(A_{0,0}) &= \left(0,0\right), & \varphi_1^4(B_{0,0}) &= \left(\frac{\sqrt{3}}{2}, -\frac{1}{2}\right), & \varphi_1^4(C_{0,0}) &= \left(\frac{\sqrt{3}}{2}, \frac{1}{2}\right),\\
		\varphi_1^4(A_{0,1}) &= \left(0,0\right), & \varphi_1^4(B_{0,1}) &= -\left(\frac{\sqrt{3}}{2}, -\frac{1}{2}\right), & \varphi_1^4(C_{0,1}) &= -\left(\frac{\sqrt{3}}{2}, \frac{1}{2}\right),\\
		\varphi_1^4(A_{0,2}) &= \left(0,0\right), & \varphi_1^4(B_{0,2}) &= -\left(\frac{\sqrt{3}}{2}, -\frac{1}{2}\right), & \varphi_1^4(C_{0,2}) &= -\left(\frac{\sqrt{3}}{2}, \frac{1}{2}\right),\\
		\varphi_1^4(A_{0,3}) &= \left(0,0\right), & \varphi_1^4(B_{0,3}) &= \left(\frac{\sqrt{3}}{2}, -\frac{1}{2}\right), & \varphi_1^4(C_{0,3}) &= \left(\frac{\sqrt{3}}{2}, \frac{1}{2}\right).
	\end{align*}
	This is in fact the linear combination of the two Fleck-Hutchinson modes $u_1^{\frac{N}{4},N}(x) - u_2^{\frac{N}{4},N}(x)$ that satisfies the consistency condition when $s = \frac{N}{4}$. To see why, from \eqref{eqn:u1} and \eqref{eqn:u2}, we get 
	\begin{align*}
		u_1^{\frac{N}{4},N}(A_{0,k}) - u_2^{\frac{N}{4},N}(A_{0,k}) &= (0,0), \\
		u_1^{\frac{N}{4},N}(B_{0,k}) - u_2^{\frac{N}{4},N}(B_{0,k}) &= \left(\cos(\frac{k\pi}{2}) - \sin(\frac{k\pi}{2})\right) \left(\frac{\sqrt{3}}{2}, -\frac{1}{2}\right), \\
		u_1^{\frac{N}{4},N}(C_{0,k}) - u_2^{\frac{N}{4},N}(C_{0,k}) &= \left(\cos(\frac{k\pi}{2}) - \sin(\frac{k\pi}{2})\right) \left(\frac{\sqrt{3}}{2}, \frac{1}{2}\right),
	\end{align*}
	for $k=0,1,\dots, N-1$. It is easy to see that $u_1^{\frac{N}{4},N}(x) - u_2^{\frac{N}{4},N}(x)$ is four-by-one periodic and $u_1^{\frac{N}{4},N}(x) - u_2^{\frac{N}{4},N}(x) = \varphi_1^4(x)$ because the factor $\cos(\frac{k\pi}{2}) - \sin(\frac{k\pi}{2})$ is $1$ when $\mod(k,4) = 0,3$; it is $-1$ when $\mod(k,4) = 1,2$. A similar calculation shows that the other linear combination $-u_1^{\frac{N}{4},N}(x) - u_2^{\frac{N}{4},N}(x)$ that satisfy the consistency condition corresponds the four-by-one periodic mechanism in Figure \ref{fig:four-by-one-mechanism}(b). The two four-by-one periodic mechanisms in Figure \ref{fig:four-by-one-mechanism}(a) and (b) are essentially the same, except that the mechanism in Figure \ref{fig:four-by-one-mechanism}(b) starts with an unshaded layer.
	\begin{figure}[!htb]
		\begin{minipage}[b]{.48\linewidth}
			\centering
			\subfloat[]{\includegraphics[width=0.65\linewidth]{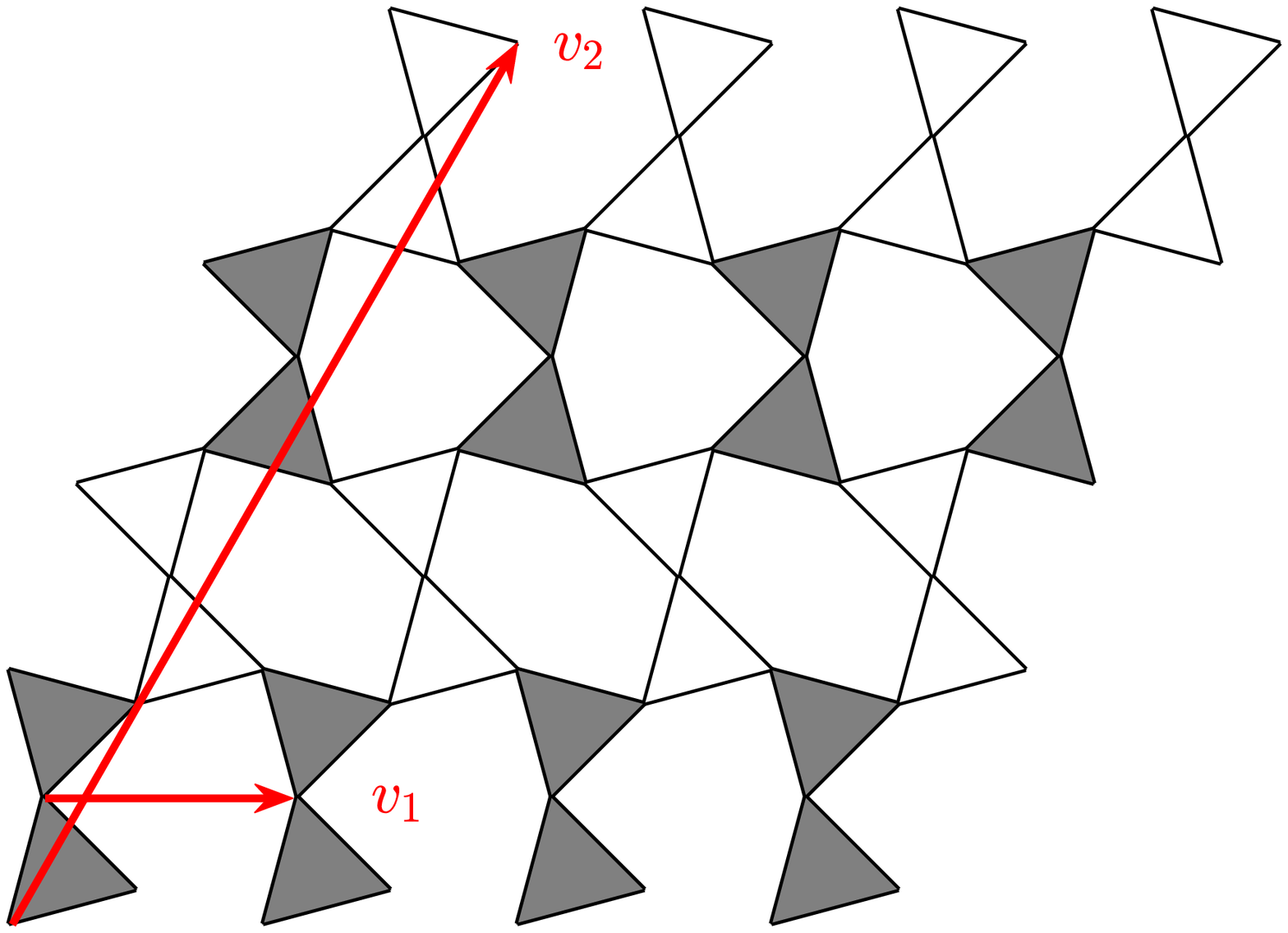}}
		\end{minipage}
		\begin{minipage}[b]{.48\linewidth}
			\centering
			\subfloat[]{\includegraphics[width=0.65\linewidth]{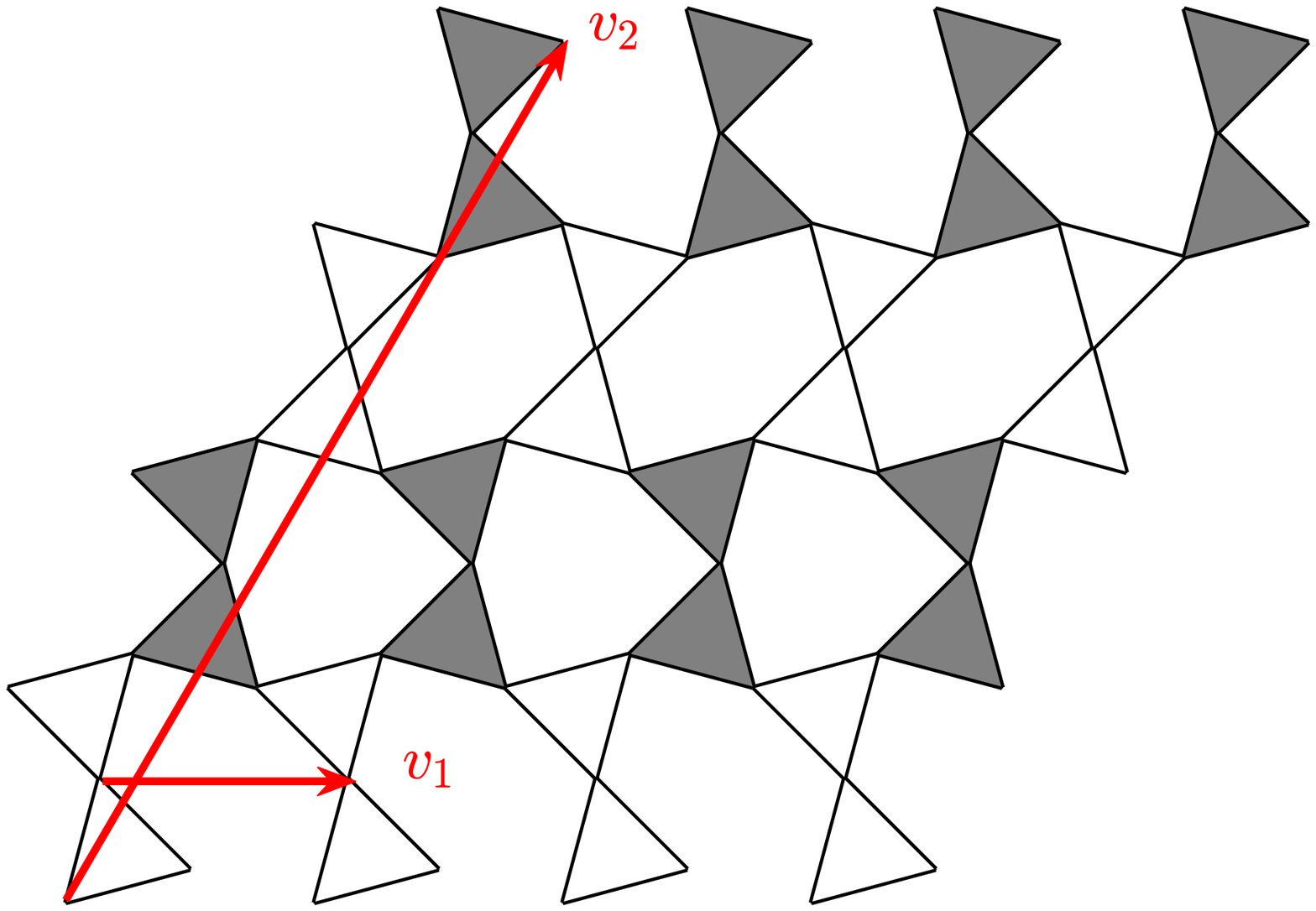}}
		\end{minipage} \par \medskip
		\begin{minipage}[b]{.48\linewidth}
			\centering
			\subfloat[]{\includegraphics[width=0.7\linewidth]{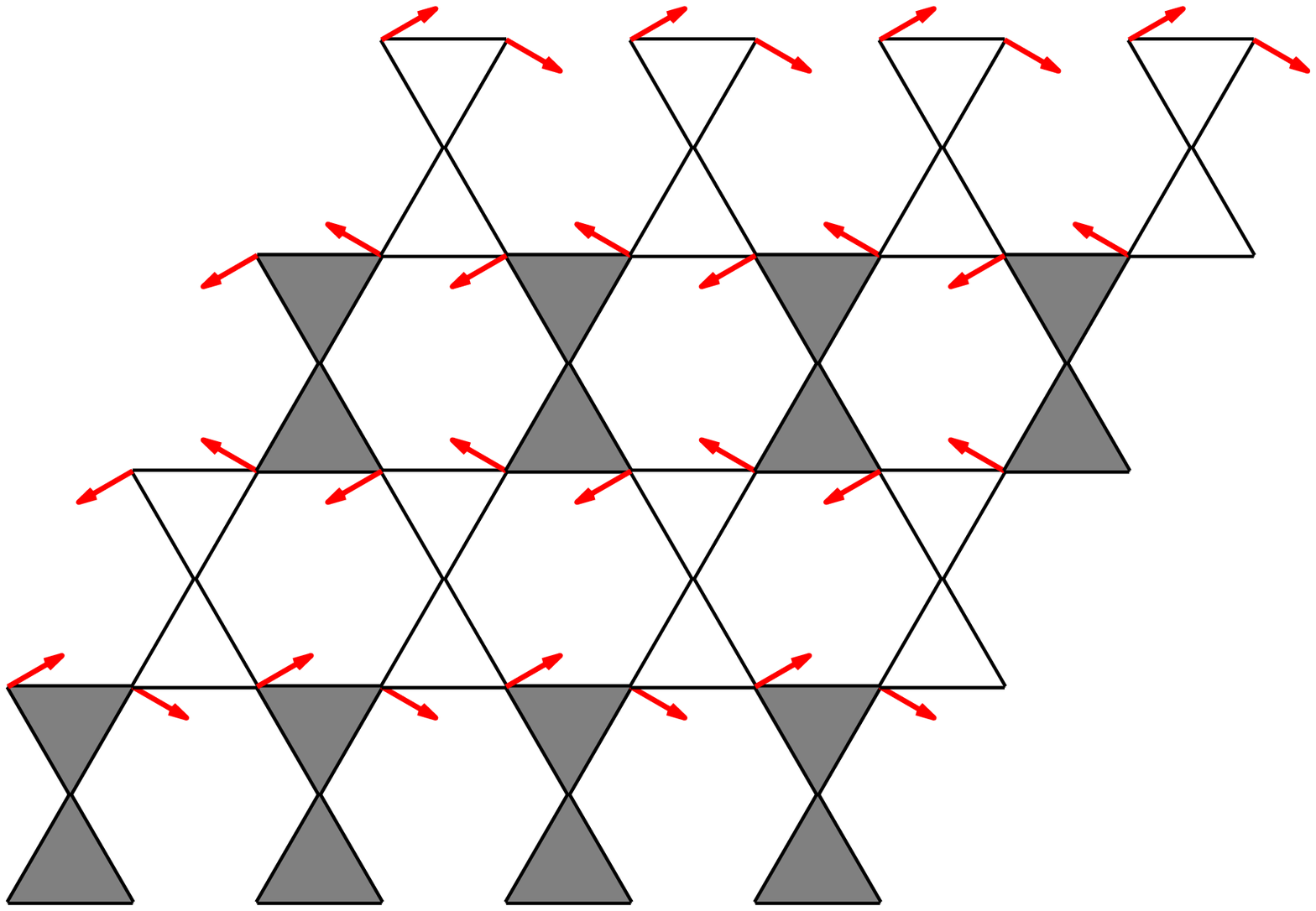}}
		\end{minipage}
		\begin{minipage}[b]{.48\linewidth}
			\centering
			\subfloat[]{\includegraphics[width=0.75\linewidth]{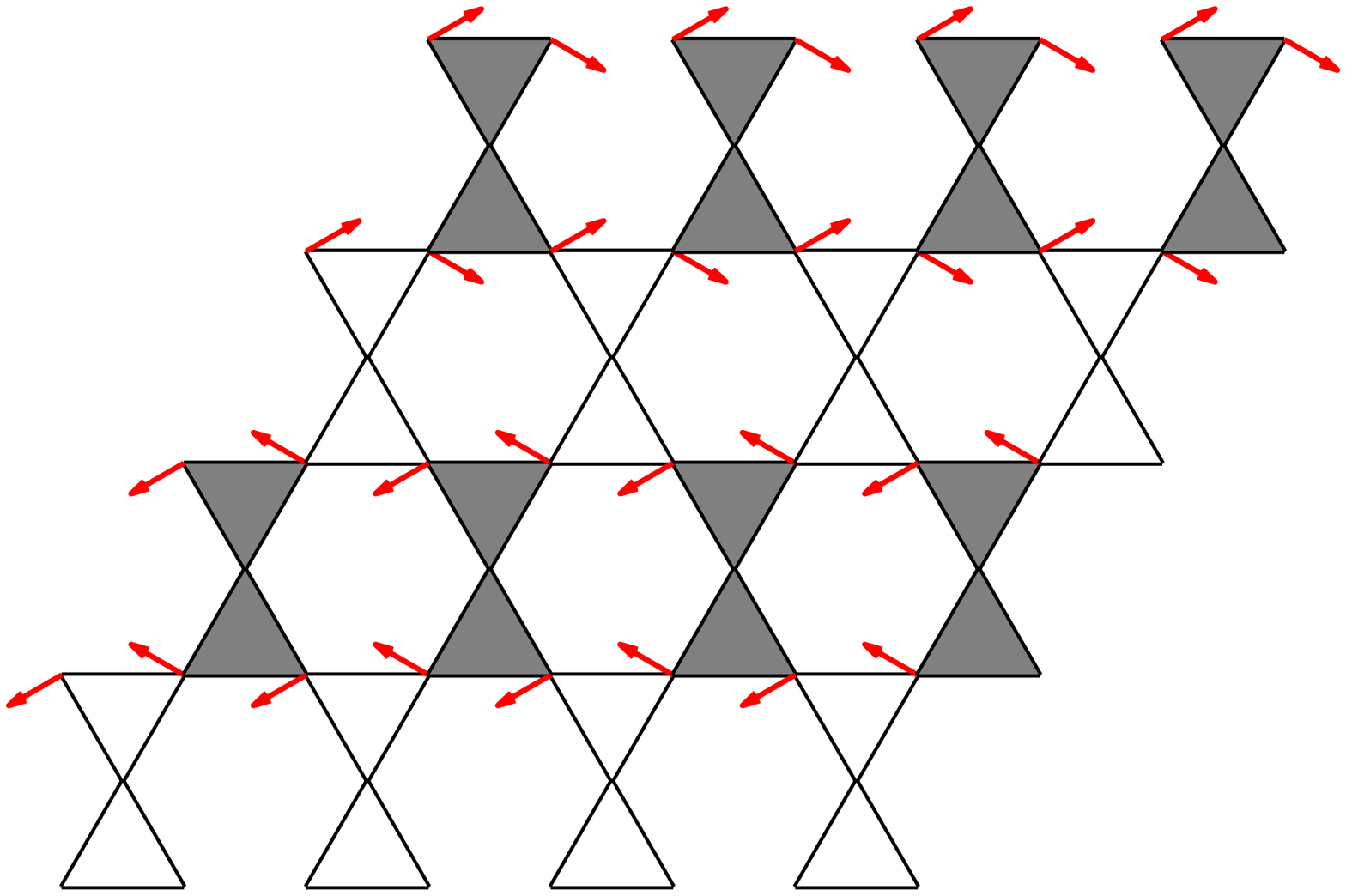}}
		\end{minipage}
		\caption{The four-by-one periodic GH modes that satisfy the consistency condition and their corresponding four-by-one periodic mechanisms: (a) the four-by-one periodic mechanism; (b) the same four-by-one periodic mechanism but starting with a different layer; (c) the four-by-one periodic GH mode as the infinitesimal version of (a); (d) the four-by-one periodic GH mode as the infinitesimal version of (b).}
		\label{fig:four-by-one-mechanism}
	\end{figure}

	\bibliography{ref}
	\bibliographystyle{plain}
\end{document}